\DeclareFontFamily{U}  {MnSymbolC}{}
\DeclareFontShape{U}{MnSymbolC}{m}{n}{
    <-6>  MnSymbolC5
   <6-7>  MnSymbolC6
   <7-8>  MnSymbolC7
   <8-9>  MnSymbolC8
   <9-10> MnSymbolC9
  <10-12> MnSymbolC10
  <12->   MnSymbolC12}{}
\DeclareFontShape{U}{MnSymbolC}{b}{n}{
    <-6>  MnSymbolC-Bold5
   <6-7>  MnSymbolC-Bold6
   <7-8>  MnSymbolC-Bold7
   <8-9>  MnSymbolC-Bold8
   <9-10> MnSymbolC-Bold9
  <10-12> MnSymbolC-Bold10
  <12->   MnSymbolC-Bold12}{}
\DeclareSymbolFont{MnSyC}         {U}  {MnSymbolC}{m}{n}
\DeclareMathSymbol{\mnmedcircle}{\mathbin}{MnSyC}{90}
\DeclareMathSymbol{\mncircledcirc}{\mathbin}{MnSyC}{99}
\DeclareMathSymbol{\mnmeddiamond}{\mathbin}{MnSyC}{110}
\DeclareMathSymbol{\mndiamonddiamond}{\mathbin}{MnSyC}{127}
\patchcmd{\@chapter}{\addtocontents{lof}{\protect\addvspace{10\p@}}}{}%
{}{\message{WARNING: Failed to patch \string\@chapter}}
\newcommand{\cleardoublepageemptyheadings}{\clearpage\pagestyle{empty}\cleardoublepage\pagestyle{headings}}
\newcommand{\fig}[3]{\begin{figure}[ht]
   \centering\includegraphics[width=14cm]{#1}
   \caption[#2]{#3}
   \label{fig:#1}
\end{figure}}
\newcommand{\figref}[1]{Figure~\ref{fig:#1}}
\newtheorem{thm}{Theorem}[chapter]
\newcommand{\thmref}[1]{Theorem~\ref{thm:#1}}
\newtheorem{defn}[thm]{Definition}
\newcommand{\defnref}[1]{Definition~\ref{defn:#1}}
\newtheorem{lemma}[thm]{Lemma}
\newcommand{\lemmaref}[1]{Lemma~\ref{lemma:#1}}
\newtheorem{cor}[thm]{Corollary}
\newcommand{\corref}[1]{Corollary~\ref{cor:#1}}
\newtheorem{ex}[thm]{Example}
\newcommand{\exref}[1]{Example~\ref{ex:#1}}
\newtheorem{claim}[thm]{Claim}
\newcommand{\claimref}[1]{Claim~\ref{claim:#1}}
\newtheorem{remark}[thm]{Remark}
\newcommand{\remarkref}[1]{Remark~\ref{remark:#1}}
\newcommand{\chapterref}[1]{Chapter~\ref{chapter:#1}}
\newcommand{\sectionref}[1]{Section~\ref{section:#1}}
\newcommand{\appref}[1]{Appendix~\ref{app:#1}}
\newcommand{\tuple}[1]{\bar{#1}}
\newcommand{\tuplefull}[1]{\tuple{#1}=(#1_m)_{m=1}^n}
\newcommand{\functiondefn}[5]{\begin{align*}
#1: #2 &\rightarrow #3 \\* #4 &\mapsto #5
\end{align*}}
\newcommand{\eqdef}{\triangleq}
\newcommand{\distinctpairs}[1]{#1^{\bar{2}}}
\newcommand{\apath}{\tuple{p}}
\newcommand{\apathfull}{\tuplefull{p}}
\newcommand{\paths}[1]{\mathcal{P}(#1)}
\newcommand{\gpaths}{\paths{G}}
\newcommand{\glength}{L_G}
\newcommand{\agents}{\mathbb{I}_{\gamma}}
\newcommand{\contextgraph}{\mathcal{G}_{\gamma}}
\newcommand{\externalinputs}{\tilde{E}_{\gamma}}
\newcommand{\contextbounds}{b_{\gamma}}
\newcommand{\contextneighbours}{N_{\gamma}}
\newcommand{\context}{(\contextgraph=(\agents,\contextneighbours,\contextbounds), (S_i)_{i \in \agents}, \externalinputs, (i_{\tilde{e}})_{\tilde{e} \in \externalinputs})}
\newcommand{\timeset}{\mathbb{T}}
\newcommand{\messages}{\mathcal{M}}
\newcommand{\protocols}{\mathbb{P}_{\gamma}}
\newcommand{\runs}{\mathcal{R}_{\gamma}}
\newcommand{\RP}{R_{\gamma}(P)}
\newcommand{\RPTAG}{R_{\gamma}(P')}
\newcommand{\RPZERO}{R_{\gamma}(P_0)}
\newcommand{\ND}{\mathit{ND}_{\gamma}}
\newcommand{\TR}[1]{\mathit{R}_{\gamma}^{\tilde{e}}(#1)}
\newcommand{\TRP}{\TR{P}}
\newcommand{\ER}[1]{\mathit{ER}_{\gamma}\langle #1 \rangle}
\newcommand{\implspec}{(I,\delta)}
\newcommand{\TCRspec}{(\gamma,\tilde{e},I,\delta)}
\newcommand{\TCR}[1]{\mathit{TCR}_{\gamma}\langle #1 \rangle}
\newcommand{\TCRTAG}[1]{\mathit{TCR'}_{\gamma}\langle #1 \rangle}
\newcommand{\OR}[1]{\mathit{OR}_{\gamma}\langle #1 \rangle}
\newcommand{\SR}[1]{\mathit{SR}_{\gamma}\langle #1 \rangle}
\newcommand{\OJR}[1]{\mathit{OJR}_{\gamma}\langle #1 \rangle}
\newcommand{\WTR}[1]{\mathit{WTR}_{\gamma}\langle #1 \rangle}
\newcommand{\TTR}[1]{\mathit{TTR}_{\gamma}\langle #1 \rangle}
\newcommand{\timpl}{\mathsans{t}}
\newcommand{\ttildeimpl}{\mathsans{\tilde{t}}}
\newcommand{\dpaths}{\paths{G_{\delta}}}
\newcommand{\dhatpaths}{\paths{G_{\hat{\delta}}}}
\newcommand{\dlength}{L_{G_{\delta}}}
\newcommand{\dtaglength}{L_{G_{\delta'}}}
\newcommand{\dtagtaglength}{L_{G_{\delta''}}}
\newcommand{\syncausal}[1]{\overset{\gamma}{\underset{\vphantom{#1}\smash{\raisebox{.3em}{$\scriptstyle #1$}}}{\rightsquigarrow}}}
\newcommand{\notsyncausal}[1]{\hspace{.08em}\not\hspace{-.08em}\syncausal{#1}}
\newcommand{\boundguarantee}{\overset{\gamma}{\dashrightarrow}}
\newcommand{\notboundguarantee}{\,\,\not\!\!\boundguarantee}
\newcommand{\PND}[1]{\mathit{PND}_{\gamma}^{#1}}
\newcommand{\RND}{\mathit{RND}_{\gamma}^P}
\newcommand{\tcap}{\overset{t}{\cap}}
\newcommand{\tprimecap}{\overset{t'}{\cap}}
\newcommand{\tsubseteq}{\overset{t}{\subseteq}}
\newcommand{\points}{\Omega_{R}}
\newcommand{\protocolpoints}{\Omega_{\RP}}
\newcommand{\pointsets}{\mathcal{F}_{R}}
\newcommand{\protocolpointsets}{\mathcal{F}_{\RP}}
\newcommand{\ensemble}{\mathbf{e}}
\newcommand{\rensemble}{\mathbf{r}}
\newcommand{\nolaterthan}[1]{\mncircledcirc^{\le{#1}}}
\newcommand{\sometime}{\mndiamonddiamond}
\newcommand{\atexactly}[1]{\mncircledcirc^{#1}}
\newcommand{\dcspec}{(\gamma,R,I,\delta)}
\newcommand{\dck}{C_I^{\delta}}
\newcommand{\dckg}{\mbox{\it \c{C}\,}_{\!I}^{\delta}}
\begin{document}

% For title page, dedication, acknowledgement, abstract: no header or footer.
\pagestyle{empty}
\newgeometry{margin=1in,bindingoffset=1cm} % No header or footer space

% Number front matter using lowercase roman numerals. (Not shown yet.)
\pagenumbering{roman}

%%%%%%%%%%%%
% Title page

\message{Title Page}
\pdfbookmark[1]{Title Page}{title}
\begin{titlepage}

\begin{center}

\textsc{\LARGE
The Hebrew University of Jerusalem
\\
Faculty of Science
\\
Einstein Institute of Mathematics\\[2.75cm]}

{\Huge\bf{
Timely Coordination in a Multi-Agent System}
\\[2cm]
}

{\Large A research thesis submitted in partial fulfillment of the requirements
for the degree of Master of Science\\[2cm]}

{\Large Author: \\}
{\LARGE \href{mailto:yannai@gonch.name}{Yannai A. Gonczarowski}\\[2cm]}

{\Large Thesis Advisors:\\[0.45cm]}
\begin{minipage}[t]{0.4\textwidth}
\begin{flushleft}
\LARGE Prof.\ Gil Kalai
\end{flushleft}
\end{minipage}
\begin{minipage}[t]{0.45\textwidth}
\begin{flushright}
{\LARGE Prof.\ Yoram Moses} \\
\begin{singlespace}{\small Department of Electrical Engineering\\
Technion, Israel Institute of Technology}
\end{singlespace}
\end{flushright}
\end{minipage}

\vfill
{\Large March 2012}

\end{center}
\end{titlepage}

\renewcommand{\titlepage}{} % Avoid reseting page counter on \begin{abstract}

\cleardoublepage

%%%%%%%%%%%%%%%%%
% Dedication page

\message{Dedication}
\pdfbookmark[1]{Dedication}{dedication}
\vspace*{4cm}
\begin{flushright}
\begin{Large}
To my parents, who taught me that knowledge is the one thing that no one can
ever take away from me,
\\[2cm]
to my brothers, who know everything,
\\[2cm]
and most of all, to Elee, the stable event of my life; the fixed point of my life.\\
\begin{footnotesize}
(This work would most probably never have been written were it not
for her Kyoto adventure.)
\end{footnotesize}
\\
\end{Large}
\end{flushright}

\cleardoublepage

%%%%%%%%%%%%%%%%%%%%%%
% Acknowledgement page

\let\realabstractname=\abstractname
\renewcommand{\abstractname}{Acknowledgements}
\message{\abstractname}
\pdfbookmark[1]{\abstractname}{acknowledgements}
\begin{abstract}
I would like to thank my advisor, Prof.\ Gil Kalai, for his friendship
and support throughout my studies. I would like to thank my second
advisor, Prof.\ Yoram Moses, for his friendship, for introducing me to
the fascinating subject of knowledge, and for his patience while I bombard
him with emails during the small hours of the night.

In addition, I would like to thank my friends, my family, and last but not
least, my other half Elee, for their support, their encouragement and
their friendship, and simply for being there.
\end{abstract}
\let\abstractname=\realabstractname % Restore "real" abstract name

\cleardoublepage

%%%%%%%%%%
% Abstract

\message{\abstractname}
\pdfbookmark[1]{Abstract}{\abstractname}

\begin{abstract}
In a distributed algorithm, multiple processes, or agents, work toward a common
goal. More often than not, the actions of some agents
are dependent on the previous execution (if not also on the outcome) of the
actions of other agents. The resulting interdependencies between the timings
of the actions of the various agents
give rise to the study of methods for timely coordination of these actions.

In this work, we formulate and mathematically analyze a novel multi-agent
coordination problem, which we call ``Timely-Coordinated Response'', and in
which the time difference between each pair of actions may be constrained
by upper and/or lower bounds.
This problem generalizes some classic coordination
problems formulated and studied by Halpern and Moses,
and some coordination problems recently
formulated and studied by Ben-Zvi and Moses.

We optimally solve (i.e.\ provide an optimal protocol for solving)
the timely-coordinated response problem in two ways: one
using a generalization of the fixed-point approach of Halpern and Moses, and
one using a generalization of the synchronous causality (``syncausality'')
approach of Ben-Zvi
and Moses. Furthermore, we constructively show the equivalence of the solutions
yielded by both approaches, despite the vast conceptual differences between
them.
By combining both approaches, we derive strengthened versions of
known results for some previously-defined special cases of this problem.

Our analysis is conducted under minimal assumptions: we work in a
continuous-time model with possibly infinitely many agents. The general results
we obtain for this model reduce to stronger results for discrete-time models
with only finitely many agents. In order to distill the properties of
such models that are significant to this reduction, we define several novel
classes of naturally-occurring models, all generalizing discrete-time models
with finitely many agents, which in a sense separate the different results.
We investigate the timely-coordinated response problem in these models, and
present both a more practical optimal solution for the problem, 
as well as a surprisingly simple condition for solvability thereof,
for these models.

To conclude this work, we show how our results for the timely-coordinated
response problem generalize the results known for
previously-studied special cases of this problem, and present some open
questions and further research directions.
\end{abstract}

\cleardoublepage
 
% Add header/footer
\restoregeometry
\pagestyle{headings}

%%%%%%%%%%%%%%%%%%%%%%%%%%%%%%%%%%%%%%%
% Table of contents and list of figures

\message{\contentsname}
\pdfbookmark[1]{\contentsname}{toc}
\setcounter{tocdepth}{3}
\tableofcontents
\clearpage

\message{\listfigurename}
\pdfbookmark[1]{\listfigurename}{lof}
\listoffigures

\cleardoublepageemptyheadings

%%%%%%%%%%%
% Work body

% Normal page numbering
\pagenumbering{arabic}

\chapter{Introduction}\label{chapter:introduction}

In a distributed algorithm, multiple processes, or agents, work toward a common
goal. More often than not, the actions of some agents
are dependent on the previous execution (if not on the outcome) of the actions
of other agents. This introduces interdependencies between the timing
of the actions of the various agents.

\section{An Informal Example}

We begin with a simple example that illustrates how the coordination
problem underlying this work arises as a natural, albeit nontrivial,
continuation of previously studied problems.

\begin{ex}\label{ex:acme}
Consider ACME, an IT company providing on-line storage services.\footnote{
Some readers may be acquainted with other products provided by \href{http://en.wikipedia.org/wiki/Acme_Corporation}{ACME}, such
as high-tech warfare gear used by \href{http://en.wikipedia.org/wiki/Wile_E._Coyote_and_Road_Runner}{Wile E.\ Coyote} in his endless quest to
capture the \href{http://en.wikipedia.org/wiki/Wile_E._Coyote_and_Road_Runner}{Road Runner}, or be acquainted with the ACME detective agency on
the hunt for V.I.L.E. ringleader and former ACME agent \href{http://en.wikipedia.org/wiki/Carmen_Sandiego}{Carmen Sandiego}.
As both these venues have become less lucrative in recent years, ACME
decided to follow the trend and go into IT.}
When ACME's on-line storage service is founded, its user base is relatively
small, and one server fulfills all of the requirements of this service.
To be on the safe side, though, ACME operates
a backup server. Being fairly confident in the stability of its main server,
ACME does not impose any freshness constraints on the backup server,
other than that the backup server must never be ahead of the main server,
in order to avoid
even the slightest potential of a data change being reflected solely in the
backup server. Thus, we may
concisely capture the only timing constraint ACME imposes on its servers:
If a user changes her data, then eventually the main server reflects this
change, and eventually, at some later time, the backup server reflects it
as well. A generalization of the problem underlying
such a scenario was studied by Lamport\cite{lamport-causality} in an
asynchronous model. Recently, Ben Zvi and Moses\cite{bzm1,bzm2,bzm3} extended
this study to synchronous models as well, dubbing the generalized problem
``ordered response''.

We return to ACME's story.
After a while, the user base of the company's storage service grows and
moreover, many
users use it more heavily than before, as they have grown both
accustomed to it, and confident of its abilities and stability.
Eventually, a single-server
solution becomes inadequate for this service, and ACME turns
its backup server into a second live server. Optimally, ACME would like
to impose the following timing constraint on its servers:
If a user changes her data, then eventually both servers reflect this change,
and they do so {\em simultaneously}. A generalization of the problem
underlying such a scenario
has been extensively studied\cite{firing-squad1,firing-squad2} under the name
``firing squad''. In particular,
it was studied both by Ben-Zvi and Moses\cite{bzm1,bzm2,bzm3}, who dub it 
``simultaneous response'', and by Halpern and Moses\cite{halpern-moses-1990,book},
who call it ``perfect coordination''.
Unfortunately for ACME, though, it is shown
in \cite{halpern-moses-1990} that this problem is unsolvable under realistic
conditions.

Having read \cite{halpern-moses-1990}, ACME decides to go for what
its engineers perceive as ``the next best thing'', replacing the requirement for
simultaneous reflection of a change in both servers to ``almost simultaneous''
reflection. Formally, they demand that if any server reflects a change at any
time, then the other server reflects this change no later than 100 milliseconds
thereafter. The problem underlying such a scenario no longer falls within the
scope of the study of Ben-Zvi and Moses\cite{bzm1,bzm2,bzm3,bzm4}, although
a generalization thereof was studied by Halpern and
Moses\cite{halpern-moses-1990,book} under the name ``$\upvarepsilon$\mbox{-}coordination''.

Naturally, as long as this ``near simultaneity'' constraint (or, in the
preceding scenarios, the relevant timing constraint introduced there)
is met, ACME wish
for both servers to reflect each user action as close to the time of its
occurrence as possible.\footnote{
We later phrase both the worst-case response time
in each of the above scenarios, as well as a necessary and sufficient condition
for the sheer solvability of this problem, as functions of the topology of
ACME's network, and of the worst-case communication lag times in it.
}

Shortly after the switch to two live servers, an email reaches ACME's
headquarters. ACME's on-line gaming subsidiary, which uses ACME's on-line
storage infrastructure to store the state of their on-line multi-player games,
complains that the slow response time of ACME's servers, coupled with a
100-millisecond lag between these servers, renders its multi-player games
unplayable. ACME engineers convene for an emergency meeting, and propose the
following plan: as most of ACME's gaming customers are located in
the same vicinity (ACME's gaming platform is very popular in Israel),
from which traffic to
ACME's server \#2 is particularly fast (server \#2 resides in Israel, while
server \#1 resides in the U.S.A.), they can simply route all the gaming
traffic to server \#2, effectively eliminating the 100-millisecond lag the
gaming subsidiary is complaining about. Unfortunately, this does not solve the
other cause of this complaint: the slow response time of ACME's servers.
(The performance price of ACME's new algorithm, which coordinates a maximum
freshness lag of 100 milliseconds between the servers, is a slower response
time of both servers than the one achieved in the old single-live-server
algorithm.)
One of the engineers raises the following question: perhaps they can achieve
a faster response time for server \#2 if the timing constraint is revised
to be asymmetric: server \#2 must update no later than 100 milliseconds after
server \#1, however server \#1 may update as late as 300 milliseconds after
server \#2. The problem underlying such a scenario falls out of the scope
both of the studies of Ben-Zvi and Moses\cite{bzm1,bzm2,bzm3,bzm4},
and the studies of Halpern and Moses\cite{halpern-moses-1990} and their
extension by Fagin et al.\cite[Section~11.6]{book}.
\end{ex}

The multi-agent coordination problem that we present and analyze in this thesis
generalizes, among others, the problems arising in the above
example. In particular, the problem that we present generalizes the last problem
arising from this example, by allowing to arbitrarily bound the time difference
between each pair of actions both from above and from below.
This generalizes the study of Halpern and Moses\cite{halpern-moses-1990}
(and of Fagin et al.\cite[Section~11.6]{book})
by allowing different bounds to be specified for different pairs of actions,
and generalizes the study of Ben-Zvi and Moses\cite{bzm1,bzm2,bzm3,bzm4} by
allowing the specification of an upper and a lower bound that do not coincide,
on the time difference between a pair of actions.

\section{Overview}

In this work, we present and mathematically analyze a novel multi-agent
coordination problem, which we call ``timely-coordinated response''.
In this coordination problem, which we define and analyze in a synchronous
model, a set of agents are to perform local actions,
and the time difference between each pair of actions
may be constrained by an upper and/or a lower bound (or neither), which are
parameters given as part of the problem description. Following the studies of
Halpern and Moses\cite{halpern-moses-1990}, Fagin et al.\cite{book},
and Ben-Zvi and Moses\cite{bzm1,bzm2,bzm3,bzm4}, which we generalize, most of
this work revolves around the interaction between time and coordination.

After presenting the timely-coordinated response problem in
\chapterref{tcr-exhibition}, we perform, in \chapterref{delta-analysis},
a graph-theoretical analysis of the set of constraining parameters (upper and
lower bounds on the time difference, for each pair of actions) that define
this problem. This analysis leads to a definition of a canonical representative
for each class of constraint-sets that define the same problem, and to a
characterisation for solvability of the timely-coordinated response problem
under what may be regarded as ideal conditions.

In the following two chapters,
we optimally solve the timely-coordinated response problem in two ways,
each generalizing one of the approaches previously used to analyze some
special cases thereof:
In \chapterref{syncausality-approach}, we survey, and the generalize,
the ``syncausality'' approach of Ben-Zvi and Moses, which may be viewed
as more of a concrete ``nuts and bolts'' approach. In this chapter, which
is combinatorial in character, we study the timely-coordinated response
problem in the possible presence of guarantees on message delivery times
between agents. (In \chapterref{practical}, we present a result showing the
impossibility of timely coordination using mutual constraints in the absence of
such bounds.)
In \chapterref{fixed-point-approach}, we survey, and then generalize,
the ``fixed-point'' approach of Halpern and Moses, which was later studied by
Fagin et al.\ as well.
This approach, whose origins are traceable to temporal logic, may be conversely
viewed as more of an abstract ``higher level'' approach.
The main result presented in each of these two chapters is a
description of an optimal protocol/algorithm for solving the timely-coordinated
response problem.
Following these analyses, we constructively show, in \chapterref{equivalence},
that, despite the significant conceptual and technical differences between
these two approaches, they both yield equivalent solutions for the
timely-coordinated response problem.

The above-surveyed analysis is conducted under minimal assumptions:
it applies to a continuous-time model, which may contain infinitely many
agents. The general results obtained using this analysis reduce to
stronger results when the model in question is a discrete-time one, and
contains only finitely many agents.
In \chapterref{practical}, we define several novel classes of
naturally-occurring models, all of which generalize discrete-time models that
contain finitely many agents. These classes of models, in a sense, separate
the specialized discrete-time finite-agent results from the generic
continuous-time infinite-agent results. We investigate the timely-coordinated
response problem in these
models, and derive, for these models, both a more practical description of the
optimal solution for this problem, as well as a surprisingly simple condition
for solvability thereof, in terms of the available network communication
channels and the worst-case delivery times therein.
We conclude this chapter by
combining both approaches to derive a strengthened version of
a known impossibility result for some previously-defined special cases of the
timely-coordinated response problem.

Following the above analysis, we show, in \chapterref{previous}, how the
results obtained in the previous chapters reduce to generalizations of the
known best results for previously-studied special cases of the
timely-coordinated response problem.

Finally, in \chapterref{open-questions}, we qualitatively discuss some of our
results, and present some open questions and
some future research directions in which the results of this work may prove
to be useful.

The main contributions of this work are:
\begin{enumerate}
\item Identifying, defining and analyzing the timely-coordinated response
problem.
\item Applying both above-described approaches to analyze this problem,
thereby unifying,
generalizing and strengthening results previously achieved using these
approaches.
\item Deriving generic results for continuous-time models, as well as
specialized results for discrete-time models, and defining ``intermediate''
model classes which, in a sense, separate the continuous-time results from
the discrete-time results.
\end{enumerate}

Underlying this work are three different currents. While these are
interconnected, each of these may stand alone in its own right, and may
be of interest to a different audience:
\begin{itemize}
\item Our graph-theoretic analysis from \chapterref{delta-analysis} may
be of most interest to combinatorists.
\item Our generalization of the fixed-point approach and its results may be
of most interest to logicians and game theorists.
\item Our generalization of the syncausality approach and its results may be
of most interest to computer scientists and engineers.
\end{itemize}

\chapter{Notation}\label{chapter:notation}

Throughout this work, we use the following notation:
\begin{itemize}
\item $\forall n \in \mathbb{N}: [n] \eqdef \{1,\ldots,n\}$.

\item We denote the non-negative reals by $\mathbb{R}_{\ge0} \eqdef \{t \in \mathbb{R} \mid t \ge 0\}$.

\item Given a set $I$, we denote the set of ordered pairs of distinct elements of $I$
by \[\distinctpairs{I} \eqdef \{(i,j) \in I^2 \mid i \ne j\}.\]

\item Given a set $A$, $n \in \mathbb{N}$ and an $n$-tuple
$\tuplefull{a} \in A^n$, we denote the $n$-tuple containing the elements
of $\tuple{a}$ in reverse order by $\tuple{a}^{\mathit{rev}} \eqdef
(a_{n-m+1})_{m=1}^n$.

\item
Given a directed graph $G=(V,E)$,
we denote the set of paths in $G$ by
\[
\gpaths \eqdef \{\apathfull \in V^n \mid
n \in \mathbb{N} \And \forall m \in [n-1]: (p_m,p_{m+1}) \in E\}.
\]

\item
Given a weighted directed graph $G=(V,E,w)$, we denote the length of
a path $\apathfull \in \gpaths$ by
\[
\glength(\apath) \eqdef \sum_{m=1}^{n-1} w(p_m,p_{m+1}).
\]
Furthermore, we denote the distance function
between vertices of $G$ by
\functiondefn{\delta_G}{V^2}{[-\infty,\infty]}{(i,j)}
{\inf\{\glength(\apath) \mid \apathfull \in \gpaths \And p_1=i \And p_n = j\}.}

\end{itemize}

\chapter{A Discrete-Time Model}\label{chapter:discrete}

We model a set of agents that communicate with each other solely via message
passing. Each agent follows a predetermined protocol, which is common knowledge
to all agents.
In the following chapters, we concern ourselves
with the task of devising such protocols with the goal of analyzing a
coordination problem that we define in \chapterref{tcr-exhibition}.

To avoid over-burdening the reader with cumbersome details, the model presented
in this chapter is a discrete-time model, conceptually based on \cite{book},
which may be assumed while reading
this work. It should be noted, though, that the results presented throughout
this work hold verbatim also for a more intricate continuous-time
model, which we present in
\appref{continuous}. In \chapterref{practical}, we consider several natural
properties of practical continuous-time models, some of which always hold for
the model presented in this chapter, and prove results for models with these
properties.

\section{Context Parameters}

Intuitively, a context describes the environment in which the agents operate.
In this discrete-time model, we formally denote a context by a tuple
$\gamma\eqdef\context$, where:

\begin{enumerate}
\item $\contextgraph$ is a weighted directed graph with positive, integral or infinite,
weights.
The vertices $\agents$ of $\contextgraph$ model the agents.
We say that an agent $j \in \agents$
neighbours an agent $i \in I \setminus \{j\}$
if $(i,j) \in \contextneighbours$. If this is a case, then $i$ may,
as will be defined in greater precision below,
send messages to $j$, which are guaranteed to arrive
no later than $\contextbounds(i,j)$ after being sent.

\item For each agent $i \in \agents$, $S_i$ is a set of legal states for $i$.
We assume that $S_i$ is of large enough cardinality to accommodate all our
needs.

\item $\externalinputs$ is a set of ``possible external inputs''.
We think of external inputs as non-deterministic events, the occurrence of
which may not be anticipated in advance by any agent.

\item Each external input $\tilde{e} \in \externalinputs$ is associated with a
single agent $i_{\tilde{e}} \in \agents$, which observes this input when it
occurs.
\end{enumerate}

Additionally, we define the set of times as $\timeset \eqdef \mathbb{N} \cup \{0\}$.
As noted above, in \appref{continuous} we give an alternative model
description, in which time is continuous.

\section{Events and the Environment}
At each possible time $t \in \timeset$, zero or more events may take place.
Intuitively, an event is an occurrence that is observed by a single
agent. An event may be of one of the following types:

\begin{enumerate}
\item An external input event $\tilde{e} \in \externalinputs$. (Observed by $i_{\tilde{e}}$.)

\item A message delivery event $(m,t',(i,j))$ of a message $m$, sent by $i$ at
time $t'$, to $j$, s.t.\ $(i,j) \in \contextneighbours$. (Observed by $j$.)

\end{enumerate}

We define the ``state of the environment'' at any given time as the set of
events that take place at that time.
We denote the set of all possible states of the environment by
$S_e \eqdef 2^{\externalinputs} \times 2^{\messages \times \timeset \times \contextneighbours}$,
where $\messages$ is a set of all possible messages.

\section{States, Actions and Protocols}
The problem that we define in the next chapter deals with the coordination
of the responses of different agents to an external input.
At any time $t \in \timeset$, each agent $i \in \agents$ performs the
following,
in a manner that is based on its state, as well as on any events observed by it
at $t$:
\begin{enumerate}
\item Sets a new state for itself (which may be identical to its old state).

\item Sends any number of messages, each with possibly different content, and
to a possibly different neighbouring agent $j \in \agents$
(i.e.\ $j \in \agents$ s.t.\ $(i,j) \in \contextneighbours).$

\item Possibly ``responds''. This is the action that we aim to
coordinate.\footnote{
For simplicity, we define only one type of response per agent.
Our models, and our results in this work, may be readily generalized to
allow a set of possible responses for each agent.
}
\end{enumerate}
We thus define the set of possible actions that may be taken by $i$ at
$t$ as $A_i \eqdef
S_i \times 2^{\messages \times \{j \in I \mid (i,j) \in \contextneighbours\}}
\times \{\mathrm{\text{false}},\mathrm{true}\}$. (Each element in $A_i$ consists
of a new state for $i$ at $t$, a set of messages to be sent by $i$ at $t$,
and a boolean value that indicates whether $i$ responds at $t$.)

A ``local protocol'' for an agent $i \in \agents$ consists of a set of
possible initial states for $i$,
together with an ``action'' function, receiving as input a state
of $i$ just before a certain time $t \in \timeset$ and any events observed by
$i$ at $t$,
and outputting the actions to be performed by
$i$ at $t$.\footnote{
We restrict ourselves to deterministic protocols solely for ease of exposition.
}
Formally, a local protocol for $i$ is a pair $(\tilde{S}_i,P_i)$,
s.t.\ $\tilde{S}_i \subseteq S_i$ and
$P_i:S_i \times 2^{\{\tilde{e} \mid i_{\tilde{e}} = i\}} \times
2^{\messages \times I} \rightarrow A_i$. In certain cases, we may wish to allow the
actions of $i$ at $t$ to also depend on $t$ as well.
In such cases, we say that the
model is a ``shared-clock model'', and the actions
function takes the form
$P_i:S_i \times 2^{\{\tilde{e} \mid i_{\tilde{e}} = i\}} \times
2^{\messages \times I} \times \timeset \rightarrow A_i$.

A ``joint protocol'' (``protocol'', for short) is a collection of local
protocols,
one for each $i \in \agents$. We denote the set of all protocols of $\gamma$
by $\protocols$.

An important set of protocols is the set of ``full-information''
protocols\cite{full-information-protocol}, in
which the state of each agent $i \in \agents$ at any time $t \in \timeset$
uniquely determines the full details of every event observed by $i$ up until,
and including, $t$. Furthermore, at every $t \in \timeset$, $i$
sends to every neighbouring agent a message including the full current state of
$i$ at $t$, and in a shared-clock model --- also the current time.

\section{Runs}
For the duration of this section, fix a protocol
$P=((\tilde{S}_i,P_i))_{i \in I} \in \protocols$.
A ``run'' of $P$ in $\gamma$ is, intuitively, a ``possible infinite history'' of
$P$ executed in $\gamma$, in which the behaviour of all agents is governed by $P$.
Formally, a run of $P$ is a function
$r:\timeset \rightarrow S_e \times \bigtimes_{i \in I} S_i$
assigning, for each time,
a state for each agent, and a state for the environment,
while satisfying
the following properties:\footnote{
While a run is customarily defined in an inductive fashion, we define it here
without induction in order to minimize the differences between the definitions
of our discrete- and continuous-time models.
}

\begin{itemize}
\item Agent state consistency with local protocol:
Let $i \in I$ and $t \in \timeset$, then $r_i(t)$ must equal the first part
of the output of $P_i$ when applied to $r_i(t-1)$ (or to some
$\tilde{s}_i \in \tilde{S}_i$,
if $t=0$) and to the events observed by $i$ at $t$ (which depend only
on $r_e(t)$).
(The other parts of this output of $P_i$ determine the actions of $i$ at $t$.)

\item
Environment state properties:
\begin{enumerate}
\item Each external input $\tilde{e} \in \externalinputs$ may occur no more than once during a run.

\item Bounded message delivery: If a message is sent at any
$t \in \timeset$ by $i \in \agents$ to $j \in \agents$ (where $(i,j) \in \contextneighbours$),
then it must be delivered exactly once,
at some time $t' \in \timeset$, s.t.\ $t < t' \le t+\contextbounds(i,j)$.
If $t' < t+\contextbounds(i,j)$, then we say that this message is delivered
{\em early}.
Only messages that are sent during a run may be delivered during it.
\end{enumerate}
\end{itemize}

We call an event ``non-deterministic'' (ND for short) if it is either an
external input event, or an early delivery. (Intuitively, we may think of
ND events as events that cannot be foreseen by any agent before they occur,
and thus occur, in a sense, at the whim of the environment.) It should be noted
that this definition is context-dependent, as it depends on $\contextbounds$.

We denote the set of all possible runs of $P$ in $\gamma$ by $\RP$.
We denote the set of all runs of all protocols by in $\gamma$ by
$\runs \eqdef \cup_{P \in \protocols} \RP$.

It should be noted that two full-information protocols (as defined in the
previous section) $P,P' \in \protocols$ may only
differ in their response logic, and therefore
there is a natural isomorphism between $\RP$ and $\RPTAG$, which preserves the
set of ND events along with their occurrence times.

As was essentially shown in \cite{full-information-protocol}, given
a protocol $P \in \protocols$, there exists a full-information protocol
$P' \in \protocols$, s.t.\ there is a
natural monomorphism from $\RP$ into $\RPTAG$, which preserves both the set of
ND events (and their occurrence times), and all responses (and response times).
This ability of a full-information protocol to
simulate any other protocol implies, for our purposes, that if there exists
some protocol that solves a certain coordination problem, then there also
exists a full-information protocol that solves this problem. This justifies
restricting to full-information protocols when analyzing solvability, which
will sometimes prove convenient. (See, e.g.\
\defnref{optimal-response-logic} and \corref{tcr-iff-sr}.)

\section{Notation}
Given a context $\gamma=\context$, we introduce the following notation:

\begin{itemize}
\item Given a run $r \in \runs$, we denote the set of all
events in $r$ by $E(r)$. As we wish to regard each event as unique, and as we
defined events above not to contain their occurrence time, we
technically define $E(r)$ as a set of event-time pairs $(e,t_e)$, where each
event is paired with its occurrence time.

\item Given a run $r \in \runs$, we denote the set of all the ND events in $r$
by $\ND(r) \subseteq E(r)$.

\item
As mentioned above, we analyze the coordination of the responses of
different agents to a given external input. Thus,
given a protocol $P \in \protocols$ and an external input
$\tilde{e} \in \externalinputs$, we define the set of ``$\tilde{e}$-triggered''
runs of
$P$ as
\[\TRP \eqdef \{r \in \RP \mid \tilde{e} \in \ND(r)\}.\]
where by slight
abuse of notation, we use $\tilde{e} \in \ND(r)$ as a shorthand for
$\exists\, t \in \timeset: (\tilde{e},t) \in \ND(r)$.
We follow this convention occasionally, when no confusion can arise.
\end{itemize}

\chapter{Timely-Coordinated Response}\label{chapter:tcr-exhibition}

In this chapter, we define the main coordinated response problem underlying
this work, which we call ``timely-coordinated response'',
and analyze some of its basic properties.

\section{Coordinated Response Problems}

Before defining the timely-coordinated response problem, we first define
a broader, much simpler, coordinated response problem. (In a sense, it is the
simplest coordinated response problem.) This simpler problem will serve as a
building block for the timely-coordinated response problem. In addition,
the short discussion of this simpler problem will provide an introduction to
a novel theory of coordinated response problems,\footnote{
While Ben-Zvi and Moses\cite{bzm1,bzm2,bzm3,bzm4} define quite a
few problems to which they refer as ``response problems'' (all of which we
survey in Chapters \ref{chapter:syncausality-approach} and
\ref{chapter:previous} and
from which we draw motivation), they do not give a formal definition for a
response problem, or for a coordinated response problem.
}
which we attempt to formalize throughout this work. During this discussion, we
introduce some definitions and concepts, which we later reuse while discussing
various coordinated response problems throughout this work.
While philosophically a problem may be thought of as a specification, or as a
collection of constraints, we formally associate a coordinated response problem
with the set of protocols solving it, effectively treating two problems
that share the same set of solutions as identical.

\begin{defn}[Eventual Response]\label{defn:eventual-response}
Given a context $\gamma$, an external input $\tilde{e} \in \externalinputs$
and a set of agents $I \subseteq \agents$, we define the
``eventual response'' problem $\ER{\tilde{e},I} \subseteq \protocols$ as the
set of all protocols $P$ satisfying:
\begin{itemize}
\item
In each $r \in \TRP$,
each $i \in I$ responds exactly once.
In this case, we denote, for each $i \in I$,
the response time of $i$ in $r$ by $\timpl_r(i)$.
(Hence, $\timpl_r$ is a function from $I$ to $\timeset$.)

\item
In each $r \in \RP \setminus \TRP$, neither of the agents in $I$ responds.
In this case, we define $\timpl_r\equiv\infty$.
\end{itemize}
Thus, for every $r \in \RP$, we have defined a function
$\timpl_r:I\rightarrow \timeset \cup \{\infty\}$.
\end{defn}

\begin{remark}\label{remark:response-not-before}
Let $P \in \ER{\tilde{e},I}$ and let $r \in \TRP$. While each $i \in I$
responds in $r$, none of them do so before $\tilde{e}$ occurs.
\end{remark}

Intuitively, \remarkref{response-not-before} holds because, as we noted
when describing our model(s), the non-deterministic nature of $\tilde{e}$
implies that no agent may possibly infer that a run
is triggered before $\tilde{e}$ occurs in that run.
(Indeed, as far as
any agent is concerned, as long as $\tilde{e}$ has not occurred yet, it may be
the case that $\tilde{e}$ will never occur during this run.\footnote{
In the continuous-time model presented in
\appref{continuous}, this stems from the ``no foresight'' property.
})
Formally, this may be readily proven
using machinery that we have not introduced yet.
(See, e.g.\ the beginning of the proof of \thmref{path-traversing-centipede}
for a proof of a stronger statement using the tools of
\chapterref{syncausality-approach}, and \corref{ensembles-and-nd}
for a conceptually-similar
proof using the tools of \chapterref{fixed-point-approach}.)

\begin{ex}\label{ex:got-talent}
In the popular TV talent show ``Got Talent'', a panel of judges (that we
denote by $I$) judge various amateur performances, with themes ranging from
music, through magic, to some very obscure themes that are better left
undescribed. Once a contestant starts
performing on stage, each judge may press an ``X'' button to signal her
desire for this performance to end. (We denote this action as a response by the
judge.)
The performance continues until it has run its course, or until all
judges have pressed their respective ``X'' buttons.

As the ``mean'' judges sometimes implicitly compete between themselves
regarding who presses his ``X'' button first, let us consider a hypothetical
``enhancement'' to the show, in which a bucket of water is poured over the
head of any judge who presses his ``X'' button before a performance starts.

Consider a hypothetical repetitive (and thus, potentially never-ending)
performance in this TV show. (This indeed sometimes seems to be the case,
especially for very bad performances, which are abundant in this show.)
In order to guarantee that this performance
does not continue forever, the show producers must make sure that each judge
presses her ``X'' button at some time during the performance (but not before
the performance starts, as she wishes to remain dry).
The order of the ``X'' presses of the different judges is insignificant, as long
as it is guaranteed that each judge eventually presses her ``X'' button.
As the beginning of the performance depends on the rambling of the comedian
serving as show host (which may possibly also never end),
it is impossible for the judges to predict before it
actually occurs, so we may regard it as a non-deterministic external input,
which we denote as $\tilde{e}$.
\end{ex}

\begin{remark}
In \defnref{eventual-response} and hereafter, we assume, for simplicity,
that each agent is associated with no more than one response, as we did
when defining our model. (While it may seem
a bit silly to assume otherwise in the eventual response problem, it may
make sense to do so for more intricate coordinated response problems defined
below.)
Nonetheless, all the results we derive throughout this work regarding any
response problem are easily adaptable to the case in which more than one
response type is available per agent, and in which $I$ is a set of
agent-response pairs,
rather than merely agents. (In this way, some agents may be associated with
more than one of the responses being coordinated.)
Indeed, all our results apply verbatim to this generalized
case as well, if we allow ourselves,
for each agent-response pair $i=(\tilde{\imath},a) \in I$, to slightly
abuse notation by writing $i$ to refer to $\tilde{\imath}$ as well.
\end{remark}

While studying a coordinated response problem, we are usually interested
in two questions:
\begin{itemize}
\item Solvability: Under which conditions is it solvable? and if so,
\item Optimality: What is the ``fastest'' way to solve it?
\end{itemize}

Before answering these questions regarding the eventual response problem,
we first define them (and define coordinated response problems) precisely.

\begin{defn}[Coordinated Response]
Let $\gamma$ be a context, let $\tilde{e} \in \externalinputs$ and
let $I \subseteq \agents$.
We call a problem a ``coordinated response'' problem to $\tilde{e}$ by $I$,
if the set of protocols
solving it is a subset of (the set of protocols solving) $\ER{\tilde{e},I}$.
As before, we \linebreak
formally identify a coordinated response problem with the set of
protocols solving it.
\end{defn}

We usually define coordinated response problems by restrictions on
the response times $\timpl_r$ in all triggered runs $r$
(e.g.\ ``$I$ must
respond together in each triggered run'',
``$I$ must respond in a given order in each triggered run'', etc.).
In order to ease the reading of the following two
definitions, one may consider, as an example, the case in which
$\mathit{CP}\eqdef\ER{\tilde{e},I}$ for some $\tilde{e} \in \externalinputs$
and $I \subseteq \agents$.

\begin{defn}[Solvability]
Let $\gamma$ be a context and let $\mathit{CP} \subseteq \protocols$ be (the set of protocols solving) a
coordinated response problem.
We say that $\mathit{CP}$ is ``solvable'' if $\mathit{CP} \ne \emptyset$.
Otherwise, we say that it is ``unsolvable''.
\end{defn}

\begin{defn}[Optimal Response Logic]\label{defn:optimal-response-logic}
Let $\gamma$ be a context and let $\mathit{CP} \subseteq \protocols$ be
a coordinated response problem.
Assume that $\mathit{CP}$ is solvable and let $P \in \mathit{CP}$.
Assume w.l.o.g.\ that $P$ is a full-information protocol.
Let $P'$ be the protocol obtained from $P$ by modifying the response logic
of each $i \in I$ in some way.
As $P$ and $P'$ only differ by their response logic, there is a natural
isomorphism between $\RP$ and $\RPTAG$ that preserves the set of ND events.

We say that the response logic of $P'$ is an ``optimal response logic'' for
solving $\mathit{CP}$ if the following are satisfied:

\begin{enumerate}
\item
$P' \in \mathit{CP}$.

\item
If $r \in \RP$ and $r' \in \RPTAG$ are two runs of the respective protocols
matched under the above isomorphism, then
$\timpl_{r'} \le \timpl_r$.
\end{enumerate}
\end{defn}

We may now answer the above questions of solvability and optimality, with
regard to the eventual response problem.

\begin{remark}\label{remark:eventual-response-properties}
Let $\gamma$ be a context, let $\tilde{e} \in \externalinputs$ and let
$I \subseteq \agents$.
The following may be easily verified:
\begin{itemize}
\item
$\ER{\tilde{e},I}$ is
solvable iff for every $i \in I$ there exists $\apathfull \in \gpaths$ s.t.\
$p_1=i_{\tilde{e}}$ and $p_n=i$.
\item
An optimal response logic for solving $\ER{\tilde{e},I}$ is, for every
$i \in I$, ``respond as soon as $i$ receives information guaranteeing that
$\tilde{e}$ occurred.''\footnote{
In some runs of certain contexts with infinitely many agents under the
continuous-time model presented in
\appref{continuous}, the set of times at which $i$
has information guaranteeing that $\tilde{e}$ has occurred does not attain
its infimum value. It is straightforward to show that for such pathological
cases, no optimal response logic exists. Similar observations hold for
all other optimal response logics presented in this work as well.
}
(This seemingly-vague condition has a very precise meaning in a
full-information protocol, as each message sent in such a protocol uniquely
determines a list of events that are guaranteed to have occurred.)
\end{itemize}
\end{remark}

It should be noted that for many coordinated response problems,
it is fairly straightforward to deduce a solvability criterion (phrased
as a requirement on the topology and weights of $\contextgraph$) from an
optimal response logic. In the above example, the optimal response logic
demands, for $i$ to respond in each triggered run, that in each such run
$i$ receives information guaranteeing that $\tilde{e}$ occurred.
Thus, a necessary and sufficient condition for
solvability is that in each triggered run, each agent is guaranteed to receive
information to this effect. This is exactly equivalent to the solvability
criterion above. For this reason, for many coordinated response problems,
we will be primarily interested in an optimal response logic.

Due to the first part of \remarkref{eventual-response-properties},
we assume hereafter, whenever discussing a coordinated response problem
of a set of agents $I \subseteq \agents$ 
to an external input $\tilde{e} \in \externalinputs$, that there exist
paths in $\contextgraph$ from $i_{\tilde{e}}$ to every agent $i \in I$.

\section{Defining Timely-Coordinated Response}

We define the timely-coordinated response problem as a response problem in which
maximum and minimum values for the differences between the response times of
each pair of agents are provided.
In a discrete-time model, such constraints may be defined by any integral, or
infinite, value. In a continuous-time model, such constraints may be defined
by any real,
or infinite, value. In order to formally present the timely-coordinated response
problem, we first define the set of such possible constraining values.

\begin{defn}
We define $\Delta = (\timeset - \timeset) \cup \{-\infty, \infty\}$.
\end{defn}

\begin{remark}
If $\timeset = \mathbb{N} \cup \{0\}$, then $\Delta = \mathbb{Z} \cup \{-\infty, \infty\}$.
If $\timeset = \mathbb{R}_{\ge0}$, then $\Delta = [-\infty,\infty]$.
\end{remark}

We now turn to define the constraints imposed on response times in triggered
runs in the timely-coordinated response problem. Recall that given a set $I$,
we denote the set of ordered pairs of distinct elements of $I$ by
$\distinctpairs{I}$.

\begin{defn}[Implementation]\label{defn:implementation}
We call a pair $\implspec$ an ``implementation-spec'', if $I$ is a set
and if $\delta$ is a function
$\delta:\distinctpairs{I} \rightarrow \Delta$.
Given an implementation-spec $\implspec$, we call a function
$\timpl : I \rightarrow \timeset$ an ``implementation''
of $\delta$, if $\timpl(j) \le \timpl(i) + \delta(i,j)$ for every
$(i,j) \in \distinctpairs{I}$.
We denote the set of all implementations of $\delta$ by $T(\delta)$.
If $T(\delta) \ne \emptyset$, we say that $\delta$ is ``implementable''.
Otherwise, we say that it is ``unimplementable''.

\end{defn}

\begin{remark}\label{remark:implementation-properties}
Let $\implspec$ be an implementation-spec.
By \defnref{implementation}:
\begin{itemize}
\item
Obviously, $\delta$ is unimplementable unless $\delta>-\infty$.
Nonetheless, we still allow $\delta$ to take the value of $-\infty$ for
some or all agent pairs, for
technical reasons that may become apparent when we define a canonical
form for $\delta$ in the next chapter.
\item
Every $\timpl \in T(\delta)$ satisfies
$-\delta(j,i) \le \timpl(j) - \timpl(i) \le \delta(i,j)$ for every
$(i,j) \in \distinctpairs{I}$.

\item
Let $\timpl:I \rightarrow \timeset$. If $\timpl \in T(\delta)$,
then $\timpl+c \in T(\delta)$ as well, for every $c \in \timeset$. We say that
two implementations of $\delta$ are ``similar'' if they differ by a translation.

\item
$T$ is monotone: Let $\delta':\distinctpairs{I} \rightarrow \Delta$.
If $\delta \le \delta'$, then $T(\delta) \subseteq T(\delta')$.
\end{itemize}
\end{remark}

At last, we are ready to define the timely-coordinated response problem.

\begin{defn}[Timely-Coordinated Response]\label{defn:tcr}
We call a quadruplet $\TCRspec$ a ``TCR-spec'',
if $\gamma$ is a context, $\tilde{e} \in \externalinputs$ is an external input,
and $\implspec$ is an implementation-spec s.t.\ $I \subseteq \agents$.
Given a TCR-spec $\TCRspec$,
we define the
``timely-coordinated response'' problem
$\TCR{\tilde{e},I,\delta} \subseteq \ER{\tilde{e},I}$ as the set of all
eventual-response protocols $P$ for which $\timpl_r \in T(\delta)$
for every triggered run $r \in \TRP$.
\end{defn}

\begin{remark}\label{remark:tcr-properties}
Let $\TCRspec$ be a TCR-spec.
By \defnref{tcr}:
\begin{itemize}
\item
$\forall J \subseteq I: \TCR{\tilde{e},I,\delta} \subseteq \TCR{\tilde{e},J,\delta|_{\distinctpairs{J}}}$.
\item
Let $\delta':\distinctpairs{I} \rightarrow \Delta$.
If $T(\delta) \subseteq T(\delta')$, then $\TCR{\tilde{e},I,\delta} \subseteq \TCR{\tilde{e},I,\delta'}$.
\end{itemize}
\end{remark}

\begin{ex}
Returning to the ``Got Talent'' show from \exref{got-talent}.
Assume that the panel of judges consists of two judges: Alice and Bob.
The producers wish to create, among the viewers, the general impression
that Bob is a ``meaner'' judge than Alice, but only by a subtle difference.
One way to achieve this may be to ensure that Bob never responds more than
5 seconds after Alice, and that Alice never responds more than 40 seconds
after Bob. Coordinating this may not be very simple if, for example,
the judges are seated in a way that prevents each judge from knowing
when the other judge presses the ``X'' button. (Thus, for example,
Alice may have to rely on information regarding Bob's taste, such
``Bob always presses his ``X'' button no more than 10 seconds after someone
falls on stage.'') This is a simple instance of the timely-coordinated response
problem.
\end{ex}

The rest of this work, as noted above, is dedicated to the analysis of the
timely-coordinated response problem.

\chapter{The Constraining Function}\label{chapter:delta-analysis}

Before we turn to analyze the coordination required in order to solve
the timely-coordinated
response problem, we first note that for an unimplementable $\delta$,
the situation is hopeless to begin with, as the timely-coordinated response 
problem is is unsolvable regardless of the context $\gamma$ in which it is
defined.
In this chapter, we embark on a graph-theoretic discussion
with the aim of phrasing a necessary and sufficient condition for
implementability of a constraining function $\delta$.
First, though, we make the above comment regarding ``hopelessness'' precise:

\begin{claim}\label{claim:solvable-iff-implementable}
Let $\TCRspec$ be a TCR-spec.
\begin{enumerate}
\item
$\TCR{\tilde{e},I,\delta}$ is unsolvable
if $\delta$ is unimplementable.
\item
If $\TCR{\tilde{e},I,0}$\footnote{
We use 0 here and hereafter as a shortcut for the constant zero
constraining function, i.e.\ $\delta_0:I\rightarrow\Delta$ s.t.\
$\delta_0 \equiv 0$.
} is solvable,
then the converse holds as well, i.e.\ $\TCR{\tilde{e},I,\delta}$ is solvable
if $\delta$ is implementable. Furthermore, for any implementation $\ttildeimpl$
of $\delta$, there exists a solving protocol $P \in \TCR{\tilde{e},I,\delta}$,
for which the map from agents to response
times in every one of its triggered runs is similar to $\ttildeimpl$.
\end{enumerate}
\end{claim}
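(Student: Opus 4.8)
The plan is to handle the two parts separately, the first being essentially definitional and the second resting on a concrete ``delay the simultaneous response'' construction.

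For part~1, I would argue by contradiction. Any $P \in \TCR{\tilde{e},I,\delta}$ lies in $\ER{\tilde{e},I}$, so in every triggered run $r \in \TRP$ it induces a well-defined response-time map $\timpl_r : I \to \timeset$, which by \defnref{tcr} must lie in $T(\delta)$. The crux is that \emph{every} protocol admits at least one triggered run: since $\tilde{e}$ is a non-deterministic external input that may occur at the whim of the environment, one can always exhibit a run of $P$ in which $\tilde{e}$ occurs (say at time $0$) and all messages respect their delivery bounds, so $\TRP \neq \emptyset$. Picking any such $r$ yields $\timpl_r \in T(\delta)$, contradicting $T(\delta) = \emptyset$. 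Hence no such $P$ exists and $\TCR{\tilde{e},I,\delta}$ is unsolvable. I would flag the existence of a triggered run as the one point worth stating carefully, since otherwise a protocol in which nobody ever responds would be vacuously solving.

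For part~2, fix an implementation $\ttildeimpl \in T(\delta)$ (available since $\delta$ is implementable) and a protocol $P_0 \in \TCR{\tilde{e},I,0}$ witnessing solvability of the simultaneous-response problem; by the full-information simulation argument I may assume $P_0$ is a full-information protocol. Since $\delta \equiv 0$ forces all values of $\timpl_r$ to coincide, in each triggered run $r$ of $P_0$ all agents of $I$ respond at a single common time, which I call $t_0(r)$. The idea is to obtain $P$ from $P_0$ by modifying \emph{only} the response logic, delaying each agent $i \in I$ by exactly $\ttildeimpl(i)$ steps: in $P$, agent $i$ responds at time $t$ precisely when, in the $P_0$-dynamics, it would have responded at time $t - \ttildeimpl(i)$. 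Because $P_0$ is full-information, the local state of $i$ at $t$ determines its local state at every earlier time, hence whether the $P_0$-response condition held $\ttildeimpl(i) \ge 0$ steps earlier; this makes the delayed logic well-defined and realizable without foresight, as $\ttildeimpl(i) \in \timeset$ is non-negative.

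It then remains to verify that $P$ is as required. Modifying only the response logic of a full-information protocol leaves the runs and their ND events unchanged under the natural isomorphism, so $P$ still has each agent respond exactly once per triggered run and never respond in a non-triggered one; thus $P \in \ER{\tilde{e},I}$. In a triggered run $r$ the induced map is $\timpl_r = \ttildeimpl + t_0(r)$, i.e.\ $\ttildeimpl$ translated by the constant $t_0(r) \in \timeset$, so $\timpl_r$ is similar to $\ttildeimpl$, and by the translation-closure of $T(\delta)$ noted in \remarkref{implementation-properties} together with $\ttildeimpl \in T(\delta)$ we get $\timpl_r \in T(\delta)$. Hence $P \in \TCR{\tilde{e},I,\delta}$, giving both solvability and the similarity claim. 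The main obstacle I anticipate is making precise that a full-information agent can implement the $\ttildeimpl(i)$-step delay, i.e.\ identify from its current state the moment its $P_0$-condition first held and count off the prescribed steps; this is exactly where the discreteness of $\timeset$, the assumption that each $S_i$ is large enough to encode the full local history, and the non-negativity of implementations (codomain $\timeset$ rather than $\Delta$, so the shift is a delay rather than an impossible advance) all come into play.
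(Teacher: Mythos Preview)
Your proposal is correct and follows essentially the same approach as the paper: part~1 via the contrapositive using any triggered run, and part~2 by delaying each agent's simultaneous response by $\ttildeimpl(i)$ and invoking translation-closure of $T(\delta)$. The only cosmetic differences are that the paper implements the delay by adding a small auxiliary counter to each agent's state (rather than first reducing $P_0$ to a full-information protocol and appealing to perfect recall), and it does not explicitly justify $\TRP \neq \emptyset$, a point you rightly flag.
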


\begin{remark}\label{remark:re-solvable-iff-implementable}
Regarding \claimref{solvable-iff-implementable}:
\begin{itemize}
\item
Solvability of $\TCR{\tilde{e},I,0}$ is equivalent to the ability to
coordinate a simultaneous response of all agents in $I$
in every $\tilde{e}$-triggered run. This classic problem, known as the
``Firing Squad'' problem\cite{firing-squad1,firing-squad2} will repeatedly
appear in this
work. (See, e.g.\ \thmref{broom},
\corref{tcr-iff-sr}, \thmref{infinite-broom-or-infinite-brooms} and
\thmref{concise-broom}.)
\item
\corref{tcr-iff-sr} shows that under certain conditions, the 0 function
in the second part of \claimref{solvable-iff-implementable} may be replaced
with a variety of other functions.
\end{itemize}
\end{remark}

\begin{proof}[Proof of \claimref{solvable-iff-implementable}]
For the first part, assume that
$\TCR{\tilde{e},I,\delta}$ is solvable. Thus,
there exists $P \in \TCR{\tilde{e},I,\delta}$. Let $r \in \TRP$. By
\defnref{tcr}, $\timpl_r$ is an implementation of $\delta$,
and hence $\delta$ is implementable.

For the second part, assume that $\delta$ is implementable and
let $\ttildeimpl \in T(\delta)$.
Let $P_0 \in \TCR{\tilde{e},I,0}$.
Let $P$ be the protocol obtained from $P_0$ by modifying the response logic of
each $i \in I$ to ``respond $\ttildeimpl(i)$ time units
after the time $i$ would have responded in $P_0$''.
(This may require adding some auxiliary variables, which consume only a finite
amount of memory, to the state of $i$.\footnote{
As the set of possible states of $i$ is predefined by $\gamma$, it is
not technically
possible to add variables to it. The technical operation that we denote as
``adding a variable'' to the state of $i$ in $P_0$ involves utilizing states of
$i$ that are not utilized by $P_0$: Let $S^P_i \subseteq S_i$ be the
set of states of $i$ that are utilized in $P_0$ (i.e.\ the union of its set of
initial states $\tilde{S}_i$, with the image of the first coordinate of
${P_0}_i$). We choose, as the set of states of $i$
utilized by $P$, a subset of $S_i$ that is in one-to-one correspondence with
$\tilde{S}_i \times V$, where $V$ is the set of possible values of the variable
we wish to ``add'' to the state of $i$. As noted in \chapterref{discrete},
we assume that $S_i$ is of large enough cardinality to allow for the existence
of such a subset thereof.
} For the continuous-time model presented in \appref{continuous}, this response
logic may be implemented using timers.)
We complete the proof by showing that $P \in \TCR{\tilde{e},I,\delta}$.
As $P_0$ and $P$ only differ by their response logic, there is a natural
isomorphism between $\RPZERO$ and $\RP$, which preserves the set of ND events.
Let $r \in \RP$, and denote by $r_0$ the run of $P_0$ matched to $r$ under this
isomorphism.
If $r \in \TRP$, then $r_0 \in \TR{P_0}$, and
we have $\timpl_r = \timpl_{r_0} + \ttildeimpl < \infty$.
Note that as $P_0 \in \TCR{\tilde{e},I,0}$, we obtain that $\timpl_{r_0}$ is
a constant function --- denote its value by $t_{r_0}$.
By \remarkref{implementation-properties}, $\timpl_r = t_{r_0} + \ttildeimpl$ is an
implementation of $\delta$ (which is, by definition, similar to $\ttildeimpl$).
If $r \notin \TRP$, then
$r_0 \notin \TR{P_0}$.
In this case, we have
$\timpl_{r_0}\equiv\infty$, and therefore $\timpl_r\equiv\infty$ as well.
\end{proof}

By the second part of \claimref{solvable-iff-implementable},
the study of the implementability
of a constraining function $\delta$ may also be thought of as the study of
solvability of $\TCR{\tilde{e},I,\delta}$ in contexts $\gamma$ in which
$\TCR{\tilde{e},I,0}$ is solvable. (By both parts of that claim,
contexts in which $\TCR{\tilde{e},I,0}$ is solvable may be thought of as ideal
for solvability, in the sense that instances of the
timely-coordinated response problem that are unsolvable therein are unsolvable
in any context.\footnote{
Ideality for solvability, in this sense, only reveals part of the whole picture,
as it disregards the question of how fast can the responses
be coordinated after an occurrence of $\tilde{e}$.
})

As a first step toward analyzing the implementability of a function, we define
a canonisation operation on constraining functions, which preserves the set
of implementations.
The canonical form of a constraining function will aid us
in other aspects of the analysis of the timely-coordinated response problem
as well, due to \remarkref{tcr-properties}. In order to define this
canonical form, we consider $\delta$ as a weight function on the edges
of a directed graph on $I$.

\begin{defn}
Given an implementation-spec $\implspec$,
we define the weighted directed graph of $\delta$ as
$G_{\delta} \eqdef (I,E_{\delta},\delta|_{E_{\delta}})$,
where $E_{\delta} \eqdef \{(i,j) \in \distinctpairs{I} \mid \delta(i,j) \ne \infty\}$.
\end{defn}

\begin{remark}
Let $\implspec$ be an implementation-spec. By the above definition:
\begin{enumerate}
\item
If $I=\{i,j\}$, then every $\apath \in \dpaths$ is either of the form
$(\LaTeXunderbrace{i,j,i,j,\ldots}_n)$ or of the form $(\LaTeXunderbrace{j,i,j,i,\ldots}_n)$, for some $n \in \mathbb{N}$.
(If $|I|>2$, then $\dpaths$ is much richer.)

\item
$\forall \apath \in \dpaths: \dlength(\apath) < \infty$.
\end{enumerate}

\end{remark}

\begin{defn}[Canonical Form]
Let $\implspec$ be an implementation-spec.
We define the ``canonical form'' of $\delta$ as
$\hat{\delta} \eqdef \delta_{G_{\delta}}$, the distance function of
$G_{\delta}$. By slight abuse of notation,
we allow ourselves to write $\hat{\delta}$ instead of
$\hat{\delta}|_{\distinctpairs{I}}$ on some occasions below.
\end{defn}

\begin{remark}\label{remark:canonical-form-properties}
Let $\implspec$ be an implementation-spec.
By the above definition, $\hat{\delta}$ satisfies:
\begin{itemize}

\item
$\forall i \in I: \hat{\delta}(i,i) \in \{0, -\infty\}$.
(Thus, by \remarkref{implementation-properties}, for implementable
$\delta$ we obtain $\hat{\delta}|_{\{(i,i)\mid i \in I\}} = 0$.)
Furthermore,
$\hat{\delta}(i,i)=-\infty$ iff $i$ is a vertex along a negative cycle in
$G_{\delta}$.

\item
Idempotence:
$\hat{\hat{\delta}}=\hat{\delta}$.

\item
Minimality: $\hat{\delta} \le \delta$.

\item
Triangle inequality: $\forall i,j,k \in I:
\hat{\delta}(i,k) \le \hat{\delta}(i,j) + \hat{\delta}(j,k)$.

\item
Equivalence:
$T(\delta) = T(\hat{\delta})$.
($\subseteq$:~by the triangle inequality for path lengths.
$\supseteq$:~by minimality and by
\remarkref{implementation-properties} (monotonicity of $T$).)

\item
Monotonicity:
Let $\delta':\distinctpairs{I} \rightarrow \Delta$.
If $\delta \le \delta'$, then $\hat{\delta} \le \widehat{\delta}'$.

\end{itemize}
\end{remark}

We are now ready to characterise the implementable functions from
$\distinctpairs{I}$ to $\Delta$. The first part of the following lemma
performs this task, while its second part shows that for every
implementable $\delta$, there exists an implementation that is minimal
in every coordinate --- a result that gives us hope to find an optimal
response logic for the timely-coordinated response problem
for every $\delta$.\footnote{
We do not wish, by any means, to imply that the existence of such
a minimal implementation implies the existence of such an optimal
response logic, nor even that in every run $r$
of a protocol endowed with such a response logic, $\timpl_r$ is
similar to this minimal implementation. We merely note that in the
hypothetical absence of such a minimal implementation for some implementable
$\delta$, it would have been possible to show that no optimal response logic
exists for the timely-coordinated response problem based on that $\delta$.
See also a discussion regarding
\remarkref{broom-implies-path-traversing-centipede} below.
}

\begin{lemma}\label{lemma:implementable-iff}
Let $\implspec$ be an implementation-spec.
\begin{enumerate}
\item
$\delta$ is implementable
iff $\hat{\delta}|_{\{i\} \times I}$ is bounded from below for every $i \in I$.
\item
If $\delta$ is implementable, then
$i \mapsto -\inf(\hat{\delta}|_{\{i\} \times I})$
is an implementation thereof, which is minimal in each coordinate.\footnote{
A quick glance at this formulation of the minimal implementation may raise
a suspicion that perhaps it would have been more natural to define $\delta$
as the negation (in each coordinate) of the definition we have given.
While it is indeed possible to define $\delta$ this way, and while doing so
would have indeed given a more natural definition of the minimal
implementation, it would have also required us to work with greatest path
lengths instead of distances, with a reverse triangle inequality and with
order-reversing monotonicity, which may somehow seem less natural.
}
\end{enumerate}
\end{lemma}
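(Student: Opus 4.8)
The plan is to reduce everything to the canonical form $\hat{\delta}$ and then to recognise the problem as a feasibility question for a system of difference constraints, solved by shortest-path potentials. By the Equivalence property in \remarkref{canonical-form-properties}, $T(\delta) = T(\hat{\delta})$, so $\delta$ is implementable iff $\hat{\delta}$ is, and throughout it suffices to argue with the constraints $\timpl(j) \le \timpl(i) + \hat{\delta}(i,j)$. The natural candidate for the minimal implementation of Part~2 is the map $\timpl^* : i \mapsto -\inf(\hat{\delta}|_{\{i\} \times I})$, and I would prove Part~2 first and extract Part~1 from it.

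First I would establish the ``only if'' direction of Part~1 together with the minimality assertion of Part~2 in a single stroke. Let $\timpl \in T(\delta) = T(\hat{\delta})$ be any implementation. Since $\timpl$ takes values in $\timeset \subseteq \mathbb{R}_{\ge0}$, we have $\timpl(j) \ge 0$ for all $j$, and the constraint $\timpl(j) \le \timpl(i) + \hat{\delta}(i,j)$ rearranges to $\hat{\delta}(i,j) \ge \timpl(j) - \timpl(i) \ge -\timpl(i)$. Thus $\hat{\delta}|_{\{i\} \times I}$ is bounded from below (by $-\timpl(i)$) for every $i$, proving implementability $\Rightarrow$ boundedness. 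Rewriting the same inequality as $-\hat{\delta}(i,j) \le \timpl(i) - \timpl(j) \le \timpl(i)$ and taking the supremum over $j$, using $\sup_j(-\hat{\delta}(i,j)) = -\inf(\hat{\delta}|_{\{i\}\times I}) = \timpl^*(i)$, yields $\timpl^*(i) \le \timpl(i)$, so $\timpl^*$ is a pointwise lower bound for every implementation --- this is the minimality.

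Next I would prove the ``if'' direction of Part~1 by verifying that $\timpl^*$ is itself an implementation. Assuming $\hat{\delta}|_{\{i\}\times I}$ is bounded from below for each $i$, I first check $\timpl^*$ is $\timeset$-valued: boundedness rules out the value $-\infty$, and since $\hat{\delta}(i,i) \in \{0,-\infty\}$ by \remarkref{canonical-form-properties} we get $\hat{\delta}(i,i) = 0$, whence $\inf(\hat{\delta}|_{\{i\}\times I}) \le 0$ and $\timpl^*(i) \ge 0$; in the discrete model the infimum of a below-bounded set of integral path-lengths is attained and integral, so $\timpl^*(i) \in \timeset$. The constraint then follows from the triangle inequality of \remarkref{canonical-form-properties}: for every $k$ we have $\inf(\hat{\delta}|_{\{i\}\times I}) \le \hat{\delta}(i,k) \le \hat{\delta}(i,j) + \hat{\delta}(j,k)$, and taking the infimum over $k$ of the resulting outer inequality gives $\inf(\hat{\delta}|_{\{i\}\times I}) \le \hat{\delta}(i,j) + \inf(\hat{\delta}|_{\{j\}\times I})$, i.e.\ $\timpl^*(j) \le \timpl^*(i) + \hat{\delta}(i,j)$. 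Hence $\timpl^* \in T(\hat{\delta}) = T(\delta)$.

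The step I expect to be the main obstacle is the honest handling of the infinite-agent setting. With infinitely many agents, implementability is \emph{strictly} stronger than the absence of negative cycles in $G_{\delta}$ (equivalently, than $\hat{\delta}(i,i) > -\infty$): the infima $\inf(\hat{\delta}|_{\{i\}\times I})$ can diverge to $-\infty$ even when every individual distance is finite, which is precisely why the statement is phrased in terms of boundedness rather than cycles. Care is likewise needed to confirm that $\timpl^*$ genuinely lands in $\timeset$ --- the non-negativity of $\timeset$ is used twice (once to derive $\hat{\delta}(i,j) \ge -\timpl(i)$ and once for minimality), and the integrality of $\timpl^*$ in the discrete model relies on a below-bounded set of integers attaining its minimum.
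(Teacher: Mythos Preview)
Your proposal is correct and follows essentially the same route as the paper: first derive $\hat{\delta}(i,j)\ge -\timpl(i)$ from any implementation $\timpl$ (giving both the ``only if'' direction and minimality), then verify that $\timpl^*$ satisfies the $\hat{\delta}$-constraints. The only cosmetic difference is that where you invoke the triangle inequality for $\hat{\delta}$ from \remarkref{canonical-form-properties}, the paper re-derives the needed inequality by prepending $i$ to an arbitrary path out of $j$; these are the same argument. One small point to tighten: your bound $\hat{\delta}(i,j)\ge -\timpl(i)$ comes from the implementation constraint, which is only stated for $j\ne i$, so when you take the supremum/infimum over all of $\{i\}\times I$ you should note separately (as the paper does) that $\hat{\delta}(i,i)=0\ge -\timpl(i)$.
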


\begin{proof}
We first prove that if $\delta$ is implementable, then
every $\timpl \in T(\delta)$ satisfies
$\timpl(i) \ge -\inf(\hat{\delta}|_{\{i\} \times I}))$. This
implies the first direction (``$\Rightarrow$'') of the first part, and the
inequality in the second part.

Assume that $\delta$ is implementable and let $\timpl \in T(\delta)$
be an implementation thereof.
By \remarkref{canonical-form-properties} (equivalence),
$\timpl \in T(\hat{\delta})$ as well.
Let $i \in I$. By definition of an implementation, we obtain
\[ \forall j \in I \setminus \{i\}: \hat{\delta}(i,j) \ge \timpl(j) - \timpl(i) \ge 0 - \timpl(i) = -\timpl(i). \]
By \remarkref{canonical-form-properties}, $\hat{\delta}(i,i)=0\ge-\timpl(i)$.
Thus, we have $\hat{\delta}|_{\{i\}\times I} \ge -\timpl(i)$.
Taking the infimum over $I$ of both sides of this inequality
completes this part of the proof.

We now prove that if $\delta|_{\{i\} \times I}$ is bounded from below for
every $i \in I$, then the function defined in the second part
is indeed an implementation of $\delta$. This completes the proof of both
parts.

Define $\timpl:I\rightarrow\timeset$ by
$i \mapsto -\inf(\delta|_{\{i\} \times I}) < \infty$.
By \remarkref{canonical-form-properties}, $\delta(i,i)\le 0$, and therefore
indeed $\timpl \ge 0$. Let $(i,j) \in \distinctpairs{I}$.
Let $\apathfull \in \dpaths$ s.t.\ $p_1=j$. Define $p_0 \eqdef i$.
Note that
\[\inf(\delta|_{\{i\}\times I}) \le \dlength((p_m)_{m=0}^n) =
\delta(i,j) + \dlength((p_m)_{m=1}^n).\]
By taking the infimum of both sides over all $\apath \in \dpaths$ s.t.\
$p_1=j$, we obtain
$\inf(\delta|_{\{i\} \times I}) \le
\delta(i,j) + \inf(\delta|_{\{j\} \times I})$.
Thus, we have $\timpl(j) \le \timpl(i) + \delta(i,j)$, as required.
\end{proof}

For the case in which $I$ is finite, the first part of
\lemmaref{implementable-iff} yields the following, more tangible, implementation
criterion.

\begin{cor}\label{cor:implementable-iff-no-negative-cycles}
Let $\implspec$ be an implementation-spec s.t.\ $|I|<\infty$ and
$\delta>-\infty$. \linebreak
$\delta$ is implementable iff $G_{\delta}$ contains no negative cycles.
\end{cor}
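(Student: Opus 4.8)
The plan is to reduce the statement to the first part of \lemmaref{implementable-iff} by exploiting the finiteness of $I$. By that lemma, $\delta$ is implementable iff $\hat{\delta}|_{\{i\} \times I}$ is bounded from below for every $i \in I$. Since $I$ is finite, $\{i\} \times I$ is a finite set, so a function on it fails to be bounded from below precisely when it attains the value $-\infty$. Hence the implementability criterion collapses to the single condition
\[
\forall i,j \in I: \hat{\delta}(i,j) > -\infty,
\]
and it suffices to prove that $G_{\delta}$ contains no negative cycle iff $\hat{\delta}$ never takes the value $-\infty$.

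For the direction ``negative cycle $\Rightarrow$ not implementable'' I would invoke \remarkref{canonical-form-properties} directly: it records that $\hat{\delta}(i,i)=-\infty$ exactly when $i$ lies on a negative cycle of $G_{\delta}$. Thus a negative cycle forces $\hat{\delta}(i,i)=-\infty$ for some $i$, so $\hat{\delta}|_{\{i\}\times I}$ attains $-\infty$ and is unbounded below; by the contrapositive of the first part of \lemmaref{implementable-iff}, $\delta$ is unimplementable.

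For the converse, ``no negative cycle $\Rightarrow$ implementable'', I would show that the absence of negative cycles makes every distance $\hat{\delta}(i,j)$ finite. Fix $i,j$. Because $\delta > -\infty$ by hypothesis, every edge of $G_{\delta}$ carries a finite weight (edges of weight $\infty$ having been excluded from $E_{\delta}$ by definition), so every simple path in $G_{\delta}$ has finite length, and there are only finitely many such paths since $I$ is finite. The key step is to argue that in the infimum $\hat{\delta}(i,j)=\inf\{\dlength(\apath)\mid \apathfull \in \dpaths,\ p_1=i,\ p_n=j\}$ one may restrict attention to simple paths: any walk from $i$ to $j$ decomposes into a simple $i$-to-$j$ path together with a collection of cycles, each of which has non-negative length by assumption, so the walk is no shorter than its underlying simple path. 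Consequently $\hat{\delta}(i,j)$ is bounded below by the finite minimum of the lengths of simple $i$-to-$j$ paths, hence $>-\infty$. As this holds for every $i,j$, each $\hat{\delta}|_{\{i\}\times I}$ is bounded from below, and \lemmaref{implementable-iff} yields implementability.

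The main obstacle is precisely this last reduction from arbitrary walks to simple paths --- the standard shortest-path fact that, in the absence of negative cycles, removing a cycle never increases a walk's length. I expect this to be the only genuinely non-bookkeeping step; the remainder is a direct translation between ``bounded below on a finite set'', ``never equal to $-\infty$'', and ``no negative cycle'', all of which are already packaged in \remarkref{canonical-form-properties} and \lemmaref{implementable-iff}. Note that the hypothesis $\delta>-\infty$ is genuinely used here: a single edge of weight $-\infty$ would send distances to $-\infty$ without creating a negative \emph{cycle}, and it is exactly this hypothesis that rules such edges out.
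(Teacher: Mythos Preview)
Your proposal is correct and matches the paper's intended derivation: the paper offers no separate proof for this corollary, simply stating that it follows from the first part of \lemmaref{implementable-iff}, and your argument spells out precisely how---reducing ``bounded from below on a finite set'' to ``never equal to $-\infty$'', then using \remarkref{canonical-form-properties} and the standard simple-path reduction (with $\delta>-\infty$ ensuring finite edge weights) to characterise when $\hat{\delta}$ avoids $-\infty$.
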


For completeness, we now prove a uniqueness property
one may expect from the canonical form defined above, showing that
the equivalence classes of implementable constraining functions,
under the equivalence
relation $\delta_1 \sim \delta_2 \Leftrightarrow T(\delta_1)=T(\delta_2)$,
are in one-to-one, order-preserving, correspondence with canonical forms.
At the heart of the proof of this property lies the following lemma.

\begin{lemma}\label{lemma:canonical-form-bounds-attained}
Let $\implspec$ be an implementation-spec s.t.\ $\delta$ is implementable,
and let $\tilde{\imath},\tilde{\jmath} \in I$.
\begin{enumerate}
\item
If $\hat{\delta}(\tilde{\imath},\tilde{\jmath})<\infty$, then
there exists an implementation $\timpl \in T(\delta)$ satisfying
\mbox{$\timpl(\tilde{\jmath})-\timpl(\tilde{\imath})=\hat{\delta}(\tilde{\imath},\tilde{\jmath})$}.
\item
If $\hat{\delta}(\tilde{\imath},\tilde{\jmath})=\infty$,
then for every $K \in \timeset$,
there exists an implementation $\timpl \in T(\delta)$ satisfying
$\timpl(\tilde{\jmath})-\timpl(\tilde{\imath}) \ge K$.
\end{enumerate}
\end{lemma}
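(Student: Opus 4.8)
The plan is to force the required gap between $\tilde{\imath}$ and $\tilde{\jmath}$ by \emph{lowering a single entry} of $\delta$ --- thereby adding a ``reverse'' constraint --- and then to reduce everything to the implementability criterion of \lemmaref{implementable-iff}. The starting observation is that $\hat{\delta}(\tilde{\imath},\tilde{\jmath})$ is precisely the \emph{largest} possible value of $\timpl(\tilde{\jmath})-\timpl(\tilde{\imath})$ over all $\timpl \in T(\delta)$: since $T(\delta)=T(\hat{\delta})$ by \remarkref{canonical-form-properties}, every implementation satisfies $\timpl(\tilde{\jmath}) \le \timpl(\tilde{\imath})+\hat{\delta}(\tilde{\imath},\tilde{\jmath})$. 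Thus the first part asks us to realise this supremum, while the second asks us to drive the now-unbounded gap past any prescribed $K$. We may assume $\tilde{\imath}\ne\tilde{\jmath}$, the case $\tilde{\imath}=\tilde{\jmath}$ being trivial.

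For both parts, set $w\eqdef -\hat{\delta}(\tilde{\imath},\tilde{\jmath})$ in the first part and $w\eqdef -K$ in the second, and define $\delta':\distinctpairs{I}\rightarrow\Delta$ to agree with $\delta$ everywhere except that $\delta'(\tilde{\jmath},\tilde{\imath})\eqdef\min\{\delta(\tilde{\jmath},\tilde{\imath}),w\}$. Then $\delta'\le\delta$, so $T(\delta')\subseteq T(\delta)$ by monotonicity of $T$ (\remarkref{implementation-properties}); moreover every $\timpl\in T(\delta')$ obeys the reverse constraint $\timpl(\tilde{\imath})\le\timpl(\tilde{\jmath})+w$, i.e.\ $\timpl(\tilde{\jmath})-\timpl(\tilde{\imath})\ge -w$. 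In the first part this combines with the forward bound above to pin $\timpl(\tilde{\jmath})-\timpl(\tilde{\imath})=\hat{\delta}(\tilde{\imath},\tilde{\jmath})$ exactly; in the second it already yields $\timpl(\tilde{\jmath})-\timpl(\tilde{\imath})\ge K$. Hence it suffices to exhibit a single $\timpl\in T(\delta')$, that is, to show that $\delta'$ is implementable.

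The heart of the proof is therefore showing $\delta'$ implementable, which by \lemmaref{implementable-iff} amounts to checking that $\widehat{\delta'}|_{\{a\}\times I}$ is bounded from below for every $a\in I$. A path in $G_{\delta'}$ either avoids the lowered edge $(\tilde{\jmath},\tilde{\imath})$, in which case it is a $G_{\delta}$-path of length at least $\hat{\delta}(a,b)\ge\inf(\hat{\delta}|_{\{a\}\times I})>-\infty$, or it traverses that edge, in which case I decompose it at its successive visits to $\tilde{\jmath}$ and $\tilde{\imath}$. In the first part, each full loop from $\tilde{\imath}$ out to $\tilde{\jmath}$ and back along the lowered edge has length at least $\hat{\delta}(\tilde{\imath},\tilde{\jmath})+w=0$, so \emph{no} number of traversals can lower the bound, and a path from $a$ to $b$ using the edge has length at least $\hat{\delta}(a,\tilde{\jmath})+\hat{\delta}(\tilde{\imath},b)+w$; its infimum over $b$ stays finite because $\hat{\delta}(\tilde{\imath},\cdot)$ is bounded below --- which is exactly implementability of $\delta$ applied at the reference vertex $\tilde{\imath}$. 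In the second part, since $\hat{\delta}(\tilde{\imath},\tilde{\jmath})=\infty$ there is no $G_{\delta}$-path from $\tilde{\imath}$ back to $\tilde{\jmath}$, so the lowered edge can be used \emph{at most once}; a single use gives length at least $\hat{\delta}(a,\tilde{\jmath})-K+\hat{\delta}(\tilde{\imath},b)$, whose infimum over $b$ is again finite by the same boundedness of $\hat{\delta}(\tilde{\imath},\cdot)$. Combining the two cases bounds $\inf(\widehat{\delta'}|_{\{a\}\times I})$ from below for every $a$, so $\delta'$ is implementable and both parts follow.

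The main obstacle is exactly this last boundedness check when $I$ is infinite: one cannot simply argue ``no negative cycle was created'', as one could for finite $I$ via \corref{implementable-iff-no-negative-cycles}. The crux is the non-negativity of the loop through $\tilde{\jmath}$ in the first part --- which is precisely why one must lower the edge to $-\hat{\delta}(\tilde{\imath},\tilde{\jmath})$ and not below it --- together with the impossibility of repeating the edge in the second part; granted the uniform lower bound on $\hat{\delta}(\tilde{\imath},\cdot)$ furnished by \lemmaref{implementable-iff}, these two facts prevent the added shortcut from inducing an unbounded descent.
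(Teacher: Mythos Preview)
Your argument is correct, and it takes a genuinely different route from the paper's.

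The paper does not modify a single entry of $\delta$; instead it \emph{caps} every $\infty$-entry of $\delta$ by a large finite value $d_j$ (chosen so that $\hat{\delta}|_{\{j\}\times I}\ge -d_j$), obtaining a finite-valued $\delta'$, and then explicitly writes down the implementation $\timpl(\cdot)\eqdef d_{\tilde{\imath}}+\widehat{\delta'}(\tilde{\imath},\cdot)$, verifying the triangle inequality directly. If the resulting gap $\widehat{\delta'}(\tilde{\imath},\tilde{\jmath})$ falls short of the target, it inflates all the $d_j$'s by the deficit and repeats. Your approach is dual in spirit: rather than making the graph denser to pin $\timpl$ down by an explicit formula, you add a \emph{single} reverse edge $(\tilde{\jmath},\tilde{\imath})$ of weight $w$ that encodes the desired lower bound on the gap, and then delegate the existence of $\timpl$ to \lemmaref{implementable-iff}. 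The crucial step in both proofs is the same bounded-below fact $\inf(\hat{\delta}|_{\{\tilde{\imath}}\times I\})>-\infty$, but you use it to bound the tail segment after the inserted edge, whereas the paper uses it (via induction on the number of new edges traversed) to control paths in the capped graph. Your decomposition at traversals of the single new edge is arguably more transparent --- in part~1 the loop $\tilde{\imath}\to\tilde{\jmath}\to\tilde{\imath}$ has length $\ge 0$ by construction, in part~2 the edge can be used at most once --- and it avoids the paper's two-pass structure (choose $d_j$'s, then possibly inflate them). The paper's proof, on the other hand, yields an explicit formula for $\timpl$ rather than an existence statement via the implementability criterion.
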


\begin{proof}
By \lemmaref{implementable-iff}, $\forall i \in I: \exists d_i \in \timeset:
\hat{\delta}|_{\{i\} \times I} \ge -d_i$. (For the time being, we may choose
$(-d_i)_{i \in I}$ to be the infima of the respective restrictions of
$\hat{\delta}$.) We define
$\delta':\distinctpairs{I} \rightarrow \Delta \setminus \{\infty\}$ by
\[ \forall i,j \in \distinctpairs{I}: \delta'(i,j) =
\begin{cases}
\delta(i,j) &\delta(i,j) < \infty \\
d_j         &\delta(i,j) = \infty.
\end{cases} \]

As $\delta' \le \delta$, by monotonicity of $T$ it is enough
to find an implementation of $\delta'$ that satisfies the conditions of
the lemma. By
\remarkref{canonical-form-properties} (minimality),
it is enough find such an implementation for
$\widehat{\delta'}$.

We first show that $\forall i \in I: \widehat{\delta'}|_{\{i\} \times I} \ge -d_i$.
Let $i \in I$ and let $\apathfull \in \paths{G_{\delta'}}$ s.t.\ $p_1 = i$.
Set $l=|\{k \in [n-1] \mid \delta(p_k,p_{k+1}) = \infty\}|$ --- 
the number of ``new'' edges in $\apath$, which do not exist in $G_{\delta}$.
We show, by induction on $l$, that $\dtaglength(\apath) \ge -d_i$.

Base: If $l=0$, then $\dtaglength(\apath) = \dlength(\apath) \ge -d_i$.

Induction step: Assume $l \ge 1$. Let $k \in [n-1]$ be maximal such that
$(p_k,p_{k+1})$ is a ``new'' edge (i.e.\ $\delta(p_k,p_{k+1})=\infty$).
By definition of $\delta'$, we have $\delta'(p_k,p_{k+1})=d_{p_{k+1}}$.
Thus, by the induction hypothesis, we obtain
\[
\dtaglength(\apath) =
\dtaglength((p_m)_{m=1}^k) + \delta'(p_k,p_{k+1})
+ \dlength((p_m)_{m=k+1}^n)
\ge -d_i + d_{p_{k+1}} - d_{p_{k+1}} = -d_i.
\]
and the proof by induction is complete. In particular, we conclude that
$\widehat{\delta'} > -\infty$, and by definition, also
$\widehat{\delta'} \le \delta' < \infty$.

We claim that
$t \eqdef d_{\tilde{\imath}} + \widehat{\delta'}(\tilde{\imath},\cdot) \ge 0$
is an implementation of
$\widehat{\delta'}$.
Indeed, for every $(j,k) \in \distinctpairs{I}$, by
\remarkref{canonical-form-properties} (triangle inequality), we have
\[\timpl(k) = d_{\tilde{\imath}} + \widehat{\delta'}(\tilde{\imath},k) \le
d_{\tilde{\imath}} + \widehat{\delta'}(\tilde{\imath},j) +
\widehat{\delta'}(j,k) = \timpl(j) +
\widehat{\delta'}(j,k).\]

If $\hat{\delta}(\tilde{\imath},\tilde{\jmath})<\infty$, we define
$K \eqdef \hat{\delta}(\tilde{\imath},\tilde{\jmath})$; otherwise,
let $K \in \timeset$ be arbitrarily large as in the conditions of the lemma.
As $\delta'$ is implementable, by \remarkref{canonical-form-properties} we
obtain $\widehat{\delta'}(\tilde{\imath},\tilde{\imath})=0$.
Therefore,
\[\timpl(\tilde{\jmath}) - \timpl(\tilde{\imath}) =
(d_{\tilde{\imath}} + \widehat{\delta'}(\tilde{\imath},\tilde{\jmath})) -
(d_{\tilde{\imath}} + \widehat{\delta'}(\tilde{\imath},\tilde{\imath})) =
\widehat{\delta'}(\tilde{\imath},\tilde{\jmath}).\]
Thus, if $\widehat{\delta'}(\tilde{\imath},\tilde{\jmath})
\ge K$, then the proof is complete.
(For the case in which $\hat{\delta}(\tilde{\imath},\tilde{\jmath})<\infty$, we obtain $\widehat{\delta'}(\tilde{\imath},\tilde{\jmath})\le K$
by \remarkref{canonical-form-properties} (monotonicity),
since $\delta' \le \delta$.)

Otherwise,
set $d \eqdef K - \widehat{\delta'}(\tilde{\imath},\tilde{\jmath}) >0$,
and define $d_i' \eqdef d_i + d > d_i$, for every $i \in I$.
Therefore, $-d_i' < -d_i \le \hat{\delta}|_{\{i\} \times I}$ for every $i \in I$.
Denote by $\delta''$ the function constructed from $\delta$
in the same way in which $\delta'$ was constructed from it,
but using the lower bounds $(-d_i')_{i \in I}$ rather than
$(-d_i)_{i \in I}$.
As explained above, in order to complete the proof it is enough to show that
$\widehat{\delta''}(\tilde{\imath},\tilde{\jmath}) \ge K$.
Let $\apathfull \in \paths{G_{\delta''}}$ s.t.\ $p_1 = \tilde{\imath}$ and
$p_n = \tilde{\jmath}$.
If $\forall k \in [n-1]: \delta(p_k, p_{k+1}) \ne \infty$, then
$\dtagtaglength(\apath) = \dlength(\apath) \ge \hat{\delta}(\tilde{\imath},\tilde{\jmath}) \ge K$.
Otherwise,
\begin{align*}
&\:\dtagtaglength(\apath) = & \text{by definitions of $\delta'$ and $\delta''$} \\
=&\:
\dtaglength(\apath) + d\cdot|\{k \in [n-1] \mid \delta(p_k, p_{k+1}) = \infty \}| \ge & \text{as this set is non-empty} \\
\ge&\: \dtaglength(\apath) + d \ge & \text{by definition of $\widehat{\delta'}$} \\
\ge&\: \widehat{\delta'}(\tilde{\imath},\tilde{\jmath}) + d = & \text{by definition of $K$} \\
=&\:  K.
\end{align*}
Either way, the proof is complete.
\end{proof}

While unimplementable functions whose canonical forms differ may exist (due
to $I$ not necessarily being finite, and due to the fact that $G_{\delta}$ needs
not necessarily be strongly connected), we now conclude, using
\lemmaref{canonical-form-bounds-attained}, that for implementable
functions, the map $\hat{\delta}\mapsto T(\delta)$ from the canonical
form of an implementable function $\delta$ to the set of implementations of
$\delta$ is a well-defined, order-preserving, monomorphism.

\begin{cor}\label{cor:canonical-form-order}
Let $I$ be a set and let
$\delta_1,\delta_2:\distinctpairs{I} \rightarrow \Delta$ s.t.\
$\delta_1$ is implementable.
$\hat{\delta}_1 \le \hat{\delta}_2$ iff $T(\delta_1) \subseteq T(\delta_2)$.
\end{cor}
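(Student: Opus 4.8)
The plan is to prove the two implications separately, with the forward direction being essentially formal and the backward direction being the substantive one that genuinely uses the implementability of $\delta_1$ via \lemmaref{canonical-form-bounds-attained}.

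For the forward direction ($\hat{\delta}_1 \le \hat{\delta}_2 \Rightarrow T(\delta_1) \subseteq T(\delta_2)$), I would simply chain the equivalence and monotonicity facts already on hand. By the equivalence property in \remarkref{canonical-form-properties} we have $T(\delta_k) = T(\hat{\delta}_k)$ for $k \in \{1,2\}$, and by monotonicity of $T$ from \remarkref{implementation-properties} the hypothesis $\hat{\delta}_1 \le \hat{\delta}_2$ gives $T(\hat{\delta}_1) \subseteq T(\hat{\delta}_2)$. Composing the three containments yields $T(\delta_1) = T(\hat{\delta}_1) \subseteq T(\hat{\delta}_2) = T(\delta_2)$. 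Note that this direction uses nothing about implementability.

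For the backward direction ($T(\delta_1) \subseteq T(\delta_2) \Rightarrow \hat{\delta}_1 \le \hat{\delta}_2$), I would fix an arbitrary pair $(\tilde{\imath},\tilde{\jmath}) \in \distinctpairs{I}$ and argue $\hat{\delta}_1(\tilde{\imath},\tilde{\jmath}) \le \hat{\delta}_2(\tilde{\imath},\tilde{\jmath})$ pointwise. The idea is to manufacture an implementation of $\delta_1$ whose gap $\timpl(\tilde{\jmath}) - \timpl(\tilde{\imath})$ realizes (or approaches) the canonical value $\hat{\delta}_1(\tilde{\imath},\tilde{\jmath})$, push it into $T(\delta_2)$ by the hypothesized inclusion, and then read the desired bound off the implementation inequality for $\hat{\delta}_2$ (using equivalence once more, so that membership in $T(\delta_2)$ is membership in $T(\hat{\delta}_2)$). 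Concretely I split into the two cases of \lemmaref{canonical-form-bounds-attained}. If $\hat{\delta}_1(\tilde{\imath},\tilde{\jmath}) < \infty$, part 1 supplies $\timpl \in T(\delta_1)$ with $\timpl(\tilde{\jmath}) - \timpl(\tilde{\imath}) = \hat{\delta}_1(\tilde{\imath},\tilde{\jmath})$; since $\timpl \in T(\delta_2) = T(\hat{\delta}_2)$, the constraint $\timpl(\tilde{\jmath}) \le \timpl(\tilde{\imath}) + \hat{\delta}_2(\tilde{\imath},\tilde{\jmath})$ forces $\hat{\delta}_2(\tilde{\imath},\tilde{\jmath}) \ge \hat{\delta}_1(\tilde{\imath},\tilde{\jmath})$. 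If $\hat{\delta}_1(\tilde{\imath},\tilde{\jmath}) = \infty$, part 2 supplies, for each $K \in \timeset$, an implementation of $\delta_1$ with gap at least $K$; the same membership argument then gives $\hat{\delta}_2(\tilde{\imath},\tilde{\jmath}) \ge K$ for every $K$, whence $\hat{\delta}_2(\tilde{\imath},\tilde{\jmath}) = \infty = \hat{\delta}_1(\tilde{\imath},\tilde{\jmath})$.

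The only genuine obstacle is the backward direction, and it has already been absorbed into \lemmaref{canonical-form-bounds-attained}: the real content is that for an implementable $\delta_1$ the canonical value $\hat{\delta}_1(\tilde{\imath},\tilde{\jmath})$ is actually witnessed by a single implementation (or approximated arbitrarily well, in the infinite case), which is precisely what lets a pointwise comparison of $\hat{\delta}_1$ and $\hat{\delta}_2$ be extracted from a set-level inclusion. I would be careful that the infinite case yields equality rather than strict inequality, and I would flag that implementability of $\delta_1$ is indispensable here: without it $T(\delta_1) = \emptyset$ would be vacuously contained in $T(\delta_2)$ while $\hat{\delta}_1$ could well exceed $\hat{\delta}_2$, breaking the equivalence.
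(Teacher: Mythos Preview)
Your proposal is correct and follows essentially the same approach as the paper: the forward direction is identical, and the backward direction uses the same key ingredient (\lemmaref{canonical-form-bounds-attained}) with the same finite/infinite case split. The only cosmetic difference is that the paper phrases the backward direction as a contrapositive (assuming $\hat{\delta}_1 \nleq \hat{\delta}_2$ and producing a witness in $T(\delta_1)\setminus T(\delta_2)$), whereas you argue the pointwise inequality directly; the content is the same.
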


\begin{proof}
$\Rightarrow$:
Assume that $\hat{\delta}_1 \le \hat{\delta}_2$.
By monotonicity of $T$ and by
\remarkref{canonical-form-properties} (equivalence), we have
$T(\delta_1) = T(\hat{\delta}_1) \subseteq T(\hat{\delta}_2) = T(\delta_2)$.

$\Leftarrow$:
Assume that $\hat{\delta}_1 \nleq \hat{\delta}_2$. Thus, there exist
$\tilde{\imath},\tilde{\jmath} \in I$ s.t.\
$\hat{\delta}_1(\tilde{\imath},\tilde{\jmath}) > \hat{\delta}_2(\tilde{\imath},\tilde{\jmath})$.
If $\hat{\delta}_1(\tilde{\imath},\tilde{\jmath})<\infty$, then
by \lemmaref{canonical-form-bounds-attained} there exists
$\timpl \in T(\delta_1)$ s.t.\ $\timpl(\tilde{\jmath})-\timpl(\tilde{\imath}) =
\hat{\delta}_1(\tilde{\imath},\tilde{\jmath}) > \hat{\delta}_2(\tilde{\imath},\tilde{\jmath})$,
and thus $\timpl \in T(\delta_1) \setminus T(\delta_2)$, and the proof is
complete.

If $\hat{\delta}_1(\tilde{\imath},\tilde{\jmath})=\infty$, then
$\hat{\delta}_2(\tilde{\imath},\tilde{\jmath})<\infty$ and thus there exists
$K \in \timeset$ s.t.\
$K > \hat{\delta}_2(\tilde{\imath},\tilde{\jmath})$.
Similarly to the proof of the
previous case, by \lemmaref{canonical-form-bounds-attained} there exists
$\timpl \in T(\delta_1)$ s.t.\ $\timpl(\tilde{\jmath})-\timpl(\tilde{\imath}) \ge K
> \hat{\delta}_2(\tilde{\imath},\tilde{\jmath})$.
Once again, we obtain that $\timpl \in T(\delta_1) \setminus T(\delta_2)$, and the
proof is complete.
\end{proof}

\begin{cor}\label{cor:canonical-form-uniqueness}
Let $I$ be a set and let
$\delta_1,\delta_2:\distinctpairs{I} \rightarrow \Delta$ s.t.\ at least
one of them is implementable.
$\hat{\delta}_1 = \hat{\delta}_2$ iff $T(\delta_1) = T(\delta_2)$.
\end{cor}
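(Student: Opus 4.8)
The plan is to deduce this uniqueness statement from the order-isomorphism of \corref{canonical-form-order} by the standard observation that an equality is a pair of opposite inequalities (respectively, inclusions). Since both the claim $\hat{\delta}_1 = \hat{\delta}_2 \Leftrightarrow T(\delta_1) = T(\delta_2)$ and the hypothesis ``at least one of $\delta_1,\delta_2$ is implementable'' are symmetric in $\delta_1$ and $\delta_2$, I would first assume without loss of generality that $\delta_1$ is implementable.

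The one genuine subtlety is that \corref{canonical-form-order} is asymmetric: it requires its \emph{first} argument to be implementable. To invoke it with either $\delta_1$ or $\delta_2$ playing that role, I must first upgrade the hypothesis ``$\delta_1$ implementable'' to ``both $\delta_1$ and $\delta_2$ implementable.'' I expect this promotion to be the main (and only non-routine) obstacle, and I would handle it separately in each direction. For the direction ``$T(\delta_1)=T(\delta_2) \Rightarrow \hat{\delta}_1 = \hat{\delta}_2$'', implementability of $\delta_2$ is immediate: by definition $\delta_1$ implementable means $T(\delta_1) \ne \emptyset$, hence $T(\delta_2) = T(\delta_1) \ne \emptyset$, so $\delta_2$ is implementable as well. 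For the direction ``$\hat{\delta}_1 = \hat{\delta}_2 \Rightarrow T(\delta_1)=T(\delta_2)$'', I would instead use the characterisation of \lemmaref{implementable-iff}: since $\delta_1$ is implementable, $\hat{\delta}_1|_{\{i\}\times I}$ is bounded from below for every $i \in I$; as $\hat{\delta}_2 = \hat{\delta}_1$, the same holds for $\hat{\delta}_2$, so $\delta_2$ is implementable.

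With both functions now known to be implementable in either direction, the rest is routine. For ``$\hat{\delta}_1 = \hat{\delta}_2 \Rightarrow T(\delta_1) = T(\delta_2)$'', I would split the equality of canonical forms into $\hat{\delta}_1 \le \hat{\delta}_2$ and $\hat{\delta}_2 \le \hat{\delta}_1$, and apply \corref{canonical-form-order} once with $\delta_1$ as first argument (yielding $T(\delta_1) \subseteq T(\delta_2)$) and once with $\delta_2$ as first argument (yielding $T(\delta_2) \subseteq T(\delta_1)$), concluding $T(\delta_1) = T(\delta_2)$. Symmetrically, for the converse I would split $T(\delta_1) = T(\delta_2)$ into the two inclusions and apply \corref{canonical-form-order} in each ordering to obtain $\hat{\delta}_1 \le \hat{\delta}_2$ and $\hat{\delta}_2 \le \hat{\delta}_1$, whence $\hat{\delta}_1 = \hat{\delta}_2$ by antisymmetry of the pointwise order. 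A final small check is that equality of the canonical forms on $\distinctpairs{I}$ extends to the diagonal: since both $\delta_1,\delta_2$ are implementable, \remarkref{canonical-form-properties} forces $\hat{\delta}_1(i,i)=\hat{\delta}_2(i,i)=0$ for every $i$, so the abuse of notation identifying $\hat{\delta}$ with $\hat{\delta}|_{\distinctpairs{I}}$ causes no loss.
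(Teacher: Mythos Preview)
Your proof is correct, and for the $\Leftarrow$ direction it is essentially identical to the paper's: both observe that $T(\delta_1)=T(\delta_2)\ne\emptyset$ forces implementability of both functions, and then apply \corref{canonical-form-order} to the two inclusions.

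For the $\Rightarrow$ direction, however, the paper takes a shorter path. Rather than promoting $\delta_2$ to implementable via \lemmaref{implementable-iff} and then invoking \corref{canonical-form-order} twice, the paper simply applies the equivalence property of \remarkref{canonical-form-properties} (namely $T(\delta)=T(\hat{\delta})$) twice:
\[
T(\delta_1)=T(\hat{\delta}_1)=T(\hat{\delta}_2)=T(\delta_2).
\]
This one-line argument does not use the implementability hypothesis at all, which clarifies that implementability is genuinely needed only for the $\Leftarrow$ direction. Your route through \corref{canonical-form-order} is perfectly valid, and has the virtue of uniformity (the same tool handles both directions), but it obscures this asymmetry and incurs the extra step of checking implementability of $\delta_2$. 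Your closing remark about the diagonal is a harmless technical aside; the paper does not bother with it.
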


\begin{proof}
$\Rightarrow$: 
Assume that $\hat{\delta}_1=\hat{\delta}_2$. By applying \remarkref{canonical-form-properties} (equivalence) twice,
we obtain $T(\delta_1)=T(\hat{\delta}_1)=T(\hat{\delta}_2)=T(\delta_2)$.

$\Leftarrow$:
Assume that $T(\delta_1) = T(\delta_2)$. Thus, since at least one of
$\delta_1,\delta_2$ is implementable, they both are.
To complete the proof, we apply \corref{canonical-form-order} to
$T(\delta_1) \subseteq T(\delta_2)$ and to $T(\delta_2) \subseteq T(\delta_1)$.
\end{proof}

The above discussion gives rise to two alternative definitions (or rather,
characterisations) of the
canonical form of implementable functions:
The first one, non-constructive in nature, justifies
the name of the minimality property stated in
\remarkref{canonical-form-properties} and stems from this property when
combined with \corref{canonical-form-uniqueness}.
The second one, which constructively defines the inverse of the order-preserving
monomorphism $\hat{\delta}\mapsto T(\delta)$, stems directly from
\lemmaref{canonical-form-bounds-attained}.

\begin{cor}\label{cor:canonical-form-characterisations}
Let $\implspec$ be an implementation-spec s.t.\ $\delta$ is implementable.
\begin{enumerate}
\item
$\hat{\delta} = \min\{\delta' \in \Delta^{(\distinctpairs{I})} \mid T(\delta')=T(\delta) \}$.
(In particular, there exists a function in this set, which is minimal
in each coordinate, although this may be proven directly by means of a simpler
argument.)
\item
$\forall i,j \in I: \hat{\delta}(i,j) = \max\{\timpl(j)-\timpl(i) \mid \timpl \in T(\delta) \}$.
\end{enumerate}
\end{cor}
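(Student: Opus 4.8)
The plan is to derive both parts as quick corollaries of machinery already established, rather than reproving anything from scratch. Part~1 will follow from the uniqueness corollary combined with the minimality property of the canonical form, while Part~2 is essentially a repackaging of \lemmaref{canonical-form-bounds-attained} together with the upper bound recorded in \remarkref{implementation-properties}.

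For Part~1, I would first observe that $\hat{\delta}$ itself lies in the set $\{\delta' \in \Delta^{(\distinctpairs{I})} \mid T(\delta')=T(\delta)\}$: this is immediate from \remarkref{canonical-form-properties} (equivalence), since $T(\hat{\delta})=T(\delta)$, and $\hat{\delta}$ takes values in $\Delta$ (sums of finitely many weights from $\Delta$, or $\pm\infty$). It then remains to show $\hat{\delta}$ bounds this set from below in the pointwise order. Given any $\delta'$ with $T(\delta')=T(\delta)$, the common value $T(\delta)$ is nonempty because $\delta$ is implementable, so $\delta'$ is implementable as well; \corref{canonical-form-uniqueness} then yields $\widehat{\delta'}=\hat{\delta}$, and minimality (\remarkref{canonical-form-properties}) gives $\hat{\delta}=\widehat{\delta'}\le\delta'$. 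Since $\hat{\delta}$ both belongs to the set and lies below every member of it, it is the minimum, which proves Part~1.

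For Part~2, fix $i,j\in I$. The upper bound $\timpl(j)-\timpl(i)\le\hat{\delta}(i,j)$ for every $\timpl\in T(\delta)$ follows from \remarkref{implementation-properties} after passing to $\hat{\delta}$ via equivalence, i.e.\ using $T(\delta)=T(\hat{\delta})$ so that each $\timpl\in T(\delta)$ satisfies $\timpl(j)\le\timpl(i)+\hat{\delta}(i,j)$; the diagonal case $i=j$ is trivial, with both sides equal to $0$ for implementable $\delta$. For attainment I would split on the value of $\hat{\delta}(i,j)$: when it is finite, \lemmaref{canonical-form-bounds-attained} (Part~1) produces an implementation realising it exactly, so the supremum is a genuine maximum equal to $\hat{\delta}(i,j)$; when it is $\infty$, \lemmaref{canonical-form-bounds-attained} (Part~2) shows $\timpl(j)-\timpl(i)$ is unbounded above over $T(\delta)$, so its supremum is $\infty=\hat{\delta}(i,j)$.

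I expect no genuine obstacle here, since every ingredient is already proved; the only point requiring care is bookkeeping about the meaning of ``$\max$'' in the extended reals. In the infinite case no single $\timpl$ attains the value, so the stated equality must be read with $\max$ interpreted as the supremum in $[-\infty,\infty]$, which equals $\hat{\delta}(i,j)=\infty$; in all finite cases it is a true maximum. I would also note that \lemmaref{canonical-form-bounds-attained} is stated for arbitrary (not necessarily distinct) $\tilde{\imath},\tilde{\jmath}$, so it covers the diagonal directly and no separate argument for $i=j$ is strictly needed beyond the trivial upper bound.
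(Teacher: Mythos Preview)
Your proposal is correct and follows essentially the same approach the paper indicates: the paper does not give a formal proof but simply says Part~1 ``stems from [the minimality] property when combined with \corref{canonical-form-uniqueness}'' and Part~2 ``stems directly from \lemmaref{canonical-form-bounds-attained},'' which is precisely what you have spelled out. Your remark about reading $\max$ as $\sup$ in the case $\hat{\delta}(i,j)=\infty$ is a fair observation about the statement's phrasing, not a gap in your argument.
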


\begin{remark}
Implementability of $\delta$ is not required in
\corref{canonical-form-characterisations} if $|I|<\infty$ and if $G_{\delta}$ is
strongly connected. Indeed, under such conditions, if $\delta$ is
unimplementable, then $\hat{\delta} \equiv -\infty$, which coincides with the
function obtained in both parts of this corollary, when they are applied to any
unimplementable
$\delta$.
\end{remark}

By applying \claimref{solvable-iff-implementable} to the previous three
corollaries, we obtain similar results regarding the map
$\hat{\delta}\mapsto\TCR{\tilde{e},I,\delta}$ from the
canonical form of
an implementable function $\delta$ to the timely-coordinated response
problem that $\delta$ defines
with respect to a fixed external input.
We conclude this chapter by formulating these results.

\begin{cor}\label{cor:canonical-form-and-tcr}
Let $\gamma$ be a context, let $I \subseteq \agents$ and let
$\tilde{e} \in \externalinputs$ s.t.\ $\TCR{\tilde{e},I,0}$
is solvable.\footnote{
As noted above regarding \remarkref{re-solvable-iff-implementable},
\corref{tcr-iff-sr} shows that under certain conditions, the 0 function
in \corref{canonical-form-and-tcr} may be replaced
with a variety of other functions.
}
\begin{itemize}
\item
Let $\delta_1,\delta_2:\distinctpairs{I} \rightarrow \Delta$.
\begin{enumerate}
\item
If $\delta_1$ is implementable, then:
$\hat{\delta}_1 \le \hat{\delta}_2$ iff
$\TCR{\tilde{e},I,\delta_1} \subseteq \TCR{\tilde{e},I,\delta_2}$.
\item
If either $\delta_1$ or $\delta_2$ are implementable, then:
$\hat{\delta}_1 = \hat{\delta}_2$ iff
$\TCR{\tilde{e},I,\delta_1} = \TCR{\tilde{e},I,\delta_2}$.
\end{enumerate}
\item
Let $\delta:\distinctpairs{I} \rightarrow \Delta$ be implementable.
(Once again, implementability
of $\delta$ is not required for this part if $|I|<\infty$ and if $G_{\delta}$
is strongly connected.)
\begin{enumerate}
\item
$\hat{\delta} = \min\{\delta' \in \Delta^{(\distinctpairs{I})} \mid
\TCR{\tilde{e},I,\delta'}=\TCR{\tilde{e},I,\delta} \}$.
\item
$\forall i,j \in I: \hat{\delta}(i,j) =
\max\{\timpl_r(j)-\timpl_r(i) \mid r \in \cup_{P \in \TCR{\tilde{e},I,\delta}} \RP\}$.
\end{enumerate}
\end{itemize}
\end{cor}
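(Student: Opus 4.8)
The plan is to transfer each of \corref{canonical-form-order}, \corref{canonical-form-uniqueness}, and \corref{canonical-form-characterisations} --- all phrased in terms of $T(\delta)$ --- into the corresponding statement about $\TCR{\tilde{e},I,\delta}$, using \claimref{solvable-iff-implementable} as the dictionary between implementations and solving protocols, together with the monotonicity recorded in \remarkref{tcr-properties}. The forward directions of the first bullet require nothing new: if $\hat{\delta}_1 \le \hat{\delta}_2$ (resp.\ $\hat{\delta}_1 = \hat{\delta}_2$), then \corref{canonical-form-order} (resp.\ \corref{canonical-form-uniqueness}) gives $T(\delta_1) \subseteq T(\delta_2)$ (resp.\ $T(\delta_1) = T(\delta_2)$), and the second item of \remarkref{tcr-properties} upgrades this directly to the containment (resp.\ equality) of the TCR problems; the hypothesis that $\TCR{\tilde{e},I,0}$ is solvable is not even needed here.

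The substance lies in the reverse directions, where I must build a protocol that separates the two specs. Suppose $\hat{\delta}_1 \nleq \hat{\delta}_2$ with $\delta_1$ implementable. Exactly as in the proof of \corref{canonical-form-order}, \lemmaref{canonical-form-bounds-attained} supplies $\timpl \in T(\delta_1) \setminus T(\delta_2)$ (taking the coordinate where $\hat{\delta}_1$ strictly exceeds $\hat{\delta}_2$, or exceeding a suitable $K$ when that coordinate is $\infty$). Since $\delta_1$ is implementable and $\TCR{\tilde{e},I,0}$ is solvable, the second part of \claimref{solvable-iff-implementable} hands me a protocol $P \in \TCR{\tilde{e},I,\delta_1}$ whose response map $\timpl_r$ in each triggered run is \emph{similar} to $\timpl$, i.e.\ $\timpl_r = \timpl + c$ for some $c \in \timeset$. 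The inequalities defining $T(\delta_2)$ constrain only the differences $\timpl_r(j) - \timpl_r(i) = \timpl(j) - \timpl(i)$, and both $\timpl$ and $\timpl_r$ are nonnegative; hence $\timpl \notin T(\delta_2)$ forces $\timpl_r \notin T(\delta_2)$. As a triggered run of $P$ exists (our standing assumption yields a path from $i_{\tilde{e}}$ to each $i \in I$, and $\tilde{e}$ may occur), this run witnesses $P \in \TCR{\tilde{e},I,\delta_1} \setminus \TCR{\tilde{e},I,\delta_2}$, the contrapositive I wanted. The reverse direction of the second item then follows by applying the first in both directions, after noting that equality of the TCR problems forces both to be solvable (one is, as its defining $\delta$ is implementable) and hence, by \claimref{solvable-iff-implementable}, both $\delta_k$ implementable.

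For the second bullet I reduce to \corref{canonical-form-characterisations}. Its first item is immediate once I observe that, combining the just-established second item of the first bullet with \corref{canonical-form-uniqueness}, the two sets $\{\delta' \mid \TCR{\tilde{e},I,\delta'} = \TCR{\tilde{e},I,\delta}\}$ and $\{\delta' \mid T(\delta') = T(\delta)\}$ coincide, so they share the minimum $\hat{\delta}$. For its second item I would show that the set of realizable differences $\{\timpl_r(j) - \timpl_r(i)\}$, ranging over triggered runs of protocols in $\TCR{\tilde{e},I,\delta}$, equals $\{\timpl(j) - \timpl(i) \mid \timpl \in T(\delta)\}$: the inclusion ``$\subseteq$'' is just \defnref{tcr} (every $\timpl_r$ lies in $T(\delta)$), while ``$\supseteq$'' is again \claimref{solvable-iff-implementable}, which for each $\timpl \in T(\delta)$ produces a triggered run whose response differences match those of $\timpl$. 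The equality of these difference-sets lets the $\max$ (interpreted, as in \corref{canonical-form-characterisations}, as a supremum in $[-\infty,\infty]$ that is attained in the finite case) transfer verbatim.

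I expect the one genuinely delicate point to be the reverse direction of the first item: the passage from a separating implementation $\timpl$ to a separating \emph{run}. The care is forced by the fact that \claimref{solvable-iff-implementable} only delivers a run whose response map is \emph{similar} to $\timpl$ rather than equal to it, so I must verify that the nonnegative translation by $c$ neither repairs the violation of $\delta_2$ nor manufactures a spurious one. This works precisely because $T(\cdot)$-membership is determined solely by response-time differences --- exactly the data left invariant by translation --- which is the structural fact that makes the whole transfer go through.
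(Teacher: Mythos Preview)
Your proposal is correct and follows essentially the same approach as the paper, which proves the corollary in one line by ``applying \claimref{solvable-iff-implementable} to the previous three corollaries'' (i.e.\ \corref{canonical-form-order}, \corref{canonical-form-uniqueness}, and \corref{canonical-form-characterisations}). You have simply unpacked that sentence: the forward directions via \remarkref{tcr-properties}, the reverse directions by realizing a separating implementation $\timpl \in T(\delta_1)\setminus T(\delta_2)$ as the response pattern of an actual protocol through the second part of \claimref{solvable-iff-implementable}, and the translation-invariance observation that makes ``similar to $\timpl$'' good enough.
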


Readers who found our choice from the previous chapter, to formally associate
a coordinated response problem with the set of solutions thereof
philosophically troubling, may find some justification for this choice in the
first part of \corref{canonical-form-and-tcr}. This part essentially shows that
in order to accept our choice, at least when $\TCR{\tilde{e},I,0}$ is solvable
and for implementable $\delta$, it is enough to accept that
$\TCR{\tilde{e},I,\hat{\delta}}$ is the same problem as
$\TCR{\tilde{e},I,\delta}$.

\chapter{The Syncausality Approach}\label{chapter:syncausality-approach}

In this chapter, we analyze the timely-coordinated response problem using tools
developed by Ben-Zvi and Moses\cite{bzm1,bzm2}, and generalize some previous
results obtained  by them\cite{bzm1,bzm2,bzm3,bzm4} using these tools.
The proofs that we give in this chapter, unlike the proofs in
\cite{bzm1,bzm2,bzm3,bzm4}, do not explicitly use the concept of knowledge.
We choose to phrase our proofs in this way in order to emphasize the
difference between the approach taken in this chapter and that of the next one.

\section{Background}

This section surveys previous definitions and results from
\cite{bzm1,bzm2,bzm3,bzm4}.

\subsection{Partial Orders on the Set of Agent-Time Pairs}

Recall that we work in a context consisting of a set of agents $\agents$ that
communicate
with each other solely via message passing, and that their communication
channels are modeled by the edges of the directed graph
$\contextgraph=(\agents,\contextneighbours,\contextbounds)$, each of which being
weighted according to the maximum delivery time of a message along it.

Ben-Zvi and Moses\cite{bzm1,bzm2} define two partial order relations on
the set of agent-time pairs $\agents \times \timeset$. The first, called
``syncausality'' (short for synchronous causality), is a synchronous
counterpart to
Lamport's ``happened-before'' causality relation\cite{lamport-causality}, and
similarly aims to capture information
flow. Intuitively, if $(i,t) \syncausal{r} (j,t')$, then in some sense,
$j$ potentially has, at $t'$ in $r$, information regarding the state of $i$ at
$t$ in $r$. (In
a full-information protocol, this intuition can be made more concrete:
the state of $i$ at $t$ in $r$ can be deduced with absolute certainty from the
state of $j$ at $t'$ in $r$.)

\begin{defn}[Syncausality]\label{defn:syncausality}
Let $\gamma$ be a context and let $r \in \runs$.
The ``Syncausality'' relation $\syncausal{r}$ is the minimal partial order
relation on $\agents \times \timeset$ satisfying (i.e.\ the transitive
closure of)
\begin{itemize}
\item
Locality: $\forall i \in \agents, t,t' \in \timeset : \ t' \ge t \ \Rightarrow \ (i,t) \syncausal{r} (i,t')$.

\item
Message delivery: If, in $r$, a message is sent from $i \in \agents$ at $t \in \timeset$ and
delivered to $j \in \agents$ at $t' > t$, then $(i,t) \syncausal{r} (j,t')$.

\item
Delivery guarantee: $(i,t) \syncausal{r} (j,t+\contextbounds(i,j))$,
for every $(i,j) \in \contextneighbours$ s.t.\linebreak
$\contextbounds(i,j) < \infty$,
and for every $t \in \timeset$.
\end{itemize}
\end{defn}

The syncausality relation is a refinement of Lamport's ``happened-before''
causality relation\cite{lamport-causality}, which is defined similarly,
with the only difference being the absence of the delivery guarantee property.
At first sight, this property may seem redundant
due to the message delivery property.
Indeed, the bound guarantee property is of importance only if a message is {\em not}
sent from $i$ to $j$ at $t$.
Intuitively, $j$ has a guarantee that $i$ did not
send it a message at $t$, only when the worst-case delivery time for
such a message has elapsed, i.e.\ at $t+\contextbounds(i,j)$.
Passing information by {\em not} sending a message was
first studied by Lamport\cite{lamport-null-messages}, who called such unsent
messages ``null messages''.

The second partial order relation on $\agents \times \timeset$,
called ``bound guarantee'', aims to capture
guaranteed information flow, and thus has no asynchronous counterpart.
Intuitively, if $(i,t) \boundguarantee (j,t')$,
then in some sense, it is not only that $j$ potentially has, at $t'$ in any run
$r$, information regarding $i$ at $t$ in $r$, but also that $i$ has some guarantee at $t$ in
$r$ that such information has the potential to reach $j$ by $t'$ in $r$ at the
latest. (In a full-information
protocol, this means that from the state of $i$ at $t$ in $r$, it can be deduced
(with absolute certainty) that this state may be deduced from the state of $j$
at $t'$ in $r$.)

\begin{defn}[Bound Guarantee]
Let $\gamma$ be a context.
The ``Bound Guarantee'' relation $\boundguarantee$ is the minimal partial
order relation on $\agents \times \timeset$ satisfying:
\begin{itemize}
\item
Locality: $\forall i \in \agents, t,t' \in \timeset: \ t' \ge t \ \Rightarrow\ (i,t) \boundguarantee (i,t')$.

\item
Delivery guarantee: $(i,t) \boundguarantee (j,t+\contextbounds(i,j))$,
for every $(i,j) \in \contextneighbours$ s.t.\linebreak
$\contextbounds(i,j) < \infty$,
and for every $t \in \timeset$.
\end{itemize}
\end{defn}

\begin{remark}
As would be expected by the intuitive descriptions of both
relations above, the bound guarantee relation is a subrelation of any
syncausality relation:
If~$(i,t) \boundguarantee (j,t')$, then
$\forall r \in \runs: (i,t) \syncausal{r} (j,t')$.
\end{remark}

\subsection{Additional Notation}

We now introduce some novel notation, which aims to capture the flow
of information regarding the occurrence of events, and the flow of
information which may affect the occurrence of an event.
This notation will both
aid us in more succinctly presenting some previous results of Ben-Zvi
and Moses in the next subsection, and in presenting our results
thereafter.

\begin{defn} Let $\gamma$ be a context and let $r \in \runs$.
\begin{enumerate}
\item Given an event $e \in E(r)$ and an
agent-time pair $(i, t) \in \agents \times \timeset$,
we write $e \syncausal{r} (i,t)$ (resp.\ $e \boundguarantee (i,t)$)
if $(i_e,t_e) \syncausal{r} (i,t)$ (resp.\ $(i_e,t_e) \boundguarantee
(i,t)$), where by $i_e$ we denote the immediate observer of $e$. (Recall
from \chapterref{discrete}, that $t_e$ is the occurrence time of $e$.)

\item Given two events $e, e' \in E(r)$, we write
$e \syncausal{r} e'$ (resp.\ $e \boundguarantee e'$) if either
$e = e'$ or $e'$ is a delivery of a message sent by an agent
$i \in \agents$ at time $t \in \timeset$ s.t.\ $e \syncausal{r} (i,t)$
(resp.\ $e \boundguarantee (i,t)$).

\end{enumerate}
\end{defn}

Once again, in a full-information protocol, some of the
implications of these definitions become very concrete, e.g.\
$e \syncausal{r} (i,t)$ guarantees that the occurrence of $e$ may be
deduced from the state of $i$ at $t$.
Similarly, $e \syncausal{r} e'$
guarantees that if $e'$ is a message
event, then the occurrence of $e$ may be deduced from the contents of the
message associated with $e'$. Moreover, if $e \syncausal{r} e'$ does {\em not}
hold, then the occurrence of $e'$ does not depend, in a sense, on the occurrence
of $e$.
We make this last observation precise in \corref{most-generalized-lemma-3}
below.

\subsection{Previous Results}

In this subsection, we survey the coordinated response problems defined and
studied by Ben-Zvi and Moses in \cite{bzm1,bzm2,bzm3,bzm4}, and their results
for these problems in discrete-time models. (The only coordinated response
problem from $\cite{bzm1,bzm2,bzm3,bzm4}$ that we do not survey in this
subsection, namely ``general ordered response'', is discussed in
\chapterref{previous}.)
We reformulate these problems, results, and the associated definitions
to match our notation, and to make use of our coordinated-response-theoretic
definitions.

While surveying all these coordinated response problems, and while remarking,
by defining an appropriate $\delta$ function, that the
timely-coordinated response problem extends each and every one of them (and also
extends general ordered response),
one property, which is common to
all these $\delta$ functions, should be spelled out explicitly:
$\hat{\delta}$ is
antisymmetric on each strongly-connected component of $G_{\delta}$.\footnote{
It is interesting to note, though, that some instances of the timely-coordinated
response problem, while having this property, are not instances of any of the
problems defined and studied by Ben-Zvi and Moses. We analyze such instances
in the second part of \corref{path-traversing-centipede-implies-centibroom}.
}
As we will see during this work, the absence of this property
in the timely-coordinated response problem introduces a significant amount of
complexity, both technically, and conceptually.

The first, most-basic coordinated response problem defined in \cite{bzm1,bzm2}
is that of ordered response.
\begin{defn}[Ordered Response]
Given a context $\gamma$, an external input $\tilde{e} \in \externalinputs$, $n \in \mathbb{N}$
and agents $\tuple{\imath}=(i_m)_{m=1}^n \in {\agents}^n$, define the ``ordered response''
problem $\OR{\tilde{e},\tuple{\imath}}\subseteq \ER{\tilde{e},\{i_m\}_{m=1}^n}$
as the set of all eventual-response protocols $P$ satisfying
$\timpl_r(i_{m+1}) \ge \timpl_r(i_m)$ for
every $m \in [n-1]$ and for every triggered run $r \in \TRP$.
\end{defn}

\begin{remark}
$\OR{\tilde{e},(i_m)_{m=1}^n} = \TCR{\tilde{e},\{i_m\}_{m=1}^n,\delta}$, for
\[ \delta(i_k,i_l) \eqdef
\begin{cases}
0 &k = l+1 \\
\infty &\text{otherwise.}
\end{cases} \]
\end{remark}

Ben-Zvi and Moses analyze this problem using a structure they call ``centipede''.

\fig{uneven-centipede}{Centipede}{
  $(e_1,e_2,e_3)$ is an $\tilde{e}$-centipede for $\{i,j,k\}$ by $(t_i,t_j,t_k)$.
}

\begin{defn}[Centipede --- see \figref{uneven-centipede}]
Given a context $\gamma$, a run $r \in \runs$, an external input
$\tilde{e} \in \externalinputs$, $n \in \mathbb{N}$,
and agents $\tuple{\imath} \in {\agents}^n$, with matching times
$\tuplefull{t} \in \timeset^n$,
we call an $n$-tuple of ND events $\tuplefull{e} \in \ND(r)^n$ an
``$\tilde{e}$-centipede'' for $\tuple{\imath}$ by $\tuple{t}$ if the
following hold:
\begin{itemize}
\item $\tilde{e} \syncausal{r} e_1$ and $\forall m \in [n-1]: e_m \syncausal{r} e_{m+1}$.
\item $\forall m \in [n]: e_m \boundguarantee (i_m,t_m)$.
\end{itemize}
Given a time $t \in \timeset$, we call an $n$-tuple of ND events
$\tuple{e} \in \ND(r)^n$ an
``$\tilde{e}$-centipede''
for $\tuple{\imath}$ by $t$, if it is an $\tilde{e}$-centipede for $\tuple{\imath}$ by $(t)^n$.
\end{defn}

\begin{thm}[Centipede]\label{thm:centipede}
In a discrete-time model,
let $\gamma$ be a context, let $n \in \mathbb{N}$,
let $\tuple{\imath} \in {\agents}^n$
and let $\tilde{e} \in \externalinputs$.
\begin{enumerate}
\item
Let $P \in \OR{\tilde{e},\tuple{\imath}}$. Each $r \in \TRP$ contains
an $\tilde{e}$-centipede for $\tuple{\imath}$ by $\timpl_r(i_n)$.
\item
In a shared-clock model, an optimal response logic for solving
$\OR{\tilde{e},\tuple{\imath}}$ is, for every $i_m$: ``respond at the
earliest time by which an $\tilde{e}$-centipede for
$(i_k)_{k=1}^m$ exists.''
\end{enumerate}
\end{thm}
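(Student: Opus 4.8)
The plan is to handle both parts with a shared toolkit: the reduction to full-information protocols from \chapterref{discrete} (w.l.o.g.\ $P$ is full-information, since the relevant simulation preserves ND events, their times, and all responses) and the indistinguishability/surgery principle associated with syncausality, namely that an event $e$ with $e \notsyncausal{r} (i,t)$ can be removed or altered without affecting the state of $i$ at $t$ (the intuition recorded around \defnref{syncausality}, made precise by \corref{most-generalized-lemma-3}). A preliminary observation I would record is that in a full-information protocol $\syncausal{r}$ is determined by the ND events alone: every agent sends its full state to every neighbour at every time, so the message-delivery and delivery-guarantee arrows depend only on the (early-)delivery ND events and on $\gamma$, not on the response logic. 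Hence two runs matched under an isomorphism preserving ND events carry the same syncausality relation and the same $\tilde e$-centipedes, a fact I lean on heavily in Part 2.

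For Part 1, fix a triggered run $r \in \TRP$ and write $t_m \eqdef \timpl_r(i_m)$, so ordered response gives $t_1 \le \cdots \le t_n$. I would prove the slightly stronger statement that there is an $\tilde e$-centipede $(e_1,\dots,e_n)$ with $e_m \boundguarantee (i_m, t_m)$; the theorem then follows by locality of $\boundguarantee$, since $t_n = \timpl_r(i_n) \ge t_m$ promotes each guarantee to $(i_m, \timpl_r(i_n))$. The construction proceeds through prefixes: let $k$ be the largest index admitting an $\tilde e$-centipede $(e_1,\dots,e_k)$ for $(i_1,\dots,i_k)$ by $(t_1,\dots,t_k)$ (with $k=0$ the empty chain), and suppose for contradiction that $k<n$. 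The goal is to manufacture a run in which $i_{k+1}$ misbehaves, contradicting $P \in \OR{\tilde e, \tuple{\imath}}$.

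The heart of the argument --- and the step I expect to be the main obstacle --- is extracting the bound-guarantee condition at the next agent. Maximality of $k$ says that no $\tilde e$-centipede for the $k$-prefix can be prolonged: there is no ND event $e$ with $e \boundguarantee (i_{k+1}, t_{k+1})$ that is syncausally reachable from the end of some such centipede (from $\tilde e$ itself when $k=0$). I would exploit this by passing to a worst-case variant run $r'$: keep everything that is bound-guaranteed to reach $(i_{k+1}, t_{k+1})$, delay every other delivery to its latest admissible time, and use the surgery principle to excise the ND events that are not so guaranteed. By construction $i_{k+1}$'s state at $t_{k+1}$ is untouched, so it still responds at $t_{k+1}$ in $r'$; the point is that the information justifying the earlier, ordered responses travelled, at least in part, along non-guaranteed deliveries, so after the modification either $\tilde e$ no longer reaches $i_{k+1}$ by $t_{k+1}$, contradicting \remarkref{response-not-before} together with eventual response, or the response of $i_k$ is pushed past $t_{k+1}$, violating $\timpl_{r'}(i_{k+1}) \ge \timpl_{r'}(i_k)$. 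Either outcome contradicts $P \in \OR{\tilde e, \tuple{\imath}}$, forcing $k=n$. Verifying that the worst-case/excision step genuinely preserves $(i_{k+1}, t_{k+1})$'s view while breaking either the ordering or the dependence on $\tilde e$ --- and handling the possibility that several distinct prefix-centipedes must be obstructed simultaneously --- is the delicate bookkeeping this step demands.

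For Part 2, let $P'$ be the full-information protocol with response logic ``$i_m$ responds at the earliest time $t$ by which an $\tilde e$-centipede for $(i_k)_{k=1}^m$ exists''; the shared clock is exactly what lets $i_m$ compare the bound-guarantee times $t_e+\delta_{\contextgraph}(i_e,i_m)$ against the current time and so detect such a centipede from its own state. I would first verify $P' \in \OR{\tilde e, \tuple{\imath}}$ (clause 1 of \defnref{optimal-response-logic}): no centipede exists when $\tilde e$ does not occur (there is no $e_1$ with $\tilde e \syncausal{r} e_1$), so no agent responds in a non-triggered run; in a triggered run, solvability supplies a solution whose runs realise the same ND configurations, so by Part 1 a full centipede, hence every prefix centipede, eventually appears, forcing each $i_m$ to respond exactly once. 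Ordering is immediate from the nesting of centipedes: the prefix $(e_1,\dots,e_m)$ of any centipede for $(i_1,\dots,i_{m+1})$ by $t$ is a centipede for $(i_1,\dots,i_m)$ by $t$, so the earliest centipede time is non-decreasing in $m$, giving $\timpl_{r'}(i_{m+1}) \ge \timpl_{r'}(i_m)$. Finally, for optimality (clause 2), take any $P \in \OR{\tilde e, \tuple{\imath}}$, a run $r$, and its matched run $r'$ of $P'$: applying Part 1 to the prefix tuple $(i_1,\dots,i_m)$ yields a centipede for that prefix by $\timpl_r(i_m)$ in $r$, and since $r'$ shares the ND events and hence the same centipedes, this same centipede exists in $r'$ by $\timpl_r(i_m)$, so $P'$ responds no later: $\timpl_{r'}(i_m) \le \timpl_r(i_m)$. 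As this holds for every $m$, $\timpl_{r'} \le \timpl_r$, completing the proof.
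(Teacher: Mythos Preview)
Your Part~2 argument is sound and matches what the paper describes: combine Part~1 applied to each prefix $(i_1,\dots,i_m)$ with the nesting of centipedes to obtain correctness, and use Part~1 again on the competing protocol $P$ to obtain optimality.

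Part~1 has a genuine gap in the surgery step. You assert that after keeping only those ND events $e$ with $e \boundguarantee (i_{k+1}, t_{k+1})$ and delaying all other deliveries, ``$i_{k+1}$'s state at $t_{k+1}$ is untouched''. This is false. By \corref{most-generalized-lemma-3}, the state of $i_{k+1}$ at $t_{k+1}$ is determined by $\PND{r}(i_{k+1}, t_{k+1})$ --- the ND events \emph{syncausally} linked to $(i_{k+1}, t_{k+1})$ --- not merely by those bound-guaranteed to it. The gap between these two sets is exactly the point: an early delivery $e'$ to some intermediate agent $j$, whose information is then forwarded to $i_{k+1}$ before $t_{k+1}$, satisfies $e' \syncausal{r} (i_{k+1}, t_{k+1})$ but in general $e' \notboundguarantee (i_{k+1}, t_{k+1})$. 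Excising $e'$ changes $j$'s state, hence the content of $j$'s messages, hence $i_{k+1}$'s state; so you cannot simultaneously shrink to bound-guaranteed events and keep $i_{k+1}$ responding at $t_{k+1}$.

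The paper describes the Ben-Zvi--Moses proof as going through nested knowledge ($K_{i_n}\cdots K_{i_1}$ of ``$\tilde{e}$ occurred'') and then a Chandy--Misra extraction; the paper's own direct proof of the generalisation (\thmref{path-traversing-centipede}) exposes the combinatorics and shows what your surgery should have been. One keeps \emph{all} of $E \eqdef \PND{r}(i_{k+1}, t_{k+1})$, so $i_{k+1}$'s state is genuinely preserved (\corref{adjusted-lemma-3}); correctness of $P$ then forces $i_k$ to respond in $r \tcap E$ no later than $t_{k+1}$; induction produces a centipede $(e_1,\dots,e_k)$ inside $r \tcap E$, whose events therefore lie in $E$, yielding $e_k \syncausal{r} (i_{k+1}, t_{k+1})$; finally $e_{k+1}$ is taken to be the \emph{latest} ND event along that syncausal chain, which is precisely what gives $e_k \syncausal{r} e_{k+1} \boundguarantee (i_{k+1}, t_{k+1})$. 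The bound-guarantee leg arises not from discarding information but from locating, within the full ND-past of $i_{k+1}$, the last ND hop on the relevant syncausal path --- an idea your maximal-$k$ contradiction scheme never reaches.
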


Ben-Zvi and Moses prove the first part of \thmref{centipede} in two stages:
First, reducing to a response-recalling protocol
(a protocol in which the set of responses of an agent up until time $t$ may
may be deduced from its state at $t$),
they show that under the
conditions of that part of the theorem, at $\timpl_r(i_n)$ it holds that
$K_{i_n} \cdots K_{i_1} \tilde{e}$
(where $K_i$ means ``$i$ knows that\ldots'' --- this will be formally defined
in \chapterref{fixed-point-approach}). Second, conceptually following the
path of Chandy and Misra\cite{chandy-misra}, they deduce the existence of the
required centipede from this nested-knowledge formula.

Regarding the second part of \thmref{centipede}, it should be noted that
Ben-Zvi and Moses do not define a notion of optimality, but rather show
that a full-information protocol with the given response logic solves
$\OR{\tilde{e},\tuple{\imath}}$. (Optimality, under our definition, may
be derived from the combination of the two parts of \thmref{centipede},
which Ben-Zvi and Moses prove as separate theorems.)
A similar note holds for the second part of \thmref{broom} below.

The second problem presented in \cite{bzm1,bzm2} is the following variant of
the firing squad problem\cite{firing-squad1,firing-squad2}.

\begin{defn}[Simultaneous Response]
Given a context $\gamma$, an external input $\tilde{e} \in \externalinputs$ and set of agents
$I\subseteq\agents$, define the ``simultaneous response'' problem
$\SR{\tilde{e},I} \subseteq \ER{\tilde{e},I}$ as the set of all
eventual-response protocols $P$ for which $\timpl_r$ is a constant function
for each run $r \in \RP$. We denote, in this case, the constant value of
$\timpl_r$ by $t_r$.
\end{defn}

\begin{remark}
$\SR{\tilde{e},I} = \TCR{\tilde{e},I,0}$.
\end{remark}

Ben-Zvi and Moses analyze this problem using a structure they call ``broom''.

\fig{broom}{Broom}{
  $(e_1,e_2,e_3)$ is an $\tilde{e}$-broom for $\{i,j,k\}$ by $t$.
}
\begin{defn}[Broom --- see \figref{broom}]
Given a context $\gamma$, a run $r \in \runs$, an external input
$\tilde{e} \in \externalinputs$
and a set of agents $I\subseteq\agents$ with matching times
$\tuple{t}=(t_i)_{i \in I} \in \timeset^I$,
we call an ND event $e \in \ND(r)$ an ``$\tilde{e}$-broom'' for $I$ by
$\tuple{t}$ if the following hold:
\begin{itemize}
\item $\tilde{e} \syncausal{r} e$
\item $\forall i \in I: e \boundguarantee (i,t_i)$.
We call $\{(i,t_i)\}_{i \in I}$ the set of ``end nodes'' of this broom,
and call $\max\{\tuple{t}\}$ the ``horizon'' of this broom.
\end{itemize}
Given a time $t \in \timeset$, we call an ND event $e \in \ND(r)$ an
``$\tilde{e}$-broom''
for $I$ by $t$, if it is an $\tilde{e}$-broom for $I$ by $(t)^I$.
\end{defn}

\begin{thm}[Broom]\label{thm:broom}
In a discrete-time model,\footnote{
This is a key requirement here.}
let $\gamma$ be a context, let $I \subseteq \agents$ be finite,
and let
$\tilde{e} \in \externalinputs$.
\begin{itemize}
\item
Let $P \in \SR{\tilde{e},I}$. Each $r \in \TRP$ contains
an $\tilde{e}$-broom for $I$ by $t_r$.
\item
In a shared-clock model, an optimal response logic for solving
$\SR{\tilde{e},I}$ is, for every $i \in I$: ``respond at the earliest
time by which an $\tilde{e}$-broom for $I$ exists.''
\end{itemize}
\end{thm}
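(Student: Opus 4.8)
The plan is to establish the structural first part and then derive the optimality statement from it. Throughout I would invoke the simulation result of \chapterref{discrete} to assume, without loss of generality, that $P$ is a full-information protocol, so that the state of each agent at each time records exactly the events syncausally preceding it. Fix a triggered run $r\in\TRP$ with common response time $t_r$. The groundwork is a decomposition lemma for syncausal chains: any witness for $\tilde{e}\syncausal{r}(i,t_r)$ consists of bound-guarantee segments (built from locality and delivery-guarantee steps) punctuated by \emph{early}-delivery ND events, since an on-time delivery is exactly a delivery-guarantee step whereas an early delivery is itself an ND event. A fooling argument (of the kind sketched around \remarkref{response-not-before}) first shows $\tilde{e}\syncausal{r}(i,t_r)$ for every $i\in I$; applying the decomposition to the last ND event on such a chain then yields, for each individual $i$, an event $e^{(i)}\in\ND(r)$ with $\tilde{e}\syncausal{r}e^{(i)}$ and $e^{(i)}\boundguarantee(i,t_r)$. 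This reduces the theorem to upgrading these per-agent ``private'' stems to a single common stem bound-guaranteeing all of $I$ by $t_r$.

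That upgrade is the crux, and it is where simultaneity is indispensable. I would argue by contradiction: assuming no $\tilde{e}$-broom for $I$ by $t_r$ exists, I would construct a finite chain of legal runs $r=r^0,r^1,\dots,r^N=r'$ such that $r'$ is not triggered (that is, $\tilde{e}\notin\ND(r')$) and each consecutive pair $r^m,r^{m+1}$ is indistinguishable up to $t_r$ to some agent $i_m\in I$, meaning $i_m$ has the identical full-information state at $t_r$ in both. Given such a chain, simultaneity propagates the response time along it: if all agents fire at $t_r$ in $r^m$, then $i_m$ fires at $t_r$ in $r^{m+1}$, and since $P\in\SR{\tilde{e},I}$ forces $\timpl_{r^{m+1}}$ to be constant, \emph{all} agents fire at $t_r$ in $r^{m+1}$; by induction all agents fire at $t_r$ in the non-triggered run $r'$, contradicting eventual response. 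The chain itself is built by peeling off, one at a time, the ND events that carry the information of $\tilde{e}$, the failure of any single event to bound-guarantee all of $I$ being precisely what supplies, at each stage, an agent of $I$ blind to the next excision by $t_r$. I expect the main obstacle to be exhibiting each modified run as a genuinely \emph{legal} run --- one respecting the delivery bounds $\contextbounds$ while keeping the excised event undetectable to the chosen agent through time $t_r$ --- and here I would rely on the discreteness of time and the finiteness of $I$ (both flagged as essential) to keep the chain finite and the relevant causal frontiers well-defined.

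For the optimality statement I would first observe that, in a shared-clock full-information model, agent $i$ can decide at time $t$ whether an $\tilde{e}$-broom for $I$ by $t$ exists: the stem $e$ of any such broom satisfies $e\boundguarantee(i,t)$, hence $e\syncausal{r}(i,t)$, so both $e$ and the chain $\tilde{e}\syncausal{r}e$ lie within $i$'s state at $t$, while whether $e\boundguarantee(j,t)$ for the remaining $j\in I$ is determined by the common-knowledge graph $\contextgraph$. Since every broom stem is simultaneously visible in this way to every agent of $I$, the earliest time at which a broom for $I$ exists is computed identically by all of them, so the proposed response logic indeed fires simultaneously; eventual response in triggered runs follows from the first part, transported across the natural ND-preserving isomorphism between the runs of this protocol and those of the assumed solver, and no response occurs absent $\tilde{e}$ since a broom requires $\tilde{e}$ to occur. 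Finally, optimality in the sense of \defnref{optimal-response-logic} is immediate: by the first part, every triggered run of any solver contains a broom for $I$ by its own response time, so the earliest-broom time is a lower bound on every solver's response time, attained by this logic.
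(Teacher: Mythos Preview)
Your treatment of the second part is sound and matches the paper's reasoning. The first part, however, has a genuine gap at its crux: your chain construction rests on the implication ``$e\notboundguarantee(j,t_r)$, hence $j$ is blind at $t_r$ to the excision of $e$'', and this is false. An early-delivery chain can carry $e$ to $(j,t_r)$ even when no bound-guarantee chain does --- that is precisely the distinction between the run-dependent relation $\syncausal{r}$ and the run-independent relation $\boundguarantee$. What excising $e$ undetected by $j$ actually requires is $e\notin\PND{r}(j,t_r)$, and the no-broom hypothesis does not supply that. The obstacle you anticipate (legality of the modified runs) is secondary; the primary one is that no-broom simply does not connect to indistinguishability in the way you claim.

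Neither of the paper's proofs argues by contradiction from a no-broom hypothesis. The direct proof it gives (\thmref{concise-broom}) passes to a \emph{minimal} finite $t_r$-retainable $E\in\RND(r,t_r)$ with $t_{r\tcap E}=t_r$; simultaneity forces $\PND{r\tcap E}(i,t_r)=E$ for every $i\in I$ (otherwise \corref{adjusted-lemma-3} would contradict minimality of $E$); the time-latest $e\in E$ syncaused by $\tilde{e}$ in $r\tcap E$ is then the broom stem, since any syncausal path from $e$ to $(i,t_r)$ in that subrun contains no later ND event and is therefore a bound-guarantee path. The Ben-Zvi--Moses route the paper surveys (recast in \corref{path-traversing-centipede-implies-broom}) is closer to your instinct: indistinguishability chains plus simultaneity establish common knowledge of $\tilde{e}$ at $t_r$ --- and note this step does \emph{not} use the no-broom assumption --- yielding $\tilde{e}$-centipedes for arbitrarily long sequences over $I$; a pigeonhole on the bounded syncausal-chain length $L_r$ (finite by discreteness of time) then finds one ND event appearing $|I|$ consecutive times along such a centipede, and that event is the broom. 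To salvage your outline, decouple the chain argument from the no-broom hypothesis and supply the pigeonhole step afterward.
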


Ben-Zvi and Moses prove the first part of \thmref{broom} by reducing to a
response-recalling protocol,
showing that under the conditions of that part of the
theorem, $\tilde{e}$ is common knowledge among all agents in $I$ at $t_r$,
and then using a reduction to the first part of \thmref{centipede}.
We give a direct proof of a slight generalization of \thmref{broom}
later in this work (see \thmref{concise-broom}).

The third and last problem presented in \cite{bzm2}, is the following
generalization of both ordered response and simultaneous response.

\begin{defn}[Ordered Joint Response]
Given a context $\gamma$, an external input $\tilde{e} \in \externalinputs$, $n \in \mathbb{N}$
and pairwise-disjoint non-empty sets of agents
$\tuplefull{I} \in (2^{\agents})^n$,
define the ``ordered joint response'' problem
$\OJR{\tilde{e},\tuple{I}} \subseteq \ER{\tilde{e},\cup_{m=1}^n I_m}$ as the set of all
eventual-response protocols $P$ satisfying, for every run $r \in \TRP$:
\begin{enumerate}
\item
$\timpl_r|_{I_m}$ is a constant function, for each $m \in [n]$.
We denote its value by $t_{r,m}$.
\item
$\forall m \in [n-1]: t_{r,m+1} \ge t_{r,m}$.
\end{enumerate}
\end{defn}

\begin{remark}
$\OJR{\tilde{e},\tuple{I}} = \TCR{\tilde{e},\cup_{m=1}^n I_m,\delta}$, for
\[ \delta(i,j) \eqdef
\begin{cases}
0 &\exists\, k,l \in [n]: i \in I_k \And j \in I_l \And k \ge l \\
\infty &\text{otherwise.}
\end{cases} \]
\end{remark}

Ben-Zvi and Moses analyze this problem using a structure they call
``centibroom'', which generalizes both a centipede and a broom.

\fig{uneven-centibroom}{Centibroom}{
  $(e_1,e_2,e_3)$ is an $\tilde{e}$-centibroom for $(\{f,g,h\},\{i,j\},\{k,l\})$ by
  $(t_f,t_g,t_h,t_i,t_j,t_k,t_l)$
}
\begin{defn}[Centibroom --- see \figref{uneven-centibroom}]
Given a context $\gamma$, a run $r \in \runs$, an external input
$\tilde{e} \in \externalinputs$, $n \in \mathbb{N}$,
and pairwise-disjoint non-empty sets of agents
$\tuple{I} \in (2^{\agents})^n$
with matching times $\tuple{t} \in \timeset^I$, where
$I \eqdef \cup_{m=1}^n I_m$,
we call an $n$-tuple of ND events $\tuple{e} \in \ND(r)^n$ an
``$\tilde{e}$-centibroom'' for $\tuple{I}$ by $\tuple{t}$ if the
following hold:
\begin{itemize}
\item $\tilde{e} \syncausal{r} e_1$ and $\forall m \in [n-1]: e_m \syncausal{r} e_{m+1}$.
\item $\forall m \in [n], i \in I_m: e_m \boundguarantee (i,t_i)$.
We call $\{(i,t_i)\}_{i \in I}$ the set of ``end nodes'' of this centibroom,
and call $\max\{\tuple{t}\}$ the ``horizon'' of this centibroom.
\end{itemize}
Given a time $t \in \timeset$, we call an $n$-tuple of ND events
$\tuple{e} \in \ND(r)^n$ an
``$\tilde{e}$-centibroom''
for $\tuple{I}$ by $t$, if it is an $\tilde{e}$-centibroom for $\tuple{I}$ by
$(t)^n$.
\end{defn}

\begin{thm}[Centibroom]\label{thm:centibroom}
In a discrete-time model,\footnote{
Once again, this is a key requirement here.}
let $\gamma$ be a context, let $n \in \mathbb{N}$,
let $\tuple{I} \in (2^{\agents})^n$
be pairwise-disjoint non-empty finite sets of agents, let
$\tilde{e} \in \externalinputs$ and
let $P \in \OJR{\tilde{e},\tuple{I}}$. Each $r \in \TRP$ contains
an $\tilde{e}$-centibroom for $\tuple{I}$ by $t_{r,n}$.
\end{thm}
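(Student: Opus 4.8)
The plan is to separate the two independent features of a centibroom and obtain each from one of the two theorems already proved: the simultaneity \emph{within} each group from the Broom theorem (\thmref{broom}), and the ordered chain \emph{across} groups from the Centipede theorem (\thmref{centipede}). I would first record the two reductions. Writing $I \eqdef \cup_{m=1}^n I_m$ and letting $\delta$ be the ordered-joint-response function exhibited in the remark following the definition of $\OJR{\tilde{e},\tuple{I}}$, the first bullet of \remarkref{tcr-properties} gives, for each fixed $m$, that $P \in \TCR{\tilde{e},I_m,\delta|_{\distinctpairs{I_m}}}$; but $\delta|_{\distinctpairs{I_m}} \equiv 0$, since any two distinct members of $I_m$ lie in the same group, so $P \in \TCR{\tilde{e},I_m,0} = \SR{\tilde{e},I_m}$. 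Likewise, concatenating the groups in order into a single tuple $\tuple{\jmath} = (j_1,\ldots,j_N)$ (each group listed internally in an arbitrary order) yields $P \in \OR{\tilde{e},\tuple{\jmath}}$, because the response times $\timpl_r$ are constant $t_{r,m}$ on $I_m$ and $t_{r,1} \le \cdots \le t_{r,n}$, so they are non-decreasing along $\tuple{\jmath}$.

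From these reductions I would extract the two halves. Applying \thmref{broom} to each (finite) group $I_m$ produces, in every $r \in \TRP$, an ND event $b_m$ with $\tilde{e} \syncausal{r} b_m$ and $b_m \boundguarantee (i,t_{r,m})$ for all $i \in I_m$; since $t_{r,m} \le t_{r,n}$, locality and transitivity of $\boundguarantee$ upgrade the horizon to give $b_m \boundguarantee (i,t_{r,n})$ for all $i \in I_m$. This is exactly the group-coverage demanded by the second clause of the centibroom definition, at horizon $t_{r,n}$. Applying \thmref{centipede} to $\tuple{\jmath}$ produces a single syncausal chain of ND events threading $\tilde{e}$ through every agent and reaching each by $t_{r,n}$, which supplies the chain backbone demanded by the first clause.

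The main obstacle --- and the genuine content of the theorem beyond merely invoking the two results --- is \emph{fusing} these into one object: a single chain $e_1 \syncausal{r} \cdots \syncausal{r} e_n$ in which each $e_m$ both continues the chain and has guaranteed ($\boundguarantee$) flow to \emph{all} of $I_m$. The concatenation-centipede chain reaches each group member only through a \emph{syncausal} tail within the group, not through a single bound-guaranteeing event, so it cannot by itself witness simultaneity; conversely, the brooms $b_m$ cover their groups but are produced independently and need not chain. I would reconcile them by induction on $n$: the base case $n=1$ is \thmref{broom}; for the step, the inductive hypothesis applied to $(I_1,\ldots,I_{n-1})$ (a sub-instance, again by \remarkref{tcr-properties}) yields a centibroom $e_1,\ldots,e_{n-1}$ by $t_{r,n-1}$, whose horizon I raise to $t_{r,n}$ as above, and \thmref{broom} supplies the covering event $e_n \eqdef b_n$ for $I_n$. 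Everything then reduces to arranging $e_{n-1} \syncausal{r} e_n$. This is precisely where the ordering constraint $t_{r,n} \ge t_{r,n-1}$ must be converted from a fact about this one run into guaranteed information flow, mirroring the Chandy--Misra-style step that drives the chain in the proof of \thmref{centipede}; I expect this gluing --- choosing the broom roots so that the coverage events themselves form the syncausal chain --- to be the crux of the argument.
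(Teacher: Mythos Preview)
Your decomposition into ``Broom per group'' and ``Centipede across groups'' is natural, but the proposal stops precisely at the step that carries all the content. The inductive scheme produces a centibroom tail $(e_1,\ldots,e_{n-1})$ via the hypothesis and a broom event $b_n$ for $I_n$ via \thmref{broom}, and then needs $e_{n-1} \syncausal{r} b_n$. Nothing in your setup links these objects: the inductive hypothesis is applied to the \emph{same} run $r$ with no restriction, and \thmref{broom} is applied independently, so $e_{n-1}$ and $b_n$ are chosen by two unrelated existence statements. The ordering $t_{r,n-1} \le t_{r,n}$ is a fact about response times in this one run; it does not by itself yield any syncausal relation between two particular ND events. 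Saying the gluing ``mirrors the Chandy--Misra-style step'' in \thmref{centipede} undersells the difficulty: in that proof the chain is extracted \emph{from} a nested-knowledge formula $K_{i_n}\cdots K_{i_1}\tilde{e}$ established first; the syncausal link is a consequence of the nesting, not something arranged post hoc between independently-found events.

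The paper's proof (following Ben-Zvi and Moses) takes exactly this route at the group level: one first shows $C_{I_n}\cdots C_{I_1}\tilde{e}$ holds at $t_{r,n}$, and the centibroom is then extracted by reduction to \thmref{centipede}. The nesting of common-knowledge operators is what simultaneously encodes both the broom coverage within each $I_m$ (via $C_{I_m}$) and the syncausal chain across levels (via the nesting itself), so no separate gluing is needed. If you want to avoid the epistemic machinery, the paper's later \corref{path-traversing-centipede-implies-centibroom} shows an alternative: apply \thmref{path-traversing-centipede} to a single long path that cycles through each $I_m$ enough times and use pigeonhole on the bounded number of ND events along a syncausal path (discrete time is used here) to force consecutive centipede legs within a group to share a common ND event. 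Either way, the coupling between levels must be built into a single application of a theorem to a single nested or concatenated object --- not recovered after the fact from independent applications.
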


Ben-Zvi and Moses prove \thmref{centibroom} by reducing to a
response-recalling protocol, showing that under the conditions of that part of
the theorem, at $t_{r,n}$ it holds that
$C_{I_n} \cdots C_{I_1} \tilde{e}$ (where $C_J$ means ``it is common
knowledge among $J$ that\ldots'' --- this will be formally defined in
\chapterref{fixed-point-approach}), and then using a
reduction to the first part of \thmref{centipede}.

In \cite{bzm4}, the following respective generalizations of ordered response
and simultaneous response were introduced:
(once again, we rephrase them to match the definitions and notation introduced
in this work.)

\begin{defn}[Weakly-Timed Response]
Given a context $\gamma$, an external input $\tilde{e} \in \externalinputs$, $n \in \mathbb{N}$,
agents $\tuple{\imath} \in {\agents}^n$ and finite time-differences
$\tuple{\varepsilon}=(\varepsilon_m)_{m=1}^{n-1} \in (\Delta \setminus \{-\infty, \infty\})^{n-1}$,
define the ``weakly-timed response''
problem $\WTR{\tilde{e},\tuple{\imath},\tuple{\varepsilon}} \subseteq
\ER{\tilde{e},\{i_m\}_{m=1}^n}$
as the set of all eventual-response protocols $P$ satisfying
$\timpl_r(i_{m+1}) \ge \timpl_r(i_m) + \varepsilon_m$ for
every $m \in [n-1]$ and for every triggered run $r \in \TRP$.
\end{defn}

\begin{remark}\label{remark:weakly-timed-is-timely}
$\WTR{\tilde{e},(i_m)_{m=1}^n,\tuple{\varepsilon}} = \TCR{\tilde{e},\{i_m\}_{m=1}^n,\delta}$, for
\[ \delta(i_k,i_l) \eqdef
\begin{cases}
-\varepsilon_l &k = l+1 \\
\infty &\text{otherwise.}
\end{cases} \]
\end{remark}

\begin{defn}[Tightly-Timed Response]
Given a context $\gamma$, an external input $\tilde{e} \in \externalinputs$
and a set of agents $I\subseteq\agents$ with
matching times $\tuple{t} \in \timeset^I$,
define the ``simultaneous response'' problem
$\TTR{\tilde{e},I,\tuple{t}} \subseteq \ER{\tilde{e},I}$ as the set of all
eventual-response protocols $P$ satisfying
$\timpl_r(i)-\timpl_r(j)=t_i-t_j$ for every
$i,j \in I$ and every run $r \in \RP$.
\end{defn}

\begin{remark}\label{remark:tightly-timed-is-timely}
$\TTR{\tilde{e},I,\tuple{t}} = \TCR{\tilde{e},I,\delta}$, for $\delta(i,j) \eqdef t_j-t_i$.
\end{remark}

Ben-Zvi and Moses present the following theorems in \cite{bzm4},
and prove them along the lines of their proofs of the first part of
\thmref{centipede} and the first part of \thmref{broom}, respectively.

\begin{thm}[Uneven Centipede]\label{thm:uneven-centipede}
In a discrete-time model,
let $\gamma$ be a context, let $n \in \mathbb{N}$, let
$\tuple{\imath} \in {\agents}^n$, let $\tuple{\varepsilon} \in (\Delta \setminus \{-\infty,\infty\}) ^{n-1}$,
let $\tilde{e} \in \externalinputs$
and let $P \in \OR{\tilde{e},\tuple{\imath}}$. Each $r \in \TRP$ contains
an $\tilde{e}$-centipede for $\tuple{\imath}$ by
$(\timpl_r(i_n)-\sum_{k=m}^{n-1} \varepsilon_k)_{m=1}^n$.
\end{thm}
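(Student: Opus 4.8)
Reading the hypothesis as $P \in \WTR{\tilde{e},\tuple{\imath},\tuple{\varepsilon}}$ (the weakly-timed constraints are what tie $\tuple{\varepsilon}$ to $P$ and thereby control the prescribed end-node times), the plan is as follows. The weakly-timed constraints give $\timpl_r(i_{m+1}) \ge \timpl_r(i_m) + \varepsilon_m$ for every $m \in [n-1]$ and every $r \in \TRP$; telescoping them yields $\timpl_r(i_m) \le \timpl_r(i_n) - \sum_{k=m}^{n-1}\varepsilon_k$ for every $m$. Writing $t_m \eqdef \timpl_r(i_n) - \sum_{k=m}^{n-1}\varepsilon_k$, I thus have $\timpl_r(i_m) \le t_m$ and $t_n = \timpl_r(i_n)$. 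Since the existence of an $\tilde{e}$-centipede in $r$ is preserved by the natural ND-event-preserving isomorphism between runs of full-information protocols, I would assume without loss of generality that $P$ is a full-information protocol (and, as in the proof of \thmref{centipede}, response-recalling), so that from an agent's state at a point one can read off exactly which events syncausally precede it. I would then split the argument into two phases: first produce a \emph{tight} centipede, whose $m$-th end node sits at the response time $\timpl_r(i_m)$, and then raise each end node to the prescribed time $t_m$.

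For the tight centipede I would adapt, and slightly strengthen, the proof of the first part of \thmref{centipede}, which — before uniformising all end nodes up to $\timpl_r(i_n)$ — in fact produces ND events $e_1,\dots,e_n$ with $\tilde{e} \syncausal{r} e_1$, with $e_m \syncausal{r} e_{m+1}$, and with the sharper bound guarantees $e_m \boundguarantee (i_m,\timpl_r(i_m))$. I would build these by downward induction on $m$, invoking the indistinguishability/dependence principle formalised later in this chapter (\corref{most-generalized-lemma-3}): because $i_m$ responds in every triggered run and never before $\tilde{e}$ occurs (\remarkref{response-not-before}), its decision to respond at $\timpl_r(i_m)$ cannot be independent of $\tilde{e}$, so a guaranteed information flow reaches $(i_m,\timpl_r(i_m))$, witnessed by an ND event $e_m$; and because the ordering of the responses forces $i_m$'s response to respect the prior response of $i_{m-1}$, the witnessing event at level $m-1$ must itself syncausally precede the one at level $m$, yielding the link $e_{m-1}\syncausal{r} e_m$. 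The only feature of $\tuple{\varepsilon}$ used in this phase is that the induced responses are ordered; the exact magnitudes play no role in the causal construction.

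With the tight centipede in hand, I would raise its end nodes. For each $m$, from $\timpl_r(i_m) \le t_m$, locality gives $(i_m,\timpl_r(i_m)) \boundguarantee (i_m,t_m)$, and transitivity of $\boundguarantee$ upgrades $e_m \boundguarantee (i_m,\timpl_r(i_m))$ to $e_m \boundguarantee (i_m,t_m)$. Crucially, this re-timing leaves the events $e_1,\dots,e_n$ themselves untouched, so the chain conditions $\tilde{e}\syncausal{r} e_1$ and $e_m \syncausal{r} e_{m+1}$ — which refer only to the events and not to the end-node times — continue to hold. Hence $(e_m)_{m=1}^n$ is an $\tilde{e}$-centipede for $\tuple{\imath}$ by $(t_m)_{m=1}^n$, as required.

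The main obstacle is the causal core of the first phase: establishing the ordered syncausal links $e_{m-1}\syncausal{r} e_m$ so that a \emph{single} chain threads through all agents in the prescribed order, with each level's bound guarantee reaching $i_m$ by its own response time (the prefix centipedes for $(i_1,\dots,i_m)$ given by \thmref{centipede} need not be mutually compatible, so they cannot simply be glued, and this is why the tight version must be re-derived rather than cited). This is precisely the delicate indistinguishability argument underlying the even Centipede theorem, and the content of the present statement is that it survives the replacement of plain ordering by $\tuple{\varepsilon}$-spaced timing: once the tight centipede is secured, the passage to the uneven end-node times is the purely monotone raising step above, which is where — and the only place where — the quantities $\sum_{k=m}^{n-1}\varepsilon_k$ enter.
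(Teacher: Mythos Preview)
Your reading of the hypothesis as $P \in \WTR{\tilde{e},\tuple{\imath},\tuple{\varepsilon}}$ is right, and your steps~1 (telescoping to $\timpl_r(i_m)\le t_m$) and~3 (raising end nodes by locality of $\boundguarantee$) are sound. The gap is step~2: the ``tight'' centipede with end nodes at the \emph{actual} response times $(\timpl_r(i_m))_{m=1}^n$ need not exist when some $\varepsilon_m<0$. Your own diagnosis betrays the problem --- you write that ``the only feature of $\tuple{\varepsilon}$ used in this phase is that the induced responses are ordered'', but for negative $\varepsilon_m$ the responses are \emph{not} ordered, so the clause ``the ordering of the responses forces $i_m$'s response to respect the prior response of $i_{m-1}$'' has no content.

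Here is a concrete failure. Take $n=2$, $\varepsilon_1=-20$, agents $i_1,i_2$ with $i_{\tilde{e}}=i_2$, a channel $i_2\to i_1$ with $\contextbounds(i_2,i_1)=20$, and protocol: $i_2$ responds when $\tilde{e}$ occurs and sends to $i_1$; $i_1$ responds on receipt. This solves $\WTR{\tilde{e},(i_1,i_2),(-20)}$. In the run $r$ with $\tilde{e}$ at time $0$ and early delivery $d$ at time $15$, we get $\timpl_r(i_1)=15$, $\timpl_r(i_2)=0$. For a tight centipede by $(15,0)$ you need $e_1\boundguarantee(i_1,15)$ and $e_2\boundguarantee(i_2,0)$ with $e_1\syncausal{r}e_2$. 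The only ND event with bound guarantee to $(i_2,0)$ is $\tilde{e}$ itself, and the only one with bound guarantee to $(i_1,15)$ is $d$ (since $\tilde{e}\boundguarantee(i_1,20)$ but not $(i_1,15)$); yet $d$ occurs at time $15$ and cannot syncausally precede $\tilde{e}$ at time $0$. So no tight centipede exists, while the theorem's centipede by $(t_1,t_2)=(20,0)$ is simply $(\tilde{e},\tilde{e})$.

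The paper's route avoids this by never aiming at $\timpl_r(i_m)$. Encoding WTR as $\TCR{\tilde{e},I,\delta}$ with $\delta(i_{l+1},i_l)=-\varepsilon_l$ (\remarkref{weakly-timed-is-timely}), \thmref{path-traversing-centipede} applied to the path $(i_n,i_{n-1},\ldots,i_1)$ yields, after reversing, an $\tilde{e}$-centipede for $\tuple{\imath}$ by $(t_m)_{m=1}^n$ directly. In that inductive proof, the end-node time for $i_{n-1}$ comes from its response time in the \emph{modified} run $r\tcap\PND{r}(i_n,\timpl_r(i_n))$, which is bounded by $\timpl_r(i_n)+\delta(i_n,i_{n-1})=t_{n-1}$ --- not by $\timpl_r(i_{n-1})$. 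The Ben-Zvi--Moses proof referenced by the paper works the same way, via the time-indexed nested knowledge $K_{(i_n,t_n)}\cdots K_{(i_1,t_1)}\tilde{e}$ at the target times $t_m$, not at the actual response times.
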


\begin{thm}[Uneven Broom]\label{thm:uneven-broom}
In a discrete-time model,\footnote{
Yet again, this is a key requirement here.}
let $\gamma$ be a context, let $I \subseteq \agents$ be finite,
let $\tuple{t} \in \timeset^I$, let $\tilde{e} \in \externalinputs$
and let $P \in \SR{\tilde{e},I}$. Each $r \in \TRP$ contains
an $\tilde{e}$-broom for $I$ by $(\timpl_r(i))_{i \in I}$.
\end{thm}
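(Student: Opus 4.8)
The plan is to follow the strategy of \thmref{broom}, reducing to \thmref{uneven-centipede} to obtain a causal chain that reaches each agent by its response time, and then promoting that chain to a single-rooted broom by the fixed-point (common-knowledge) argument. The substantive content is the tightly-timed case $P\in\TTR{\tilde{e},I,\tuple{t}}$, which strengthens \thmref{broom} just as \thmref{uneven-centipede} strengthens \thmref{centipede}; with the hypothesis as literally written, $P\in\SR{\tilde{e},I}$ forces $\timpl_r\equiv t_r$ constant, whence a broom by $(\timpl_r(i))_{i\in I}=(t_r)^I$ is by definition just a broom by $t_r$ and the claim is then immediate from the first part of \thmref{broom}.

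As justified in \chapterref{discrete}, I would assume without loss of generality that $P$ is a full-information protocol, so that $\boundguarantee$ and $\syncausal{r}$ carry their concrete information-flow meaning. Fix $r\in\TRP$; the tightly-timed constraint forces $\timpl_r(i)=t_i+c_r$ for a single run-dependent offset $c_r$. Relabel $I=\{i_1,\ldots,i_n\}$ so that $t_{i_1}\le\cdots\le t_{i_n}$ and put $\varepsilon_m\eqdef t_{i_{m+1}}-t_{i_m}\ge 0$. For this monotone relabelling $P$ solves ordered response, and with $\tuple{\varepsilon}$ recording these tight gaps, \thmref{uneven-centipede} produces an $\tilde{e}$-centipede $\tuple{e}$ for $(i_1,\ldots,i_n)$ whose $m$-th end time telescopes to $\timpl_r(i_n)-\sum_{k=m}^{n-1}\varepsilon_k=\timpl_r(i_m)$. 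Hence $\tilde{e}\syncausal{r}e_1$, $e_m\syncausal{r}e_{m+1}$, and $e_m\boundguarantee(i_m,\timpl_r(i_m))$ for every $m$.

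The remaining, and hardest, step is to collapse this chain into a broom. A single centipede does not suffice: it yields $e_1\boundguarantee(i_1,\timpl_r(i_1))$ together with only the run-dependent links $e_1\syncausal{r}e_2\syncausal{r}\cdots\syncausal{r}e_n$, and since $\boundguarantee$ is a subrelation of $\syncausal{r}$, composing these gives merely $e_1\syncausal{r}(i_m,\timpl_r(i_m))$, not the guaranteed $e_1\boundguarantee(i_m,\timpl_r(i_m))$ that a broom demands of its single root. As in \thmref{broom}, the passage from run-dependent causal influence to a common guaranteed root is supplied by the fixed-point character of common knowledge: because $P$ must fire every agent of $I$ at its prescribed offset in \emph{every} run indistinguishable from $r$ to the relevant agent, the information sitting at the root of the coordination must already guarantee the on-time response of all of $I$, rather than merely witness it in $r$. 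I would make this precise as a timed analogue of common knowledge among $I$ at the uneven response times and show it equivalent to the existence of a broom by $(\timpl_r(i))_{i\in I}$; the offsets $t_i+c_r$ enter only as the per-agent deadlines and leave the argument otherwise identical to the simultaneous case.

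I expect the main obstacle to be exactly this upgrade from a causal chain to a single guaranteed root, which is the genuine feature separating a broom from a centipede and the reason \thmref{broom} requires common knowledge on top of \thmref{centipede}. By contrast, the reduction to \thmref{uneven-centipede} and the monotone relabelling are routine, and the unevenness is absorbed uniformly by the identity $\timpl_r(i_m)=\timpl_r(i_n)-\sum_{k=m}^{n-1}\varepsilon_k$ derived above.
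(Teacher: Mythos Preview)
Your high-level plan matches the original Ben-Zvi--Moses proof strategy that the paper cites for this theorem: establish a timed analogue of common knowledge among $I$ at the uneven response times, then reduce to (uneven) centipedes and collapse to a broom. You also correctly flag that with the hypothesis $P\in\SR{\tilde{e},I}$ as literally written the claim is immediate from \thmref{broom}, and that the substantive statement is for $P\in\TTR{\tilde{e},I,\tuple{t}}$.

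One point of confusion: the detour through a \emph{single} application of \thmref{uneven-centipede} to the monotone relabelling contributes nothing to the broom. As you yourself observe, one centipede gives only $e_1\syncausal{r}(i_m,\timpl_r(i_m))$, not $e_1\boundguarantee(i_m,\timpl_r(i_m))$. The broom arises only after you have established timed common knowledge, which then yields centipedes for \emph{every} ordering of $I$ (via the nested-knowledge expansion), and a pigeonhole on those produces the common root. So your step 2 should be dropped; the argument is really: tightly-timed response $\Rightarrow$ timed common knowledge $\Rightarrow$ broom by the response times, with no single distinguished centipede along the way.

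The paper's own derivation (in \chapterref{previous}, via the second part of \corref{path-traversing-centipede-implies-broom}) takes a genuinely different route that avoids knowledge altogether. It views $\TTR{\tilde{e},I,\tuple{t}}$ as $\TCR{\tilde{e},I,\delta}$ for $\delta(i,j)=t_j-t_i$, applies \thmref{path-traversing-centipede} to a Hamiltonian cycle in $G_{\hat\delta}$ concatenated with itself $L_r(\tilde{e},(i,\timpl_r(i)))$ times, and then uses a pigeonhole on the resulting path-traversing centipede: some ND event must repeat at least $|I|$ consecutive times, and those consecutive legs give bound-guarantee arrows to all of $I$. Antisymmetry of $\hat\delta$ (automatic for tightly-timed response) then pins the end-node times to exactly $(\timpl_r(i))_{i\in I}$. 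This buys a purely combinatorial, knowledge-free proof that also works in continuous time under the bounded-syncausal-path hypothesis; your approach, by contrast, stays closer to the original and makes the epistemic content explicit.
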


\section{Adapting Some Machinery}\label{section:adapting-lemma-3}

Before approaching the timely-coordinated response problem
using the definitions surveyed in the previous section, we adapt some of the
machinery used by Ben-Zvi and Moses to obtain the results surveyed therein.
In order to do so, we introduce, yet again, some additional novel notation
and definitions.

\begin{defn}
Let $\gamma$ be a context and let $r \in \runs$.
\begin{itemize}
\item Given a time $t \in \timeset$,
we denote the set of all ND events occurring in $r$ no later
than $t$ by $\ND(r,t) \eqdef \{e \in \ND(r) \mid t_e \le t\}$.

\item Given an agent-time pair
$\theta \in \agents \times \timeset$ (resp.\ an event $\theta \in E(r)$),
we define the ``ND past'' of $\theta$ in $r$ as
$\PND{r}(\theta) \eqdef \{e \in \ND(r) \mid e \syncausal{r} \theta\}$.
Note that $\PND{r}(\theta) \subseteq \ND(r,t)$, where $\theta=(i,t)$
(resp.\ $t=t_\theta$). For an agent-time pair $(i,t)$, we sometimes write
$\PND{r}(i,t)$ instead of $\PND{r}((i,t))$, for readability.
\end{itemize}
\end{defn}

\begin{defn}[Subruns] Given a context $\gamma$, a protocol $P \in \protocols$,
a time $t \in \timeset$
and runs $r,r' \in \RP$,
we call $r'$ a ``$t$-subrun'' of $r$, and write $r' \tsubseteq r$, if
the initial states used for all agents in $r'$ and in $r$ are the same, and if
$\ND(r',t) \subseteq \ND(r,t)$.
(We omit
$P$ and $\gamma$ from this notation for readability, as they will be clear from
the discussion.)
For fixed $P$ and $t$, we note that $\tsubseteq$
is a quasi-order relation on $\RP$, in which two runs are in the same
equivalence class iff they are indistinguishable until $t$, inclusive.
\end{defn}

\begin{defn}[Retainable Subsets] Given a context $\gamma$, a protocol $P \in \protocols$,
a run $r \in \RP$ and a time $t \in \timeset$, we define the ``$t$-retainable'' subsets of
$\ND(r)$ as
\[\RND(r,t) \eqdef \{\ND(r',t) \mid r' \tsubseteq r \}
\subseteq 2^{\ND(r,t)}.\]
Furthermore, for every $E \in \RND(r,t)$, we denote
\[
r \tcap E \eqdef \{r' \tsubseteq r \mid \ND(r',t) = E \} \ne \emptyset,
\]
(Again, we omit $P$ and $\gamma$ from this notation as they will be inferred
from the discussion.)
We sometimes slightly abuse notation by using
$r \tcap E$ to refer to one such run and not to the whole
set, if the choice of representative is inconsequential. (This is often
the case, as $r \tcap E$ is an equivalence class of
$\tsubseteq$.)
\end{defn}

\begin{remark}\label{remark:retainable-properties}
Let $\gamma$ be a context, let $P \in \protocols$, let $r \in \RP$ and
let $t \in \timeset$.
By the above definitions:
\begin{itemize}
\item
$\forall t \in \timeset: \ND(r,t) \in \RND(r,t)$,
and $r \in r \tcap \ND(r,t)$.
\item
If $E \in \RND(r,t)$ and if $E' \in \RND(r \tcap E,t')$
for some $t' \in \timeset$ s.t.\ $t' \le t$,
then by definition, $E' \subseteq E$, $E' \in \RND(r,t')$ and
$(r \tcap E) \tprimecap E' = r \tprimecap E'$.
\end{itemize}
\end{remark}

\begin{claim}\label{claim:no-nd-between-tprime-t}
Let $\gamma$ be a protocol, let $P \in \protocols$, let
$r \in \RP$ and let $t,t' \in \timeset$.
If $t' \le t$, then $\RND(r,t') \subseteq \RND(r,t)$.
Furthermore, for every $E \in \RND(r,t')$, 
there exists $r' \in r \tcap E$ s.t.\ $\ND(r') \cap 
\externalinputs \subseteq E$.
\end{claim}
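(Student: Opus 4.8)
The plan is to prove both assertions at once through a single explicit construction. Fix $E\in\RND(r,t')$; by definition there is a $t'$-subrun $r_0\tsubseteq r$ with $\ND(r_0,t')=E$. I will build a run $r'\in\RP$ that, first, agrees with $r_0$ through time $t'$, so that $\ND(r',t')=E$, and, second, carries as few non-deterministic events as possible after $t'$. The inclusion $\RND(r,t')\subseteq\RND(r,t)$ will then follow by exhibiting $r'$ as a witness that $E\in\RND(r,t)$, and this same $r'$ will be the representative of $r\tcap E$ demanded by the ``furthermore'' part.

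The construction rests on the standard observation that, since $P$ is deterministic, a run is completely determined by its agents' initial states together with the set of its ND events tagged with their occurrence times. Indeed, by induction on $t\in\timeset$ the joint state at time $t$ is a function of the joint state at $t-1$ and of the events observed at $t$, and those events consist exactly of the external inputs at $t$ (which are ND) and the message deliveries at $t$ (each being either an early delivery, hence ND, or a worst-case delivery of a message whose sending is already fixed by earlier states). I will use this to define $r'$: let its initial states be those of $r$, let its ND events up to time $t'$ be exactly $E$, and prescribe its behaviour after $t'$ as below. Agreement of $r'$ with $r_0$ through $t'$ is then immediate, giving $\ND(r',t')=E$; since $E=\ND(r_0,t')\subseteq\ND(r,t')\subseteq\ND(r,t)$, this already makes $r'$ a $t'$-subrun of $r$ realizing $E$.

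For the behaviour after $t'$ I will inject no further external inputs, deliver every message sent over a finite-bound edge at its worst-case time $s+\contextbounds(i,j)$ (so that it is not early), and delay every delivery over an infinite-bound edge to some finite time strictly after $t$. One checks directly that $r'$ obeys the run axioms: each external input occurs at most once (none after $t'$), and every sent message is delivered exactly once within its bound. Because no ND event of $r'$ falls in the window $(t',t]$, I obtain $\ND(r',t)=\ND(r',t')=E$, whence $E\in\RND(r,t)$ and $r'\in r\tcap E$, settling the first assertion; and because the only external inputs of $r'$ are those already in $E$, I get $\ND(r')\cap\externalinputs\subseteq E$, settling the second.

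The one genuine subtlety, and the reason the second assertion is phrased only for external inputs rather than for all ND events, is the treatment of infinite-bound edges. Along such an edge the worst-case delivery time is never attained, so \emph{every} delivery counts as early and hence as ND; these forced ND events cannot be removed from $r'$ altogether. The key point I will exploit is that they can instead be postponed: a message on an infinite-bound edge may legitimately be delivered at any finite time after it is sent, so all such deliveries can be pushed past the finite horizon $t$, keeping $(t',t]$ free of ND events while leaving the external-input content of $r'$ untouched. This postponement is the step to handle with care; everything else is routine bookkeeping on top of the determinism observation.
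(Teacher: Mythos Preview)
Your proposal is correct and follows essentially the same approach as the paper's proof sketch: both fix $E\in\RND(r,t')$, take a $t'$-subrun realizing $E$, and extend it past $t'$ by having the environment trigger no further external inputs, deliver finite-bound messages at their worst-case times, and push infinite-bound deliveries past $t$ (the paper specifies $\max\{t+1,t''+1\}$ where $t''$ is the send time). Your explicit remark that the restriction to external inputs in the second assertion is forced precisely by the unavoidable ND status of infinite-bound deliveries is a nice clarification that the paper leaves implicit.
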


\begin{proof}[Proof Sketch]
Let $E \in \RND(r,t')$.
For the continuous-time model presented in \appref{continuous}, the claim
follows by applying the ``no foresight'' property to $r \tprimecap E$
at $t'$ and with $d=t$.
For the discrete-time model presented in \chapterref{discrete},
we construct a run $r' \in r \tcap E$ s.t.\
$\ND(r') \cap \externalinputs \subseteq E$, as follows:
$r'$ is identical
to $r \tprimecap E$ until $t'$, inclusive.
After $t'$, the agents behave in $r'$ according to $P$,
and the environment triggers no more ND events,
except for deliveries of sent messages that have an infinite
bound guarantee (as such non-deterministic deliveries must be triggered at some
time during the run, for the run to be legal).
Each such message is delivered at $\max\{t+1,t''+1\}$,
where $t''$ is the sending time of this message. It is straightforward to
inductively check that
the resulting run $r'$ is well defined and legal --- we omit
this cumbersome check,
which runs along similar lines of some of the proofs from \cite{bzm1},
in favor of the many, more interesting, pages ahead.
\end{proof}

We now present and adapt some machinery developed by
Ben-Zvi and Moses in their analysis\cite{bzm1,bzm2} of the ordered response
problem. Their discrete-time analysis essentially shows the following lemma,
which we rephrase using our notation. For the continuous-time model
presented in \appref{continuous}, the first part of this lemma is equivalent
to the ``no extrasensory perception'' property, and its second part follows
from the definition of a run.

\begin{lemma}\label{lemma:lemma-3}
Let $\gamma$ be a context, let $P \in \protocols$
and let $r \in \RP$. For every $t \in \timeset$,
\[
\RND(r,t) \supseteq \{ \PND{r}(i,t) \mid i \in \agents \},
\]
and for each $i \in \agents$, the state of $i$ at $t$ is
identical in $r$ and in all the runs $r \tcap \PND{r}(i,t)$.
\end{lemma}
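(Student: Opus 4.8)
The plan is to prove both assertions through a single inductive ``information-flow'' argument along the syncausal past of $(i,t)$. Fix $i \in \agents$ and $t \in \timeset$, and write $\Theta \eqdef \{(j,s) \in \agents \times \timeset \mid (j,s) \syncausal{r}(i,t)\}$ for the syncausal cone below $(i,t)$ in $r$; note that syncausality never decreases the time coordinate, so every $(j,s) \in \Theta$ satisfies $s \le t$, and that $(i,t) \in \Theta$. The heart of the matter is the following claim: for \emph{any} run $r' \tsubseteq r$ with the same initial states and $\ND(r',t) = \PND{r}(i,t)$, the action taken by $j$ at $s$ (its new state, the messages it sends, and whether it responds) is identical in $r$ and in $r'$, for every $(j,s) \in \Theta$. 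Taking $(j,s)=(i,t)$ immediately yields the state-equality assertion (the second part of the lemma) for every such $r'$, that is, for every element of $r \tcap \PND{r}(i,t)$.

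I would prove this claim by strong induction on the time coordinate $s$ (well-founded, as $\timeset$ is discrete and bounded below). Since an agent's action at $s$ is a function of its state just before $s$ and of the events it observes at $s$, it suffices to show that both inputs agree in $r$ and $r'$ for each $(j,s) \in \Theta$. The previous state is handled by locality: $(j,s-1) \syncausal{r}(j,s)$ places $(j,s-1)$ in $\Theta$, so the induction hypothesis (or equality of initial states when $s=0$) applies. External inputs are easy: any input observed by $j$ at $s$ is an ND event sitting at $(j,s) \in \Theta$, hence lies in $\PND{r}(i,t) = \ND(r',t)$, and conversely, so both runs observe the same external inputs there.

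The delicate case is message delivery, and here lies the main obstacle, because the syncausality relation is itself run-dependent (which messages are sent depends on behaviour). I would split on whether a delivery to $j$ at $s$ is early or at its maximal time. An early delivery is an ND event observed at $(j,s) \in \Theta$, hence belongs to $\PND{r}(i,t) = \ND(r',t)$ and is therefore present, with the same sender, sending time and content, in both runs; moreover its sender $(k,s'')$ satisfies $(k,s'') \syncausal{r}(j,s)$, placing it in $\Theta$, so by induction the message is genuinely sent in both runs. A maximal-time delivery is governed by the \emph{run-independent} delivery-guarantee edge $(k,s'') \boundguarantee (j,s)$ with $s=s''+\contextbounds(k,j)$; this edge holds in every run, so $(k,s'') \in \Theta$ and the send matches by induction. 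To rule out a message being delivered early in one run but at its deadline in the other, I would observe that any early delivery occurring at some $s_0 \le s$ is observed at $(j,s_0)$, which lies in $\Theta$ by locality, hence is recorded in $\PND{r}(i,t)$ and must occur identically in both runs; this forces the early-versus-maximal status of each delivery to coincide, closing the induction in both directions.

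Finally, for the first assertion I must exhibit one run realizing $\PND{r}(i,t)$, that is, show $\PND{r}(i,t) \in \RND(r,t)$. I would construct $r'$ with the same initial states by running $P$ forward, scheduling as external inputs and early deliveries up to $t$ exactly the events of $\PND{r}(i,t)$, triggering no external inputs after $t$, and delivering every remaining message at its maximal time (handling infinite-bound messages after $t$ as in \claimref{no-nd-between-tprime-t}). The only point needing care is legality, namely that each early-delivery event of $\PND{r}(i,t)$ delivers a message that $r'$ actually sends at the stipulated time; but its sender lies in $\Theta$, where the inductive matching above shows $r'$ reproduces the sends of $r$, so the scheduled delivery is consistent. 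A routine check that all sent messages are delivered exactly once within their bounds then confirms $r'$ is a legal run with $\ND(r',t) = \PND{r}(i,t)$, completing the proof.
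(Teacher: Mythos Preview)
Your argument is correct in substance and is in fact more than the paper itself provides: the paper does \emph{not} give a proof of this lemma. It attributes the discrete-time case to Ben-Zvi and Moses (citing \cite{bzm1,bzm2}), and for the continuous-time model of \appref{continuous} it treats the first assertion as the axiomatic ``no extrasensory perception'' property and remarks that the second follows from the definition of a run. Your proposal supplies, for the discrete-time model, the self-contained causal-past induction that the citation stands in for, and it does so along the expected lines.

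One presentational point: as written there is a slight circularity between your main claim and your construction. The claim is stated for an \emph{already legal} run $r'$ with $\ND(r',t)=\PND{r}(i,t)$, but you then invoke ``the inductive matching above'' while constructing $r'$ and verifying its legality. The fix is routine --- either carry out the construction and the matching-on-$\Theta$ invariant in a single forward induction, or restate the claim for a partially-constructed prefix of $r'$ --- but you should say so explicitly. Your handling of the early-versus-maximal delivery dichotomy (using locality to place $(j,s_0)$ in $\Theta$ and hence force the early/maximal status to agree) is the right idea and closes the case analysis in both directions.
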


By applying \claimref{no-nd-between-tprime-t}, we obtain the following
generalization of \lemmaref{lemma-3}.

\begin{cor}\label{cor:adjusted-lemma-3}
Let $\gamma$ be a context, let $P \in \protocols$
and let $r \in \RP$. For every $t \in \timeset$,
\[
\RND(r,t) \supseteq \{ \PND{r}(i,t') \mid i \in \agents \And t' \le t \},
\]
and for each $i \in \agents$ and each $t' \le t$, the state of $i$ at $t'$ is
identical in $r$ and in all the runs $r \tcap \PND{r}(i,t')$.
\end{cor}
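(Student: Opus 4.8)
The plan is to derive both assertions of the corollary by invoking \lemmaref{lemma-3} at each intermediate time $t' \le t$ and then transporting its conclusions up to time $t$ by means of the monotonicity $\RND(r,t') \subseteq \RND(r,t)$ supplied by \claimref{no-nd-between-tprime-t}. Fix $i \in \agents$ and $t' \le t$ throughout.

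For the inclusion I would simply chase memberships: \lemmaref{lemma-3}, applied at time $t'$, gives $\PND{r}(i,t') \in \RND(r,t')$; since $t' \le t$, \claimref{no-nd-between-tprime-t} yields $\RND(r,t') \subseteq \RND(r,t)$, whence $\PND{r}(i,t') \in \RND(r,t)$. As $i$ and $t'$ were arbitrary, this establishes $\RND(r,t) \supseteq \{\PND{r}(i,t') \mid i \in \agents \And t' \le t\}$, and in particular guarantees that each set $r \tcap \PND{r}(i,t')$ is well-defined and non-empty.

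For the state-agreement claim, the key observation is that every run in $r \tcap \PND{r}(i,t')$ already lies in $r \tprimecap \PND{r}(i,t')$. Indeed, every event of $\PND{r}(i,t')$ occurs no later than $t'$ (as $e \syncausal{r} (i,t')$ forces $t_e \le t'$), so for any $r'' \in r \tcap \PND{r}(i,t')$ the equality $\ND(r'',t) = \PND{r}(i,t')$ already pins down the ND-history through $t'$: no ND event of $r''$ occurs in $(t',t]$, and hence $\ND(r'',t') = \ND(r'',t) = \PND{r}(i,t')$, with $r''$ a $t'$-subrun realizing $\PND{r}(i,t')$ (its initial states are inherited from being a $t$-subrun, and $\PND{r}(i,t') \subseteq \ND(r,t')$). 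Thus $r \tcap \PND{r}(i,t') \subseteq r \tprimecap \PND{r}(i,t')$, and \lemmaref{lemma-3} applied at $t'$ --- which asserts that the state of $i$ at $t'$ is identical in $r$ and in all runs $r \tprimecap \PND{r}(i,t')$ --- \emph{a fortiori} gives the same agreement across the smaller family $r \tcap \PND{r}(i,t')$.

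The step I expect to require the most care is this last containment and the direction in which it is used: the $t$-cap class is a \emph{subset} of (not equal to) the $t'$-cap class, so I must ensure the ``identical state'' property is inherited by passing from the larger family down to the smaller one, rather than the other way around, and I should verify at the level of the subrun definitions that membership in $r \tcap \PND{r}(i,t')$ indeed entails being a $t'$-subrun of $r$ with exactly the ND-set $\PND{r}(i,t')$. Everything else is a routine application of the quoted lemma and claim.
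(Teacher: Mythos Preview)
Your proposal is correct and follows exactly the approach the paper intends: the paper states this corollary as an immediate consequence of \lemmaref{lemma-3} combined with \claimref{no-nd-between-tprime-t}, and your argument spells out precisely how that combination works, including the containment $r \tcap \PND{r}(i,t') \subseteq r \tprimecap \PND{r}(i,t')$ that lets you inherit the state-agreement conclusion. The care you flag in the final paragraph is exactly right and your verification of it is sound.
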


We note, without a proof, that this result can be further generalized as
follows, at least for the cases listed below.

\begin{cor}\label{cor:most-generalized-lemma-3}
In a discrete-time model, or in a continuous-time model with finitely many
agents,
let $\gamma$ be a context, let $P \in \protocols$
and let $r \in \RP$.\footnote{
As discussed in \appref{continuous},
for certain ``nice'' protocols $P$, the requirement for only finitely many
agents in a continuous-time model may be relaxed to the requirement
that $\inf(\contextbounds)>0$.
}
For every $t \in \timeset$,
\[
\RND(r,t) \supseteq \{E \subseteq \ND(r,t) \mid
\forall e \in E:
E \supseteq \PND{r}(e) \},
\]
with equality if $P$ is a full-information protocol.
Furthermore, for every $E \in \RND(r,t)$ and for every
$(i,t') \in \agents \times \timeset$, if $\PND{r}(i,t') \subseteq E$, then 
the state of $i$ at $t'$
is identical in $r$ and in all the runs $r \tcap E$.
(If P is a full-information protocol, then the converse holds as well.)
\end{cor}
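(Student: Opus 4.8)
The plan is to establish the two inclusions describing $\RND(r,t)$ together with the ``furthermore'' statement, noting that a single time-induction—extending \corref{adjusted-lemma-3} from the ND-past of a single agent-time pair to an arbitrary ND-past-closed set—simultaneously yields the inclusion $\supseteq$ and the forward direction of the state-identity claim, while the reverse inclusion (equality) and the converse state-identity claim are precisely where the full-information hypothesis enters. Throughout, write $E_s \eqdef \{e \in E \mid t_e \le s\}$ for a candidate set $E$.

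For the inclusion $\supseteq$, fix $E$ with $\PND{r}(e) \subseteq E$ for every $e \in E$, and construct a $t$-subrun $r'$ by ``replaying'' exactly $E$: $r'$ uses the same initial states as $r$; up to time $t$ the environment triggers precisely the external inputs in $E$ (at their recorded times and observers) and delivers a message early exactly when its early-delivery event lies in $E$, choosing on-time delivery otherwise; after $t$ we follow the device of \claimref{no-nd-between-tprime-t}, deferring all forced infinite-bound deliveries far into the future and triggering nothing else. I would run the induction on $s = 0,1,2,\dots,t$ with the invariant that $\ND(r',s) = E_s$ and that, for every agent $i$ with $\PND{r}(i,s) \subseteq E$, the state of $i$ at $s$ and the events $i$ observes at $s$ coincide in $r$ and $r'$. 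The step rests on three facts: past-closure, which guarantees that the message underlying any early delivery $e \in E$ was sent in $r'$ (since $\PND{r}$ of its send point is contained in $\PND{r}(e) \subseteq E$, so by the invariant the sender is in its $r$-state and emits the same message); that any early delivery observed by $i$ strictly before time $s{+}1$ already lies in $\PND{r}(i,s{+}1) \subseteq E$ and is thus replayed identically; and the subrelation inclusion $\boundguarantee\ \subseteq\ \syncausal{r}$, which places the sender of every on-time delivery to $i$ at $s{+}1$ inside $\PND{r}(i,s{+}1)$, so that whether such a message reaches $i$ on time or was delivered early is decided identically in the two runs. The same invariant proves the forward ``furthermore'': if $\PND{r}(i,t') \subseteq E$ with $t' \le t$, then its instance at $s = t'$ gives the claimed state identity in every $r' \in r \tcap E$. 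Crucially, none of this uses full information.

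For equality under a full-information protocol I would prove the reverse inclusion $\RND(r,t) \subseteq \{E \mid \forall e \in E:\ \PND{r}(e) \subseteq E\}$ directly. Let $E = \ND(r',t)$ for some $r' \tsubseteq r$ and take $e \in E$; if $e$ is an external input then $\PND{r}(e)=\{e\}\subseteq E$, so assume $e$ is an early delivery of a message $m$ sent by $k$ at some $t'' < t$. Since $E \subseteq \ND(r,t)$, the event $e$ is the same in $r$ and $r'$, hence $m$ is the same message and $k$ must send $m$ at $t''$ in the legal run $r'$. In a full-information protocol the content of $m$ equals $k$'s state at $t''$, from which the ND-past of $(k,t'')$ is recovered by a run-independent decoding map; as $m$ agrees across $r$ and $r'$ we obtain $\PND{r}(k,t'') = \PND{r'}(k,t'') \subseteq \ND(r',t) = E$, whence $\PND{r}(e) = \{e\} \cup \PND{r}(k,t'') \subseteq E$. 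The converse of the state-identity claim follows identically: if $i$'s state at $t'$ agrees in $r$ and in some $r' \in r \tcap E$, then full information forces the decoded pasts $\PND{r}(i,t') = \PND{r'}(i,t')$ to agree, and the latter is contained in $\ND(r',t)=E$.

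I expect the main obstacle to be the bookkeeping in the inductive step of the construction—specifically, ruling out that an agent whose past is \emph{not} contained in $E$ injects a spurious message delivery that corrupts the observed events of a ``preserved'' agent $i$. The resolution is exactly the bound-guarantee/locality argument above: every delivery that $i$ can observe at time $s{+}1$, on-time or early, has its send point (respectively its own event) inside $\PND{r}(i,s{+}1) \subseteq E$, so the preserved part of the run is causally closed against such interference. The only remaining care is well-foundedness of the induction, which is why the statement is confined to discrete time or to finitely many agents (with the $\inf(\contextbounds)>0$ relaxation noted in the footnote); in those regimes the ND events up to $t$ admit a well-founded ordering by time and causal depth along which the construction can proceed.
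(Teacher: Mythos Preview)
The paper explicitly does \emph{not} prove this corollary: immediately before stating it, the text reads ``We note, without a proof, that this result can be further generalized as follows,'' and immediately after it adds ``We do not require \corref{most-generalized-lemma-3}, though, as \corref{adjusted-lemma-3} suffices for all the proofs that we give below.'' So there is no proof in the paper to compare against.

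That said, your proposal is the natural argument and is essentially correct. The replay construction with the invariant that states match for every agent whose ND-past lies in $E$ is exactly the right strengthening of the argument behind \lemmaref{lemma-3}/\corref{adjusted-lemma-3}, and your identification of the two danger points---that the message underlying any $e\in E$ is actually sent in $r'$, and that no message from a ``non-preserved'' agent can reach a ``preserved'' agent except through a send-point whose past is already in $E$---is precisely what makes the induction go through. The full-information direction is also right: the key fact you use, that in a full-information protocol the state of $k$ at $t''$ determines $\PND{r}(k,t'')$ run-independently, holds because in such a protocol every bound-guarantee edge is witnessed by an actual message, so syncausal past equals Lamport-causal past and is encoded in the state.

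The only place where your write-up is soft is the continuous-time case: ``well-founded ordering by time and causal depth'' is not quite an induction scheme when time is dense, and the paper itself sidesteps this by taking ``no extrasensory perception'' as an axiom in \appref{continuous} rather than proving it. For finitely many agents (or $\inf(\contextbounds)>0$) one can rescue the induction by stepping in increments of $\inf(\contextbounds)$, which bounds the causal depth between successive steps; this is what the paper alludes to in the appendix and in the footnote to the corollary. Making that explicit would close the one genuine gap in your sketch.
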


We do not require \corref{most-generalized-lemma-3}, though, as
\corref{adjusted-lemma-3} suffices for all the proofs that we give below.

\section{Analyzing Timely-Coordinated Response}

We now turn to define the structure that stands at the heart
of our syncausal analysis of the timely-coordinated response problem.

\fig{path-traversing-centipede}{Path-traversing centipede}{
  $(e_m)_{m=1}^5$ is an $((i,j,i,j,i),\delta)$-traversing $\tilde{e}$-centipede by $t$.
}
\begin{defn}[Path-Traversing Centipede --- see \figref{path-traversing-centipede}]
Given a TCR-spec $\TCRspec$,
a path $\apathfull \in \dpaths$
and a run $r \in \runs$,
we call an $n$-tuple of ND events $\tuple{e} \in \ND(r)^n$ a
``$(\apath,\delta)$-traversing $\tilde{e}$-centipede'' by $t$ if the
following hold:
\begin{itemize}
\item $\tilde{e} \syncausal{r} e_n$ and $\forall m \in [n-1]: e_{m+1} \syncausal{r} e_m$.
\item
$\forall m \in [n]: e_m \boundguarantee (p_m,t+\dlength((p_k)_{k=1}^m))$.
We call $\{(p_m,t+\dlength((p_k)_{k=1}^m))\}_{m=1}^n$ the set of ``end nodes''
of this centipede.
\end{itemize}
\end{defn}

\begin{remark}
$\tuple{e}^{\mathit{rev}}=(e_{n-m+1})_{m=1}^n$ is an
$\tilde{e}$-centipede for
$\apath^{\mathit{rev}}$
(as a tuple of agents) by $\tuple{t}^{\mathit{rev}}$, where for every
$m \in [n]$, $t_m \eqdef t+\dlength((p_k)_{k=1}^m)$.
Thus, $t_{m+1}=t_m + \delta(p_m,p_{m+1})$ for every $m \in [n-1]$.
\end{remark}

\begin{remark}\label{remark:path-traversing-centipede-properties}
Let $\TCRspec$ be a TCR-spec,
let $\apath \in \dpaths$ and
let $r \in \runs$.
By the above definition:
\begin{itemize}

\item
No $(\apath,\delta)$-traversing $\tilde{e}$-centipede exists in $r$,
if $\apath$ traverses an
edge with a weight of $-\infty$ in $G_{\delta}$.

\item Any $(\apath,\delta)$-traversing $\tilde{e}$-centipede by $t$ in $r$
is also a $(\apath,\delta')$-traversing $\tilde{e}$-centipede by $t'$ in $r$,
for every $\delta' \ge \delta$ and every $t' \ge t$, by the locality property
of bound guarantee. This justifies the phrasing ``path-traversing centipede
{\em by} $t$''.

\item
Let $\tuplefull{e}$ be a
$(\apath,\delta)$-traversing $\tilde{e}$-centipede
by $t$ in $r$, then $(e_m)_{m=k}^n$ is a
$((p_m)_{m=k}^n,\delta)$-traversing $\tilde{e}$-centipede by
$t + \dlength((p_m)_{m=1}^k)$ in $r$, for every $k \in [n]$.
We call this path-traversing centipede the ``$k$-suffix'' of $\tuple{e}$.

\end{itemize}
\end{remark}

The following theorem, once stated, may be proven using the tools that are
applied in \cite{bzm4} for proving \thmref{uneven-centipede}.
We provide a somewhat different and more concise proof here, also for the sake
of emphasizing the fact that the approach studied in this chapter requires no
direct use of the concept of knowledge.

\begin{thm}[Path-Traversing Centipede]\label{thm:path-traversing-centipede}
Let $\TCRspec$ be a TCR-spec and
let $P \in \TCR{\tilde{e},I,\delta}$.
Each $r \in \TRP$
contains a $(\apath,\delta)$-traversing $\tilde{e}$-centipede
by $\timpl_r(p_1)$, for every $\apath \in \dpaths$.
\end{thm}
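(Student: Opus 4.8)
The plan is to induct on the length $n$ of $\apath=(p_m)_{m=1}^n$, peeling off the first vertex $p_1$ and reducing to the suffix $(p_m)_{m=2}^n$. Writing $t_m\eqdef\timpl_r(p_1)+\dlength((p_k)_{k=1}^m)$, I would first record the \emph{time-domination} fact $\timpl_r(p_m)\le t_m$: since $P\in\TCR{\tilde{e},I,\delta}$ we have $\timpl_r\in T(\delta)$ (\defnref{tcr}), and since each $(p_m,p_{m+1})$ is an edge of $G_\delta$ its weight is finite, so $\timpl_r(p_{m+1})\le\timpl_r(p_m)+\delta(p_m,p_{m+1})$; the claim follows by induction along $\apath$. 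Hence each end node $(p_m,t_m)$ is no earlier than $p_m$'s actual response, so any bound guarantee reaching $(p_m,\timpl_r(p_m))$ extends by locality to $(p_m,t_m)$. By the inductive hypothesis the suffix carries a centipede $(e_m)_{m=2}^n$ by $\timpl_r(p_2)$, which (by the time-monotonicity in \remarkref{path-traversing-centipede-properties}, using $\timpl_r(p_2)\le t_2$) is also one \emph{by} $t_2$; in particular $e_2\boundguarantee(p_2,t_2)$ and $\tilde{e}\syncausal{r}e_2$. The whole proof thus reduces to one extension step: produce an ND event $e_1$ with $e_2\syncausal{r}e_1$ and $e_1\boundguarantee(p_1,t_1)$.

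Two reusable mechanisms would carry this out. First, a \emph{remove-the-cause} argument: by \corref{adjusted-lemma-3} an agent's state at a time is determined by its ND-past, and by \claimref{no-nd-between-tprime-t} a realisable ND-past can be extended by a run introducing no further external inputs; so, if some ND event lies outside $\PND{r}(p_1,t_1)$, I may build a subrun $r'\tsubseteq r$ preserving $p_1$'s view at $t_1$ (so $p_1$ still responds at $t_1$, forcing $r'$ to be triggered and, by coordination, $\timpl_{r'}(p_2)\le t_1+\delta(p_1,p_2)=t_2$) while that event is suppressed. Second, a \emph{last-ND-event} extraction: along any syncausal chain witnessing $f\syncausal{r}(i,t)$, the ND event $e$ occurring latest on the chain satisfies $e\boundguarantee(i,t)$, because past $e$ the chain uses only locality and non-early (full-bound) deliveries — exactly the bound-guarantee steps — while $f\syncausal{r}e$ still holds. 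The base case $n=1$ combines them: the first mechanism with suppressed cause $\tilde{e}$ (this is the stronger form of \remarkref{response-not-before} whose proof is deferred to here) gives $\tilde{e}\syncausal{r}(p_1,t_1)$, and the second then extracts $e_1$ with $\tilde{e}\syncausal{r}e_1\boundguarantee(p_1,t_1)$.

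For the inductive step it suffices, by the last-ND-event mechanism applied to a chain from $e_2$ to $(p_1,t_1)$, to prove the key claim $e_2\syncausal{r}(p_1,t_1)$. I argue by contradiction: assuming $e_2\notin\PND{r}(p_1,t_1)$, build the subrun $r'$ above preserving $p_1$'s view at $t_1$, suppressing $e_2$ and, choosing the free time parameter of \claimref{no-nd-between-tprime-t} to exceed $t_2$, arranging that no ND event of $r'$ occurs in $(t_1,t_2]$ (forced infinite-bound deliveries are pushed past $t_2$, and no external inputs or early deliveries occur after $t_1$). Then $p_1$ responds at $t_1$ in $r'$, so $p_2$ must respond by $t_2$; applying the base case to $p_2$ in $r'$ yields an ND witness $f$ with $\tilde{e}\syncausal{r'}f$ and $f\boundguarantee(p_2,t_2)$. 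Since $t_f\le t_2$ and no ND event lies in $(t_1,t_2]$, in fact $t_f\le t_1$, whence $f\in\ND(r',t_1)=\PND{r}(p_1,t_1)$; as $r$ and $r'$ agree through $t_1$ this gives an $\tilde{e}$-rooted $f$ with $f\syncausal{r}(p_1,t_1)$ and $f\boundguarantee(p_2,t_2)$ — a witness of exactly the desired kind, now guaranteed to reach $(p_1,t_1)$.

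The point demanding care — and what I expect to be the main obstacle — is that the $f$ so produced need not be the particular suffix event $e_2$, so the link $e_3\syncausal{r}e_2$ inherited from the suffix need not attach to it. I would close this by carrying in the inductive hypothesis not merely the existence of a suffix centipede but of one whose leading event is syncausally minimal among $\tilde{e}$-rooted bound-guaranteed witnesses of $(p_2,t_2)$, so that the $f$ above can be taken to \emph{be} that event, and by checking that this minimality is what the base-case extraction already delivers and is preserved when a new leading event is prepended. Managing this invariant — together with the event-level reading of $e_{m+1}\syncausal{r}e_m$ as a statement about the message whose delivery is $e_m$, and the attendant tracking of occurrence times — is precisely the genuine-versus-guaranteed-delivery interplay that \claimref{no-nd-between-tprime-t} exists to control, and it, rather than either mechanism in isolation, is where the real work lies.
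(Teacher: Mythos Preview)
Your base case and the two reusable mechanisms (remove-the-cause and last-ND-event extraction) are exactly right and coincide with the paper's. The gap is precisely the one you flag in your final paragraph, and your proposed fix does not close it. The contradiction argument in $r'$ applies only the \emph{base case} to $p_2$, so the $f$ it produces is a single-node witness for $(p_2,t_2)$, not the head of a full suffix centipede; you cannot graft it onto $(e_3,\ldots,e_n)$, which live in $r$ and may depend on the suppressed $e_2$. Strengthening the inductive hypothesis to demand that $e_2$ be ``syncausally minimal among $\tilde{e}$-rooted bound-guaranteed witnesses of $(p_2,t_2)$'' does not help: syncausality is only a partial order, so such a minimum need not exist (two incomparable witnesses are easy to arrange), and even a minimal element in this partial order carries no relation to the $f$ you find in $r'$. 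The contradiction never materialises.

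The missing idea --- which is what the paper does --- is to apply the inductive hypothesis \emph{in the subrun}, not in $r$. Set $E\eqdef\PND{r}(p_1,\timpl_r(p_1))$ and $t\eqdef\max\{\timpl_r(p_1),\timpl_r(p_1)+\delta(p_1,p_2)\}$. By \corref{adjusted-lemma-3}, $E\in\RND(r,t)$ and $p_1$'s state at $\timpl_r(p_1)$ is unchanged in $r\tcap E$, so $\timpl_{r\tcap E}(p_1)=\timpl_r(p_1)$ and, by correctness, $\timpl_{r\tcap E}(p_2)\le\timpl_r(p_1)+\delta(p_1,p_2)$. Now invoke the \emph{full} inductive hypothesis on the suffix path in the run $r\tcap E$: you get an entire suffix centipede $(e_m)_{m=2}^n$ in $r\tcap E$ by $\timpl_r(p_1)+\delta(p_1,p_2)$. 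Its leading event $e_2$ lies in $\ND(r\tcap E,t)=E=\PND{r}(p_1,\timpl_r(p_1))$ automatically --- no contradiction argument, no minimality invariant, no arranging that $(t_1,t_2]$ be ND-free. The centipede transfers to $r$, $e_2\syncausal{r}(p_1,\timpl_r(p_1))$ holds for free, and your last-ND-event extraction then produces $e_1$ exactly as you describe. The whole detour through $r'$ and the minimality bookkeeping disappears.
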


\begin{proof}
By induction on $n$, the number of vertices in $\apath$. ($\apathfull$.)

Base: If $n=1$, denote $i\eqdef p_1$ (and thus, $\tuple{p} = (i)$).
Since $r \in \TRP$, we claim that $\tilde{e} \syncausal{r} (i,\timpl_r(i))$.
Indeed, by \corref{adjusted-lemma-3}
and by \claimref{no-nd-between-tprime-t},
there exists a run $r' \in \RP$ for which
$\ND(r',\timpl_r(i))=\PND{r}(i,\timpl_r(i))$ and
in which the only occurring external inputs are those that are
in $\PND{r}(i,\timpl_r(i))$.
Furthermore, both the state of $i$, and the events observed by it, are identical
in $r$ and in $r'$ up to and including $\timpl_r(i)$, and
thus $\timpl_{r'}(i)=\timpl_r(i)<\infty$.
By correctness of $P$, this implies $r' \in \TRP$,
and thus $\tilde{e} \in \ND(r) \cap \externalinputs
\subseteq \PND{r}(i,\timpl_r(i))$, as required.
Thus, there exists a syncausal path in $r$ from $\tilde{e}$ to $(i,\timpl_r(i))$.
Denote the latest among the ND event along this path by $e \in \ND(r)$.
By definitions
of syncausality and of bound guarantee,
$\tilde{e} \syncausal{r} e \boundguarantee (i,\timpl_r(i))$.
Thus, $(e)$ is a path-traversing centipede as required.

Induction step: Assume $n \ge 2$.
Set $i\eqdef p_1$, $j\eqdef p_2$,
$E \eqdef \PND{r}(i,\timpl_r(i))$ and
$t \eqdef \max\{\timpl_r(i),\timpl_r(i)+\delta(i,j)\} < \infty$.
As $\timpl_r(i) \le t$, by \corref{adjusted-lemma-3} we obtain that
$E \in \RND(r,t)$,
and that the state of $i$ at $\timpl_r(i)$ is the same in
$r \tcap E$ and in $r$. Therefore, $\timpl_{\smash{r \tcap E}}(i) = \timpl_r(i)$,
and thus, by correctness of $P$, we obtain
\[\timpl_{\smash{r \tcap E}}(j) \le \timpl_{\smash{r \tcap E}}(i) +
\delta(i,j) = \timpl_r(i) + \delta(i,j).\]

By the induction hypothesis, there exists a
$((p_m)_{m=2}^n,\delta)$-traversing
$\tilde{e}$-centipede $(e_m)_{m=2}^n$
by $\timpl_{\smash{r \tcap E}}(j)$ (and thus, by
\remarkref{path-traversing-centipede-properties},
by $\timpl_r(i) + \delta(i,j)$) in $r \tcap E$, and thus also in $r$
(as $\timpl_r(i) + \delta(i,j) \le t$).
To complete our proof, we note that
\begin{align*}
e_2 \in&\:\PND{r \tcap E}(j,\timpl_r(i) + \delta(i,j)) \subseteq & \text{by definition of $\mathit{PND}$} \\
\subseteq&\:\ND(r \tcap E, \timpl_r(i) + \delta(i,j)) \subseteq & \text{as $\timpl_r(i) + \delta(i,j) \le t$} \\
\subseteq&\:\ND(r \tcap E, t) = E = \PND{r}(i,\timpl_r(i)).
\end{align*}
and therefore $e_2 \syncausal{r} (i,\timpl_r(i))$.
As in the induction base,
there exists $e_1 \in \ND(r)$ s.t.\ $e_2 \syncausal{r} e_1
\boundguarantee (i,\timpl_r(i))$. Thus, $(e_m)_{m=1}^n$
is a path-traversing centipede as required.
\end{proof}

It should be noted that by \corref{canonical-form-and-tcr}, a
$(\apath,\hat{\delta})$-traversing
$\tilde{e}$-centipede is also implied by \thmref{path-traversing-centipede}
for every $\apath \in \dhatpaths$ under the conditions of that theorem.
Furthermore, this result is at least as strong as the verbatim result of that
theorem for $\delta$,
by minimality of the canonical form,
by \remarkref{path-traversing-centipede-properties} and as
$\dpaths \subseteq \dhatpaths$.
It may be readily verified that if the
distance between every pair of agents in $G_{\delta}$ is attained,
then these results are
in fact equivalent, as any
path-traversing centipede guaranteed by \thmref{path-traversing-centipede}
for $\hat{\delta}$ is a (possibly trivial) subcentipede (i.e.\ subtuple)
of a path-traversing centipede directly guaranteed by it
for $\delta$ (for a possibly
different path). Henceforth, whenever minimizing the times of the end nodes
of the guaranteed path-traversing centipede is of the essence (as is the
case in e.g.\ Corollaries \ref{cor:path-traversing-centipede-implies-broom}
and \ref{cor:path-traversing-centipede-implies-centibroom}
and \claimref{finite-influence-implies}), we indeed apply
\thmref{path-traversing-centipede} using $\hat{\delta}$.

We now apply \thmref{path-traversing-centipede} to deduce an optimal
response logic for the timely-coordinated response problem
in shared-clock models.

\begin{cor}\label{cor:path-traversing-centipede-optimality}
In a shared-clock model, let $\TCRspec$ be a TCR-spec.
An optimal response logic for solving $\TCR{\tilde{e},I,\delta}$ is, for every
$i \in I$: ``respond at the earliest time by which
a $(\apath,\delta)$-traversing $\tilde{e}$-centipede exists for
every path $\apath \in \dpaths$ starting at $p_1=i$''.
\end{cor}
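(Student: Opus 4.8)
The plan is to verify that the stated rule meets both requirements of \defnref{optimal-response-logic}. Throughout I would assume $\TCR{\tilde{e},I,\delta}$ is solvable (as presupposed by \defnref{optimal-response-logic}), fix a full-information protocol $P \in \TCR{\tilde{e},I,\delta}$, and let $P'$ be obtained from $P$ by replacing each agent's response logic with the stated rule. Since $P$ and $P'$ are full-information and differ only in their response logic, the messages sent, and hence the relations $\syncausal{r}$ and $\boundguarantee$ together with the set of ND events, are identical in any pair of runs $r \in \RP$ and $r' \in \RPTAG$ matched under the natural ND-preserving isomorphism. Consequently a $(\apath,\delta)$-traversing $\tilde{e}$-centipede by $t$ exists in $r$ iff one exists in $r'$, so the response time assigned by the rule is a function of this shared structure alone. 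Note also that solvability forces $\delta>-\infty$ by \claimref{solvable-iff-implementable} and \remarkref{implementation-properties}, so every finite edge weight of $G_{\delta}$ is legitimate and prepending a single vertex to a path yields a path.

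I would first dispatch the optimality requirement (item~2 of \defnref{optimal-response-logic}), which simultaneously yields the finiteness needed for the eventual-response structure. For a matched pair of triggered runs $r,r'$, \thmref{path-traversing-centipede} applied to $P$ guarantees that for every $\apath \in \dpaths$ with $p_1=i$ there is a $(\apath,\delta)$-traversing $\tilde{e}$-centipede by $\timpl_r(i)<\infty$; the same centipedes appear in $r'$. Hence the set of times at which all centipedes required by the rule for $i$ are simultaneously present is nonempty and, being upward closed in $\timeset=\mathbb{N}\cup\{0\}$ (by the second bullet of \remarkref{path-traversing-centipede-properties}), attains its minimum; thus $i$ responds exactly once in $r'$, at a time $\timpl_{r'}(i)\le\timpl_r(i)$. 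For non-triggered runs the absence of $\tilde{e}$ precludes any centipede, so no agent of $I$ responds under either protocol. This establishes $\timpl_{r'}\le\timpl_r$ in all matched runs and shows $P' \in \ER{\tilde{e},I}$ up to verifying the timing constraints.

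The crux is item~1, namely $\timpl_{r'}\in T(\delta)$ for every triggered $r' \in \RPTAG$; this is the step I expect to be the main obstacle. Writing $\tau_k \eqdef \timpl_{r'}(k)$, fix $(i,j)\in\distinctpairs{I}$; the constraint $\tau_j\le\tau_i+\delta(i,j)$ is vacuous when $\delta(i,j)=\infty$, so assume $\delta(i,j)<\infty$, i.e.\ $(i,j)\in E_{\delta}$. For an arbitrary path $\apath\in\dpaths$ with $p_1=j$, consider the path $(i,p_1,\ldots,p_n)\in\dpaths$ obtained by prepending $i$ via the edge $(i,j)$. By definition of $\tau_i$, this prepended path admits a $\tilde{e}$-centipede by $\tau_i$ in $r'$; by the suffix property (\remarkref{path-traversing-centipede-properties}, third bullet, with $k=2$) its $2$-suffix is a $(\apath,\delta)$-traversing $\tilde{e}$-centipede by $\tau_i+\dlength((i,p_1))=\tau_i+\delta(i,j)$. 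Since $\apath$ was an arbitrary path from $j$, every path required by the rule for $j$ has a centipede by $\tau_i+\delta(i,j)$, whence $\tau_j\le\tau_i+\delta(i,j)$.

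Two minor points merit care in the write-up but pose no real difficulty: the quantity $\tau_i+\delta(i,j)$ is automatically a legitimate non-negative time, as it is the time of an end node reached by a genuine bound-guarantee edge from an ND event; and the rule's ``for every path'' clause is simultaneously satisfiable at a finite time precisely because \thmref{path-traversing-centipede} supplies the uniform bound $\timpl_r(i)$ across the (possibly infinitely many) paths from $i$. Combining the timing constraints just established with the eventual-response structure gives $P'\in\TCR{\tilde{e},I,\delta}$, completing item~1, and together with the optimality of the previous paragraph shows that the rule is an optimal response logic.
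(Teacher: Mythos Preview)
Your proof is correct and follows essentially the same approach as the paper: establish optimality via \thmref{path-traversing-centipede}, then verify membership in $\TCR{\tilde{e},I,\delta}$ by prepending the edge $(i,j)$ to an arbitrary path from $j$ and applying the suffix property of \remarkref{path-traversing-centipede-properties}. You add a few explicit justifications (attainment of the minimum in discrete time, non-negativity of $\tau_i+\delta(i,j)$, the observation that $\delta>-\infty$) that the paper either leaves implicit or handles via a parenthetical remark about the shared clock and full information enabling $i$ to detect the centipedes; these are welcome elaborations but do not change the argument.
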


\begin{proof}
Assume that $\TCR{\tilde{e},I,\delta}$ is solvable and let
$P \in \TCR{\tilde{e},I,\delta}$.
W.l.o.g., $P$ is a full-information protocol.
Let $P'$ be the (full-information)
protocol obtained by endowing $P$ with the above-defined
response logic.
We first prove the optimality of $P'$ in each triggered run, and then
prove that it indeed solves $\TCR{\tilde{e},I,\delta}$.

Let $r \in \TRP$ and $r' \in \TR{P'}$ be two runs matched under the natural
isomorphism between $\RP$ and $\RPTAG$ and let $i \in I$.
By \thmref{path-traversing-centipede}, all path-traversing
centipedes required for $i$ to respond according to $P'$ exist in $r$
(and hence in $r'$) by $\timpl_r(i)$ (at the latest), and therefore
$\timpl_{r'}(i) \le \timpl_r(i)$.
(The fact that $P'$ is a full-information protocol, together with the existence
of a shared clock,
guarantees that if such path-traversing centipedes exist by some
$t \in \timeset$, then $i$ can deduce this at $t$.
For the continuous-time model presented in \appref{continuous}, the fact
that $i$ is enabled at every supremum of times at which it observes events
allows $i$ to respond at the required time.)

We now prove that $P' \in \TCR{\tilde{e},I,\delta}$:
Let $r' \in \RPTAG$. Obviously, no path-traversing $\tilde{e}$-centipedes
exist in $r'$ if $\tilde{e}$ does not occur.
Therefore, if $r' \notin \TR{P'}$ then $\timpl_r \equiv \infty$.
We are left with the case in which $r' \in \TR{P'}$.
Denote by $r \in \TRP$ the run of $P$ matching $r'$ under the natural
isomorphism between $\RP$ and $\RPTAG$.
By the first part of this proof, $\timpl_{r'} \le \timpl_r < \infty$.
Let $(i,j) \in \distinctpairs{I}$ s.t.\ $\delta(i,j) < \infty$.
For every $\apath \in \dpaths$ s.t.\ $p_1=j$,
denote by $\apath' \in \dpaths$ the path commencing at $i$ and whose
$2$-suffix is $\apath$.
By definition of $P'$, a $(\apath',\delta)$-traversing
$\tilde{e}$-centipede exists in $r'$ by
$\timpl_{r'}(i)$ (at the latest), and thus its $2$-suffix, which
is a $(\apath,\delta)$-traversing $\tilde{e}$-centipede,
exists in $r'$ by $\timpl_{r'}(i) + \delta(i,j)$ (at the latest).
Hence, all path-traversing centipedes
required for $j$'s response according to $P'$ exist in $r'$ by
$\timpl_{r'}(i) + \delta(i,j)$, and thus
$\timpl_{r'}(j) \le \timpl_{r'}(i) + \delta(i,j)$, as required.\footnote{
The attentive reader may notice a conceptual similarity between this
argument and the second part of the proof of \lemmaref{implementable-iff}.}
\end{proof}

We conclude this chapter with a practical note that motivates some of our
discussion in \chapterref{practical}. In that chapter, we derive somewhat more
practical results from the above discussion, for some naturally-occurring
models that we define.

The results of Ben-Zvi and Moses that are surveyed in the beginning of this
chapter imply that for each of the coordinated response problems
they have studied,
it is enough for an agent $i$ to deduce the existence of a single,
simple, syncausal
structure in order to respond according to the optimal response logic for this
problem.
In contrast, from the definition of the optimal response logic for the
timely-coordinated response problem from
\corref{path-traversing-centipede-optimality}, it may seem
that in the case of a general constraining function $\delta$ (i.e.\ $\delta$
that does
not reduce to e.g.\ one of the special cases studied by Ben-Zvi and Moses),
for an agent $i$ to respond according to this logic, $i$ is always required
to check for infinitely many, arbitrarily long, path-traversing centipedes
(using infinitely many
facts stored in the memory/state of $i$).
While in the general case this is true, the following remark shows that
for any constraining function $\delta$, in some
cases finitely many syncausal
structures may imply the existence of all
(infinitely many) path-traversing centipedes required for $i$'s response.
Conceptually, this means that finitely
many checks (of finitely many facts) may provide enough
information for $i$ to respond at a specific time according to this logic.
In \chapterref{practical}, we show that under certain practical
assumptions, finitely many not-much-more-complicated checks always suffice.

\begin{remark}\label{remark:broom-implies-path-traversing-centipede}
Let $\TCRspec$ be a TCR-spec s.t.\ $\delta$ is implementable, and
let $r \in \runs$.
If $e \in \ND(r)$ is an $\tilde{e}$-broom for $I$ in $r$
by $\tuple{t} \in \timeset^I$
s.t.\ $\sup(\tuple{t})<\infty$,
then for every $\apathfull \in \dpaths$,
$(e)^n$ is a $(\apath,\delta)$-traversing
$\tilde{e}$-centipede in $r$ by
\[
\max_{k \in [n]}\{t_{p_k}-\dlength((p_m)_{m=1}^k)\} \le
\max_{k \in [n]}\{t_{p_k}-\hat{\delta}(p_1,p_k)\} \le
\sup(\tuple{t}) - \inf(\hat{\delta}|_{\{p_1\} \times I}) < \infty.
\]
In particular, for every agent $i \in I$,
$(e)^n$ is a $(\apath,\delta)$-traversing $\tilde{e}$-centipede
by \linebreak $\sup(\tuple{t}) - \inf(\hat{\delta}|_{\{i\} \times I})$ (which
is finite by implementability of $\delta$ and by \lemmaref{implementable-iff}),
for every $n \in \mathbb{N}$ and every $\apathfull \in \dpaths$ starting
at $p_1=i$.
Thus, by this time $i$ will
have received information guaranteeing that $e$ had occurred, which will have
given $i$ enough information in order to respond by that time according to the
optimal response logic presented in
\corref{path-traversing-centipede-optimality}.\footnote{
We note that it is possible to construct an alternative argument as to
why there exists, under certain conditions, a solving finite-memory protocol
according to which each $i$ responds by that time. Such an argument may
be constructed by combining
a variation of the second part of \thmref{broom}, with
the second part of \lemmaref{implementable-iff} and with
the second part of \claimref{solvable-iff-implementable}. Conversely,
\remarkref{broom-implies-path-traversing-centipede} may be used
to construct an alternative proof for parts of \lemmaref{implementable-iff}.
}
\end{remark}

\chapter{The Fixed-Point Approach}\label{chapter:fixed-point-approach}

We now set aside, for the moment, the results of \chapterref{syncausality-approach}
and embark on a parallel, independent analysis of the timely-coordinated
response problem
using fixed-point analysis. While the basis of this analysis follows
the lines of \cite[Section~11.6]{book}, we formalize it here using events,
along the approach of Aumann\cite{aumann}, instead of
using the temporal-epistemic logic tools used in \cite{book}.
(The treatment in either form is analogous, although it is somewhat more
concise for our case with the notation used below, which facilitates
the study of fixed points.)

\section{Background}
In this section, we survey previous definitions and results from \cite{book},
upon which our analysis below is founded, reformulating them using events,
and adapting them to our notation.

\subsection{Events, Knowledge and Common Knowledge}

In order to begin our discussion, we define the space in which we work.

\begin{defn}[Space]
Let $\gamma$ be a context and let $R \subseteq \runs$.
We define $\points \eqdef R \times \timeset$ and $\pointsets \eqdef 2^{\points}$.
\end{defn}

As in probability theory, we represent events using the set of points
(i.e.\ run-time pairs)
in which they hold. (In contrast to probability theory, though,
we do not need to define a measure
on the set of events, so we choose to allow any subset of $\points$
to constitute an event.)
For example, we may define the event ``$i$ is responding''
for some $i \in \agents$, which is formally associated with all points
$(r,t) \in \points$ s.t.\ $i$ responds at $t$ in $r$.

We now incorporate the concept of knowledge into our discussion.
Given an event $\psi \in \pointsets$, we wish to define, for some agent $i$,
the event ``$i$ knows that $\psi$ holds'' (e.g.\ $i$ knows that $j$ is
responding). Choosing how to formalize a concept as abstract and as subjective
as knowledge is not a simple issue. We present below what has become a standard
definition for knowledge, and avoid discussing its relation to the abstract,
philosophical, concept of knowledge.
Intuitively, by this definition, at $(r,t)$ $i$ knows
$\psi$ iff $\psi$ holds at all possible points $(s,t')$ that $i$
cannot distinguish from $(r,t)$.

\begin{defn}[Knowledge]\label{defn:knowledge}
Let $\gamma$ be a context, let $R \subseteq \runs$ and let $i \in \agents$.
\begin{enumerate}
\item
We partition $\points$ into equivalence classes according to the state of
$i$, s.t.\ $p,q \in \points$ are in the same equivalence class iff
the state of $i$ is the same in $p$ and in $q$.
In a shared-clock model, we additionally demand that the time be the same at
$p$ and at $q$.
For $p \in \points$, we denote the equivalence class of $p$ by
$\mathcal{S}_i(p)$.
\item
Define
\vspace*{-1em} % Avoid excessive spacing
\functiondefn{K_i}{\pointsets}{\pointsets}{\psi}{\{p \in \points \mid \mathcal{S}_i(p) \subseteq \psi\}.}
\end{enumerate}
While both definitions, of $\mathcal{S}_i$ and of $K_i$, depend on
$R$, we omit $R$ from these notations, for readability, as the set of runs
will be clear from the discussion. We follow this convention when presenting
some other definitions in this, and in the following, chapter as well.
\end{defn}

We now present a few immediate (and well-known)
properties of the knowledge operator.
The first one, sometimes referred to as the ``Truth Axiom for Knowledge'',
intuitively means that ``whenever anyone knows something, then
it is true''. The second property, sometimes referred to as the
``Positive Introspection Axiom'', which
intuitively means ``whenever $i$ knows
something, then $i$ knows that it knows it'',
has been the subject of quite a few philosophically discussions.
As we have done when presenting the definition for knowledge, and as we
will continue to do below, we present it, and avoid
discussing any philosophical consequences thereof.

\begin{remark}\label{remark:knowledge-properties}
Let $\gamma$ be a context, let $R \subseteq \runs$ and let $i \in \agents$.
By \defnref{knowledge}, the knowledge operator $K_i$ satisfies:
\begin{itemize}
\item
Truth Axiom for Knowledge:
$K_i(\psi) \subseteq \psi$, for every event $\psi \in \pointsets$.
\item
Positive Introspection Axiom: $K_i(K_i(\psi)) = K_i(\psi)$, for every event $\psi \in \pointsets$.
\item
Monotonicity:
$\ \psi\subseteq\phi \ \Rightarrow\ 
K_i(\psi) \subseteq K_i(\phi),\ $ for every two events $\psi,\phi \in \pointsets$.
\item $K_i$ commutes with intersection:
$K_i(\cap\Psi) = \bigcap\{K_i(\psi) \mid \psi \in \Psi\}$, for every set of events $\Psi \subseteq \pointsets$.
\end{itemize}
\end{remark}

We now build upon \defnref{knowledge} and define the notion of
``everybody knows''.

\begin{defn}
Let $\gamma$ be a context, let $R \subseteq \runs$ and let $I \subseteq \agents$
be a set of agents.
Define
\vspace*{-1em} % Avoid excessive spacing
\functiondefn{E_I}{\pointsets}{\pointsets}{\psi}{\bigcap_{i \in I} K_i(\psi).}
\end{defn}

A truth axiom, analogous to the one presented in \remarkref{knowledge-properties}, readily holds for $E_I$ as well. In addition, $E_I$ is monotone and
commutes with intersection. However, it is not idempotent.

We are now ready to define common knowledge. One classic, constructive
definition of common knowledge\cite{common-knowledge-nested}
is the following, defining
that an event is common knowledge to a set of agents when all know
it, all know that all know it, etc.

\begin{defn}[Common Knowledge]\label{defn:common-knowledge}
Let $\gamma$ be a context, let $R \subseteq \runs$ and let $I \subseteq \agents$.
Define
\vspace*{-1em} % Avoid excessive spacing
\functiondefn{C_I}{\pointsets}{\pointsets}{\psi}{\bigcap_{n=1}^{\infty} {E_I}^n(\psi),}
where ${E_I}^0(\psi)=\psi$ and ${E_I}^{n}(\psi)=E_I({E_I}^{n-1}(\psi))$ for
every $n \in \mathbb{N}$.
\end{defn}

It may be readily verified that the common knowledge operator satisfies the
obvious analogues of all properties of the knowledge operator that are
presented in \remarkref{knowledge-properties} (including idempotence\footnote{
In
fact, $C_I$ is the (coordinate-wise) greatest idempotent operator s.t.\
$C_I \subseteq E_I$.}).

A classic result\cite{halpern-moses-1990} regarding common knowledge
is that it relates tightly to simultaneous response, in the sense that e.g.\
in order to coordinate a
simultaneous response among a set of agents, they must all have common knowledge
of the response when it occurs. Conversely, whenever
common knowledge of a fact arises among a set of agents, it does so
simultaneously for all agents. \defnref{perfect-coordination} and
\thmref{perfect-coordination} below formalize this intuition, but before
we present them, we turn to a few more definitions.

As noted above, the truth axiom for knowledge, presented in
\remarkref{knowledge-properties}, implies that whenever some fact is known
to someone, the fact is true as well.
Certain events, such as, for some agent $i$, ``$i$ is responding (right now)'',
have the converse property as
well, i.e.\ they are known to $i$ whenever they hold.
The following definition characterises such events.

\begin{defn}[Local Event]
Let $\gamma$ be a context, let $R \subseteq \runs$ and let $i \in \agents$.
An event $\psi \in \pointsets$
is said to be ``local'' to $i$ if $K_i(\psi) = \psi$.
\end{defn}

\begin{remark}\label{remark:knowledge-is-local}
By the positive introspection axiom presented in \remarkref{knowledge-properties},
$K_i(\psi)$ is local to $i$, for every $i \in \agents$ and for every
$\psi \in \pointsets$.
\end{remark}

An important property of events that are local to $i$,
for some agent $i \in \agents$, is that $i$
may act upon them, i.e.\ the response logic of $i$ in a protocol may be defined
by specifying that $i$ should respond whenever some given local event for $i$
holds. ($i$ may do that, as locality of this event guarantees that whether it
holds or not at some time $t \in \timeset$ is determined 
by the state of $i$ at $t$.) As we are interested in coordination,
though, we are usually interested in specifying a joint response logic for a
set of agents $I \subseteq \agents$, i.e.\ a
response logic for each $i \in I$, in which the response times of the various
agents are coordinated in some way. One way to specify such a joint response
logic, therefore, is to specify, for each $i \in I$, a local event for $i$.
Such a collection of specifications is called an ``ensemble''.

\begin{defn}[Event Ensemble]
Let $\gamma$ be a context, let $R \subseteq \runs$ and let $I \subseteq \agents$. An $I$-tuple of events $\tuple{\ensemble}=(\ensemble_i)_{i \in I} \in {\pointsets}^I$ is called an
ensemble if $\ensemble_i$ is local to $i$ for each $i \in I$.
\end{defn}

It should be noted, though, that ensembles are useful beyond specifying response
logics, as they may be used to study the coordination of passive events as well,
i.e.\ coordination of times at which agents become aware of some fact, or
at which they observe an event (e.g.\ receive some message).

We are now ready to survey the results of \cite{halpern-moses-1990} relating
common knowledge and simultaneity, as formulated for ensembles in
\cite[Section~11.6]{book}. While phrasing the following theorem, and
henceforth, we use the following shorthand notation: $\cup\tuple{\xi}\eqdef
\cup_{i \in I} \xi_i$, for every $\tuple{\xi} =\xi_{i \in I} \in {\pointsets}^I$.

\begin{defn}[Perfect Coordination]\label{defn:perfect-coordination}
Let $\gamma$ be a context, let $R \subseteq \runs$ and
let $I \subseteq \agents$.
An ensemble $\tuple{\ensemble} \in {\pointsets}^I$ is said to be
``perfectly coordinated'' if $\ensemble_i=\ensemble_j$ for every $i,j \in I$.
\end{defn}

\begin{thm}\label{thm:perfect-coordination}
Let $\gamma$ be a context, let $R \subseteq \runs$ and let $I \subseteq \agents$.
\begin{enumerate}
\item
For every event $\psi$, the ensemble $(K_i(C_I(\psi)))_{i \in I}$
is perfectly coordinated.
\item
If $\tuple{\ensemble} \in {\pointsets}^I$ is a perfectly coordinated ensemble, then
$\ensemble_i \subseteq K_i(C_I(\cup\tuple{\ensemble}))$ for every~$i \in I$.
\item
If $\tuple{\ensemble} \in {\pointsets}^I$ is a perfectly coordinated ensemble, then
$\cup\tuple{\ensemble} \subseteq C_I(\cup\tuple{\ensemble})$.
\end{enumerate}
\end{thm}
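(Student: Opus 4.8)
The plan is to reduce all three parts to a single structural fact, which I would prove first: for every event $\psi \in \pointsets$, the event $C_I(\psi)$ is a fixed point of $E_I$, i.e.\ $E_I(C_I(\psi)) = C_I(\psi)$. Starting from $C_I(\psi) = \bigcap_{n=1}^{\infty} E_I^n(\psi)$ and using that $E_I$ commutes with intersection, I compute $E_I(C_I(\psi)) = \bigcap_{n=1}^{\infty} E_I^{n+1}(\psi) = \bigcap_{n=2}^{\infty} E_I^n(\psi)$. The truth axiom for $E_I$ applied to $E_I(\psi)$ gives $E_I^2(\psi) \subseteq E_I(\psi)$, so this last intersection is already contained in $E_I^1(\psi)$; hence it equals $\bigcap_{n=1}^{\infty} E_I^n(\psi) = C_I(\psi)$, establishing the fixed-point property. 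Unwinding $E_I$, this says $\bigcap_{i \in I} K_i(C_I(\psi)) = C_I(\psi)$, so in particular $K_i(C_I(\psi)) \supseteq C_I(\psi)$ for each $i \in I$; combined with the truth axiom $K_i(C_I(\psi)) \subseteq C_I(\psi)$ from \remarkref{knowledge-properties}, this forces $K_i(C_I(\psi)) = C_I(\psi)$ for every $i \in I$.

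Part 1 is then immediate: by \remarkref{knowledge-is-local} each coordinate $K_i(C_I(\psi))$ is local to $i$, so $(K_i(C_I(\psi)))_{i \in I}$ is an ensemble, and by the computation above every coordinate equals the single event $C_I(\psi)$; hence the ensemble is constant, and in particular perfectly coordinated.

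For Parts 2 and 3 I would exploit that a perfectly coordinated ensemble $\tuple{\ensemble}$ has all coordinates equal to one event $\phi$, and that $\phi = \ensemble_j$ is local to $j$ for \emph{every} $j \in I$, not merely for a single agent. Thus $K_j(\phi) = \phi$ for all $j \in I$, so $E_I(\phi) = \bigcap_{j \in I} K_j(\phi) = \phi$, and by induction $E_I^n(\phi) = \phi$, whence $C_I(\phi) = \bigcap_{n} E_I^n(\phi) = \phi$. Since $\cup\tuple{\ensemble} = \phi$, I obtain $K_i(C_I(\cup\tuple{\ensemble})) = K_i(\phi) = \phi = \ensemble_i$ for every $i \in I$, which proves Part 2 with equality (a fortiori the asserted inclusion). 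Part 3 then follows by combining Part 2 with the truth axiom and taking the union over $i \in I$: $\cup\tuple{\ensemble} = \bigcup_{i \in I} \ensemble_i \subseteq \bigcup_{i \in I} K_i(C_I(\cup\tuple{\ensemble})) \subseteq C_I(\cup\tuple{\ensemble})$.

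The only genuinely nontrivial step is the fixed-point computation at the outset; it is the one place where the infinite-intersection definition of $C_I$ must be manipulated, and where the commutation of $E_I$ with intersection (an infinitary property, valid because each $K_i$ is defined by the simple ``all indistinguishable points lie in $\psi$'' quantifier) is essential. Everything downstream is a short deduction from this fact together with the truth axiom and locality, so I expect no further obstacles.
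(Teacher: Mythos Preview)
Your argument is correct. The fixed-point identity $E_I(C_I(\psi))=C_I(\psi)$ that you derive from the explicit formula $C_I(\psi)=\bigcap_{n\ge 1}E_I^{\,n}(\psi)$ and the commutation of $E_I$ with (arbitrary) intersection is exactly what is needed; from it, $K_i(C_I(\psi))=C_I(\psi)$ follows by the truth axiom, giving Part~1. For Parts~2--3 your observation that a perfectly coordinated ensemble has a single common value $\phi$ that is local to \emph{every} $i\in I$, hence satisfies $E_I(\phi)=\phi$ and therefore $C_I(\phi)=\phi$, is clean and in fact yields equality rather than mere inclusion.

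As for comparison with the paper: \thmref{perfect-coordination} is presented there as a surveyed background result from~\cite{book}, without a self-contained proof; the paper only remarks that it is the $\varepsilon=0$ case of \thmref{epsilon-coordination}. The closest thing to a proof in the paper is that of the more general \thmref{delta-coordination}, which proceeds differently: it relies on the greatest-fixed-point characterisation (\lemmaref{delta-common-knowledge}, via Tarski) rather than on the explicit nested-intersection formula. Specialised to the constant ensemble case, that argument would show $\tuple{\ensemble}\le f_{\cup\tuple{\ensemble}}^{\delta}(\tuple{\ensemble})$ and invoke Tarski to bound by the greatest fixed point. Your route is more elementary for this particular statement because the constructive definition of $C_I$ is available, so no appeal to Tarski is needed; the paper's route, in contrast, is what scales to $\updelta$-common knowledge, where no such closed formula is at hand.
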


In \chapterref{equivalence}, we show that the analysis of
\chapterref{syncausality-approach}
has led us, in a sense, to a definition that is
similar to \defnref{common-knowledge}.

\subsection{Fixed-Point Analysis}

Another classic definition\cite{common-knowledge-fixed-point}
for common knowledge, which is known to be equivalent, is the following,
defining it as the greatest fixed point of a function on
$\pointsets$.

\begin{thm}[Common Knowledge as a Greatest Fixed Point]\label{thm:common-knowledge-fixed-point}
Let $\gamma$ be a context, let $R \subseteq \runs$ and let $I \subseteq \agents$.
$C_I(\psi)$ is the greatest fixed point of
the function $x \mapsto E_I(\psi \cap x)$, for every event
$\psi \in \pointsets$.
\end{thm}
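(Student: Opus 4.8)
The plan is to verify the two defining properties of a greatest fixed point separately: first that $C_I(\psi)$ is a fixed point of the map $f(x)=E_I(\psi\cap x)$, and then that it contains every other fixed point. Throughout, the workhorses are the three properties the text attributes to $E_I$: the truth axiom $E_I(\phi)\subseteq\phi$, monotonicity, and commutation with (arbitrary) intersections. A preliminary observation I would record is that the iterates form a decreasing chain: the truth axiom gives $E_I^1(\psi)\subseteq\psi=E_I^0(\psi)$, and monotonicity propagates this to $E_I^{n+1}(\psi)\subseteq E_I^n(\psi)$ for all $n$ by induction. Consequently $C_I(\psi)=\bigcap_{n\ge1}E_I^n(\psi)$ is unchanged if the leading term $\psi$ is also thrown in, and in particular $C_I(\psi)\subseteq\psi$.

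For the fixed-point property, this last observation gives $\psi\cap C_I(\psi)=C_I(\psi)$, so $f(C_I(\psi))=E_I(C_I(\psi))$. I would then push $E_I$ through the infinite intersection using commutation: $E_I\bigl(\bigcap_{n\ge1}E_I^n(\psi)\bigr)=\bigcap_{n\ge1}E_I^{n+1}(\psi)=\bigcap_{m\ge2}E_I^m(\psi)$. Because the chain is decreasing, dropping the single leading term $E_I^1(\psi)$ does not change the intersection, so this equals $\bigcap_{m\ge1}E_I^m(\psi)=C_I(\psi)$. Hence $f(C_I(\psi))=C_I(\psi)$.

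For maximality, let $x$ be any fixed point, so $x=E_I(\psi\cap x)$. Applying the truth axiom to the outermost $E_I$ gives $x\subseteq\psi\cap x\subseteq\psi$, whence $\psi\cap x=x$ and therefore $x=E_I(x)$. I would then prove $x\subseteq E_I^n(\psi)$ for all $n\ge1$ by induction: for $n=1$, monotonicity applied to $\psi\cap x\subseteq\psi$ gives $x=E_I(\psi\cap x)\subseteq E_I(\psi)$; for the inductive step, $x=E_I(x)\subseteq E_I(E_I^n(\psi))=E_I^{n+1}(\psi)$ using the hypothesis and monotonicity. Intersecting over $n$ yields $x\subseteq C_I(\psi)$.

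The only delicate point is the interchange of $E_I$ with the infinite intersection together with the subsequent re-indexing; this is exactly where the commutation-with-intersection property (stated for $K_i$ over arbitrary families $\Psi$ and inherited by $E_I$) and the decreasing-chain observation are both essential, so I would invoke them explicitly rather than gloss over the shift from $m\ge2$ back to $m\ge1$. Everything else is routine bookkeeping with the truth axiom and monotonicity.
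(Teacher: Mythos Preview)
Your proof is correct. Note, however, that the paper does not actually prove this theorem: it is stated in the background section as a classical result surveyed from the literature (with a citation), so there is no ``paper's own proof'' to compare against. Your argument is the standard one and uses exactly the properties the paper records for $E_I$ (truth axiom, monotonicity, commutation with arbitrary intersections); the one place requiring care---the re-indexing $\bigcap_{m\ge 2}E_I^m(\psi)=\bigcap_{m\ge 1}E_I^m(\psi)$ via the decreasing-chain observation---you handle explicitly and correctly.
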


Based on this definition, Moses and Halpern\cite{halpern-moses-1990} defined
two variants of Common Knowledge, matching two weaker forms of coordination.
We now present their results, as formulated for ensembles in
\cite[Section~11.6]{book}.

\begin{defn}[Eventual Coordination]\label{defn:eventual-coordination}
Let $\gamma$ be a context, let $R \subseteq \runs$ and let
$I \subseteq \agents$.
An ensemble $\tuple{\ensemble} \in {\pointsets}^I$ is said to be
``eventually coordinated'' if
for every $i,j \in I$ and for every $(r,t) \in \ensemble_i$, there exists $t' \in \timeset$
s.t.\ $(r,t') \in \ensemble_j$.
\end{defn}

\begin{defn}
Let $\gamma$ be a context and let $R \subseteq \runs$.
Define
\functiondefn{\sometime}{\pointsets}{\pointsets}{\psi}{\{(r,t) \in \points \mid
\exists\, t' \in \timeset : (r,t') \in \psi\}.\footnotemark}
($\sometime(\psi)$ is the event ``$\psi$ eventually holds at some time during 
the current run, be it past, present or future.'')
\footnotetext{
We use the symbol $\sometime$ instead of the standard temporal logic
notation $\mnmeddiamond$, in order to emphasize that $t'$ may be smaller than $t$.
}
\end{defn}

\begin{thm}\label{thm:eventual-coordination}
Let $\gamma$ be a context, let $R \subseteq \runs$ and
let $I \subseteq \agents$.
\begin{enumerate}
\item
For every $\psi \in \pointsets$, the function
$x \mapsto \cap_{i \in I}\sometime(K_i(\psi \cap x))$ has a greatest
fixed point. Denote it by $C_I^{\sometime}(\psi)$ (``eventual common knowledge''
of $\psi$ by $I$).
\item
For every event $\psi$, the ensemble $(K_i(C_I^{\sometime}(\psi)))_{i \in I}$
is eventually coordinated.
\item
If $\tuple{\ensemble} \in {\pointsets}^I$ is an eventually coordinated ensemble, then
$\ensemble_i \subseteq K_i(C_I^{\sometime}(\cup\tuple{\ensemble}))$ for every $i \in I$.
\item
If $\tuple{\ensemble} \in {\pointsets}^I$ is an eventually coordinated ensemble, then
$\cup\tuple{\ensemble} \subseteq C_I^{\sometime}(\cup\tuple{\ensemble})$.
\end{enumerate}
\end{thm}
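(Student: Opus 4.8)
The plan is to obtain all four parts from a single application of the Knaster--Tarski fixed-point theorem on the complete lattice $\pointsets = 2^{\points}$ (ordered by inclusion), combined with the knowledge axioms of \remarkref{knowledge-properties}. Fix $\psi \in \pointsets$ and write $f(x) \eqdef \cap_{i \in I}\sometime(K_i(\psi \cap x))$ for the operator under study.

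For the first part, I would check that $f$ is monotone: if $x \subseteq y$ then $\psi \cap x \subseteq \psi \cap y$, whence $K_i(\psi \cap x) \subseteq K_i(\psi \cap y)$ by monotonicity of $K_i$; applying $\sometime$ (which is monotone straight from its definition) and intersecting over $i$ preserves the inclusion, so $f(x) \subseteq f(y)$. Since $\pointsets$ is a complete lattice, Knaster--Tarski yields a greatest fixed point $C_I^{\sometime}(\psi)$, characterised as the union $\cup\{x \mid x \subseteq f(x)\}$ of all post-fixed points. I would flag this post-fixed-point characterisation as the tool reused below.

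For the second part, set $C \eqdef C_I^{\sometime}(\psi)$, so $C = \cap_{i \in I}\sometime(K_i(\psi \cap C))$. Each $K_i(C)$ is local to $i$ by \remarkref{knowledge-is-local}, so $(K_i(C))_{i \in I}$ is genuinely an ensemble. To verify eventual coordination, take $(r,t) \in K_i(C)$; the truth axiom gives $(r,t) \in C$, so for each $j \in I$ we have $(r,t) \in \sometime(K_j(\psi \cap C))$, i.e.\ there is a $t'$ with $(r,t') \in K_j(\psi \cap C) \subseteq K_j(C)$, the inclusion following from monotonicity of $K_j$. This is precisely the coordination demanded by \defnref{eventual-coordination}.

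The heart of the argument, and the step I expect to be the main obstacle, is the remaining two parts, which I would establish together by exhibiting $\psi \eqdef \cup\tuple{\ensemble}$ as a post-fixed point of its own operator $f(x) = \cap_{i \in I}\sometime(K_i(\psi \cap x))$. Given $(r,t) \in \psi$, it lies in some $\ensemble_k$; for each $j \in I$, eventual coordination supplies a $t'$ with $(r,t') \in \ensemble_j$, and locality gives $\ensemble_j = K_j(\ensemble_j) \subseteq K_j(\psi)$, so $(r,t') \in K_j(\psi)$ and hence $(r,t) \in \sometime(K_j(\psi))$. Intersecting over $j$ yields $(r,t) \in f(\psi)$, so $\psi \subseteq f(\psi)$; by the characterisation from the first part this gives $\psi \subseteq C_I^{\sometime}(\psi)$, which is the fourth part. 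The third part then follows for each $i$ by locality and monotonicity, since $\ensemble_i = K_i(\ensemble_i) \subseteq K_i(C_I^{\sometime}(\psi))$ using $\ensemble_i \subseteq \psi \subseteq C_I^{\sometime}(\psi)$. The only delicate point is aligning the quantifier structure of eventual coordination with the $\sometime$ inside $f$; once $\cup\tuple{\ensemble}$ is recognised as a post-fixed point, the rest is routine monotonicity bookkeeping.
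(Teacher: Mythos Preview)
Your proposal is correct. The paper presents \thmref{eventual-coordination} as a surveyed background result from \cite{book} without supplying its own proof, so the natural comparison is to the paper's proof of the analogous \thmref{delta-coordination}; your argument follows precisely that template (monotonicity plus Knaster--Tarski for part~1, the truth axiom unwinding the fixed-point equation for part~2, and exhibiting a post-fixed point for parts~3 and~4), specialised to the scalar setting where $\cup\tuple{\ensemble}$ itself can serve as the post-fixed point rather than the tuple $\tuple{\ensemble}$.
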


Another variant of common knowledge, also defined and studied by Halpern and
Moses, relates to an approximation of perfect coordination.

\begin{defn}[$\upvarepsilon$\mbox{-}Coordination]\label{defn:epsilon-coordination}
Let $\gamma$ be a context, let $R \subseteq \runs$, let $I \subseteq \agents$
and let $\varepsilon\ge0$.
An ensemble $\tuple{\ensemble} \in {\pointsets}^I$ is said to be
``$\varepsilon$\mbox{-}coordinated'' if
for every $i \in I$ and for every $(r,t) \in \ensemble_i$, there exists an interval
$T \subseteq \timeset$ of length at most $\varepsilon$, s.t.\ $t \in T$
and s.t.\ for every $j \in I$ there exists $t' \in T$ s.t.\ $(r,t') \in \ensemble_j$.
\end{defn}

We note that $0$\mbox{-}coordination is the same as perfect coordination,
and thus the following theorem also implies \thmref{perfect-coordination}
as a special case thereof.

\begin{thm}\label{thm:epsilon-coordination}
Let $\gamma$ be a context, let $R \subseteq \runs$, let $\varepsilon\ge0$
and let $I \subseteq \agents$.
Define
\functiondefn{E_I^{\varepsilon}}{\pointsets}{\pointsets}{\psi}{\left\{(r,t) \in \points \:\middle|\: \exists T \subseteq \timeset : \begin{matrix} t \in T \And \sup\{T-T\} \le \varepsilon \And \\ \forall i \in I \: \exists\, t' \in T : (r,t') \in K_i(\psi)\end{matrix}\right\}.}
\begin{enumerate}
\item
For every $\psi \in \pointsets$, the function
$x \mapsto E_I^{\varepsilon}(\psi \cap x))$ has a
greatest fixed point. Denote it by $C_I^{\varepsilon}(\psi)$ (``$\varepsilon$\mbox{-}common knowledge'' of $\psi$ by $I$).
\item
For every event $\psi$, the ensemble
$(K_i(C_I^{\varepsilon}(\psi)))_{i \in I}$ is $\varepsilon$\mbox{-}coordinated.
\item
If $\tuple{\ensemble} \in {\pointsets}^I$ is an $\varepsilon$\mbox{-}coordinated ensemble, then
$\ensemble_i \subseteq K_i(C_I^{\varepsilon}(\cup\tuple{\ensemble}))$ for every $i \in I$.
\item
If $\tuple{\ensemble} \in {\pointsets}^I$ is an $\varepsilon$\mbox{-}coordinated ensemble, then
$\cup\tuple{\ensemble} \subseteq C_I^{\varepsilon}(\cup\tuple{\ensemble})$.
\end{enumerate}
\end{thm}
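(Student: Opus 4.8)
The plan is to follow the same fixed-point template used for perfect coordination (\thmref{perfect-coordination}) and eventual coordination (\thmref{eventual-coordination}), of which this theorem is the common generalization (it reduces to perfect coordination at $\varepsilon=0$, where the witnessing window $T$ collapses to a point and $E_I^{\varepsilon}$ becomes $E_I$). I would establish the four parts in the order 1, 2, 4, 3, since the third part follows cheaply once the fourth is in hand.

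For the first part, the lattice $(\pointsets,\subseteq)=(2^{\points},\subseteq)$ is complete, so by the Knaster--Tarski theorem it suffices to show that $x \mapsto E_I^{\varepsilon}(\psi \cap x)$ is monotone. The inner map $x \mapsto \psi \cap x$ is clearly monotone, so I only need monotonicity of $E_I^{\varepsilon}$ itself: if $\phi \subseteq \phi'$ then $K_i(\phi)\subseteq K_i(\phi')$ for each $i$ by monotonicity of knowledge (\remarkref{knowledge-properties}), so any witnessing set $T$ for $(r,t)\in E_I^{\varepsilon}(\phi)$ witnesses $(r,t)\in E_I^{\varepsilon}(\phi')$ verbatim. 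Knaster--Tarski then yields the greatest fixed point $C_I^{\varepsilon}(\psi)$, realized as the union of all post-fixed points, i.e.\ all $x$ with $x \subseteq E_I^{\varepsilon}(\psi\cap x)$.

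For the second part, fix $\psi$ and set $\ensemble_i \eqdef K_i(C_I^{\varepsilon}(\psi))$; each $\ensemble_i$ is local to $i$ by \remarkref{knowledge-is-local}, so $\tuple{\ensemble}$ is an ensemble. Given $(r,t)\in\ensemble_i$, the Truth Axiom (\remarkref{knowledge-properties}) gives $(r,t)\in C_I^{\varepsilon}(\psi)$, and since $C_I^{\varepsilon}(\psi)$ is a fixed point, $(r,t)\in E_I^{\varepsilon}(\psi\cap C_I^{\varepsilon}(\psi))$. Unwinding the definition of $E_I^{\varepsilon}$ produces a set $T\ni t$ with $\sup\{T-T\}\le\varepsilon$ such that each $j$ knows $\psi\cap C_I^{\varepsilon}(\psi)$ at some $t'\in T$; by monotonicity $K_j(\psi\cap C_I^{\varepsilon}(\psi))\subseteq K_j(C_I^{\varepsilon}(\psi))=\ensemble_j$, so $(r,t')\in\ensemble_j$. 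Replacing $T$ by the interval $[\inf T,\sup T]$ (of length $\sup\{T-T\}\le\varepsilon$) supplies exactly the interval required by \defnref{epsilon-coordination}, proving $\varepsilon$-coordination. For the fourth part I would instead show that $\cup\tuple{\ensemble}$ is itself a post-fixed point of $x\mapsto E_I^{\varepsilon}(\cup\tuple{\ensemble}\cap x)$, whence it lies inside the greatest fixed point $C_I^{\varepsilon}(\cup\tuple{\ensemble})$: for $(r,t)\in\ensemble_i\subseteq\cup\tuple{\ensemble}$, $\varepsilon$-coordination furnishes an interval $T\ni t$ of length $\le\varepsilon$ and, for each $j$, a time $t'\in T$ with $(r,t')\in\ensemble_j$, and locality gives $\ensemble_j=K_j(\ensemble_j)\subseteq K_j(\cup\tuple{\ensemble})$, so reading $T$ as a set with $\sup\{T-T\}\le\varepsilon$ shows $(r,t)\in E_I^{\varepsilon}(\cup\tuple{\ensemble})$.

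The third part is then immediate: by the fourth part and $\ensemble_i\subseteq\cup\tuple{\ensemble}$ we get $\ensemble_i\subseteq C_I^{\varepsilon}(\cup\tuple{\ensemble})$, and applying $K_i$ together with locality $\ensemble_i=K_i(\ensemble_i)$ yields $\ensemble_i\subseteq K_i(C_I^{\varepsilon}(\cup\tuple{\ensemble}))$. The argument is almost entirely parallel to the $\varepsilon=0$ case, so I do not expect a substantive obstacle; the one genuinely new piece of bookkeeping is reconciling the two formulations of the witnessing window---an arbitrary subset with $\sup\{T-T\}\le\varepsilon$ in $E_I^{\varepsilon}$ versus a genuine interval of length $\le\varepsilon$ in \defnref{epsilon-coordination}---which I handle by passing to $[\inf T,\sup T]$ in the forward direction and by observing that an interval of length $\le\varepsilon$ is such a set in the backward direction.
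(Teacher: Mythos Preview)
Your proposal is correct. Note that the paper presents \thmref{epsilon-coordination} as a background result from \cite{halpern-moses-1990,book} and does not give its own proof; the closest analogue proved in the paper is \thmref{delta-coordination} (together with \lemmaref{delta-common-knowledge}), and your argument follows exactly that template: monotonicity plus Knaster--Tarski for part~1, Truth Axiom and unwinding the fixed point for part~2, showing $\cup\tuple{\ensemble}$ is a post-fixed point for part~4, and then deriving part~3 by applying $K_i$ and locality --- including the same order of deriving part~3 from the post-fixed-point inclusion rather than independently.
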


The attentive reader may notice, by now, a pattern forming in the similarity
between \defnref{perfect-coordination} and the function defined in
\thmref{common-knowledge-fixed-point},
between \defnref{eventual-coordination} and the function defined in
\thmref{eventual-coordination}
and between \defnref{epsilon-coordination} and the function defined in
\thmref{epsilon-coordination}.

\section{\texorpdfstring{$\updelta$}{Delta}-Common Knowledge}

Having completed our survey of some previous, relevant, results and definitions,
we are now ready to start extending them.
Recall that the interdependencies between the response times of different agents
in the timely-coordinated response problem are captured by an
implementation-spec $\implspec$, i.e.\ a set of agents $I$ and a constraining
function
$\delta:\distinctpairs{I}\rightarrow\Delta$ from ordered pairs of distinct
agents, to maximum allowed time differences.
Before we turn to analyze the timely-coordinated response problem in the next
section, we first define a form of coordination exhibiting similar constraints,
and analyze it.

\begin{defn}[$\updelta$\mbox{-}Coordination]\label{defn:delta-coordination}
We call a quadruplet $\dcspec$ a ``$\updelta$\mbox{-}coor\-dination-spec''\footnote{
Note the difference between the italic $\delta$ that indicates a specific
constraining function, and the roman (i.e.\ upright)
$\updelta$ that generally refers to the form of coordination that we define,
regardless of any concrete constraining function. For example,
$\delta$\mbox{-}coordination, for a specific constraining function $\delta$,
is an instance of $\updelta$\mbox{-}coordination.
},
if $\gamma$ is a context, $R \subseteq \runs$ is a set of runs,
and $\implspec$ is an implementation-spec s.t.\ $I \subseteq \agents$.
Given a $\updelta$\mbox{-}coordination-spec $\dcspec$,
we say that an ensemble $\tuple{\ensemble} \in {\pointsets}^I$ is
``$\delta$\mbox{-}coordinated'' if
for every $(i,j) \in \distinctpairs{I}$ and for every $(r,t) \in \ensemble_i$,
there exists $t' \in T$ s.t.\ $t' \le t + \delta(i,j)$ and
$(r,t') \in \ensemble_j$.
\end{defn}

It should be noted that from this point on, whenever dealing with coordinated
ensembles, we always assume, for ease of presentation, that $|I|>1$, i.e.\
that the ensemble is defined over more than a single agent.
Adjusting our results for the case
in which this does not hold is neither hard, nor interesting.

Before we commence our analysis of $\updelta$\mbox{-}coordination, we define, given an
event $\psi \in \pointsets$, notation standing for the event
``$\psi$ holds at some (past, present, or
future) time, no later than $\varepsilon$ time units from now''.

\begin{defn}\label{defn:nolaterthan}
Let $\gamma$ be a context, let $R \subseteq \runs$ and let
$\varepsilon \in \Delta$.
We define
\functiondefn{\nolaterthan{\varepsilon}}{\pointsets}{\pointsets}{\psi}{
\{(r,t) \in \points \mid \exists\, t' \subseteq \timeset: t' \le t
+ \varepsilon \And (r,t') \in \psi\}.\footnotemark}
\end{defn}
\footnotetext{
As with our usage of $\sometime$, we use the symbol $\mncircledcirc$
instead of the standard temporal logic notation
$\mnmedcircle$, in order to emphasize that $t'$ may be smaller than $t$.
}
\vspace*{-1em} % footnotetext creates a new line

\begin{remark}\label{remark:nolaterthan-properties}
By \defnref{nolaterthan}:
\begin{itemize}
\item $\nolaterthan{\infty}=\sometime$.
\item $\nolaterthan{-\infty}(\psi)=\emptyset$, for every $\psi \in \pointsets$.
\item $\nolaterthan{0}(\psi)$, for an event $\psi \in \pointsets$, means ``$\psi$ has occurred, either now or in
the past''.
\item
Additivity: $\nolaterthan{\varepsilon_1}(\nolaterthan{\varepsilon_2}(\psi))=
\nolaterthan{\varepsilon_1+\varepsilon_2}(\psi)$ for every $\varepsilon_1,\varepsilon_2 \in \Delta \setminus \{-\infty\}$ and for every event $\psi \in \pointsets$.
\item
Monotonicity:
$\ (\varepsilon_1\le\varepsilon_2 \And \psi\subseteq\phi) \ \Rightarrow
\ \nolaterthan{\varepsilon_1}(\psi) \subseteq
\nolaterthan{\varepsilon_2}(\phi),\ $
for every $\varepsilon_1,\varepsilon_2 \in \Delta$ and for every two events $\psi,\phi \in \pointsets$.
\item
$\nolaterthan{\varepsilon}(\cap\Psi) \subseteq \bigcap\{\nolaterthan{\varepsilon}(\psi) \mid \psi \in \Psi\}$, for every $\varepsilon \in \Delta$ and for every set of events $\Psi \subseteq \pointsets$.
\end{itemize}
\end{remark}

We are now ready to
analyze $\updelta$\mbox{-}coordination along the lines of the results surveyed in the
previous section.
First, we define a common-knowledge analogue for
this case.
As in the results presented in the previous section,
given an event $\psi \in \pointsets$,
we use $\psi$ to define a function $f_{\psi}^{\delta}$, s.t.\
knowledge of the greatest fixed point of $f_{\psi}^{\delta}$
by each agent constitutes a $\delta$\mbox{-}coordinated ensemble
with several desired properties.
Nonetheless, we face several additional technical challenges along the way.
A main technical challenge is that $\updelta$\mbox{-}coordination lacks the symmetry
among the different agents in $I$,
which manifests in the common knowledge variants
presented in the previous section. Thus, in general there is no natural way to
define a single
event $\psi$ for which $(K_i(\psi))_{i \in I}$ is the ensemble we are looking
for, i.e.\ our ensemble should be defined in terms of knowledge of different
events for different agents.
For this reason, we somewhat generalize our strategy:
Instead of searching for a fixed point of a function
on $\pointsets$, we define a function on
${\pointsets}^I$ --- the set of $I$-tuples of events.
We denote the greatest fixed point of this function
by $\dck(\psi)$ (this is an $I$-tuple of events), and show that
$(K_i(\dck(\psi)_i))_{i \in I}$ is the desired ensemble, i.e.\ each
coordinate of this fixed point is the event that $i \in I$ should know in this
ensemble.\footnote{
While vectorial fixed points may alternatively be captured by nested fixed
points~\cite[Chapter 1]{mu-calculus}, in our case we argue that the
vectorial representation
better parallels the underlying intuition.
}
To our knowledge, such a technique was never utilized in this field before.

Before we define the above-described function, we define a lattice structure on
${\pointsets}^I$, which gives precise meaning to the concept of a greatest
fixed point of a function on ${\pointsets}^I$.

\begin{defn}[Lattice Structure on ${\pointsets}^I$]\label{defn:tuples-lattice}
Let $\gamma$ be a context, let $R \subseteq \runs$ and let $I \subseteq \agents$.
We define the following lattice structure on ${\pointsets}^I$: (In the
following definitions,
$\tuple{\varphi}\eqdef(\varphi_i)_{i \in I} \in {\pointsets}^I$ and
$\tuple{\xi}\eqdef(\xi_i)_{i \in I} \in {\pointsets}^I$.)
\begin{enumerate}
\item
Order: $\ \tuple{\varphi} \le \tuple{\xi}\ \ $ iff $\ \ \forall i \in I: \varphi_i \subseteq \xi_i$.
\item
Join: $\ \tuple{\varphi} \vee \tuple{\xi}\ \eqdef\ (\varphi_i \cup \xi_i)_{i \in I}$.
\item
Meet: $\ \tuple{\varphi} \wedge \tuple{\xi}\ \eqdef\ (\varphi_i \cap \xi_i)_{i \in I}$.
\end{enumerate}
\end{defn}

\begin{remark}\label{remark:complete-lattice}
${\pointsets}^I$, with the lattice structure defined above, constitutes a
complete lattice, i.e.\ every subset of ${\pointsets}^I$ has both a supremum and
an infimum.
\end{remark}

Now, for every $\psi \in \pointsets$, we turn to define the function on
${\pointsets}^I$, whose
greatest fixed point we denote by $\dck(\psi)$.

\begin{defn}[$\updelta$\mbox{-}Common Knowledge]\label{defn:f-psi}
Let $\dcspec$ be a $\updelta$\mbox{-}coordination-spec.
For each $\psi \in \pointsets$, we define
\functiondefn{f_{\psi}^{\delta}}{{\pointsets}^I}{{\pointsets}^I}{(x_i)_{i \in I}}
{\left(\smashoperator[r]{\bigcap_{j \in I\setminus\{i\}}}\, \nolaterthan{\delta(i,j)}(K_j(\psi \cap x_j))\right)_{i \in I},\footnotemark}
and denote its greatest fixed point by $\dck(\psi)$
(``$\delta$\mbox{-}common knowledge'' of $\psi$ by $I$).
\footnotetext{
This definition may be usefully generalized by allowing $\psi$ to depend
on $j$ and even on $i$ as well. Our results concerning this generalization
are outside the scope of this work.
}
\end{defn}

We now show that $\dck(\psi)$ is well defined.
Furthermore, we prove some basic properties of $\dck(\psi)$,
as well as of $\dck$, which constitutes a function from events
$\psi \in \pointsets$ to $I$-tuples of events $\tuple{\varphi} \in {\pointsets}^I$.

\begin{lemma}\label{lemma:delta-common-knowledge}
Let $\dcspec$ be a $\updelta$\mbox{-}coordination-spec and
let $\psi \in \pointsets$.
\begin{enumerate}
\item
$\dck(\psi)$ is well defined, i.e.\
$f_{\psi}^{\delta}$ has a greatest fixed point.

\item
Let $\tuple{\xi} \in {\pointsets}^I$.
If $\tuple{\xi} \le f_{\psi}^{\delta}(\tuple{\xi})$, then $\tuple{\xi} \le \dck(\psi)$.

\item
$\dck$ is monotone:
$\psi\subseteq\phi \Rightarrow
\dck(\psi) \le \dck(\phi)$ for every two events $\psi,\phi \in \pointsets$.
\end{enumerate}

\end{lemma}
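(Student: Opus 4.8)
The plan is to recognise $f_{\psi}^{\delta}$ as a monotone self-map of the complete lattice ${\pointsets}^I$ and to invoke the Knaster--Tarski fixed-point theorem, which guarantees a greatest fixed point and characterises it as the supremum of all $\tuple{\xi}$ satisfying $\tuple{\xi} \le f_{\psi}^{\delta}(\tuple{\xi})$. All three parts then follow from this single observation, so the only real work is verifying that $f_{\psi}^{\delta}$ is order-preserving.

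First I would establish monotonicity of $f_{\psi}^{\delta}$. Suppose $\tuple{x} \le \tuple{y}$, i.e.\ $x_j \subseteq y_j$ for every $j \in I$. Fixing $i \in I$ and $j \in I \setminus \{i\}$, intersecting with $\psi$ gives $\psi \cap x_j \subseteq \psi \cap y_j$; monotonicity of $K_j$ (\remarkref{knowledge-properties}) yields $K_j(\psi \cap x_j) \subseteq K_j(\psi \cap y_j)$; and monotonicity of $\nolaterthan{\delta(i,j)}$ in its argument (the $\varepsilon_1 = \varepsilon_2$ instance of \remarkref{nolaterthan-properties}) gives $\nolaterthan{\delta(i,j)}(K_j(\psi \cap x_j)) \subseteq \nolaterthan{\delta(i,j)}(K_j(\psi \cap y_j))$. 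Intersecting over $j \ne i$ preserves the inclusion, so the $i$-th coordinate of $f_{\psi}^{\delta}(\tuple{x})$ is contained in that of $f_{\psi}^{\delta}(\tuple{y})$; as this holds for every $i$, we conclude $f_{\psi}^{\delta}(\tuple{x}) \le f_{\psi}^{\delta}(\tuple{y})$. By \remarkref{complete-lattice}, ${\pointsets}^I$ is a complete lattice, so Knaster--Tarski supplies a greatest fixed point, proving part~1; moreover it identifies $\dck(\psi)$ as $\bigvee\{\tuple{\xi} \mid \tuple{\xi} \le f_{\psi}^{\delta}(\tuple{\xi})\}$, which is exactly the content of part~2.

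For part~3 I would observe that $f_{\psi}^{\delta}$ is monotone in $\psi$ as well: if $\psi \subseteq \phi$, then $\psi \cap x_j \subseteq \phi \cap x_j$ for every $\tuple{x}$, and the same chain of monotonicities used above gives $f_{\psi}^{\delta}(\tuple{x}) \le f_{\phi}^{\delta}(\tuple{x})$ coordinatewise. Applying this at $\tuple{x} = \dck(\psi)$, and using that $\dck(\psi)$ is a fixed point of $f_{\psi}^{\delta}$, yields $\dck(\psi) = f_{\psi}^{\delta}(\dck(\psi)) \le f_{\phi}^{\delta}(\dck(\psi))$, so $\dck(\psi)$ satisfies the hypothesis of part~2 for $\phi$. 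Part~2, applied with $\phi$ in place of $\psi$, then gives $\dck(\psi) \le \dck(\phi)$.

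The argument is essentially routine once monotonicity is secured; the one point demanding care is the behaviour of $\nolaterthan{\delta(i,j)}$ when $\delta(i,j)$ is infinite, in particular the degenerate case $\delta(i,j) = -\infty$, where $\nolaterthan{-\infty}(\cdot) = \emptyset$. Since \remarkref{nolaterthan-properties} asserts monotonicity for all values in $\Delta$, this causes no difficulty, but it is worth flagging as the one place where the general, possibly infinite-valued, constraining function $\delta$ meets the fixed-point machinery.
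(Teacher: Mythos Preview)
Your proof is correct and follows essentially the same route as the paper: establish monotonicity of $f_{\psi}^{\delta}$ from the monotonicity of $K_j$ and $\nolaterthan{\varepsilon}$, invoke Knaster--Tarski on the complete lattice ${\pointsets}^I$ to obtain parts~1 and~2, and for part~3 compare $f_{\psi}^{\delta}$ with $f_{\phi}^{\delta}$ at the point $\dck(\psi)$ and feed the resulting inequality back into part~2. Your substitution at $\tuple{x}=\dck(\psi)$ is in fact cleaner than the paper's phrasing, which contains a minor typo (it writes $\tuple{\varphi}\eqdef\dck(\phi)$ but then uses $\dck(\psi)$ in the displayed chain).
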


\begin{proof}
By monotonicity of $K_i$ for every $i \in \agents$
and by monotonicity of $\nolaterthan{\varepsilon}$ for every
$\varepsilon \in \Delta$, we obtain that $f_{\psi}^{\delta}$ is monotone. By 
\remarkref{complete-lattice}, and by Tarski's fixed point theorem\cite{tarski},
the set of fixed points of $f_{\psi}^{\delta}$
has a greatest element, which equals
$\bigvee\{\tuple{\xi} \in {\pointsets}^I \mid \tuple{\xi} \le f_{\psi}^{\delta}(\tuple{\xi})\}$. This proves the first two parts of the lemma.

To prove monotonicity of $\dck$,
let $\psi,\phi \in \pointsets$ s.t.\
$\psi \subseteq \phi$.
Once again, by monotonicity of $K_i$ for every $i \in \agents$
and by monotonicity of $\nolaterthan{\varepsilon}$ for every
$\varepsilon \in \Delta$, we obtain that
$f_{\psi}^{\delta}(\tuple{\varphi}) \le f_{\phi}^{\delta}(\tuple{\varphi})$
for every $\tuple{\varphi} \in {\pointsets}^I$.
By substituting $\tuple{\varphi}\eqdef \dck(\phi)$,
and by definition of $\dck$,
we obtain:
$\dck(\psi) = f_{\psi}^{\delta}(\dck(\psi)) \le
f_{\phi}^{\delta}(\dck(\psi))$. By directly applying the second part
of the lemma, we obtain that $\dck(\psi)
\le \dck(\phi)$.
\end{proof}

It is now time to prove an equivalent of Theorems \ref{thm:perfect-coordination}, \ref{thm:eventual-coordination} and \ref{thm:epsilon-coordination}, for
$\updelta$\mbox{-}common knowledge.

\begin{thm}\label{thm:delta-coordination}
Let $\dcspec$ be a $\updelta$\mbox{-}coordination-spec.
\begin{enumerate}
\item
For every event $\psi$, the ensemble
$(K_i(\dck(\psi)_i))_{i \in I}$ is $\delta$\mbox{-}coordinated.
\item
If $\tuple{\ensemble} \in {\pointsets}^I$ is a $\delta$\mbox{-}coordinated ensemble,
then
$\ensemble_i \subseteq K_i(\dck(\cup\tuple{\ensemble})_i)$ for every $i \in I$.
\item
If $\tuple{\ensemble} \in {\pointsets}^I$ is a $\delta$\mbox{-}coordinated ensemble,
then $\cup\tuple{\ensemble} \subseteq \cup \dck(\cup\tuple{\ensemble})$.
\end{enumerate}
\end{thm}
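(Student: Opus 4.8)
The plan is to prove the three parts in order, closely mirroring the proofs of Theorems \ref{thm:perfect-coordination}--\ref{thm:epsilon-coordination}, with the sole genuine novelty being that we now work with the vectorial fixed point $\dck(\psi)$ rather than a scalar one, and so must invoke \lemmaref{delta-common-knowledge} in place of a one-dimensional Tarski argument. The first part is the ``soundness'' direction and exploits only that $\dck(\psi)$ solves its own defining equation; the second is the ``maximality'' direction and carries the one real idea; the third is an immediate corollary of the second.

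For the first part I would fix an event $\psi$ and write $\ensemble_i \eqdef K_i(\dck(\psi)_i)$; by \remarkref{knowledge-is-local} each $\ensemble_i$ is local to $i$, so $\tuple{\ensemble}$ is genuinely an ensemble. To verify $\delta$\mbox{-}coordination, take $(i,j) \in \distinctpairs{I}$ and $(r,t) \in \ensemble_i$. By the Truth Axiom (\remarkref{knowledge-properties}), $(r,t) \in \dck(\psi)_i$, and since $\dck(\psi)$ is a fixed point of $f_\psi^\delta$, restricting the defining intersection to the coordinate $j$ gives $(r,t) \in \nolaterthan{\delta(i,j)}(K_j(\psi \cap \dck(\psi)_j))$. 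Unwinding \defnref{nolaterthan} yields some $t' \le t + \delta(i,j)$ with $(r,t') \in K_j(\psi \cap \dck(\psi)_j)$, and monotonicity of $K_j$ (using $\psi \cap \dck(\psi)_j \subseteq \dck(\psi)_j$) upgrades this to $(r,t') \in K_j(\dck(\psi)_j) = \ensemble_j$, which is exactly what $\delta$\mbox{-}coordination demands.

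The second part is where the one real idea enters. Writing $\psi \eqdef \cup\tuple{\ensemble}$, the claim is that $\tuple{\ensemble}$ is itself a \emph{pre}-fixed point of $f_\psi^\delta$, i.e.\ $\tuple{\ensemble} \le f_\psi^\delta(\tuple{\ensemble})$. To see this, fix $i$ and $(r,t) \in \ensemble_i$; for each $j \ne i$, $\delta$\mbox{-}coordination supplies $t' \le t + \delta(i,j)$ with $(r,t') \in \ensemble_j$. Here two simplifications combine: since $\ensemble_j$ is local to $j$ we have $K_j(\ensemble_j) = \ensemble_j$, and since $\ensemble_j \subseteq \psi$ we have $\psi \cap \ensemble_j = \ensemble_j$; hence $(r,t') \in K_j(\psi \cap \ensemble_j)$, and therefore $(r,t) \in \nolaterthan{\delta(i,j)}(K_j(\psi \cap \ensemble_j))$. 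Intersecting over $j \ne i$ gives $\ensemble_i \subseteq f_\psi^\delta(\tuple{\ensemble})_i$. The second part of \lemmaref{delta-common-knowledge} then yields $\tuple{\ensemble} \le \dck(\psi)$, and applying $K_i$ (using locality $\ensemble_i = K_i(\ensemble_i)$ and monotonicity) gives $\ensemble_i = K_i(\ensemble_i) \subseteq K_i(\dck(\psi)_i)$, as required.

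The third part then falls out immediately: from the second part, $\ensemble_i \subseteq K_i(\dck(\psi)_i) \subseteq \dck(\psi)_i$ by the Truth Axiom, so taking unions over $i$ gives $\cup\tuple{\ensemble} \subseteq \cup\dck(\cup\tuple{\ensemble})$. I expect the only point requiring care --- the ``main obstacle'', though it is more a bookkeeping subtlety than a genuine difficulty --- to be the identification $\psi \cap \ensemble_j = \ensemble_j$ together with the locality collapse $K_j(\ensemble_j) = \ensemble_j$ in the second part, since it is precisely this pair of observations that lets $\delta$\mbox{-}coordination of $\tuple{\ensemble}$ feed directly into the coordinatewise pre-fixed-point inequality for $f_\psi^\delta$, and thereby let \lemmaref{delta-common-knowledge} do all the remaining work.
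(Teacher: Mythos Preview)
Your proposal is correct and follows essentially the same argument as the paper's proof: Part~1 unwinds the fixed-point equation via the Truth Axiom and monotonicity of $K_j$; Part~2 verifies that $\tuple{\ensemble}$ is a pre-fixed point of $f_{\cup\tuple{\ensemble}}^\delta$ using locality and the inclusion $\ensemble_j \subseteq \cup\tuple{\ensemble}$, then invokes \lemmaref{delta-common-knowledge}; Part~3 is an immediate consequence. The only cosmetic difference is that for Part~3 the paper cites the intermediate inequality $\tuple{\ensemble} \le \dck(\cup\tuple{\ensemble})$ obtained inside the proof of Part~2, whereas you re-derive it from the conclusion of Part~2 via the Truth Axiom --- both are equally valid.
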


\begin{proof}
We begin the proof of the first part by
noting that by \remarkref{knowledge-is-local},
$\tuple{\ensemble}\eqdef(K_i(\dck(\psi)_i))_{i \in I}$ is indeed
an ensemble for $I$.
Let $(i,j) \in \distinctpairs{I}$ and let
$(r,t) \in \ensemble_i$.
By definition of $\ensemble_i$ and by the truth axiom for knowledge,
$(r,t) \in \dck(\psi)_i$.
By definition of $\dck$,
\[
\dck(\psi)_i\:=\smashoperator[r]{\bigcap_{k \in I \setminus \{i\}}}\,\nolaterthan{\delta(i,k)}(K_k(\psi\cap\ensemble_k)) \subseteq \nolaterthan{\delta(i,j)}(K_j(\psi\cap\ensemble_j)).\]
Thus, we obtain
$(r,t) \subseteq \nolaterthan{\delta(i,j)}(K_j(\psi\cap\ensemble_j))$.
By definition of $\nolaterthan{\delta(i,j)}$, there exists $t' \in \timeset$
s.t.\ $t'\le t+\delta(i,j)$ and $(r,t') \in K_j(\psi\cap\ensemble_j)$.
By monotonicity of $K_j$ and by locality of $\ensemble_j$ to $j$,
we obtain $(r,t') \in K_j(\ensemble_j)=\ensemble_j$, and the
proof of the first part is complete.\footnote{
The attentive reader may notice a conceptual similarity between the above
argument and the proof of \corref{path-traversing-centipede-optimality}.
Furthermore, as noted there, this similarly extends to the second part of the
proof of \lemmaref{implementable-iff} as well.
}

We move on to proving the second part. Let
$\tuple{\ensemble}$ be as defined in this part of the theorem.
First, we show that $\tuple{\ensemble} \le f_{\cup\tuple{\ensemble}}^{\delta}(\tuple{\ensemble})$.
Let $i \in I$.
Let $(r,t) \in \ensemble_i$ and let $j \in I \setminus \{i\}$.
Since $\tuple{\ensemble}$ is $\delta$\mbox{-}coordinated,
there exists $t' \in \timeset$ s.t.\ $t' \le t+\delta(i,j)$
and $(r,t') \in \ensemble_j$. By definition of an ensemble, $\ensemble_j$ is
local to $j$, and thus $\ensemble_j = K_j(\ensemble_j)$. Therefore,
$(r,t') \in K_j(\ensemble_j)$. By definition of $\nolaterthan{\delta(i,j)}$,
we obtain $(r,t) \in \nolaterthan{\delta(i,j)}(K_j(\ensemble_j))$. Thus,
\[\ensemble_i \:\subseteq
\smashoperator[r]{\bigcap_{j \in I\setminus\{i\}}}\,\nolaterthan{\delta(i,j)}(K_j(\ensemble_j))\:=
\smashoperator[r]{\bigcap_{j \in I\setminus\{i\}}}\,\nolaterthan{\delta(i,j)}(K_j((\cup\tuple{\ensemble}) \cap \ensemble_j))=
f_{\cup\tuple{\ensemble}}^{\delta}(\tuple{\ensemble})_i.\]
By the second part of \lemmaref{delta-common-knowledge}, we thus have
$\tuple{\ensemble} \le \dck(\cup \tuple{\ensemble})$.
For every $i \in I$, by monotonicity of $K_i$ we obtain
$K_i(\ensemble_i) \subseteq K_i(\dck(\cup \tuple{\ensemble})_i)$, and by locality
of $\ensemble_i$ to $i$, we complete the proof of the second part of
\thmref{delta-coordination}, as $\ensemble_i = K_i(\ensemble_i)$.
Let $i \in I$. As we have just shown that
$\ensemble_i \subseteq \dck(\cup\tuple{\ensemble})_i$,
we also have $\ensemble_i \subseteq \cup \dck(\cup\tuple{\ensemble})$.
As this holds for every $i \in I$, and as the r.h.s.\ does not depend on
$i$, we obtain $\cup\tuple{\ensemble} \subseteq \cup \dck(\cup\tuple{\ensemble})$,
completing the third, and last, part of the proof.
\end{proof}

\thmref{delta-coordination}, which we have just proved, provides us with some
key properties of $\delta$\mbox{-}common knowledge: The first part of the theorem
says that the ensemble defined by it is $\delta$\mbox{-}coordinated. The second part
says that regardless of the way a $\delta$\mbox{-}coordinated ensemble is formed (be
it using $\delta$\mbox{-}common knowledge of some event $\psi$, or otherwise),
the fact that its $i$'th coordinate holds implies that $i$ knows the
$i$'th coordinate of $\delta$\mbox{-}common knowledge of (the disjunction of) this
ensemble. The third
part, similarly, says that in this case the fact that any coordinate of such
an ensemble holds implies that at least one coordinate of $\delta$\mbox{-}common
knowledge of (the disjunction of) this ensemble holds.

While \thmref{delta-coordination} does indeed provide us with several key
properties of $\updelta$\mbox{-}common knowledge, a second, deeper look at
this theorem (resp.\ at its analogues from \cite[Section~11.6]{book} surveyed
in the previous section) reveals that it does not characterise $\updelta$\mbox{-}common
knowledge (resp.\ common knowledge, eventual common knowledge, or
$\upvarepsilon$\mbox{-}common knowledge). Indeed, this theorem (resp.\ all its analogues)
would still hold if we defined each coordinate of $\updelta$\mbox{-}common knowledge
(resp.\ common knowledge, eventual common knowledge, or $\upvarepsilon$\mbox{-}common
knowledge) simply as $\pointsets$, i.e.\ the event ``True''.

\enlargethispage{.5em} % Fit footnote on this page
We remark that it may be verified that
the ensemble defined by common knowledge of an event $\psi$ is the
greatest perfectly coordinated ensemble $\tuple{\ensemble}$ satisfying
$\cup \tuple{\ensemble} \subseteq \psi$.\footnote{
Greatest, here, is in the sense of the lattice structure defined in
\defnref{tuples-lattice}.
In particular, this characterisation implies that such a greatest ensemble
exists.
This characterisation of the ensemble defined by common knowledge
is an immediate consequence of Aumann's definition\cite{aumann} of common
knowledge, which may be rephrased as follows:
$C_I(\psi)$, for $I \subseteq \agents$ and $\psi \in \pointsets$,
is the greatest event $\phi \subseteq \psi$ that is local to each
$i \in I$.}
Similarly, we remark without a proof (as a proof would be similar to that of
\claimref{delta-coordination-characterisation} below) that it may be verified
that the ensemble defined by eventual common knowledge of an event of the form
$\sometime(\psi)$ is the greatest
eventually-coordinated ensemble $\tuple{\ensemble}$ satisfying
$\cup \tuple{\ensemble} \subseteq \sometime(\psi)$.
Analogous characterisations, for the ensembles defined by
$\upvarepsilon$\mbox{-}common knowledge and by
$\updelta$\mbox{-}common knowledge, are, however, more elusive to phrase (as is
an analogous characterisation of the ensemble defined by eventual-common
knowledge of an arbitrary event.)
For this reason, we now only characterise the ensemble defined by
$\updelta$\mbox{-}common knowledge of events that we call ``atemporal''.
(It may be readily verified, along the same lines, that an
analogous characterisation for the ensemble defined by $\upvarepsilon$\mbox{-}common
knowledge of an atemporal event holds as well.) While
this characterisation is similar to that of the ensemble defined by
eventual-common knowledge, which
we phrased above, it is not analogous in that it conceptually does a
significantly less adequate job in capturing the essence of $\updelta$\mbox{-}common
knowledge (and of $\upvarepsilon$\mbox{-}common knowledge).
Nonetheless, this characterisation suffices for our analysis of
the timely-coordinated response problem in the next section.

\begin{defn}[Atemporal Event]\label{defn:atemporal}
Let $\gamma$ be a context and let $R \subseteq \runs$.
We call an event $\psi \in \pointsets$ ``atemporal'' if
$\psi$ holding at some time during a run implies that it holds at all times
throughout that run. Formally, $\psi$ is atemporal iff it is of the form
$R' \times \timeset$ for some $R' \subseteq R$.
\end{defn}

\begin{remark}\label{remark:atemporal-properties}
By \defnref{atemporal}:
\begin{itemize}
\item
$\psi$ is atemporal iff $\psi=\sometime(\psi)$.
\item
By \remarkref{nolaterthan-properties} (additivity),
$\sometime=\nolaterthan{\infty}$ is idempotent.
Thus, $\sometime(\psi)$ is atemporal for every $\psi \in \pointsets$.
\end{itemize}
\end{remark}

We conclude this section with the following claim, which, together with
\remarkref{atemporal-properties}, provides a characterisation of
the ensemble defined by $\delta$\mbox{-}common knowledge of an atemporal event
$\psi$: It is
the greatest $\delta$\mbox{-}coordinated ensemble $\tuple{\ensemble}$ satisfying
$\cup \tuple{\ensemble} \subseteq \psi$. (In the
next section, we conclude that this characterisation holds also for
the ensemble defined by $\delta$\mbox{-}common knowledge of a temporal event
of the form $\nolaterthan{0}(\tilde{e})$, where $\tilde{e}$ is an ND event.)

\begin{claim}\label{claim:delta-coordination-characterisation}
Let $\dcspec$ be a $\updelta$\mbox{-}coordination-spec,
let $\psi \in \pointsets$ and
let $\tuple{\ensemble} \in {\pointsets}^I$ be the ensemble defined by
$\ensemble_i \eqdef K_i(\dck(\psi)_i)$ for every $i \in I$.
\begin{enumerate}
\item $\tuple{\ensemble}' \le \tuple{\ensemble}$, for every $\delta$\mbox{-}coordinated ensemble
$\tuple{\ensemble}' \in \pointsets^I$ satisfying
$\cup \tuple{\ensemble}' \subseteq \psi$.
\item $\cup \tuple{\ensemble} \subseteq \sometime(\psi)$.
(In particular, for atemporal $\psi$, by \remarkref{atemporal-properties}, $\cup \tuple{\ensemble} \subseteq \psi$.)
\end{enumerate}
\end{claim}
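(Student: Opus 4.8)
The plan is to derive both parts directly from the machinery already assembled in \thmref{delta-coordination} and \lemmaref{delta-common-knowledge}, without reopening the fixed-point construction from scratch. Part 1 is a ``maximality'' statement and will follow from the second part of \thmref{delta-coordination} combined with two monotonicity facts; part 2 is a ``soundness'' statement and will follow from the fixed-point equation defining $\dck(\psi)$ together with the truth axiom and the monotonicity of $\nolaterthan{\cdot}$.

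For part 1, I would take an arbitrary $\delta$\mbox{-}coordinated ensemble $\tuple{\ensemble}'$ with $\cup\tuple{\ensemble}' \subseteq \psi$ and argue coordinate-wise. Applying the second part of \thmref{delta-coordination} to $\tuple{\ensemble}'$ gives $\ensemble'_i \subseteq K_i(\dck(\cup\tuple{\ensemble}')_i)$ for every $i \in I$. The assumption $\cup\tuple{\ensemble}' \subseteq \psi$ then lets me invoke monotonicity of $\dck$ (third part of \lemmaref{delta-common-knowledge}) to get $\dck(\cup\tuple{\ensemble}')_i \subseteq \dck(\psi)_i$, and monotonicity of $K_i$ (from \remarkref{knowledge-properties}) upgrades this to $K_i(\dck(\cup\tuple{\ensemble}')_i) \subseteq K_i(\dck(\psi)_i) = \ensemble_i$. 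Chaining these inclusions yields $\ensemble'_i \subseteq \ensemble_i$ for every $i$, i.e.\ $\tuple{\ensemble}' \le \tuple{\ensemble}$, as required.

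For part 2, I would use that $\dck(\psi)$ is a fixed point of $f_{\psi}^{\delta}$, so that for each $i \in I$,
\[
\dck(\psi)_i = \bigcap_{j \in I\setminus\{i\}} \nolaterthan{\delta(i,j)}(K_j(\psi \cap \dck(\psi)_j)).
\]
Since we assume $|I|>1$, the index set $I \setminus \{i\}$ is non-empty, so I may fix some $j \ne i$ and bound $\dck(\psi)_i$ by the single term $\nolaterthan{\delta(i,j)}(K_j(\psi \cap \dck(\psi)_j))$. The truth axiom gives $K_j(\psi \cap \dck(\psi)_j) \subseteq \psi$, whence monotonicity of $\nolaterthan{\delta(i,j)}$ in its argument gives $\dck(\psi)_i \subseteq \nolaterthan{\delta(i,j)}(\psi)$; monotonicity of $\nolaterthan{\cdot}$ in the superscript together with $\delta(i,j) \le \infty$ and $\nolaterthan{\infty} = \sometime$ (both from \remarkref{nolaterthan-properties}) then gives $\nolaterthan{\delta(i,j)}(\psi) \subseteq \sometime(\psi)$. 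Applying the truth axiom once more to $K_i$ yields $\ensemble_i = K_i(\dck(\psi)_i) \subseteq \dck(\psi)_i \subseteq \sometime(\psi)$. Taking the union over $i \in I$ completes part 2, and the atemporal refinement is then immediate from \remarkref{atemporal-properties}.

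The individual steps are short, so the main subtlety is not computational but structural: part 2 genuinely relies on the standing assumption $|I|>1$. If $I\setminus\{i\}$ were empty, the intersection defining $\dck(\psi)_i$ would be the whole space $\pointsets$ and the soundness inclusion would fail — precisely the degeneracy flagged in the discussion preceding the claim, where the ``True'' ensemble would otherwise vacuously satisfy \thmref{delta-coordination}. I would therefore call out the non-emptiness of $I\setminus\{i\}$ explicitly as the one place where this hypothesis is indispensable.
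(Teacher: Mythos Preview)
Your proof is correct and follows essentially the same route as the paper's: part 1 via \thmref{delta-coordination}(2) plus monotonicity of $\dck$ and $K_i$, and part 2 by picking some $j\ne i$, unwinding one term of the fixed-point equation through the truth axiom and monotonicity of $\nolaterthan{\cdot}$ up to $\sometime(\psi)$. Your explicit flagging of the $|I|>1$ hypothesis is a nice addition, since the paper relies on it silently via the standing assumption stated after \defnref{delta-coordination}.
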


\begin{proof}
We begin by proving the first part. We have
\begin{align*}
&\:\ensemble'_i \subseteq & \text{by the second part of \thmref{delta-coordination}} \\
\subseteq& \:K_i(\dck(\cup \tuple{\ensemble}')_i) \subseteq & \text{by monotonicity of $K_i$ and of $\dck$} \\
\subseteq& \:K_i(\dck(\psi)_i) = & \text{by definition of $\ensemble_i$} \\
=& \:\ensemble_i,
\end{align*}
as required.

To prove the second part, let $i \in I$ and let $j \in I \setminus \{i\}$.
By monotonicity of
$\nolaterthan{\delta(i,j)}$, we have
\begin{align*}
& \: \dck(\psi)_i \subseteq
& \text{by definition of $\dck$} \\
\subseteq& \nolaterthan{\delta(i,j)}(K_j(\psi\cap \dck(\psi)_j)) \subseteq
& \text{by monotonicity of $K_j$} \\
\subseteq&
\nolaterthan{\delta(i,j)}(K_j(\psi)) \subseteq 
& \text{by the truth axiom for knowledge} \\
\subseteq& \nolaterthan{\delta(i,j)}(\psi) \subseteq &
\text{by \remarkref{nolaterthan-properties} (monotonicity)} \\
\subseteq& \sometime(\psi).
\end{align*}
By monotonicity of $K_i$ and by the truth axiom for knowledge, we obtain that
$\ensemble_i \subseteq K_i(\sometime(\psi)) \subseteq \sometime(\psi)$,
completing the proof of the second part.
\end{proof}

\section{Analyzing Timely-Coordinated Response}

We now relate the machinery developed in the previous section to
the timely-coordinated response problem. We begin by formally introducing
external inputs as events in $\pointsets$, and by
formalizing the relationship between ND events and knowledge.

\begin{defn}
Given a context $\gamma$ and an external input $\tilde{e} \in \externalinputs$,
we formally associate $\tilde{e}$ with the event
``$\tilde{e}$ is occurring (right now)'', i.e.\ with the set of all points
at which $\tilde{e}$ occurs.
\end{defn}

\begin{remark}\label{remark:nd-cannot-be-foreseen}
As noted in \chapterref{discrete}, since $\tilde{e}$ is an ND event,
it cannot be foreseen (i.e.\ known to occur) by any agent before it occurs.
In the notation
of this chapter, this may be formalized as follows:
$K_i(\sometime(\tilde{e})) \subseteq \nolaterthan{0}(\tilde{e})$
for every $i \in \agents$.
(This follows straight from applying
\claimref{no-nd-between-tprime-t} to the first part of
\remarkref{retainable-properties}, at $t_{\tilde{e}}$.\footnote{
For the continuous time model presented in \appref{continuous},
our analyses from both this and the previous chapter depend,
through \claimref{no-nd-between-tprime-t}, on the ``no foresight'' property.
The dependence of the analysis of this chapter on it, though, is not
fundamental, in the sense that it may be readily dropped by replacing
$\nolaterthan{0}(\tilde{e})$ with $\sometime(\tilde{e})$ in
\corref{delta-common-knowledge-optimality}. (In contrast, it is not clear
that the results of
the previous chapter may be easily modified to hold in the absence this
property.)
Moreover, unlike the
analysis of the previous chapter, which heavily relies on
the ``no extrasensory perception'' property of the continuous-time model,
the analysis of this chapter does not rely on it at all. (This is not
surprising given the fact that this analysis is agnostic to the methods
of information gain by agents. Indeed, not even once do we mention messages
in this analysis.)
These are both examples of the ability of the higher-level approach of this
chapter to mask the details of the model in question by phrasing its
results in terms of knowledge. (The meaning that the knowledge operator takes
on in a specific model, though, depends of course on such properties.)
})
\end{remark}

\begin{cor}\label{cor:ensembles-and-nd}
Let $\gamma$ be a context and let $\tilde{e} \in \externalinputs$.
\begin{enumerate}
\item $K_i(\sometime(\tilde{e})) = K_i(\nolaterthan{0}(\tilde{e}))$, for every
$i \in \agents$.
\item Let $I \subseteq \agents$ and let $\tuple{\ensemble} \in {\pointsets}^I$ be an ensemble. If
$\cup\tuple{\ensemble} \subseteq \sometime(\tilde{e})$, then
$\cup\tuple{\ensemble} \subseteq \nolaterthan{0}(\tilde{e})$.
\end{enumerate}
\end{cor}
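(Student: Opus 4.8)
The plan is to build both parts on the single nontrivial input supplied by \remarkref{nd-cannot-be-foreseen}, namely that $K_i(\sometime(\tilde{e})) \subseteq \nolaterthan{0}(\tilde{e})$ for every $i \in \agents$; everything else is formal manipulation using the monotonicity, locality, and introspection properties already recorded. For the first part I would prove the two inclusions separately, and for the second part I would reduce to the first (or directly to the remark) using the fact that each coordinate of an ensemble is local to its agent.

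For the inclusion $K_i(\nolaterthan{0}(\tilde{e})) \subseteq K_i(\sometime(\tilde{e}))$, first note that since $0 \le \infty$, \remarkref{nolaterthan-properties} (monotonicity) gives $\nolaterthan{0}(\tilde{e}) \subseteq \nolaterthan{\infty}(\tilde{e}) = \sometime(\tilde{e})$, and then monotonicity of $K_i$ (\remarkref{knowledge-properties}) yields the claim. For the reverse inclusion $K_i(\sometime(\tilde{e})) \subseteq K_i(\nolaterthan{0}(\tilde{e}))$, I would apply $K_i$ to both sides of the remark's inclusion: monotonicity of $K_i$ gives $K_i(K_i(\sometime(\tilde{e}))) \subseteq K_i(\nolaterthan{0}(\tilde{e}))$, and the positive introspection axiom collapses the left-hand side to $K_i(\sometime(\tilde{e}))$. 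Together these prove the first part.

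For the second part it suffices to show $\ensemble_i \subseteq \nolaterthan{0}(\tilde{e})$ for each $i \in I$ and then take the union over $I$. Fix $i \in I$. Since $\ensemble_i \subseteq \cup\tuple{\ensemble} \subseteq \sometime(\tilde{e})$, monotonicity of $K_i$ gives $K_i(\ensemble_i) \subseteq K_i(\sometime(\tilde{e}))$. By definition of an ensemble, $\ensemble_i$ is local to $i$, so $\ensemble_i = K_i(\ensemble_i)$, whence $\ensemble_i \subseteq K_i(\sometime(\tilde{e}))$; combining this with \remarkref{nd-cannot-be-foreseen} (equivalently, with the first part) gives $\ensemble_i \subseteq \nolaterthan{0}(\tilde{e})$, as desired.

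I do not anticipate a genuine obstacle, as the whole argument is bookkeeping, but two steps must not be skipped: the use of positive introspection (rather than naive monotonicity) to obtain the nontrivial inclusion in the first part, and the essential appeal to locality of the ensemble coordinates in the second part. Without the latter the conclusion would fail outright, since an arbitrary subset of $\sometime(\tilde{e})$ need not sit inside $\nolaterthan{0}(\tilde{e})$.
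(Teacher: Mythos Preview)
Your proposal is correct and matches the paper's own proof essentially step for step: both inclusions in the first part are obtained exactly as you describe (monotonicity of $\nolaterthan{\cdot}$ and $K_i$ for one direction, positive introspection plus monotonicity applied to \remarkref{nd-cannot-be-foreseen} for the other), and the second part is likewise derived from locality of $\ensemble_i$, monotonicity of $K_i$, and \remarkref{nd-cannot-be-foreseen}.
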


\begin{proof}
Let $i \in \agents$. By \remarkref{nolaterthan-properties} (monotonicity)
and by monotonicity of
$K_i$, we have $K_i(\nolaterthan{0}(\tilde{e})) \subseteq K_i(\sometime(\tilde{e}))$.
Conversely, by the positive introspection axiom, by monotonicity of $K_i$ and by
\remarkref{nd-cannot-be-foreseen}, we have $K_i(\sometime(\tilde{e})) =
K_i(K_i(\sometime(\tilde{e}))) \subseteq K_i(\nolaterthan{0}(\tilde{e}))$,
and the proof of the first part is complete.

To prove the second part, let $i \in I$.
By locality of $\ensemble_i$ and by monotonicity of $K_i$,
we have $\ensemble_i=K_i(\ensemble_i)\subseteq K_i(\sometime(\tilde{e}))$.
By \remarkref{nd-cannot-be-foreseen}, the proof is complete.
\end{proof}

We conclude this chapter by applying the machinery developed throughout it to
obtain an optimal response logic for the timely-coordinated response problem.

\begin{cor}\label{cor:delta-common-knowledge-optimality}
Let $\TCRspec$ be a TCR-spec.
An optimal response logic for solving $\TCR{\tilde{e},I,\delta}$ is,
for every $i \in I$:
``respond when $K_i (\dck(\nolaterthan{0}(\tilde{e}))_i)$ holds for the
first time''.
\end{cor}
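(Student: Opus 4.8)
The plan is to verify the two requirements of \defnref{optimal-response-logic} for the stated response logic. Fix a solvable instance and a full-information protocol $P \in \TCR{\tilde{e},I,\delta}$, and work in the space $\points$ with $R \eqdef \RP$. Since $P$ is full-information and $P'$ differs from it only in its (state-irrelevant) response logic, the state partitions $\mathcal{S}_i$, and hence every $K_i$ and the operator $\dck$, coincide for $P$ and $P'$ under the natural ND-preserving isomorphism; thus $\psi_0 \eqdef \nolaterthan{0}(\tilde{e})$ and $\xi_i \eqdef K_i(\dck(\psi_0)_i)$ are well defined on both, and $\tuple{\xi}\eqdef(\xi_i)_{i \in I}$ is an ensemble by \remarkref{knowledge-is-local}. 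The bridge to the machinery of this chapter is the observation that, for any $Q \in \ER{\tilde{e},I}$, the tuple $\tuple{\ensemble}^Q$ of local events $\ensemble^Q_i \eqdef$ ``$i$ responds (right now)'' is an ensemble, and that $Q \in \TCR{\tilde{e},I,\delta}$ iff $\tuple{\ensemble}^Q$ is $\delta$\mbox{-}coordinated: since each agent responds exactly once per triggered run and not at all otherwise, $\delta$\mbox{-}coordination of $\tuple{\ensemble}^Q$ unwinds to exactly $\timpl_r(j) \le \timpl_r(i) + \delta(i,j)$ for all $(i,j) \in \distinctpairs{I}$ and all triggered $r$, i.e.\ $\timpl_r \in T(\delta)$.

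For optimality (the second requirement), I would first note that $\tuple{\ensemble}^P$ is $\delta$\mbox{-}coordinated by the above, and that $\cup\tuple{\ensemble}^P \subseteq \psi_0$ by \remarkref{response-not-before}. Applying the second part of \thmref{delta-coordination} to $\tuple{\ensemble}^P$ gives $\ensemble^P_i \subseteq K_i(\dck(\cup\tuple{\ensemble}^P)_i)$; then monotonicity of $\dck$ (\lemmaref{delta-common-knowledge}) and of $K_i$, together with $\cup\tuple{\ensemble}^P \subseteq \psi_0$, yield $\ensemble^P_i \subseteq K_i(\dck(\psi_0)_i) = \xi_i$. Hence in every run $\xi_i$ already holds at the time $i$ responds under $P$, so the first time $\xi_i$ holds is no later than $\timpl_r(i)$; as $P'$ makes $i$ respond precisely at that first time, $\timpl_{r'}(i) \le \timpl_r(i)$ for every $i$ and every matched pair of runs, which is the second requirement.

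For correctness (the first requirement, $P' \in \TCR{\tilde{e},I,\delta}$), I would establish $P' \in \ER{\tilde{e},I}$ and then $\delta$\mbox{-}coordination of its response ensemble $\tuple{\ensemble}'$. By the second part of \claimref{delta-coordination-characterisation}, $\cup\tuple{\xi} \subseteq \sometime(\psi_0) = \sometime(\tilde{e})$ (the equality by additivity, \remarkref{nolaterthan-properties}), whence the second part of \corref{ensembles-and-nd} gives $\cup\tuple{\xi} \subseteq \nolaterthan{0}(\tilde{e})$; so no $\xi_i$ holds before $\tilde{e}$ occurs, nor in any non-triggered run, and there $i$ never responds. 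In a triggered run the inclusion $\ensemble^P_i \subseteq \xi_i$ from the previous paragraph shows $\xi_i$ does hold at the finite time $\timpl_r(i)$, so $i$ responds exactly once, at the first such (finite) time. For $\delta$\mbox{-}coordination: by the first part of \thmref{delta-coordination} the ensemble $\tuple{\xi}$ is $\delta$\mbox{-}coordinated, so for $(r,t) \in \ensemble'_i \subseteq \xi_i$ there is $t' \le t + \delta(i,j)$ with $(r,t') \in \xi_j$; since $i$ responds at the \emph{first} time its $\xi$\mbox{-}event holds, $\timpl_{r'}(j) \le t' \le \timpl_{r'}(i) + \delta(i,j)$, i.e.\ $\timpl_{r'} \in T(\delta)$, giving $P' \in \TCR{\tilde{e},I,\delta}$.

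I expect the main obstacle to be the careful bookkeeping in the equivalence ``$\delta$\mbox{-}coordination of the response ensemble $\Longleftrightarrow \timpl_r \in T(\delta)$'', in particular the handling of $\delta(i,j)=\infty$ (where coordination only demands that $j$ respond at all) and of non-triggered runs (where the requirement is vacuous), together with the crucial use of the word \emph{first} in the response logic: it is exactly what lets $\delta$\mbox{-}coordination of the abstract ensemble $\tuple{\xi}$ descend to the concrete single-response ensemble $\tuple{\ensemble}'$. A secondary point worth checking is implementability --- that $i$ can detect from its (full-information) state the first time the local event $\xi_i$ holds, and that this first time is attained, which it is in the discrete-time model; note that, unlike \corref{path-traversing-centipede-optimality}, no shared clock is needed here, since locality of $\xi_i$ already licenses acting upon it.
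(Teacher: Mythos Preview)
Your proof is correct and follows essentially the same approach as the paper: you use \thmref{delta-coordination} (part~2 for optimality, part~1 for correctness), \claimref{delta-coordination-characterisation} (part~2) with additivity for $\cup\tuple{\xi}\subseteq\sometime(\tilde{e})$, and \corref{ensembles-and-nd}/\remarkref{response-not-before} for the $\nolaterthan{0}(\tilde{e})$ bound. The only differences are presentational --- you unpack the first part of \claimref{delta-coordination-characterisation} by hand rather than citing it, and you make the ``first time'' step in the correctness argument explicit where the paper leaves it implicit (note the small slip ``since $i$ responds'' should read ``since $j$ responds'').
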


\begin{proof}
Assume that $\TCR{\tilde{e},I,\delta}$ is solvable and let
$P \in \TCR{\tilde{e},I,\delta}$. W.l.o.g., $P$ is a full-information
protocol.\footnote{
A much weaker assumption regarding $P$ suffices as well, actually.
}
For every $i \in I$, define $\rensemble_i \in \protocolpointsets$ as
the set of all points at which $i$ responds according to $P$.
Since $P$ is a full-information protocol, the actions (and in particular,
the responses) of each agent $i \in I$ at each time $t \in \timeset$ may be
deduced from its state at $t$. Therefore, $\tuple{\rensemble}$ is an ensemble.
In addition, define an ensemble $\tuple{\ensemble} \in {\protocolpointsets}^I$ by
$\ensemble_i \eqdef K_i (\dck(\nolaterthan{0}(\tilde{e}))_i)$. (This is indeed an ensemble,
by \remarkref{knowledge-is-local}, and thus $i$ may indeed respond according to
it.)

We first prove the optimality of responding according to $\tuple{\ensemble}$.
Let $i \in I$. We have to show that
for every $r \in \TRP$, $i$ would respond in $r$, according to the
response logic defined above, no later than $\timpl_r(i)$ (the response time of
$i$ in $r$ according to $P$).
Formally, this amounts to showing that
$\rensemble_i \subseteq \nolaterthan{0}(\ensemble_i)$.
Since $P \in \TCR{\tilde{e},I,\delta}$, we have
$\cup\tuple{\rensemble} \subseteq \sometime(\tilde{e})$ and by the second part
of \corref{ensembles-and-nd},
$\cup\tuple{\rensemble} \subseteq \nolaterthan{0}(\tilde{e})$.
Thus, by the first part of \claimref{delta-coordination-characterisation},
$\tuple{\rensemble} \le \tuple{\ensemble}$. 
(This may seem like a slightly stronger statement than the required
$\forall i \in I: \rensemble_i \subseteq \nolaterthan{0}(\ensemble_i)$,
however, we will show in \corref{delta-common-knowledge-ensemble-stable}
that these are in fact equivalent for full-information protocols.)

We now prove that responding according to $\tuple{\ensemble}$ solves
$\TCR{\tilde{e},I,\delta}$. By the first part of
\thmref{delta-coordination}, $\tuple{\ensemble}$
is $\delta$\mbox{-}coordinated.
Let $i \in I$.
Correctness of $P$ implies that $\rensemble_i$
holds at some time along any $r \in \TRP$. Using the notation of this chapter,
this is formulated as $\sometime(\tilde{e}) \subseteq \sometime(\rensemble_i)$.
Therefore, by the first part of this proof and by monotonicity of $\sometime$,
we obtain $\sometime(\tilde{e}) \subseteq \sometime(\ensemble_i)$ as well.
To complete the proof,
we note that by the second part of
\claimref{delta-coordination-characterisation} and by
\remarkref{nolaterthan-properties} (additivity),
$\cup\tuple{\ensemble} \subseteq \sometime(\nolaterthan{0}(\tilde{e})) = \sometime(\tilde{e})$,
i.e.\ $\tilde{e}$ occurs at some time along every run during which
any coordinate of $\tuple{\ensemble}$ holds.\footnote{
By \corref{ensembles-and-nd}, we have
$\cup\tuple{\ensemble} \subseteq \nolaterthan{0}(\tilde{e})$ as well, explaining
our previous statement that \corref{ensembles-and-nd} proves
\remarkref{response-not-before}.
}
\end{proof}

\chapter{The Equivalence of Both Approaches}\label{chapter:equivalence}
Corollaries
\ref{cor:path-traversing-centipede-optimality} and
\ref{cor:delta-common-knowledge-optimality}
both present optimal response
logics for solving the timely-coordinated response problem presented in
\chapterref{tcr-exhibition}. An obvious consequence
is that the response logics defined in both corollaries must somehow be
equivalent, at least for full-information protocols, which are assumed
in our proofs of these corollaries,\footnote{
The attentive reader may also notice some similarity in
the way the second part of the proof of each of these corollaries makes use of
its first part.} and in a shared-clock model, which is assumed in
our proof of \corref{path-traversing-centipede-optimality}.
In the next chapter, we apply our results from the previous chapters
to obtain specialized, somewhat more practical, versions of these results
for some naturally-occurring models.
However, before starting to do so, we
prove the equivalence of Corollaries \ref{cor:path-traversing-centipede-optimality} and \ref{cor:delta-common-knowledge-optimality} in a somewhat more
constructive manner in this chapter,
which also sheds some more light on the fixed-point analysis of the previous
chapter, and makes
the notion of $\updelta$\mbox{-}common knowledge more concrete. Our aim is to prove the
following result, by searching for a more constructive (yet equivalent)
definition of
$\updelta$\mbox{-}common-knowledge, along the lines of the nested-knowledge definition
of common knowledge given in \defnref{common-knowledge}, rather than those of
its fixed-point definition given in \thmref{common-knowledge-fixed-point}.

\begin{thm}\label{thm:path-traversing-centipedes-iff-delta-common-knowledge}
In a shared-clock model, let $\TCRspec$ be a TCR-spec
s.t.\linebreak $\TCR{\tilde{e},I,\delta}$ is solvable.
The response logics for $\TCR{\tilde{e},I,\delta}$ defined in Corollaries
\ref{cor:path-traversing-centipede-optimality} and
\ref{cor:delta-common-knowledge-optimality}
are equivalent when applied to full-information protocols, i.e.\ these two
response logics yield the exact same responses at the same times in each run.
\end{thm}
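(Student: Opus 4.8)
I need to show that the two optimal response logics—the syncausal one from Corollary~\ref{cor:path-traversing-centipede-optimality} (``respond as soon as a $(\apath,\delta)$-traversing $\tilde{e}$-centipede exists for every path $\apath$ starting at $i$'') and the fixed-point one from Corollary~\ref{cor:delta-common-knowledge-optimality} (``respond as soon as $K_i(\dck(\nolaterthan{0}(\tilde{e}))_i)$ holds'')—fire at exactly the same times in each run of a full-information protocol in a shared-clock model. Since both are already known to be \emph{optimal} (by the cited corollaries), if I can show one logic's response time never exceeds the other's, equivalence follows: two optimal response logics must agree, because by the optimality clause of Definition~\ref{defn:optimal-response-logic}, each dominates the other coordinate-wise, forcing equality. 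So strictly speaking I could invoke optimality twice and be done—but the theorem promises a \emph{more constructive} argument, so the real content is to exhibit a direct combinatorial correspondence between the existence of path-traversing centipedes and the holding of $\delta$-common knowledge.

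First I would seek a ``nested-knowledge'' unfolding of $\dck(\nolaterthan{0}(\tilde{e}))$, paralleling the way Definition~\ref{defn:common-knowledge} unfolds $C_I$ as $\bigcap_n E_I^n$. The fixed point $\dck(\psi)$ is, by Tarski (as used in Lemma~\ref{lemma:delta-common-knowledge}), the limit of iterating $f_\psi^\delta$ downward from the top element $(\pointsets)_{i\in I}$. Iterating $f_\psi^\delta$ once contributes a factor $\nolaterthan{\delta(i,j)}(K_j(\cdot))$ for each neighbour $j$; iterating it along a path $p_1\to p_2\to\cdots\to p_n$ in $G_\delta$ stacks these operators, and by the additivity property of $\nolaterthan{\varepsilon}$ (Remark~\ref{remark:nolaterthan-properties}) the offsets telescope to $\nolaterthan{\dlength(\apath)}$. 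Thus I expect $\dck(\nolaterthan{0}(\tilde{e}))_i$ to equal an intersection, over all paths $\apath\in\dpaths$ starting at $p_1=i$, of nested expressions of the form $\nolaterthan{\delta(p_1,p_2)}K_{p_2}\nolaterthan{\delta(p_2,p_3)}K_{p_3}\cdots K_{p_n}(\nolaterthan{0}(\tilde{e}))$. The plan is to prove this path-indexed formula by a double inclusion: one direction by induction showing each such nested expression is a post-fixed point component and hence below the greatest fixed point, the other by showing the iterates of $f_\psi^\delta$ stabilize to exactly this intersection.

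The second, and I expect \emph{hardest}, step is the bridge: that in a full-information protocol, the nested-knowledge formula indexed by $\apath$ holds at $(r,t)$ if and only if a $(\apath,\delta)$-traversing $\tilde{e}$-centipede exists in $r$ by $t$. The ``$\nolaterthan{\varepsilon}K_{j}$'' layers should correspond precisely to the two clauses in Definition~\ref{defn:syncausality} of the path-traversing centipede: each $K_{p_m}$ applied at an offset of $\dlength((p_k)_{k=1}^m)$ encodes that the event $e_m$ can be \emph{deduced} from the state of $p_m$ at the corresponding end-node time, which in a full-information protocol is exactly the condition $e_m \boundguarantee (p_m, t+\dlength((p_k)_{k=1}^m))$; meanwhile the syncausal chain $\tilde{e}\syncausal{r}e_n\syncausal{r}\cdots\syncausal{r}e_1$ is what lets each successive inner knowledge fact propagate outward. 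I would establish this correspondence inductively on $n$, leaning on Corollary~\ref{cor:adjusted-lemma-3} and Claim~\ref{claim:no-nd-between-tprime-t} to translate ``$i$ knows $\psi$'' into ``$\psi$ follows from $\PND{r}(i,t)$,'' much as in the base case of the proof of Theorem~\ref{thm:path-traversing-centipede}; the shared clock is what guarantees the times in the two formulations are synchronized rather than merely order-isomorphic. The main obstacle is making this if-and-only-if airtight at the level of individual ND events: a centipede records specific witnessing events $e_m$, whereas the knowledge formula is existentially quantified over such witnesses, so I must verify that the latest ND event along each relevant syncausal path furnishes exactly the witness the nested formula asserts, and conversely that any witness of the formula yields a legitimate bound-guarantee end node. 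Once both steps are in place, taking the intersection over all paths $\apath$ starting at $i$ shows $K_i(\dck(\nolaterthan{0}(\tilde{e}))_i)$ holds at $(r,t)$ iff all required path-traversing centipedes exist by $t$, giving the claimed equivalence of the two response logics run-by-run and time-by-time.
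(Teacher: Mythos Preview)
Your opening observation---that invoking optimality of both response logics (Definition~\ref{defn:optimal-response-logic}) twice already forces them to agree---is correct and is a perfectly valid non-constructive proof. The constructive part of your plan, however, has a genuine gap.

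You propose to obtain the nested-knowledge description of $\dck(\nolaterthan{0}(\tilde{e}))_i$ by iterating $f_\psi^\delta$ downward from the top and arguing the iterates stabilize at the greatest fixed point. This would require $f_\psi^\delta$ to be downward-continuous (Kleene), or at least that the intersection of the iterates be a post-fixed point. But $\nolaterthan{\varepsilon}$ only satisfies $\nolaterthan{\varepsilon}(\cap\Psi)\subseteq\bigcap\nolaterthan{\varepsilon}(\psi)$, not equality (Remark~\ref{remark:nolaterthan-properties}); so $f_\psi^\delta$ does \emph{not} commute with meets and is not downward-continuous. Concretely, when you try to verify that your path-indexed intersection $\Phi_i$ is a post-fixed point, you need $\Phi_i\subseteq\nolaterthan{\delta(i,j)}\!\bigl(K_j(\Phi_j)\bigr)$; unfolding the right-hand side requires pushing $\nolaterthan{\delta(i,j)}$ \emph{inside} an infinite intersection, which is exactly the direction the inclusion does not give you. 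So the ``$\Phi\le f_\psi^\delta(\Phi)$'' half of your double inclusion does not go through.

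The paper's fix is precisely to sidestep this: it introduces an auxiliary operator $g_\psi^\delta$ built from $\atexactly{\delta(i,j)}$ rather than $\nolaterthan{\delta(i,j)}$; since $\atexactly{\varepsilon}$ \emph{does} commute with arbitrary intersections, $g_\psi^\delta$ is downward-continuous and Kleene applies. The real work is then Lemma~\ref{lemma:g-eq-f}, which shows that for stable $\psi$ and under perfect recall (both available here), the greatest fixed points of $f_\psi^\delta$ and $g_\psi^\delta$ coincide. Only after this switch does the nested-knowledge unfolding---with $\atexactly{}$ operators, not $\nolaterthan{}$---become valid; equation~\eqref{eq:delta-common-knowledge-nested} is the result. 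Your ``bridge'' step (nested knowledge $\leftrightarrow$ path-traversing centipede) is then the comparatively routine part, handled by the Ben-Zvi--Moses translation between timestamped nested knowledge and syncausal structures in full-information shared-clock protocols; your instinct that this is the hardest step inverts where the difficulty actually lies.
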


In order to prove
\thmref{path-traversing-centipedes-iff-delta-common-knowledge}, we
perform an analysis of $\updelta$\mbox{-}common knowledge of events of the form
$\nolaterthan{0}(\tilde{e})$ in full-information protocols.
To make our analysis somewhat cleaner and more generic,
we first aim to formally capture the properties of such
protocols and of events of the form $\nolaterthan{0}(\tilde{e})$, which are of
interest to us.

\section{Background}

In this section, we review two definitions and some basic properties thereof,
from \cite[Chapter~4]{book}. We rephrase these to match the notation we have
introduced so far.

\begin{defn}[Stability]\label{defn:stability}
Let $\gamma$ be a context and let $R \subseteq \runs$.
An event $\psi \in \pointsets$ is said to be ``stable'' if
once $\psi$ holds at some time during a run,
it continues to hold for the duration of that run. Formally, using our notation,
$\psi$ is stable iff $\psi=\nolaterthan{0}(\psi)$.
\end{defn}

\begin{remark}\label{remark:stability-properties}
By \defnref{stability}:
\begin{itemize}
\item
By \remarkref{nolaterthan-properties} (additivity),
$\nolaterthan{0}$ is idempotent. Thus, $\nolaterthan{0}(\psi)$ is a stable event for every $\psi \in \pointsets$.
\item
$\psi \cap \phi$ is a stable event for every two stable events $\psi,\phi \in \pointsets$.
\end{itemize}
\end{remark}

Indeed, the property of $\nolaterthan{0}(\tilde{e})$ that we utilize in this
chapter is its stability. We now present the second definition based upon
\cite[Chapter~4]{book}, which we utilize in this chapter.

\begin{defn}[Perfect Recall]
Let $\gamma$ be a context.
A set of runs $R \subseteq \runs$ is said to exhibit ``perfect recall'' if for
every $r \in R$, for every $i \in \agents$ and for every $t,t' \in \timeset$
s.t.\ $t'\le t$, the state of $i$ at $t$ in $r$ uniquely determines the state
of $i$ at $t'$ in $r$.
\end{defn}

\begin{remark}\label{remark:full-information-then-perfect-recall}
$\RP$ exhibits perfect recall for every full-information protocol\linebreak
$P \in \protocols$.
\end{remark}

We now distill the property of full-information protocols that is of interest
to us, namely that in sets of runs that exhibit perfect recall (and thus,
by \remarkref{full-information-then-perfect-recall}, also in
full-information protocols), knowledge of a stable event is itself stable.
The following is given in \cite[Exercise~4.18(b)]{book}, and its proof
follows directly from the definitions of stability and of knowledge.

\begin{claim}\label{claim:perfect-recall-property}
Let $\gamma$ be a context, let $R \subseteq \runs$ be a set of runs
exhibiting perfect recall and let $\psi \in \pointsets$.
If $\psi$ is stable, then $K_i(\psi)$ is stable as well,
for every $i \in \agents$.
\end{claim}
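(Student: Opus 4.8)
The plan is to unfold what stability of $K_i(\psi)$ means and reduce it to pulling knowledge back along time using perfect recall. Since $\psi \subseteq \nolaterthan{0}(\psi)$ always holds (take $t'=t$ in the definition of $\nolaterthan{0}$), and the same trivially holds for $K_i(\psi)$, stability of $K_i(\psi)$ amounts to the reverse inclusion $\nolaterthan{0}(K_i(\psi)) \subseteq K_i(\psi)$. Concretely, I would fix times $t' \le t$ and a run $r \in R$ with $(r,t') \in K_i(\psi)$, and aim to show $(r,t) \in K_i(\psi)$; by the definition of $K_i$ this means establishing $\mathcal{S}_i(r,t) \subseteq \psi$.

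To this end I would take an arbitrary point $(s,v) \in \mathcal{S}_i(r,t)$ (so $i$ has the same local state at $(s,v)$ as at $(r,t)$, and in a shared-clock model additionally $v=t$) and prove $(s,v) \in \psi$. The heart of the argument is to exhibit a time $v' \le v$ for which $(s,v') \in \mathcal{S}_i(r,t')$. Granting this, the conclusion is immediate: from $(r,t') \in K_i(\psi)$ we get $\mathcal{S}_i(r,t') \subseteq \psi$, hence $(s,v') \in \psi$, and stability of $\psi$ together with $v' \le v$ yields $(s,v) \in \psi$, as desired. Since $(s,v)$ ranged over all of $\mathcal{S}_i(r,t)$, this gives $\mathcal{S}_i(r,t) \subseteq \psi$ and closes the reduction.

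The main obstacle---and the only place perfect recall enters---is producing the earlier matching point $(s,v')$. In the shared-clock model this is cleanest: perfect recall says $i$'s state at $v=t$ in $s$ determines its state at $t'$ in $s$, and likewise $i$'s state at $t$ in $r$ determines its state at $t'$ in $r$; since $i$'s states at time $t$ agree across $r$ and $s$ and this backward determination depends only on the local state rather than on the ambient run, the states at $t'$ agree as well, so $v' = t'$ works. In the general model I would instead argue that $i$'s local state at $(r,t)$, being identical to the one at $(s,v)$, encodes the same local history, so the local state that $r$ exhibits at $t'$ must also be exhibited by $s$ at some $v' \le v$; this run-independence of the ``current state determines past states'' map is exactly the content of perfect recall, and is where the care must go. Once this matching step is secured, the remainder is the routine chain through the truth of $(r,t') \in K_i(\psi)$ and the stability of $\psi$ sketched above.
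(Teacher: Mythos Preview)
Your argument is correct and is precisely the direct unwinding of the definitions that the paper alludes to (the paper does not spell out a proof, noting only that it ``follows directly from the definitions of stability and of knowledge''). Your identification of the shared-clock case as clean and the general case as the place requiring the ``state determines history'' content of perfect recall is exactly right.
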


\section{A Constructive Proof}

Returning to our results and working toward proving
\thmref{path-traversing-centipedes-iff-delta-common-knowledge},
we first derive a stability property for
$\updelta$\mbox{-}common knowledge.

\begin{claim}\label{claim:delta-common-knowledge-stable}
Let $\dcspec$ be a $\updelta$\mbox{-}coordination-spec.
For every $\psi \in \pointsets$,
all coordinates of $\dck(\psi)$ are stable.
\end{claim}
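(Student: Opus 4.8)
The plan is to isolate the one structural feature of $f_{\psi}^{\delta}$ that forces stability, namely that in every coordinate it outputs an intersection of events each of the form $\nolaterthan{\varepsilon}(\cdot)$. So the first step is a lemma untied to any fixed point: for every $\varepsilon \in \Delta$ and every event $\phi \in \pointsets$, the event $\nolaterthan{\varepsilon}(\phi)$ is stable. For $\varepsilon \ne -\infty$ this is immediate from \remarkref{nolaterthan-properties} (additivity) applied with $\varepsilon_1 = 0$ and $\varepsilon_2 = \varepsilon$, which gives $\nolaterthan{0}(\nolaterthan{\varepsilon}(\phi)) = \nolaterthan{\varepsilon}(\phi)$ --- precisely the defining equation of stability from \defnref{stability}. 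For $\varepsilon = -\infty$, \remarkref{nolaterthan-properties} gives $\nolaterthan{-\infty}(\phi) = \emptyset$, which is vacuously stable.

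The second step is to record that stability is preserved under \emph{arbitrary} intersections, not merely binary ones: if $(r,t) \in \bigcap_k \psi_k$ with each $\psi_k$ stable, then for each $k$ we have $(r,t') \in \psi_k$ for all $t' \ge t$, whence $(r,t') \in \bigcap_k \psi_k$. This mildly strengthens \remarkref{stability-properties}, and is needed because the intersection defining $f_{\psi}^{\delta}$ ranges over $I \setminus \{i\}$, which may be infinite.

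Combining these, for any tuple $\tuple{x} \in {\pointsets}^I$ and any $i \in I$, the coordinate
\[
f_{\psi}^{\delta}(\tuple{x})_i = \bigcap_{j \in I\setminus\{i\}} \nolaterthan{\delta(i,j)}(K_j(\psi \cap x_j))
\]
is an intersection of stable events, hence itself stable. The final step is simply to invoke that $\dck(\psi)$ is a fixed point of $f_{\psi}^{\delta}$ (\lemmaref{delta-common-knowledge}), so that $\dck(\psi)_i = f_{\psi}^{\delta}(\dck(\psi))_i$ for each $i \in I$; the right-hand side is stable by the above, and therefore so is each coordinate of $\dck(\psi)$. I do not anticipate a genuine obstacle: the argument rests only on the shape of $f_{\psi}^{\delta}$ together with additivity of $\nolaterthan{\cdot}$, and it never uses the maximality of the fixed point --- in fact any value in the image of $f_{\psi}^{\delta}$ has stable coordinates for exactly the same reason, so the conclusion would hold for every fixed point, not just the greatest one.
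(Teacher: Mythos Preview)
Your proof is correct and uses essentially the same ingredients as the paper's: additivity $\nolaterthan{0}\circ\nolaterthan{\varepsilon}=\nolaterthan{\varepsilon}$ together with the fact that $\nolaterthan{0}$ distributes (as an inclusion) across the intersection defining $f_{\psi}^{\delta}(\cdot)_i$, applied to the fixed-point equation. The paper carries this out as a single chain of inclusions $\nolaterthan{0}(\dck(\psi)_i)\subseteq\dck(\psi)_i$, whereas you factor it into the reusable lemmas ``$\nolaterthan{\varepsilon}(\phi)$ is always stable'' and ``arbitrary intersections of stable events are stable''; your explicit handling of $\varepsilon=-\infty$ and your observation that every value in the image of $f_{\psi}^{\delta}$ has stable coordinates (so maximality plays no role) are nice touches not spelled out in the paper, but the underlying argument is the same.
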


\begin{proof}
Let $i \in I$. By \defnref{nolaterthan}, it is enough to show that
$\nolaterthan{0}(\dck(\psi)_i) \subseteq \dck(\psi)_i$.
Indeed, we have
\begin{align*}
&\ \:\:\nolaterthan{0}(\dck(\psi)_i)=
& \text{by definition of $\dck$} \\
=&\ \:\:\nolaterthan{0}\left(\smashoperator[r]{\bigcap_{j \in I\setminus\{i\}}}\,\nolaterthan{\delta(i,j)}(
K_j(\psi \cap \dck(\psi)_j))\right)
\subseteq 
& \text{by \remarkref{nolaterthan-properties}} \\
\subseteq&
\smashoperator[r]{\bigcap_{j \in I\setminus\{i\}}}\,\nolaterthan{0}(\nolaterthan{\delta(i,j)}(
K_j(\psi \cap \dck(\psi)_j)))=
& \text{by \remarkref{nolaterthan-properties} (additivity)} \\
=&
\smashoperator[r]{\bigcap_{j \in I\setminus\{i\}}}\,\nolaterthan{\delta(i,j)}(
K_j(\psi \cap \dck(\psi)_j))=
& \text{by definition of $\dck$} \\
=&\ \ \: \dck(\psi)_i.&\qedhere
\end{align*}
\end{proof}

It should be noted that stability of $\updelta$\mbox{-}common knowledge, as guaranteed
by \claimref{delta-common-knowledge-stable}, does not generally guarantee
stability of the ensemble defined by it
in the first part of \thmref{delta-coordination}. Nonetheless, combining
Claims \ref{claim:perfect-recall-property} and
\ref{claim:delta-common-knowledge-stable}, we obtain stability of this
ensemble in the presence of perfect recall.

\begin{cor}\label{cor:delta-common-knowledge-ensemble-stable}
Let $\dcspec$ be a $\updelta$\mbox{-}coordination-spec s.t.\ $R$ exhibits perfect
recall.
For every $\psi \in \pointsets$, all coordinates of the ensemble
$(K_i(\dck(\psi)_i))_{i \in I}$ are stable.
\end{cor}

Claims \ref{claim:perfect-recall-property} and \ref{claim:delta-common-knowledge-stable}
and the proof of \corref{delta-common-knowledge-optimality}
lead us to consider, for stable $\psi$ and given perfect recall, a slightly
different definition for $f_{\psi}^{\delta}$ than the one given in
\defnref{f-psi}. In order to phrase this definition,
we first define, given an event $\psi \in \pointsets$, notation standing for the
event ``$\psi$ holds at exactly $\varepsilon$ time units from now''.

\enlargethispage{1.5em} % Fit footnote on this page
\begin{defn}\label{defn:atexactly}
Let $\gamma$ be a context, let $R \subseteq \runs$ and let
$\varepsilon \in \Delta \setminus \{-\infty,\infty\}$.
We define
\vspace*{-1em} % Avoid excessive spacing
\functiondefn{\atexactly{\varepsilon}}{\pointsets}{\pointsets}{\psi}
{\{(r,t) \in \points \mid (r,t+\varepsilon) \in \psi\}.\footnotemark}
\vspace*{-1.5em} % Fit footnote on this page
\footnotetext{
Once again, we use the symbol $\mncircledcirc$ instead of the standard
temporal logic notation
$\mnmedcircle$, in order to emphasize that $\varepsilon$ may be nonpositive.
}
\end{defn}

\begin{remark}\label{remark:atexactly-properties} By \defnref{atexactly}, for every event $\psi \in \pointsets$ we have:
\begin{itemize}
\item
$\atexactly{\varepsilon_1}(\nolaterthan{\varepsilon_2}(\psi))=
\nolaterthan{\varepsilon_1}(\atexactly{\varepsilon_2}(\psi))=
\nolaterthan{\varepsilon_1+\varepsilon_2}(\psi)$, for every $\varepsilon_1,\varepsilon_2 \in \Delta \setminus \{-\infty,\infty\}$.
\item
$\atexactly{\varepsilon}(\psi) \subseteq \nolaterthan{\varepsilon}(\psi)$, for every $\varepsilon \in \Delta \setminus \{-\infty,\infty\}$.
\item $\atexactly{\varepsilon}$ commutes with intersection for every
$\varepsilon \in \Delta \setminus \{-\infty,\infty\}$:
$\atexactly{\varepsilon}(\cap\Psi) = \bigcap\{\atexactly{\varepsilon}(\psi) \mid \psi \in \Psi\}$ for every set of events $\Psi \subseteq \pointsets$.
\end{itemize}
\end{remark}

We now present our slightly modified definition of $f_{\psi}^{\delta}$,
which differs from the definition of $f_{\psi}^{\delta}$ given in
\defnref{f-psi} by the use of $\atexactly{\delta(i,j)}$
instead of $\nolaterthan{\delta(i,j)}$, and by intersecting with
$\sometime(\psi)$ instead of intersecting over eventual knowledge
requirements.\footnote{
The intersection with $\sometime(\psi)$ has any effect only if the intersection
following it is empty.
}

\begin{defn}
Let $\dcspec$ be a $\updelta$\mbox{-}coordination-spec s.t.\ $\delta > -\infty$.
For each $\psi \in \pointsets$, we define
\functiondefn{g_{\psi}^{\delta}}{{\pointsets}^I}{{\pointsets}^I}{(x_i)_{i \in I}}
{\left(\sometime(\psi) \cap\: \smashoperator{\bigcap_{\substack{j \in I\setminus\{i\}\\\delta(i,j)\ne\infty}}}\, \atexactly{\delta(i,j)}(K_j(\psi \cap x_j))\right)_{i \in I},}
and denote its greatest fixed point by $\dckg(\psi)$.
\end{defn}

Using an argument completely analogous
to the proof of \lemmaref{delta-common-knowledge},
it may be shown that
$\dckg(\psi)$ is well defined.
Furthermore, the same argument shows that $\dckg(\psi)$ also
satisfies the obvious analogues of the second and third parts of
\lemmaref{delta-common-knowledge}, with regard to $g_{\psi}^{\delta}$.

We now present a key observation, which stands at the heart of our proof of
\thmref{path-traversing-centipedes-iff-delta-common-knowledge}.
While, even in full-information protocols and when $\psi$ is stable,
$g_{\psi}^{\delta} \ne f_{\psi}^{\delta}$ (e.g.\ when applied to certain
unstable events), it so happens that under certain conditions,
the greatest fixed points of both of these functions coincide.

\begin{lemma}\label{lemma:g-eq-f}
Let $\dcspec$ be a $\updelta$\mbox{-}coordination-spec s.t.\ $R$ exhibits perfect
recall and s.t.\ $\delta>-\infty$,
and let $\psi \in \pointsets$.
If $\psi$ is stable, and if
$\sometime(\psi) \subseteq \sometime(\dck(\psi)_i)$ for
every $i \in I$,
then $\dckg(\psi)=\dck(\psi)$.
\end{lemma}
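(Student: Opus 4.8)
The plan is to prove the two greatest fixed points coincide by establishing the inequalities $\dck(\psi)\le\dckg(\psi)$ and $\dckg(\psi)\le\dck(\psi)$ separately, in each case exhibiting one tuple as a post-fixed-point of the \emph{other} function and invoking the second-part characterisation of the greatest fixed point (\lemmaref{delta-common-knowledge} for $f_{\psi}^{\delta}$, and its stated $g_{\psi}^{\delta}$-analogue). Write $\tuple{\varphi}\eqdef\dck(\psi)$ and $\tuple{\chi}\eqdef\dckg(\psi)$. The single reusable observation I would isolate first is that for a \emph{stable} event $\phi$ and any finite $\varepsilon$ one has $\nolaterthan{\varepsilon}(\phi)=\atexactly{\varepsilon}(\phi)$: the inclusion $\atexactly{\varepsilon}(\phi)\subseteq\nolaterthan{\varepsilon}(\phi)$ is \remarkref{atexactly-properties}, while conversely, if $\phi$ holds at some $t'\in\timeset$ with $t'\le t+\varepsilon$, then $0\le t'\le t+\varepsilon$, so $t+\varepsilon$ is a legal time and stability propagates $\phi$ from $t'$ forward to $t+\varepsilon$. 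I would then record that each $K_j(\psi\cap\varphi_j)$ is stable: $\psi$ is stable by hypothesis, $\varphi_j$ is stable by \claimref{delta-common-knowledge-stable}, so $\psi\cap\varphi_j$ is stable by \remarkref{stability-properties}, and $K_j$ of a stable event is stable by \claimref{perfect-recall-property} (this is where perfect recall enters).

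For $\dck(\psi)\le\dckg(\psi)$ I would show $\tuple{\varphi}\le g_{\psi}^{\delta}(\tuple{\varphi})$ coordinatewise. Fix $i$. First, $\varphi_i\subseteq\sometime(\psi)$ — exactly the inclusion established inside the proof of \claimref{delta-coordination-characterisation} (part 2) — which handles the $\sometime(\psi)$ factor of $g_{\psi}^{\delta}$. Next, since $\tuple{\varphi}$ is an $f_{\psi}^{\delta}$-fixed point, $\varphi_i\subseteq\nolaterthan{\delta(i,j)}(K_j(\psi\cap\varphi_j))$ for every $j\ne i$; for the factors with $\delta(i,j)\ne\infty$ the stability identity rewrites $\nolaterthan{\delta(i,j)}$ as $\atexactly{\delta(i,j)}$, giving $\varphi_i\subseteq\atexactly{\delta(i,j)}(K_j(\psi\cap\varphi_j))$. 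Together these yield $\tuple{\varphi}\le g_{\psi}^{\delta}(\tuple{\varphi})$, hence $\tuple{\varphi}\le\tuple{\chi}$ by the $g_{\psi}^{\delta}$-analogue of \lemmaref{delta-common-knowledge}.

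For $\dckg(\psi)\le\dck(\psi)$ I would show $\tuple{\chi}\le f_{\psi}^{\delta}(\tuple{\chi})$, i.e.\ $\chi_i\subseteq\nolaterthan{\delta(i,j)}(K_j(\psi\cap\chi_j))$ for all $j\ne i$. When $\delta(i,j)\ne\infty$ this is immediate from the $g_{\psi}^{\delta}$-fixed-point equation together with $\atexactly{\delta(i,j)}\subseteq\nolaterthan{\delta(i,j)}$. The case $\delta(i,j)=\infty$ is the main obstacle, and it is the only place where the hypothesis $\sometime(\psi)\subseteq\sometime(\dck(\psi)_i)$ is needed: there the target factor is $\nolaterthan{\infty}(K_j(\psi\cap\chi_j))=\sometime(K_j(\psi\cap\chi_j))$, yet $g_{\psi}^{\delta}$ carries no information linking $i$ to $j$. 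I would bridge the gap through $\psi$ and the already-proved inclusion $\tuple{\varphi}\le\tuple{\chi}$, via the chain $\chi_i\subseteq\sometime(\psi)\subseteq\sometime(\varphi_i)\subseteq\sometime\bigl(\sometime(K_j(\psi\cap\varphi_j))\bigr)=\sometime(K_j(\psi\cap\varphi_j))\subseteq\sometime(K_j(\psi\cap\chi_j))$, where the first inclusion is the $\sometime(\psi)$-factor of $g_{\psi}^{\delta}$, the second is the hypothesis, the third uses that the $j$-factor of the $f_{\psi}^{\delta}$-fixed-point equation for $\varphi_i$ is $\sometime(K_j(\psi\cap\varphi_j))$, the equality is idempotence of $\sometime$, and the last is monotonicity of $K_j$ and $\sometime$ applied to $\varphi_j\subseteq\chi_j$. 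This gives $\tuple{\chi}\le f_{\psi}^{\delta}(\tuple{\chi})$, so $\tuple{\chi}\le\tuple{\varphi}$ by \lemmaref{delta-common-knowledge}, and combined with the previous paragraph, $\dckg(\psi)=\dck(\psi)$.

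I expect the $\delta(i,j)=\infty$ factors to be the crux: that is precisely where the two functions genuinely differ (one discards these factors while the other retains an $\sometime$-factor), and where the hypothesis relating $\sometime(\psi)$ to $\sometime(\dck(\psi)_i)$ must be spent. Everything else is routine lattice and fixed-point bookkeeping, once the stability identity $\nolaterthan{\varepsilon}=\atexactly{\varepsilon}$ on stable events and the stability of the relevant knowledge events are in hand.
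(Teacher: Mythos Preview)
Your proof is correct and follows essentially the same route as the paper's: establish each greatest fixed point as a post-fixed-point of the other function, converting between $\nolaterthan{\delta(i,j)}$ and $\atexactly{\delta(i,j)}$ on the finite-$\delta$ factors via stability and perfect recall, and handling the $\delta(i,j)=\infty$ factors via the hypothesis $\sometime(\psi)\subseteq\sometime(\dck(\psi)_i)$ together with the already-proved inclusion $\dck(\psi)\le\dckg(\psi)$. Your isolation of the identity $\nolaterthan{\varepsilon}(\phi)=\atexactly{\varepsilon}(\phi)$ for stable $\phi$ and your factor-by-factor treatment of the $\delta(i,j)=\infty$ case are slightly cleaner than the paper's single long intersection chain, but the underlying argument is the same.
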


\begin{proof}[Proof.\footnotemark]\footnotetext{
It should be noted that we could have saved ourselves some hardship in this
proof by replacing $\cap_{j \in I\setminus\{j\}}\ldots$ with
$\sometime(\psi) \cap (\cap_{j \in I,\delta(i,j)<\infty}\ldots)$ when
defining $f_{\psi}^{\delta}$, which would still have allowed us to obtain
\corref{delta-common-knowledge-optimality}. While this is indeed true,
in this case many of our results regarding $\updelta$\mbox{-}common knowledge would
have required the additional assumption that
$\sometime(\psi) \subseteq \sometime(\dck(\psi)_i)$, reducing from their
generality and usefulness.
The added strength of the approach we have chosen presents itself
both in \corref{bounded-syncausal-path-or-zeno}, and while
discussing eventual common knowledge in \chapterref{previous}.
}
$\ge$:
Let $i \in I$.
By \claimref{delta-common-knowledge-stable}, $\dck(\psi)_j$ is stable
for every $j \in I$. Since $\psi$ is stable as well,
\remarkref{stability-properties} yields that
$\psi \cap \dck(\psi)_j$ is stable for every $j \in I$.
We also note that for every $j \in I \setminus \{i\}$, by the truth axiom
for knowledge and by
\remarkref{nolaterthan-properties} (monotonicity), we have
\begin{equation}\label{g-eq-f-internal1}
\nolaterthan{\delta(i,j)}(K_j(\psi \cap \dck(\psi)_j)) \subseteq
\sometime(\psi).
\end{equation}
Thus, we obtain
\begin{align*}
&\ \ \: \dck(\psi)_i = & \text{by definition of $\dck$} \\
=& \smashoperator[r]{\bigcap_{j \in I\setminus\{i\}}}\,\nolaterthan{\delta(i,j)}(
K_j(\psi \cap \dck(\psi)_j))=
& \text{by \eqref{g-eq-f-internal1}} \\
=& \ \:\sometime(\psi) \cap
\:\smashoperator{\bigcap_{j \in I\setminus\{i\}}}\,\nolaterthan{\delta(i,j)}(
K_j(\psi \cap \dck(\psi)_j))\subseteq
& \text{intersecting over fewer events} \\
\subseteq& \ \:\sometime(\psi) \cap
\:\smashoperator{\bigcap_{\substack{j \in I\setminus\{i\} \\ \delta(i,j)<\infty}}}\,\nolaterthan{\delta(i,j)}(
K_j(\psi \cap \dck(\psi)_j))=
& \text{by \remarkref{atexactly-properties}} \\
=&\ \:\sometime(\psi) \cap
\:\smashoperator{\bigcap_{\substack{j \in I\setminus\{i\} \\ \delta(i,j)<\infty}}}\,\atexactly{\delta(i,j)}(\nolaterthan{0}(
K_j(\psi \cap \dck(\psi)_j)))=
&
\smash{\begin{aligned}
\text{by \claimref{perfect-recall-property} and by} \\
\text{stability of $\psi \cap \dck(\psi)_j$}
\end{aligned}} \\
=&\ \:\sometime(\psi) \cap
\:\smashoperator{\bigcap_{\substack{j \in I\setminus\{i\} \\ \delta(i,j)<\infty}}}\,\atexactly{\delta(i,j)}(
K_j(\psi \cap \dck(\psi)_j))=
& \text{by definition of $g_{\psi}^{\delta}$} \\
=&\ \ \: g_{\psi}^{\delta}(\dck(\psi))_i.
\end{align*}
Thus, by the analogue of the second part of \lemmaref{delta-common-knowledge}
for $g_{\psi}^{\delta}$, we obtain $\dck(\psi)\le\dckg(\psi)$,
as required.

$\le$: For every $i \in I$, we have
\begin{align*}
&\:\dckg(\psi)_i=& \text{by definition of $\dckg$} \\
=&
\:\sometime(\psi) \cap \:\smashoperator{\bigcap_{\substack{j \in I\setminus\{i\}\\ \delta(i,j) \ne \infty}}}\,\atexactly{\delta(i,j)}(K_j(\psi \cap \dckg(\psi)_j))\subseteq & \text{by \remarkref{atexactly-properties}} \\
\subseteq&\:
\sometime(\psi) \cap \:\smashoperator{\bigcap_{\substack{j \in I\setminus\{i\}\\ \delta(i,j) \ne \infty}}}\,\nolaterthan{\delta(i,j)}(K_j(\psi \cap \dckg(\psi)_j)) \subseteq &
\text{as $\sometime(\psi) \subseteq \sometime(\dck(\psi)_i)$} \\[.25em]
\subseteq&\:
\sometime\left(\smashoperator[r]{\bigcap_{j \in I\setminus\{i\}}}\,\nolaterthan{\delta(i,j)}(K_j(\psi \cap \dck(\psi)_j))\right) \cap \\*
&\qquad\qquad \cap\left(\smashoperator[r]{\bigcap_{\substack{j \in I\setminus\{i\}\\ \delta(i,j) \ne \infty}}}\,\nolaterthan{\delta(i,j)}(K_j(\psi \cap \dckg(\psi)_j))\right) \subseteq  & \text{by monotonicity of $\sometime$} \\[.25em]
\subseteq&\:
\sometime\left(\smashoperator[r]{\bigcap_{\substack{j \in I\setminus\{i\} \\ \delta(i,j)=\infty}}}\,\nolaterthan{\delta(i,j)}(K_j(\psi \cap \dck(\psi)_j))\right) \cap \\*
&\qquad\qquad \cap\left(\smashoperator[r]{\bigcap_{\substack{j \in I\setminus\{i\}\\ \delta(i,j) \ne \infty}}}\,\nolaterthan{\delta(i,j)}(K_j(\psi \cap \dckg(\psi)_j))\right) \subseteq  & \text{by \remarkref{nolaterthan-properties}} \\[.25em]
\subseteq&\:
\left(\smashoperator[r]{\bigcap_{\substack{j \in I\setminus\{i\} \\ \delta(i,j)=\infty}}}\,\nolaterthan{\delta(i,j)}(K_j(\psi \cap \dck(\psi)_j))\right) \cap
\\*
&\qquad\qquad \cap\left(\smashoperator[r]{\bigcap_{\substack{j \in I\setminus\{i\}\\ \delta(i,j) \ne \infty}}}\,\nolaterthan{\delta(i,j)}(K_j(\psi \cap \dckg(\psi)_j))\right) \subseteq & \begin{aligned}\text{by monotonicity of $\nolaterthan{i,j}$} \\ \text{and of $K_j$, and by the} \\ \text{first part of this proof}\end{aligned} \\[.25em]
\subseteq&\:
\left(\smashoperator[r]{\bigcap_{\substack{j \in I\setminus\{i\} \\ \delta(i,j)=\infty}}}\,\nolaterthan{\delta(i,j)}(K_j(\psi \cap \dckg(\psi)_j))\right) \cap \\*
&\qquad\qquad \cap\left(\smashoperator[r]{\bigcap_{\substack{j \in I\setminus\{i\}\\ \delta(i,j) \ne \infty}}}\,\nolaterthan{\delta(i,j)}(K_j(\psi \cap \dckg(\psi)_j))\right) = \\[.5em]
=&
\smashoperator[r]{\bigcap_{j \in I\setminus\{i\}}}\,\nolaterthan{\delta(i,j)}(K_j(\psi \cap \dckg(\psi)_j))= &\text{by definition of $f_{\psi}^{\delta}$} \\[.25em]
=&\:f_{\psi}^{\delta}(\dckg(\psi))_i.
\end{align*}

\noindent Thus, by the second part of \lemmaref{delta-common-knowledge}, we have
$\dckg(\psi) \le \dck(\psi)$.
\end{proof}

One may wonder why we have worked so hard to obtain $\delta$\mbox{-}common knowledge,
under the conditions of this chapter, as a fixed point of $g_{\psi}^{\delta}$
rather than
of $f_{\psi}^{\delta}$. The answer is simple: $g_{\psi}^{\delta}$ commutes with
the meet operation, while $f_{\psi}^{\delta}$ does not. (Moreover, as a result,
$g_{\psi}^{\delta}$ is downward-continuous while $f_{\psi}^{\delta}$, even
in a discrete-time model, is not.)
This fact paves our way toward proving
\thmref{path-traversing-centipedes-iff-delta-common-knowledge}.

\begin{proof}[Proof of \thmref{path-traversing-centipedes-iff-delta-common-knowledge}]
Let $\TCRspec$ be a TCR-spec
s.t.\ $\TCR{\tilde{e},I,\delta}$ is solvable, and let $P \in \protocols$
be a full-information protocol. In this proof, we work in $\protocolpoints$.
Note that by \remarkref{stability-properties},
$\nolaterthan{0}(\tilde{e})$ is stable.

By \remarkref{implementation-properties},
solvability of $\TCR{\tilde{e},I,\delta}$ implies
$\delta>-\infty$. Furthermore, as shown in the proof of
\corref{delta-common-knowledge-optimality},
solvability of $\TCR{\tilde{e},I,\delta}$ implies
$\sometime(\tilde{e}) \subseteq
\sometime(K_i(\dck(\nolaterthan{0}(\tilde{e}))_i))$ for
every $i \in I$. By \remarkref{nolaterthan-properties} (additivity),
by the truth axiom for knowledge and by monotonicity of $\sometime$, we have
$\sometime(\nolaterthan{0}(\tilde{e}))=\sometime(\tilde{e}) \subseteq
\sometime(\dck(\nolaterthan{0}(\tilde{e}))_i)$ for every $i \in I$ as
well.
Thus,
by \remarkref{full-information-then-perfect-recall} and by \lemmaref{g-eq-f},
we obtain
$\dck(\nolaterthan{0}(\tilde{e}))=
\dckg(\nolaterthan{0}(\tilde{e}))$.

It is easy to verify that $g_{\nolaterthan{0}(\tilde{e})}^{\delta}$ commutes with both finite, and infinite,
meet. Thus, it is downward-continuous and by Kleene's fixed point
theorem\footnote{
This fixed point theorem seems to be popularly named after Kleene, as the idea
of using the orbit of an extremal element to obtain a fixed point was first
used in his proof of his first recursion theorem\cite[p.~348]{kleene}.
For a definition of this theorem that is phrased in terms of lattices,
continuity and greatest fixed point, see \cite{kolodner}.
}, we obtain
\[
\dckg(\nolaterthan{0}(\tilde{e})) = \bigwedge_{n \in \mathbb{N}} {g_{\nolaterthan{0}(\tilde{e})}^{\delta}}^n({\protocolpoints}^I).
\]
By $\atexactly{\varepsilon}$ commuting
with intersection for every $\varepsilon \in \Delta$, and by $K_i$ commuting
with intersection for every $i \in I$, we thus obtain, for every $i \in I$, that
\enlargethispage{2em} % Fit equation chunk on this page
\begin{align*}
&\:\,\dck(\nolaterthan{0}(\tilde{e}))_i = \\[0.5em]
=&\bigcap_{n \in \mathbb{N}} {g_{\nolaterthan{0}(\tilde{e})}^{\delta}}^n({\protocolpoints}^I)_i = \\
=&\:
\sometime(\nolaterthan{0}(\tilde{e})) \cap \left(\smashoperator[r]{\bigcap_{\substack{j \in I\setminus\{i\} \\ \delta(i,j)<\infty}}}\, \atexactly{\delta(i,j)} (K_j(\nolaterthan{0}(\tilde{e})))\right)
\cap \\*
&\qquad\qquad\cap\left(\smashoperator[r]{\bigcap_{\substack{j \in I\setminus\{i\} \\ \delta(i,j)<\infty}}}\, \atexactly{\delta(i,j)} \left(K_j\left(\nolaterthan{0}(\tilde{e}) \cap \:\smashoperator{\bigcap_{\substack{k \in I\setminus\{j\} \\ \delta(j,k)<\infty}}}\, \atexactly{\delta(j,k)} (K_k (\nolaterthan{0}(\tilde{e})))\right)\right)\right)
\cap \\*
&\qquad\qquad\qquad\qquad\cap\cdots = \\
=&\:
\sometime(\tilde{e}) \cap \left(\smashoperator[r]{\bigcap_{\substack{j \in I\setminus\{i\} \\ \delta(i,j)<\infty}}}\, \atexactly{\delta(i,j)} (K_j(\nolaterthan{0}(\tilde{e})))\right)
\cap \\*
&\qquad\qquad\cap\left(\smashoperator[r]{\bigcap_{\substack{j \in I\setminus\{i\} \\ \delta(i,j)<\infty}}}\, \atexactly{\delta(i,j)} \left(K_j \left(\smashoperator[r]{\bigcap_{\substack{k \in I\setminus\{j\} \\ \delta(j,k)<\infty}}}\, \atexactly{\delta(j,k)}(K_k (\nolaterthan{0}(\tilde{e})))\right)\right)\right) \cap \\*
&\qquad\qquad\qquad\qquad\cap \cdots = \\[.25em]
=&\:
\sometime(\tilde{e}) \cap \: \smashoperator{\bigcap_{\substack{\apath \in \dpaths \\ p_1=i \\ \apath \ne (i)}}}\,
\atexactly{\delta(p_1,p_2)}(K_{p_2}(
\cdots
(\atexactly{\delta(p_{n-1},p_n)}(K_{p_n}(\nolaterthan{0}(\tilde{e}))))\cdots)).
\end{align*}

By \corref{ensembles-and-nd}, $K_i(\sometime(\tilde{e}))=K_i(\nolaterthan{0}(\tilde{e}))$
and thus, by $K_i$ commuting with intersection, we obtain (omitting henceforth some parentheses for readability)
\begin{equation}\label{eq:delta-common-knowledge-nested}
K_i(\dck(\nolaterthan{0}(\tilde{e}))_i) \:=
\smashoperator[r]{\bigcap_{\substack{\apath \in \dpaths \\ p_1=i}}}
\,K_{p_1}\atexactly{\delta(p_1,p_2)}K_{p_2}
\cdots
\atexactly{\delta(p_{n-1},p_n)}K_{p_n}(\nolaterthan{0}(\tilde{e})).
\end{equation}
Thus, the response logic from \corref{delta-common-knowledge-optimality} is
equivalent, for every $i \in I$,
to: ``respond as soon as 
\begin{equation}\label{eq:nested-knowledge-relative}
K_{p_1}\atexactly{\delta(p_1,p_2)}K_{p_2}\atexactly{\delta(p_2,p_3)}
\cdots
K_{p_{n-1}}\atexactly{\delta(p_{n-1},p_n)}K_{p_n}(\nolaterthan{0}(\tilde{e}))
\end{equation}
holds for every path $\apath \in \dpaths$ starting at $p_1=i$.''

By directly applying the methods of Ben-Zvi and
Moses\cite{bzm1,bzm2,bzm3,bzm4},\footnote{
Ben-Zvi and Moses show this in a discrete-time model. However, it may be
verified that
the ``no foresight'' and ``no extrasensory perception'' properties of
the continuous-time model presented in \appref{continuous} suffice in
order to adapt their argument, without fundamental change, to this model
as well.
} it can be seen that since $P$ is a full-information protocol in a shared-clock
model, a
$(\apath,\delta)$-traversing $\tilde{e}$-centipede by $t \in \timeset$
in a run of $P$
is equivalent to the following holding during that run, expressed by means
of their absolute-time modal-logic notation from \cite{bzm4}:
\begin{equation}\label{eq:nested-knowledge-absolute}
K_{(p_1,t_1)} K_{(p_2,t_2)} \cdots K_{(p_n,t_n)} \tilde{e},
\end{equation}
for $t_k\eqdef t+\dlength((p_m)_{m=1}^k)$ for every $k \in [n]$, and where
$\tilde{e}$ is a proposition corresponding to our
$\nolaterthan{0}(\tilde{e})$ event.

As \eqref{eq:nested-knowledge-relative} holding at $t$ is a different notation for
\eqref{eq:nested-knowledge-absolute}, the proof is complete.
\end{proof}

We conclude this chapter with an observation.
If $|I|<\infty$ and if $G_{\delta}$ has only trivial (i.e.\ singleton)
strongly connected
components, then there are only finitely many paths in $G_{\delta}$.
In this case, \thmref{path-traversing-centipede} and
\corref{path-traversing-centipede-optimality} imply that a timely-coordinated
response hinges on only finitely many path-traversing centipedes. (This is
indeed the case for the ordered response and weakly-timed response problems
studied by Ben-Zvi and Moses, as is shown in \chapterref{previous}.)
This observation may seem, at first glance, to clash
with the infinite nature of fixed points in general, and of greatest fixed
points in particular.
It is worthwhile to note that what reconciles these is that
in this case, ${g_{\psi}^{\delta}}^{|I|}$ is constant and therefore its value,
which is a finite intersection of nested-knowledge events, is its only fixed
point, and thus its greatest fixed point. Furthermore, by
\corref{delta-common-knowledge-optimality},
solvability of $\TCR{\tilde{e},I,\delta}$ implies that
$\sometime(\tilde{e}) \subseteq \sometime(\dck(\nolaterthan{0}(\tilde{e})))$
and thus, as noted above, we would still have obtained
\corref{delta-common-knowledge-optimality}
had we defined $f_{\psi}^{\delta}$ similarly to $g_{\psi}^{\delta}$, but
using $\nolaterthan{\delta(i,j)}$ instead of $\atexactly{\delta(i,j)}$.
In this case, the function ${f_{\psi}^{\delta}}^{|I|}$
would have also been constant, and the above insight would have held for
it as well.

\chapter{Results for Practical Models}\label{chapter:practical}

The analysis of the timely-coordinated response problem in Chapters
\ref{chapter:syncausality-approach} and \ref{chapter:fixed-point-approach} is a
general one, assuming very little regarding the model in which we work. 
The advantage of such a general analysis is that the results it yields hold
for a vast variety of models and situations. One disadvantage, which
we noted in the discussion concluding \chapterref{syncausality-approach},
is that the gap between these results and their consequences for practical
situations is quite large.
In this chapter, we derive, from the general analysis of the timely-coordinated
response problem
from the previous chapters, various stronger results for some special,
yet naturally-occurring, cases that we introduce below.
We also discuss some possible practical applications of our observations.

\section{Bounded-Syncausal-Path Contexts}

When presenting Theorems \ref{thm:broom}, \ref{thm:centibroom} and
\ref{thm:uneven-broom} above,
we noted that the proofs that Ben-Zvi and
Moses present for them strongly rely on time being modeled discretely.
It is worthwhile, in this context, to recall
the classic ``coordinated attack'', or ``two generals'',
problem\cite{two-generals-gangsters,two-generals}.
This problem describes a hypothetical situation, in which
two army generals, each camped on top a different hill overlooking some village,
wish to coordinate a simultaneous attack of this village
(i.e.\ reach common knowledge of an attack time
that was not agreed upon in advance), by communicating solely via messengers.
While this problem is unsolvable in a discrete-time
model\cite{two-generals-gangsters},
Fagin et al.\cite[p.~386]{book}\ note that
even in the lack of any delivery guarantee,
the generals may successfully coordinate a simultaneous attack if they have
access to a messenger who can make infinitely many trips
between one general's camp and the other's in finite time,
by doubling
her speed each time she reaches one of the camps.

In fact, \thmref{path-traversing-centipede} implies that in our continuous-time
model, even in the absence of any bound guarantee between two disjoint sets
of agents $I,J$
(i.e.\ when $\hat{\delta}_{\contextgraph}|_{\{I \times J\} \cup \{J \times I\}} \equiv \infty$),
a simultaneous response of two agents $i \in I$ and $j \in J$ may be achieved
even if no such ``infinite syncausal paths''
from $\tilde{e}$ to each agent at the time of its
response exist, as long as such paths that alternate between these sets of agents an
arbitrarily large number of times exist.\footnote{
This is thus also possible in the lack of any delivery
guarantee.}
In the above scenario, this means that the generals may also
coordinate a simultaneous attack if they have access to infinitely many
messengers, such that in a finite time frame, for any arbitrarily
large $N \in \mathbb{N}$, there exists a messenger who alternates between their camps
at least $N$ times. In \corref{bounded-syncausal-path-or-zeno},
we formalize the intuition that
there are no other ways in which these generals may coordinate even
an approximately simultaneous attack.

In order to have any hope of generalizing Theorems \ref{thm:broom},
\ref{thm:centibroom} and \ref{thm:uneven-broom} for the timely-coordinated
response problem
in a continuous-time model, we therefore first have to define some restriction
that prevents such Zeno-paradoxical situations from taking place.
By doing so, we effective force the infinitely many associated path-traversing
centipedes to degenerate to a broom-like, or centibroom-like,
structure. This intuition is formalized in \defnref{bounded-syncausal-path}.

\begin{defn}
Given a context $\gamma$, a run $r \in \runs$ and two agent-time pairs
$\theta_1,\theta_2 \in \agents \times \timeset$,
we denote by $L_r(\theta_1,\theta_2)$ the supremum of the number of ND events in
a syncausal path $\theta_1 \syncausal{r} \theta_2$.
If $\theta_1 \notsyncausal{r} \theta_2$, then we define
$L_r(\theta_1,\theta_2)\eqdef\infty$.
\end{defn}

\begin{defn}[Bounded-Syncausal-Path Context]\label{defn:bounded-syncausal-path}
We say that a context $\gamma$ is a ``bounded-syncausal-path'' context if
$\theta_1 \syncausal{r} \theta_2$ implies
$L_r(\theta_1,\theta_2) < \infty$,
for every run $r \in \runs$ and every two agent-time pairs
$\theta_1,\theta_2 \in \agents \times \timeset$.
\end{defn}

\begin{remark}
Some naturally-occurring bounded-syncausal-path contexts include contexts with
the following properties, which are customarily taken as axioms:
\begin{itemize}
\item
Any context in which a universal positive lower bound on all delivery times 
holds, including all contexts of the discrete-time model presented in
\chapterref{discrete}.
\item
Any context in which only finitely many messages may be sent (or rather,
may be delivered early) in any bounded time frame.
\end{itemize}
\end{remark}

While, due to the continuous nature of time and to the possibility of infinitely
many agents, finite-memory and finite-processing-power models need not
guarantee bounded-syncausal-path contexts (at least when dealing with protocols
that are not necessarily full-information ones), we show in the next
section that many results that hold for bounded-syncausal-path contexts,
still hold when the memory or processing power of each agent is limited.

We are now ready to generalize
Theorems \ref{thm:broom} and \ref{thm:uneven-broom} and
the proofs of Ben-Zvi and Moses\cite{bzm1,bzm2,bzm4} for these theorems,
for the timely-coordinated response problem
in a continuous-time, yet bounded-syncausal-path, context. The following is,
in a sense, a converse of \remarkref{broom-implies-path-traversing-centipede}
for such contexts.

\begin{cor}\label{cor:path-traversing-centipede-implies-broom}
Let $\TCRspec$ be a TCR-spec,
let $P \in \TCR{\tilde{e},I,\delta}$ and let $r \in \TRP$.
Let $J \subseteq I$ be a finite subset of $I$ that is contained entirely
within one strongly-connected component of $G_{\delta}$.

\begin{enumerate}
\item
If there exists $i \in J$ s.t.\
$L_r(\tilde{e},(i,\timpl_r(i))) < \infty$,\footnote{
In particular, this holds for every $i \in J$, if $\gamma$ is a
bounded-syncausal-path context.
} then there exists an event $e \in \PND{r}(i,\timpl_r(i))$
that is an $\tilde{e}$-broom for $J$ in $r$.
Furthermore, the horizon of this broom may be bounded by a finite bound of the
form
\[
\tilde{b}(\timpl_r(i),\max(\hat{\delta}|_{J^2}),|J|,L_r(\tilde{e},(i,\timpl_r(i)))).
\]
\item
If, in addition, $\hat{\delta}|_{J^2}$ is antisymmetric, then the broom
guaranteed by the first part of this corollary is by $(\timpl_r(j))_{j \in I}$,
implying the results of
Theorems \ref{thm:broom} and \ref{thm:uneven-broom}
for bounded-syncausal-path contexts in a continuous-time model.

\end{enumerate}
\end{cor}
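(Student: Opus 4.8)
The plan is to produce the broom by feeding very long cyclic paths through $J$ into \thmref{path-traversing-centipede}, and then to exploit the finiteness of $L_r(\tilde{e},(i,\timpl_r(i)))$ to force the resulting (arbitrarily long) path-traversing centipedes to collapse onto a single repeated ND event, which is the broom. Since $J$ sits inside one strongly-connected component of $G_{\delta}$, every ordered pair of agents of $J$ is at finite $\hat{\delta}$-distance, so $G_{\hat{\delta}}$ carries a cyclic walk $\apath^{(1)}=(j_1,\dots,j_{|J|},j_1)$ meeting each agent of $J$ once per revolution, with every edge weight at most $D\eqdef\max(\hat{\delta}|_{J^2})<\infty$ (finite, and $\ge 0$ since $\hat{\delta}(j,j)=0$). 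I would take $p_1=i$ and concatenate $M$ copies of this walk into a path $\apath^{(M)}\in\dhatpaths$ with $M|J|+1$ vertices. Applying \thmref{path-traversing-centipede} to $\hat{\delta}$ (licensed by the remark following that theorem, which singles out this corollary as a place to do so) yields a $(\apath^{(M)},\hat{\delta})$-traversing $\tilde{e}$-centipede $(e_1,\dots,e_{M|J|+1})$ by $\timpl_r(i)$.

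The crux is that these $M|J|+1$ events cannot all be distinct. By definition of a path-traversing centipede, $\tilde{e}\syncausal{r}e_{M|J|+1}\syncausal{r}\cdots\syncausal{r}e_1$ and $e_1\boundguarantee(i,\timpl_r(i))$, so (as bound guarantee is a subrelation of syncausality) every $e_m$ is an ND event lying on one syncausal path from $\tilde{e}$ to $(i,\timpl_r(i))$; hence the number of distinct values among the $e_m$ is at most $L\eqdef L_r(\tilde{e},(i,\timpl_r(i)))$. Because $\syncausal{r}$ strictly increases occurrence time on distinct events, each value must occupy a contiguous block of indices, so $e_{M|J|+1},\dots,e_1$ decomposes into at most $L$ maximal constant blocks. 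Setting $M\eqdef L$, pigeonhole gives a block of length at least $|J|$, which spans a full revolution of the cycle and therefore meets every $j\in J$ at some index $m_j$ with $p_{m_j}=j$. Writing $e$ for the event of this block, the chain down to $e_1\boundguarantee(i,\timpl_r(i))$ gives $e\in\PND{r}(i,\timpl_r(i))$ and $\tilde{e}\syncausal{r}e$, while $e\boundguarantee(j,t_{m_j})$ for every $j\in J$, where $t_{m_j}=\timpl_r(i)+\dlength((p_k)_{k=1}^{m_j})$; thus $e$ is an $\tilde{e}$-broom for $J$. Its horizon $\max_j t_{m_j}$ is at most $\timpl_r(i)+L|J|D$, since each $m_j\le L|J|+1$ accumulates at most $L|J|$ edges of weight at most $D$, a bound of the required form $\tilde{b}(\timpl_r(i),\max(\hat{\delta}|_{J^2}),|J|,L)$.

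For the second part, antisymmetry of $\hat{\delta}|_{J^2}$ combined with the triangle inequality from \remarkref{canonical-form-properties} makes $\hat{\delta}$ additive along walks inside $J$, i.e.\ $\hat{\delta}(a,b)+\hat{\delta}(b,c)=\hat{\delta}(a,c)$ for $a,b,c\in J$: one direction is the triangle inequality, the other follows by applying it to the reversed step and using $\hat{\delta}(c,b)=-\hat{\delta}(b,c)$. Hence the accumulated length to any index is $\dlength((p_k)_{k=1}^{m_j})=\hat{\delta}(i,j)$, independent of the number of completed revolutions (each full cycle contributing $\hat{\delta}(i,i)=0$). As $P$ solves $\TCR{\tilde{e},I,\delta}$ we have $\timpl_r\in T(\delta)=T(\hat{\delta})$, and antisymmetry pins down $\timpl_r(j)-\timpl_r(i)=\hat{\delta}(i,j)$ exactly; therefore $t_{m_j}=\timpl_r(i)+\hat{\delta}(i,j)=\timpl_r(j)$, so the broom is by $(\timpl_r(j))_{j\in J}$. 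Specializing to $\delta\equiv0$ (whence $\hat{\delta}\equiv0$ on the component) recovers \thmref{broom}, and to the tightly-timed case $\hat{\delta}(i,j)=t_j-t_i$ recovers \thmref{uneven-broom}, now for continuous-time bounded-syncausal-path contexts.

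The step I expect to be most delicate is the degeneration argument of the second paragraph: justifying that all centipede events lie on a single syncausal path so that their count is governed by $L_r$, and establishing the block-contiguity property from antisymmetry of $\syncausal{r}$ at the level of events (which itself rests on occurrence time strictly increasing across distinct syncausally-related events). Arranging the pigeonhole bookkeeping so that a single block covers an entire revolution — and hence all of $J$ via one common event — while simultaneously keeping the horizon controlled by exactly the four stated parameters, is where care is needed; by contrast the additivity computation underlying part 2 should be routine once antisymmetry is in hand.
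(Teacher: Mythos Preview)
Your proposal is correct and follows essentially the same approach as the paper: concatenate a Hamiltonian cycle on $J$ in $G_{\hat{\delta}}$ enough times, apply \thmref{path-traversing-centipede}, bound the number of distinct ND events by $L_r$, use antisymmetry/time-monotonicity of $\syncausal{r}$ to get contiguity of repeated events, and pigeonhole a block covering a full revolution. The only cosmetic differences are that the paper concatenates to length $(|J|-1)\cdot L_r+1$ rather than your $L_r\cdot|J|+1$ (yielding a marginally sharper horizon bound) and invokes antisymmetry of $\syncausal{r}$ directly rather than via strict increase of occurrence times; your part~2 additivity argument likewise matches the paper's.
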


\begin{proof}
Denote $n \eqdef |J|$.
The fact that $J$ resides within one strongly-connected
component of $G_{\delta}$ implies $\hat{\delta}|_{J^2} < \infty$,
and thus, a Hamiltonian cycle exists in the subgraph of $G_{\hat{\delta}}$
induced by $J$. Let $(p_1,\ldots,p_n,p_1)$ be such a cycle, for which $p_1=i$.
We now concatenate this cycle to itself enough times to obtain a path of
\mbox{$l \eqdef (n-1) \cdot L_r(\tilde{e},(i,\timpl_r(i))) + 1$} vertices, which we denote by
\[
\apath'=(\LaTeXunderbrace{p_1,\ldots,p_n,p_1,\ldots,p_n,p_1,\ldots,p_n,p_1,\ldots,p_k}_{l=(n-1) \cdot L_r(\tilde{e},(i,\timpl_r(i))) + 1}).
\]
By \thmref{path-traversing-centipede}, $r$
contains a $(\apath',\hat{\delta})$-traversing $\tilde{e}$-centipede
by $\timpl_r(i)$ --- denote it by $\tuple{e}=(e_m)_{m=1}^{l}$.
By definition, $\tuple{e}$ contains at most
$L_r(\tilde{e},(i,\timpl_r(i)))$ distinct events. Thus, each distinct event
contained in $\tuple{e}$ appears in it, on average, at least
$l / L_r(\tilde{e},(i,\timpl_r(i))) > n-1$ times. Thus, by the pigeonhole
principle, there exists an event
$e \in \PND{r}(i,\timpl_r(i))$ that appears in $\tuple{e}$ at least $n$ times.
By definition of a path-traversing centipede and by antisymmetry of
the syncausality relation,
these appearances are consecutive, and thus we obtain that there are $n$
consecutive vertices
in $\apath'$ to which there exists a delivery guarantee from $e$.
As any set of $n$ consecutive vertices \linebreak in $\apath'$ exactly equals $J$,
we obtain that $e$ is an $\tilde{e}$-broom for $J$ in $r$
by $\timpl_r(i) + \dlength(\apath')$.

We complete the proof of the first part of the corollary
by bounding this time:
\begin{align*}
&\:\timpl_r(i) + \dlength(\apath') \le & \text{by definition of $\dlength$} \\
\le&\: \timpl_r(i) + (l-1) \cdot \max(\hat{\delta}|_{J^2}) = & \text{by definition of $l$} \\
=&\: \timpl_r(i) + (|J|-1) \cdot L_r(\tilde{e},(i,\timpl_r(i))) \cdot \max(\hat{\delta}|_{J^2}) < & \quad\text{by finiteness of all elements} \\
<&\: \infty.
\end{align*}
We note that for large $L_r(\tilde{e},(i,\timpl_r(i)))$, obtaining the shortest
possible $\apath'$ frequently involves choosing a Hamiltonian cycle of minimal
length.

We now move on to proving the second part of the corollary.
If $\hat{\delta}|_{J^2}$ is antisymmetric, then the length of any
$\apathfull \in \paths{G_{\hat{\delta}|_{J^2}}}$ is precisely
$\hat{\delta}(p_n) - \hat{\delta}(p_1)$.
Therefore, for any $r \in \TRP$, any
end node of any $(\apath,\hat{\delta})$-traversing
$\tilde{e}$-centipede by $\timpl_r(p_1)$
is of the form $(p_k,\timpl_r(p_1) + L_{G_{\hat{\delta}}}((p_m)_{m=1}^k) =
(p_k, \timpl_r(p_1) + \hat{\delta}(p_1,p_k)) = (p_k, \timpl_r(p_k)$,
and the proof of the second part of the corollary is complete.
\end{proof}

An analogous proof gives rise to the following corollary, which generalizes
both \thmref{centibroom} and
\corref{path-traversing-centipede-implies-broom}.

\begin{cor}\label{cor:path-traversing-centipede-implies-centibroom}
Let $\TCRspec$ be a TCR-spec,
let $P \in \TCR{\tilde{e},I,\delta}$ and let $r \in \TRP$.
Let $n \in \mathbb{N}$ and
let $\tuplefull{J} \in (2^I)^n$ be a tuple of finite subsets of $I$, each of
which is contained entirely within one strongly-connected component of
$G_{\delta}$. Assume, furthermore, that no two of these subsets
are contained within the same strongly-connected component
of $G_{\delta}$, and that for every $m \in [n-1]$, there exists a path from
$J_m$ to $J_{m+1}$ in $G_{\delta}$.

\begin{enumerate}
\item
If there exists $i \in J_1$ s.t.\
$L_r(\tilde{e},(i,\timpl_r(i))) < \infty$,\footnote{
As before, this holds for every $i \in J_1$ if $\gamma$ is a
bounded-syncausal-path context.
}
then there exists an $\tilde{e}$-centibroom for
$\tuple{J}^{\mathit{rev}}$ in $r$,
consisting entirely of events from $\PND{r}(i,\timpl_r(i))$.
Furthermore, the horizon of this centibroom may be
bounded by a finite bound of the form
\[
\tilde{b}(\timpl_r(i),n,(\max(\hat{\delta}|_{{J_m}^2}))_{m=1}^n,(\min(\hat{\delta}|_{J_m \times J_{m+1}}))_{m=1}^{n-1},(|J_m|)_{m=1}^{n},L_r(\tilde{e},(i,\timpl_r(i)))).
\]
\item
For every $m \in [n]$, choose an arbitrary $j_m \in J_m$.
If, in addition to the conditions of the previous part,
for every $m \in [n]$, $\hat{\delta}|_{{J_m}^2}$ is antisymmetric,
then the end nodes of the centibroom guaranteed by the first part of this
corollary are
$\{(j,\timpl_r(j)) \mid j \in J_1\}$,
$\{(j,\timpl_r(j_1) + \hat{\delta}(j_1,j)) \mid j \in J_2\}$,
$\{(j,\timpl_r(j_1) + \hat{\delta}(j_1,j_2) + \hat{\delta}(j_2,j)) \mid j \in J_3\}$,
etc.\footnote{
By antisymmetry of $\hat{\delta}$ on each $J_m$, these are invariant to the
choice of representatives $(j_m)_{m=1}^n$.}
In particular, this also implies the result of \thmref{centibroom} for
bounded-syncausal-path contexts in a continuous-time model.\footnote{
The scenario studied in the second part of
\corref{path-traversing-centipede-implies-centibroom}
is, in a sense, a timed generalization of ordered joint response,
in that it is to tightly-timed response and to weakly-timed response,
as ordered joint response is to simultaneous response and to ordered response.
This scenario generalizes all the response problems studies by Ben-Zvi and
Moses that are surveyed in \chapterref{syncausality-approach}.
As noted in that chapter, the most general form of this
scenario does not fall within the scope of any of the coordinated response
problems defined and studied
by Ben-Zvi and Moses\cite{bzm1,bzm2,bzm3,bzm4}, even though no weak
mutual dependencies between response time exist in it.
}
\end{enumerate}
\end{cor}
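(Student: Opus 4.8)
The plan is to mimic the proof of \corref{path-traversing-centipede-implies-broom}, applying \thmref{path-traversing-centipede} (via \corref{canonical-form-and-tcr}, so that we may work with $\hat{\delta}$) to a single long path that threads through all of the clusters $J_1,\dots,J_n$ in turn, and then to extract one covering event per cluster by the pigeonhole principle. Since each $J_m$ lies in one strongly-connected component of $G_\delta$ we have $\hat{\delta}|_{J_m^2}<\infty$, so the subgraph of $G_{\hat{\delta}}$ induced on $J_m$ is a complete digraph and admits a Hamiltonian cycle; and since there is a path from $J_m$ to $J_{m+1}$ in $G_\delta$, there is a finite-weight $G_{\hat{\delta}}$-edge joining them. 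I would build a path $\apath'\in\dhatpaths$ starting at $p_1=i\in J_1$ that first repeats a Hamiltonian cycle of $J_1$ many times, then crosses (by a cheapest available $\hat{\delta}$-edge) into $J_2$, repeats its Hamiltonian cycle, and so on up to $J_n$. Choosing the $m$-th cyclic segment to have more than $(|J_m|-1)\cdot L_r(\tilde{e},(i,\timpl_r(i)))$ vertices guarantees, via the count below, a single event covering all of $J_m$.

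By \thmref{path-traversing-centipede} the run $r$ contains a $(\apath',\hat{\delta})$-traversing $\tilde{e}$-centipede by $\timpl_r(i)$; all of its events lie on a single syncausal path $\tilde{e}\syncausal{r}\cdots\syncausal{r}(i,\timpl_r(i))$, hence lie in $\PND{r}(i,\timpl_r(i))$ and, by definition of $L_r(\tilde{e},(i,\timpl_r(i)))$, comprise at most $L_r(\tilde{e},(i,\timpl_r(i)))$ distinct events. Exactly as in the broom proof, antisymmetry of $\syncausal{r}$ forces all appearances of any one event along the centipede to be \emph{consecutive}. Within the $m$-th cyclic segment the length bound and the pigeonhole principle then yield an event $f^{(m)}$ occupying at least $|J_m|$ consecutive positions of that segment; since any $|J_m|$ consecutive vertices of a repeated Hamiltonian cycle on $J_m$ exhaust $J_m$, the event $f^{(m)}$ has a bound guarantee to every end node over $J_m$. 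I would then check orientation: because the centipede's syncausal chain runs $\tilde{e}\syncausal{r}e_l\syncausal{r}\cdots\syncausal{r}e_1$, higher path-indices syncausally precede lower ones, so $f^{(n)}$ (covering $J_n$, at the high-index end nearest $\tilde{e}$) through $f^{(1)}$ (covering $J_1$) satisfy $\tilde{e}\syncausal{r}f^{(n)}\syncausal{r}f^{(n-1)}\syncausal{r}\cdots\syncausal{r}f^{(1)}$. Thus $(f^{(n)},\dots,f^{(1)})$ is precisely an $\tilde{e}$-centibroom for $\tuple{J}^{\mathit{rev}}$, all of whose events lie in $\PND{r}(i,\timpl_r(i))$; its horizon is at most $\timpl_r(i)+L_{G_{\hat{\delta}}}(\apath')$, a finite function of the stated parameters, since $L_{G_{\hat{\delta}}}(\apath')$ is bounded by the segment lengths times the within-cluster maxima $\max(\hat{\delta}|_{J_m^2})$ plus the crossing costs $\min(\hat{\delta}|_{J_m\times J_{m+1}})$.

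For the second part I would argue cluster by cluster as in the second part of \corref{path-traversing-centipede-implies-broom}: when $\hat{\delta}|_{J_m^2}$ is antisymmetric, every path length inside $J_m$ equals the corresponding $\hat{\delta}$-distance, so the end node at a vertex $j\in J_m$ sits at $\timpl_r(i)+L_{G_{\hat{\delta}}}$ of the prefix reaching $j$, which telescopes through the chosen representatives $j_1,\dots,j_n$ to $\timpl_r(j_1)+\hat{\delta}(j_1,j_2)+\cdots+\hat{\delta}(j_{m-1},j_m)+\hat{\delta}(j_m,j)$; since $\timpl_r\in T(\delta)=T(\hat{\delta})$ and these constraints are tight, this equals $\timpl_r(j)$ on $J_1$ and matches the displayed times in general, representative-independence following from antisymmetry. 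Taking $n=1$ recovers \corref{path-traversing-centipede-implies-broom}, and specializing to the case $\hat{\delta}\equiv0$ within clusters recovers \thmref{centibroom}. \textbf{The main obstacle} I anticipate is the bookkeeping of the reversal between path order and syncausal order — aligning $\tuple{J}^{\mathit{rev}}$ and the chain $f^{(n)}\syncausal{r}\cdots\syncausal{r}f^{(1)}$ with the centibroom definition — together with verifying in the second part that the telescoped end-node times are genuinely independent of the representatives $j_m$ and collapse to the $\timpl_r$ values; the pigeonhole-and-consecutiveness core is otherwise a direct per-cluster replay of the broom argument.
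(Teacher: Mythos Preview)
Your proposal is correct and follows essentially the same approach as the paper's own proof sketch: build one long path in $G_{\hat{\delta}}$ by concatenating, in ascending order of $m$, sufficiently many repetitions of a Hamiltonian cycle on each $J_m$, apply \thmref{path-traversing-centipede} to obtain a single path-traversing centipede, and extract one broom event per cluster via the pigeonhole-plus-consecutiveness argument from \corref{path-traversing-centipede-implies-broom}. You have in fact fleshed out considerably more detail than the paper does (it gives only a three-sentence sketch), including the orientation check that correctly aligns the syncausal chain $\tilde{e}\syncausal{r}f^{(n)}\syncausal{r}\cdots\syncausal{r}f^{(1)}$ with the centibroom for $\tuple{J}^{\mathit{rev}}$, and the telescoping in part~2; the latter works because antisymmetry within each $J_m$ forces $\hat{\delta}(a,c)+\hat{\delta}(c,b)=\hat{\delta}(a,b)$ whenever $c$ lies in the same $J_m$ as either $a$ or $b$, making the accumulated prefix length independent of the crossing vertices and yielding exactly the displayed end-node times.
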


\begin{proof}[Proof sketch]
For each $m \in [n]$, choose a Hamiltonian cycle in the subgraph of
$G_{\hat{\delta}}$ induced by $J_m$, and concatenate it to itself enough times
to obtain a path of
$l \eqdef (|J_m|-1) \cdot L_r(\tilde{e},i,\timpl_r(i)) + 1$ vertices.
Now, apply \thmref{path-traversing-centipede} to the concatenation of all
these paths, in ascending order of $m$.
The rest of the proof is analogous to the
proof of \corref{path-traversing-centipede-implies-broom}.
\end{proof}

\fig{path-traversing-centipede-within-a-broom}{Path-traversing centipede with a broom suffix}{
  When $I=\{i,j\}$, an $((i,j,i,j,i,j),\delta)$-traversing
$\tilde{e}$-centipede by $t$ ($(e_m)_{m=1}^6$) with an $\tilde{e}$-broom
suffix by $t'$ ($e_6$) provides
sufficient data for $i$ to respond at $t$ according to the optimal response
logic for $\TCR{\tilde{e},I,\delta}$, as it implies a
$(\apath,\delta)$-traversing $\tilde{e}$-centipede by $t$ for any
$\apath \in \dpaths$ starting at $p_1=i$.

It should be noted that in this example, common knowledge
of the occurrence of $\tilde{e}$ is only attained at $t'$. As the time at which
common knowledge of this occurrence is attainable in a full-information protocol
is independent of $\delta$, this intuitively illustrates a property of
the timely-coordinated response problem hinted
to by the first part of \remarkref{path-traversing-centipede-properties} and
by \exref{acme} opening this work in \chapterref{introduction}:
intuitively, the greater $\delta$ is, the better a chance there is to respond
earlier in many cases.
}

The proof of \corref{path-traversing-centipede-implies-broom} gives rise to the
following observation, stated using the notation of that proof, and under the
assumptions thereof:
Every path $\apath \in \paths{G_{\hat{\delta}|_{J^2}}}$ is a prefix of some
path $\apath'$, for which there exists a $(\apath',\delta)$-traversing
$\tilde{e}$-centipede that
has a suffix that constitutes an $\tilde{e}$-broom for $J$. Thus, this path-traversing
centipede yields a path-traversing centipede for every path of which $\apath'$
is a prefix.
While, as noted in the closing remarks of \chapterref{syncausality-approach},
implementing the optimal response logic from
\corref{path-traversing-centipede-optimality} may entail, in the most general
setting, checking for infinitely many path-traversing centipedes (using
infinitely many facts stored in memory),
this observation, and the more general analogous observation stemming from
the proof of \corref{path-traversing-centipede-implies-centibroom},
provide a practical and straightforward approach for implementing this optimal
response logic
in bounded-syncausal-path contexts, as illustrated in
\figref{path-traversing-centipede-within-a-broom},
and by the following example.

\begin{ex}\label{ex:algorithm}
Let $\TCRspec$ be a TCR-spec s.t.\ $\gamma$ is a bounded-syncausal-path
context, s.t.\ $|I|<\infty$ and s.t.\
$\TCR{\tilde{e},I,\delta}$ is solvable.
For simplicity of this example, assume for the time being that
$G_{\delta}$ is strongly connected and that $\delta>0$.
Let $i \in I$ and $t \in \timeset$. In a full-information protocol,
the following algorithm may be applied by
$i$ at $t$ to decide whether it should respond at that time according to the
optimal response logic for $\TCR{\tilde{e},I,\delta}$.

{\bf The Algorithm:}
First, check if any $e \in \PND{r}(i,t)$ is an $\tilde{e}$-broom for $I$
by any past, present, or future time. (This only depends on
the observer of $e$, so we may compute this efficiently with the aid of
a precalculated lookup table. Moreover, this may be computed once for each $e$,
storing the result in the state of $i$.)
Denote the set of all such broom events by $B$.\footnote{
While $\PND{r}(i,t)$ may be
infinite, implying that calculation of $B$ may require infinite processing
power and memory, we show in \claimref{finite-influence-implies}
in the next section that if $\TCR{\tilde{e},I,\delta}$ is solvable by any
finite-memory or finite-processing-power protocol, then
it is enough to consider only finitely many events from $\PND{r}(i,t)$
at this stage.
}
If $B = \emptyset$, then $i$ should not respond at $t$.

For each $e \in B$, denote by $\mathit{br}_e$ the earliest time by which
$e$ is an $\tilde{e}$-broom for $I$. Thus, by locality of bound guarantees
and as $\delta>0$, we obtain that
$(e)^n$ is a $(\apath,\delta)$-traversing $\tilde{e}$-centipede by
$\mathit{br}_e$, and by any later time,
for every $\apathfull \in \dpaths$.
(Note that $\mathit{br}_e-t_e$
is a constant depending solely on the observer of $e$, so once again, some advance
computation allows an efficient calculation of $\mathit{br}_e$, which, once calculated, may
be stored in the state of $i$.)

Denote $\widetilde{\mathit{br}}\eqdef\sup_{e \in B}\{\mathit{br}_e\}$.
This is a finite quantity, due to
$\mathit{br}_e-t_e$ depending only on the observer of $e$, and by finiteness of $I$.
Denote the set of paths $\apath \in \dpaths$ s.t.\
$\dlength(\apath)<\widetilde{\mathit{br}}-t$ by $\mathcal{P}$.
As $\delta>0$, and as $|I|<\infty$, we have $|\mathcal{P}|<\infty$ as well.

$i$ should respond at $t$ iff for each $\apathfull \in \mathcal{P}$,
there exists, by $t$, a $(\apath,\delta)$-traversing $\tilde{e}$-centipede
$\tuplefull{e} \in \PND{r}(i,t)^n$
s.t.\ there exists $e \in B$ satisfying $e \syncausal{r} e_n$. (This
check may be implemented efficiently using backtracking, and accelerated using
some precalculations.)

{\bf Dropping Unneeded Assumptions:}
As noted above, the assumptions that $\delta>0$ and that $G_{\delta}$ is strongly
connected are not required. Handling a situation in which they do not
hold is not inherently different, albeit significantly more cumbersome.
We now overview the key points of difference regarding these cases.

First, let us drop the assumption that $\delta>0$. This introduces two obstacles
for the above algorithms, which we now rectify.

The first obstacle is that, due to the possibility of $\delta$ taking negative
values, it may no longer hold that
$(e)^n$ is a $(\apath,\delta)$-traversing $\tilde{e}$-centipede by $\mathit{br}_e$ for every $e \in B$ and $\apathfull \in \dpaths$.
For this to hold again, we redefine $\mathit{br}_e$, for every $e \in B$,
as the earliest time satisfying
$e \boundguarantee (i,\mathit{br}_e+\min(\hat{\delta}|_{\{i\} \times I}))$
for every $i \in I$.
(See also \remarkref{broom-implies-path-traversing-centipede}.)
Note that $\mathit{br}_e$ is finite, by \lemmaref{implementable-iff}.
Also note that $\mathit{br}_e-t_e$
still depends only on the observer of $e$.
We accordingly redefine $\mathcal{P}$ as the set of paths
$\apath \in \dpaths$, s.t.\
the length of $\apath$, and of every prefix thereof, is less than
$\widetilde{br}-t$.

The second obstacle is that $G_{\delta}$ may contain
nontrivial cycles of zero length, which implies that $\mathcal{P}$ may be
of infinite cardinality. The adjustment of the algorithm for this case is somewhat
less straightforward. We partition $I$ into pairwise-disjoint equivalence
classes s.t.\ $i,j \in I$ are in the same equivalence class iff
$\hat{\delta}(i,j)=-\hat{\delta}(j,i)$.
It may be readily verified that these equivalence classes are exactly the
subsets
of $I$ on which $\hat{\delta}$ is antisymmetric, and that are maximal with
regard to this property. Another characterisation of these classes, which
is of key importance to us, is that a cycle in $G_{\delta}$ is of length 0 iff
all its vertices belong to the same equivalence class.
Let $J \subseteq I$ be a set of representatives for all such equivalence classes.
Denote, for each $j \in J$, its equivalence class (which it represents) by
$I_j$.
We restrict $\mathcal{P}$ to paths containing only ``representative''
vertices $j \in J$. (Thus, by the above characterisation of equivalence
classes using cycles lengths, $|\mathcal{P}|<\infty$ once again.)
Finally, for each $\apathfull \in \mathcal{P}$, we require that the matching
$(\apath,\delta)$-traversing $\tilde{e}$-centipede be, in a sense,
a ``path-traversing centibroom'' for $I_{p_1},\ldots,I_{p_n}$.
To be more precise, we require a path-traversing centipede
that, for each $k \in [n]$, does not merely satisfy
$e_k \boundguarantee (p_k,t+\dlength((p_m)_{m=1}^k))$
(as in the path-traversing centipede definition), but also
$e_k \boundguarantee (j,t+\dlength((p_m)_{m=1}^k)+\hat{\delta}(p_k,j))$,
for every $j \in I_{p_k}$.
(Checking for the existence of all required path-traversing
centipedes/centibrooms may still be efficiently implemented using the same
techniques as in the simpler case above.)

Finally, we sketch the key point of adapting the above algorithm for the case in
which $G_{\delta}$ is not necessarily strongly connected.
In this case, the syncausal structure underlying this algorithm is somewhat more
complex. Instead of the algorithm revolving around
$(\apath,\delta)$-traversing
$\tilde{e}$-centipedes/centibrooms $(e_m)_{m=1}^n$ for which $e_n$ is an
$\tilde{e}$-broom for all agents in $I$,
the algorithm searches for $(\apath,\delta)$-traversing $\tilde{e}$-centipedes
$(e_m)_{m=1}^n$
for which, for each strongly-connected component $I'$ of $G_{\delta}$ that is
visited by $\apath$, the event $e_{\max\{m \mid p_m \in I'\}}$ may be an
$\tilde{e}$-broom for $I'$.
\end{ex}

When presenting coordinated response problems in \chapterref{tcr-exhibition},
we noted that for many such problems, it is possible
to obtain a characterisation for solvability from an optimal response logic.
Indeed, \corref{path-traversing-centipede-optimality} implies that a necessary
and sufficient condition for solvability of the timely-coordinated response
problem in a shared-clock model
is a guarantee that in every triggered run, infinitely many path-traversing
centipedes (one for each path in $G_{\delta}$) occur by some finite time.
While this indeed fully characterises solvability of
the timely-coordinated response problem, the complexity of this
characterisation renders it not very usable.
\corref{path-traversing-centipede-implies-centibroom} also allows us to
present a surprisingly simpler characterisation for solvability of the
timely-coordinated response problem
of finitely many agents in a shared-clock model. We first present a special
case thereof,
which stems from \corref{path-traversing-centipede-implies-broom}.

\begin{cor}\label{cor:tcr-iff-sr}
In a shared-clock model, let $\gamma$ be a context, let $I \subseteq \agents$
be finite, and let $\tilde{e} \in \externalinputs$.
the following conditions are equivalent:
\begin{enumerate}

\item\label{some-solvable}
$\TCR{\tilde{e},I,\delta}$ is solvable for some
$\delta : \distinctpairs{I} \rightarrow \Delta \setminus \{-\infty\}$ s.t.\
$G_{\delta}$ is strongly connected and contains no negative cycles.

\item\label{all-solvable}
$\TCR{\tilde{e},I,\delta}$ is solvable for every
$\delta : \distinctpairs{I} \rightarrow \Delta \setminus \{-\infty\}$ s.t.\
$G_{\delta}$ is strongly connected and contains no negative cycles.

\item\label{sr-solvable}
$\SR{\tilde{e},I}$ is solvable.

\end{enumerate}

\end{cor}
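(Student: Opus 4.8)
The plan is to prove the cycle of implications (\ref{sr-solvable}) $\Rightarrow$ (\ref{all-solvable}) $\Rightarrow$ (\ref{some-solvable}) $\Rightarrow$ (\ref{sr-solvable}), recalling throughout that $\SR{\tilde{e},I}=\TCR{\tilde{e},I,0}$. First I would dispatch (\ref{sr-solvable}) $\Rightarrow$ (\ref{all-solvable}). Let $\delta:\distinctpairs{I}\rightarrow\Delta\setminus\{-\infty\}$ be any function with $G_{\delta}$ strongly connected and free of negative cycles. Since $|I|<\infty$ and $\delta>-\infty$, \corref{implementable-iff-no-negative-cycles} yields that $\delta$ is implementable; as $\SR{\tilde{e},I}=\TCR{\tilde{e},I,0}$ is solvable by hypothesis, the second part of \claimref{solvable-iff-implementable} then gives that $\TCR{\tilde{e},I,\delta}$ is solvable. (Strong connectivity plays no role in this direction.)

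The implication (\ref{all-solvable}) $\Rightarrow$ (\ref{some-solvable}) is immediate once I exhibit a single admissible $\delta$: the constant function $\delta\equiv0$ maps into $\Delta\setminus\{-\infty\}$, its graph $G_{\delta}$ is the complete directed graph on $I$ (hence strongly connected), and all of its cycles have length $0$ (hence none are negative). Thus (\ref{all-solvable}) applies to it and yields (\ref{some-solvable}).

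The crux is (\ref{some-solvable}) $\Rightarrow$ (\ref{sr-solvable}). Fix $\delta$ as in (\ref{some-solvable}) with $\TCR{\tilde{e},I,\delta}$ solvable; w.l.o.g.\ I may take a full-information protocol $P\in\TCR{\tilde{e},I,\delta}$. Since $G_{\delta}$ is strongly connected, the finite set $I$ lies entirely within a single strongly-connected component of $G_{\delta}$, so I may apply \corref{path-traversing-centipede-implies-broom} with $J\eqdef I$. In the (discrete-time) shared-clock model assumed here the context is bounded-syncausal-path, so the hypothesis $L_r(\tilde{e},(i,\timpl_r(i)))<\infty$ holds automatically for some (indeed every) $i\in I$; hence the first part of that corollary guarantees that every triggered run $r\in\TRP$ contains an $\tilde{e}$-broom for $I$, whose horizon is moreover finite.

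To conclude, I would pass to the full-information protocol $P'$ obtained by endowing $P$ with the broom-based response logic of \thmref{broom}, namely ``respond at the earliest time by which an $\tilde{e}$-broom for $I$ exists''. Because $P$ and $P'$ differ only in their response logic, the natural ND-preserving isomorphism between $\RP$ and $\RPTAG$ preserves the message pattern of the full-information protocol, and hence the syncausality relation; so broom existence transfers, and every triggered run of $P'$ contains an $\tilde{e}$-broom for $I$ by a finite time. The shared clock together with full information then lets every $i\in I$ agree on the earliest such broom time: any broom by a time $t$ has a root $e$ satisfying $e\boundguarantee(i,t)$, whence $e\syncausal{r}(i,t)$ and thus $e\in\PND{r}(i,t)$ for every $i\in I$, so all agents see all relevant broom roots and respond together, exactly once in each triggered run and never otherwise. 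This gives $\timpl_{r'}\in T(0)$, i.e.\ $P'\in\TCR{\tilde{e},I,0}=\SR{\tilde{e},I}$, establishing (\ref{sr-solvable}). The main obstacle I anticipate lies precisely in this last paragraph: securing a genuinely \emph{simultaneous} response in runs of $P'$, which rests both on transferring broom existence across the isomorphism and on the observation that every broom root relevant by time $t$ already lies in each agent's ND-past by $t$; I would also want to verify carefully the automatic finiteness of $L_r$ in the model being assumed, since \corref{path-traversing-centipede-implies-broom} hinges on it.
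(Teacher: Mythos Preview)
Your argument tracks the paper's proof closely: the same cycle of implications, the same use of \corref{implementable-iff-no-negative-cycles} and \claimref{solvable-iff-implementable} for (\ref{sr-solvable})$\Rightarrow$(\ref{all-solvable}), and for (\ref{some-solvable})$\Rightarrow$(\ref{sr-solvable}) both you and the paper invoke \corref{path-traversing-centipede-implies-broom} with $J=I$ and then switch to the broom-based response logic ``respond at the earliest horizon of an $\tilde{e}$-broom for $I$''. You are also slightly more explicit than the paper in exhibiting $\delta\equiv 0$ to make (\ref{all-solvable})$\Rightarrow$(\ref{some-solvable}) non-vacuous.

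The one substantive divergence is exactly the point you flag at the end. The corollary is stated for a shared-clock model, which in this paper need not be discrete-time, so you cannot simply assume the context is bounded-syncausal-path. The paper does not: instead it constructs a \emph{single} triggered run $r\in\TRP$ in which no message is delivered early less than one time unit after being sent (via the ``stand-alone external inputs'' and ``no foresight'' axioms of the continuous-time model; in discrete time it notes that any run already has this property). For that one run, $L_r(\tilde{e},(i,\timpl_r(i)))<\infty$ holds and \corref{path-traversing-centipede-implies-broom} yields the needed broom; the paper then passes to the broom-based response logic with the same justification you give in your last paragraph. Your version---applying \corref{path-traversing-centipede-implies-broom} to \emph{every} triggered run---is valid in discrete time but does not cover the continuous-time case; the paper's single-run construction is precisely what fills that gap.
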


\begin{proof}

\ref{sr-solvable} $\Rightarrow$ \ref{all-solvable}:
Let $P \in \SR{\tilde{e},I}$.
By \corref{implementable-iff-no-negative-cycles}, $\delta$ is implementable.
Therefore, by \claimref{solvable-iff-implementable}, $\TCR{\tilde{e},I,\delta}$ is
solvable.

\ref{all-solvable} $\Rightarrow$ \ref{some-solvable}: Immediate.

\ref{some-solvable} $\Rightarrow$ \ref{sr-solvable}: 
Let $P \in \TCR{\tilde{e},I,\delta}$ for some such $\delta$. W.l.o.g., assume
that $P$ is a full-information protocol.
By combining the ``stand-alone external inputs'' and ``no foresight'' properties
of the continuous-time model presented in \appref{continuous}, we
may construct a run $r \in \TRP$ in which
no message is delivered early less than 1 time unit after it is sent.
(For a discrete-time model, this holds for any run $r \in \TRP$.)
By \corref{path-traversing-centipede-implies-broom},
$r$ contains an $\tilde{e}$-broom for $I$.
Thus, $P$
may be modified to solve $\SR{\tilde{e},I}$ by modifying its response
logic for all agents to: ``respond at the earliest horizon of an
$\tilde{e}$-broom for all $I$''.
(The fact that $P$ is a full-information protocol,
together with the fact that the clock is shared,
guarantees that if a broom for all $I$ exists by any $t \in \timeset$,
then each agent in $I$ can deduce this at $t$.)
\end{proof}

\corref{tcr-iff-sr}, together with \thmref{broom}, imply that
$\TCR{\tilde{e},I,\delta}$
is solvable under
condition \ref{some-solvable} of this corollary iff
there exists an agent $i \in \agents$ to which
there exists a path $\apath \in \paths{\contextgraph}$
from $i_{\tilde{e}}$ and s.t.\
$\max\{\hat{\delta}_{\contextgraph}(i, j)\}_{j \in I} < \infty$.
Furthermore, if $\delta \ge 0$, then
given a full-information protocol endowed with the optimal
response logic, the latest of the responses of $I$ occurs,
in each triggered run of this
protocol, no later than
\begin{equation}\label{response-time-bound}
\min\{\glength(\apath)+\max\{\hat{\delta}_{\contextgraph}(p_n, j)\}_{j \in I} \mid \apathfull \in \paths{\contextgraph} \And p_1=i_{\tilde{e}} \}\footnotemark
\end{equation}\footnotetext{
A generalization of this expression, for cases in which $\delta \ngeq 0$,
may be readily obtained by applying
\remarkref{broom-implies-path-traversing-centipede}.
}time units after the occurrence of $\tilde{e}$, and this bound is tight.
(If this value is infinite,
then there exist triggered runs in which the time of the latest of the
responses is arbitrarily large.)

Using \corref{path-traversing-centipede-implies-centibroom},
we may similarly deduce a generalization of \corref{tcr-iff-sr}, for the case
in which $G_{\delta}$ is not necessarily strongly connected, yielding
a characterisation for solvability of any instance of the timely-coordinated
response problem of finitely many agents in a shared-clock model.

\begin{cor}\label{cor:solvability-iff}
In a shared-clock model, let $\TCRspec$ be a TCR-spec s.t.\ $|I|<\infty$ and
s.t.\ $G_{\delta}$ contains no negative cycles.
$\TCR{\tilde{e},I,\delta}$ is solvable iff
for every tuple $(I_1,\ldots,I_n)$ of strongly-connected components of $G_{\delta}$
s.t.\ there exists a path in $G_{\delta}$ from $I_m$ to $I_{m+1}$
for every $m \in [n-1]$, there exists $(i_m)_{m=1}^n \in \agents$ satisfying:
\begin{itemize}
\item
There exists a path in $\contextgraph$ from $i_{\tilde{e}}$, through $i_n$,
through $i_{n-1}$, \ldots, to $i_1$.
\item
$\max\{\hat{\delta}_{\contextgraph}(i_m, j)\}_{j \in I_m} < \infty$, for every $m \in [n]$.
\end{itemize}
(A natural, yet more cumbersome, analogue of \eqref{response-time-bound} may be
phrased under the conditions of this corollary as well.)
\end{cor}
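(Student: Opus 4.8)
The plan is to prove \corref{solvability-iff} by mirroring the proof of \corref{tcr-iff-sr} and its accompanying discussion, substituting the centibroom generalization \corref{path-traversing-centipede-implies-centibroom} for the broom version \corref{path-traversing-centipede-implies-broom}. The observation that makes the statement tick is that a syncausal chain $\tilde{e}\syncausal{r}e_1\syncausal{r}\cdots\syncausal{r}e_n$ corresponds to a path in $\contextgraph$ --- each syncausal step being a message delivery or a delivery guarantee, both of which merely require an edge of $\contextgraph$, \emph{irrespective of its weight} --- whereas a bound guarantee $e_m\boundguarantee(j,\cdot)$ may only use finite-weight edges and thus witnesses $\hat{\delta}_{\contextgraph}(i_{e_m},j)<\infty$. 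This is precisely what separates the ``path in $\contextgraph$'' requirement (which may run through infinite-weight edges) from the ``finite $\max\hat{\delta}_{\contextgraph}$'' requirement in the statement.

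For the direction asserting that solvability implies the condition, I would fix a solving full-information protocol $P\in\TCR{\tilde{e},I,\delta}$ and a tuple $(I_1,\ldots,I_n)$ of distinct strongly-connected components of $G_{\delta}$ joined by $G_{\delta}$-paths. As in the proof of \corref{tcr-iff-sr}, I would construct a triggered run $r\in\TRP$ in which no message is delivered early by less than one time unit (via the stand-alone-external-inputs and no-foresight properties of the continuous model; in a discrete-time model any triggered run suffices), so that $L_r(\tilde{e},(i,\timpl_r(i)))<\infty$ for $i\in I_1$. Applying \corref{path-traversing-centipede-implies-centibroom} to $\tuple{J}=(I_1,\ldots,I_n)$ yields an $\tilde{e}$-centibroom for $\tuple{J}^{\mathit{rev}}$ in $r$, say $(e_1,\ldots,e_n)$ with observers $a_m$. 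Reading the chain $\tilde{e}\syncausal{r}e_1\syncausal{r}\cdots\syncausal{r}e_n$ as a path in $\contextgraph$ gives a path from $i_{\tilde{e}}$ through $a_1,\ldots,a_n$, while the bound guarantees $e_m\boundguarantee(j,\cdot)$ for $j$ in the matching component give $\max_{j}\hat{\delta}_{\contextgraph}(a_m,j)<\infty$. Setting $i_k\eqdef a_{n-k+1}$ reverses the indexing to match the statement exactly (so the path runs from $i_{\tilde{e}}$ through $i_n,\ldots,i_1$), establishing the condition for this arbitrary chain.

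For the converse, I would endow a full-information protocol with the optimal response logic of \corref{path-traversing-centipede-optimality} and show it is well defined, i.e.\ that in every triggered run every required path-traversing centipede exists by a finite time. Every $\apath\in\dpaths$ starting at a given agent $i$ visits, in order, a chain $(I_1,\ldots,I_n)$ of components along $G_{\delta}$-edges; the condition supplies broom roots $(i_1,\ldots,i_n)$ and a connecting $\contextgraph$-path. In any triggered run, $\tilde{e}$-information propagates along this path (messages are always sent and arrive at finite times), and at each $i_m$ the latest ND event on a syncausal path from $\tilde{e}$ serves, together with the finite reach $\hat{\delta}_{\contextgraph}(i_m,\cdot)<\infty$, as an $\tilde{e}$-broom for $I_m$; chaining these produces an $\tilde{e}$-centibroom for $(I_1,\ldots,I_n)^{\mathit{rev}}$. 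By the centibroom analogue of \remarkref{broom-implies-path-traversing-centipede} (the observation drawn from the proof of \corref{path-traversing-centipede-implies-centibroom}), this single structure implies all $(\apath,\delta)$-traversing centipedes for paths realizing this chain, so $i$ responds by a finite time; the $\delta$-constraints then follow by the $2$-suffix argument from the second part of the proof of \corref{path-traversing-centipede-optimality}.

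The main obstacle I expect is in the converse: producing the centibroom inside an \emph{arbitrary} triggered run purely from the combinatorial data of the condition --- in particular, anchoring the $\tilde{e}$-information reaching each broom root $i_m$ to a genuine ND event that is simultaneously syncausally dominated by its predecessor (to build the chain) and bound-guarantees all of $I_m$ (to build the broom), all while the connecting $\contextgraph$-path may traverse infinite-weight edges whereas the brooms must live on finite-weight ones. A secondary technical point is the bookkeeping that the condensation-DAG chains realized by paths in $G_{\delta}$ are exactly the tuples quantified over in the statement, together with checking that the response-time bound analogous to \eqref{response-time-bound} is finite precisely under this condition.
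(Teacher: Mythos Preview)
Your proposal is correct and follows essentially the same approach as the paper, which does not give an explicit proof but states that the result is deduced ``similarly'' to \corref{tcr-iff-sr} by substituting \corref{path-traversing-centipede-implies-centibroom} for \corref{path-traversing-centipede-implies-broom}. Your two directions---extracting the centibroom (and hence the agents $i_m$) from a single triggered run with bounded syncausal paths, and conversely building a centibroom in every triggered run from the combinatorial data to feed the optimal response logic---are exactly the intended generalization, and the obstacles you flag (chaining the broom roots via genuine ND events, and reducing to distinct strongly-connected components) are real but routine bookkeeping rather than gaps.
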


\corref{solvability-iff}
implies, in particular, that solvability of $\TCR{\tilde{e},I,\delta}$
depends only \linebreak on the strongly-connected components of
$G_{\delta}$, on the partial order it induces on them,\linebreak and on $\contextgraph$.
In most practical situations, $\contextgraph$ is strongly connected, as absence
of this property means that there exist two agents $i,j \in \agents$
s.t.\ $i$ may never hope to send any data to $j$, either directly or indirectly.
If indeed $\contextgraph$ is strongly connected, then
\corref{solvability-iff} reduces to the following, surprisingly simple,
condition for solvability of the timely-coordinated response problem, with
which we conclude this section.

\begin{cor}\label{cor:solvability-depends-on-strongly-connected-components}
In a shared-clock model, let $\TCRspec$ be a TCR-spec
s.t.\ $\contextgraph$ is strongly connected,
s.t.\ $|I|<\infty$
and s.t.\ $G_{\delta}$ contains no negative cycles.
The following conditions are equivalent:
\begin{enumerate}
\item
$\TCR{\tilde{e},I,\delta}$ is solvable.
\item
$\SR{\tilde{e},J}$ is solvable, for each
strongly-connected component $J$ of $G_{\delta}$.
\item
For every strongly-connected component $J$
of $G_{\delta}$, there exists $i \in \agents$ s.t.\ 
$\max\{\hat{\delta}_{\contextgraph}(i, j)\}_{j \in J} < \infty$.
\end{enumerate}
\end{cor}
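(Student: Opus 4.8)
The plan is to prove all three conditions equivalent by routing everything through condition (3): I would establish (1)$\Leftrightarrow$(3) from \corref{solvability-iff} and (2)$\Leftrightarrow$(3) from the broom-based characterisation of simultaneous-response solvability obtained after \corref{tcr-iff-sr}. The role of the hypothesis that $\contextgraph$ is strongly connected is precisely to collapse the cumbersome reachability requirement appearing in \corref{solvability-iff} into the single finite-distance condition (3).

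First I would handle (1)$\Leftrightarrow$(3) via \corref{solvability-iff}, whose hypotheses (shared clock, $|I|<\infty$, no negative cycles in $G_{\delta}$) are all in force here. For the backward direction, assume (3) and fix an arbitrary tuple $(I_1,\ldots,I_n)$ of strongly-connected components of $G_{\delta}$ admitting a path in $G_{\delta}$ from $I_m$ to $I_{m+1}$ for each $m$. For each $m$ take $i_m \in \agents$ to be the witness supplied by (3) for $I_m$, so $\max\{\hat{\delta}_{\contextgraph}(i_m,j)\}_{j \in I_m}<\infty$. It then remains only to exhibit the path in $\contextgraph$ from $i_{\tilde{e}}$ through $i_n,i_{n-1},\ldots,i_1$ demanded by \corref{solvability-iff}: since $\contextgraph$ is strongly connected there is a walk from $i_{\tilde{e}}$ to $i_n$, from $i_n$ to $i_{n-1}$, and so on, and concatenating these yields the required path. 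Hence the hypothesis of \corref{solvability-iff} is met and (1) follows. For the forward direction, assume (1) and apply \corref{solvability-iff} to each singleton tuple $(J)$, where $J$ ranges over the strongly-connected components of $G_{\delta}$ (the path condition being vacuous for $n=1$); this produces an agent $i$ with $\max\{\hat{\delta}_{\contextgraph}(i,j)\}_{j \in J}<\infty$, which is exactly the witness required by (3).

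Next I would establish (2)$\Leftrightarrow$(3). Each component $J$ is a finite subset of $\agents$ (as $J \subseteq I$ and $|I|<\infty$), so the characterisation derived from \corref{tcr-iff-sr} and \thmref{broom} applies to $\SR{\tilde{e},J}$: it is solvable iff there is an agent $i$ reachable from $i_{\tilde{e}}$ in $\contextgraph$ with $\max\{\hat{\delta}_{\contextgraph}(i,j)\}_{j \in J}<\infty$. Strong connectivity of $\contextgraph$ renders the reachability clause automatic, so this is equivalent to the mere existence of such an $i$, i.e.\ to condition (3) for $J$; quantifying over all components $J$ gives (2)$\Leftrightarrow$(3) and completes the equivalence.

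The step I expect to require the most care is the reachability bookkeeping in (1)$\Leftrightarrow$(3): the condition of \corref{solvability-iff} mixes a purely topological requirement (a single path in $\contextgraph$ visiting $i_n,\ldots,i_1$ in order) with the quantitative finite-distance requirement, and it is important not to conflate the two---edge-reachability, furnished for free by strong connectivity, is all that the path clause needs, whereas finiteness of $\max\{\hat{\delta}_{\contextgraph}(i_m,j)\}_{j\in I_m}$ is the genuine content carried by (3). I would also verify that reusing, for each component, the very witness given by (3) is legitimate, and that assembling the per-component walks into one admissible path in $\contextgraph$ is valid (walks with repeated vertices being permitted by our definition of a path).
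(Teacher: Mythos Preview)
Your proposal is correct and follows essentially the same route as the paper, which simply states that the corollary is what \corref{solvability-iff} reduces to once $\contextgraph$ is strongly connected. You make that reduction explicit---using strong connectivity to trivialise the reachability/path clause and retain only the finite-distance clause---and you handle condition~(2) via the broom-based solvability characterisation stated after \corref{tcr-iff-sr}, which is exactly the intended mechanism; your care regarding repeated vertices in paths is also well placed given the paper's definition of $\gpaths$.
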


\section{Finite-Influence Protocols}

In this section, we explain, as promised, why the results of the previous
section hold also for models in which the memory or processing power
of each agent is limited. As a nice bonus, many of those results, which in
the previous section held only for finite sets of agent, will turn out to
hold in such models for infinite sets of agents as well. To give some intuition
for the definition that we use to make this explanation precise, we first
prove a result regarding simultaneous response among infinitely many agents.

Recall that \thmref{broom} implies that in a discrete-time model,
simultaneous response of finitely many agents based on an ND event
requires the existence of a broom.
In \corref{path-traversing-centipede-implies-broom}, we have relaxed the
requirement for discrete-time in this result to a requirement
for a bounded-syncausal-path context.
A natural question to ask
is whether the requirement for finiteness of the set of responding agents
may somehow be relaxed as well. A quick check shows that none of the proof
strategies we have seen so far for (any variant of) \thmref{broom} scale to
the case of coordinating a simultaneous response among infinitely many agents.
Indeed, it turns out that while a broom for any finitely-sized subset of agents
is guaranteed under such conditions (it is not hard to show that such a
collection of finite brooms is sufficient to optimally coordinate a simultaneous
response of all agents),\footnote{
This observation is closely tied to the fact that by
\defnref{common-knowledge}, in any context
we have:
\[ C_I(\psi) \:= \smashoperator[r]{\bigcap_{\substack{J \subseteq I \\ |J|<\infty}}}\, C_J(\psi), \]
for any set of agents $I$, regardless of its cardinality. Readers who find
it intuitively difficult to accept this ``finiteness'' of common knowledge
(which yields the above mentioned possibility of a simultaneous response of $I$
in a bounded-syncausal-path context in the absence of a broom for $I$)
may wish to read \cite{barwise}, in which Barwise suggests an
interpretation of knowledge in which the fixed-point definition of common
knowledge (\thmref{common-knowledge-fixed-point})
is strictly stronger than its definition as a conjunction of
finitely-nested knowledge events (\defnref{common-knowledge}).
} a broom for all agents is not guaranteed even in
bounded-syncausal-path contexts. (As may be expected, such a broom is sufficient
for coordinating a simultaneous response, although it is possible to construct
an example in which the response logic based on
the existence of such a broom is non-optimal.)
The following theorem characterises the conditions required for simultaneous
response of countable many
agents in a bounded-syncausal-path context and shows that there are,
in a sense, some unintuitive consequences to the absence of a broom for all
agents in this situation. We will shortly use the insights this theorem
provides us in order to phrase a restriction on protocols, which disallows such
``consequences''.

\begin{thm}[Infinite Broom or Infinitely Many Brooms]\label{thm:infinite-broom-or-infinite-brooms}
Let $\gamma$ be a bounded-syncausal-path context,
let $I \subseteq \agents$ s.t.\ $|I|=\aleph_0$,
let $\tilde{e} \in \externalinputs$ and let
$P \in \SR{\tilde{e},I}$. For every $r \in \TRP$, one of the following holds:
\begin{enumerate}
\item
There exists an $\tilde{e}$-broom for $I$ by $t_r$ in $r$; or
\item
For every finite $J \subseteq I$,
there exist infinitely many distinct $\tilde{e}$-brooms for $J$
by $t_r$ in $r$.
\end{enumerate}
\end{thm}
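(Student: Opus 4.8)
The plan is to reduce the whole statement to the existence of brooms for \emph{finite} subsets — which the earlier machinery already supplies — and then to run a finite pigeonhole argument over the surviving candidate brooms in the case where the second alternative fails. First I would fix a triggered run $r \in \TRP$ and write $t_r$ for the common response time, so that $\timpl_r \equiv t_r$. Since $P \in \SR{\tilde e, I} = \TCR{\tilde e, I, 0}$, the constraining function is $\delta \equiv 0$, whence $G_\delta$ is the complete directed graph on $I$ with all weights $0$; in particular it is strongly connected and $\hat\delta \equiv 0$ is (trivially) antisymmetric on every subset of $I$. For any finite $J \subseteq I$, both parts of \corref{path-traversing-centipede-implies-broom} then apply — the hypothesis $L_r(\tilde e, (i, \timpl_r(i))) < \infty$ being automatic because $\gamma$ is a bounded-syncausal-path context — yielding an $\tilde e$-broom for $J$ whose end nodes are $(j, \timpl_r(j))_{j \in J} = (j, t_r)_{j \in J}$, i.e.\ a broom for $J$ \emph{by} $t_r$. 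So the preliminary fact I would record is: for every finite $J \subseteq I$, the set of $\tilde e$-brooms for $J$ by $t_r$ in $r$ is non-empty.

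Next I would prove the dichotomy by contraposition. Assume the second alternative fails, so some finite $J_0 \subseteq I$ admits only finitely many distinct $\tilde e$-brooms for $J_0$ by $t_r$; by the preliminary fact this list is non-empty, say $e_1, \ldots, e_k \in \ND(r)$. The crucial monotonicity observation is that any $\tilde e$-broom for a finite set $J \supseteq J_0$ by $t_r$ is automatically an $\tilde e$-broom for $J_0$ by $t_r$ — its defining conditions $\tilde e \syncausal{r} e$ and $e \boundguarantee (i, t_r)$ for all $i \in J$ restrict to those for $i \in J_0$ — and hence must be one of $e_1, \ldots, e_k$. Setting $S_\ell \eqdef \{ i \in I \mid e_\ell \boundguarantee (i, t_r)\}$, and noting $\tilde e \syncausal{r} e_\ell$ already holds (each $e_\ell$ is a broom for $J_0$), I get that $e_\ell$ is a broom for $S \subseteq I$ by $t_r$ exactly when $S \subseteq S_\ell$. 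A finite pigeonhole then closes the argument: if no $S_\ell$ contained all of $I$, I would choose $a_\ell \in I \setminus S_\ell$ for each $\ell$ and apply the preliminary fact to the finite set $J_0 \cup \{a_1, \ldots, a_k\}$; its broom must be some $e_\ell$, forcing $a_\ell \in S_\ell$, a contradiction. Thus some $S_\ell \supseteq I$, i.e.\ $e_\ell$ is an $\tilde e$-broom for all of $I$ by $t_r$, which is the first alternative.

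The point that needs the most care — what I expect to be the only genuine obstacle — is verifying that \corref{path-traversing-centipede-implies-broom} delivers brooms whose horizon is \emph{exactly} $t_r$ rather than merely some finite bound, since it is this uniform time that lets the brooms for all finite $J$ be directly comparable and lets them be identified with the fixed list $e_1, \ldots, e_k$. This is precisely what the antisymmetry of $\hat\delta \equiv 0$ buys through the second part of that corollary; everything downstream is elementary finite combinatorics. I would also remark that the countability hypothesis $|I| = \aleph_0$ is inessential to this argument, which works verbatim for any infinite $I$, as the pigeonhole only ever forms finite subsets of $I$.
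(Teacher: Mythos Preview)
Your proof is correct and shares with the paper's proof both the preliminary fact (existence of $\tilde{e}$-brooms by $t_r$ for all finite subsets, via \corref{path-traversing-centipede-implies-broom}) and the contrapositive structure (assume some finite $J_0$ has only finitely many brooms, deduce a broom for all of $I$). Where you diverge is in the combinatorial endgame. The paper builds an increasing chain $J_0 = E_1 \subseteq E_2 \subseteq \cdots$ of finite sets exhausting $I$ (using a well-ordering of $I$ of type $\omega$), notes that the corresponding broom sets $B_k$ form a decreasing chain of non-empty subsets of the finite set $B_1$, and extracts an element of $\bigcap_k B_k$ via a compactness/finite-intersection-property argument; this element is then a broom for $\bigcup_k E_k = I$. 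You instead run a single-step pigeonhole: choose, for each of the finitely many candidate brooms $e_\ell$, a witness $a_\ell \in I$ with $e_\ell \notboundguarantee (a_\ell, t_r)$, and contradict the existence of a broom for the finite set $J_0 \cup \{a_1,\ldots,a_k\}$. Your route is more elementary, avoids the sequence construction entirely, and --- as you observe --- does not use the hypothesis $|I|=\aleph_0$, so it generalizes verbatim to arbitrary infinite $I$. The paper's route, by contrast, genuinely exploits countability in writing $I$ as a countable increasing union of finite sets.
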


\begin{proof}
Set $t \eqdef t_r$. Let $J \subseteq I$ be finite.
We inductively construct a sequence of sets of ND events $(E_k)_{k=1}^{\infty}
\in \ND(r)^{\mathbb{N}}$, satisfying:
\begin{itemize}
\item
$E_1 = J$.
\item
$\forall k \in \mathbb{N}: |E_k| < \infty$.
\item
$\forall k \in \mathbb{N}: E_k \subseteq E_{k+1}$.
\item
$\cup_{k=1}^{\infty} E_k = I$.
\end{itemize}
To construct $(E_k)_{k=1}^{\infty}$, choose any well-ordering of $I$ under which
it is isomorphic to $\omega$, and given $E_k$ for $k \in \mathbb{N}$, set
$E_{k+1} \eqdef E_k \cup \{\min\{I \setminus E_k\}\}$.

For every $k \in \mathbb{N}$, set $B_k \eqdef \{e \in \ND(r) \mid \text{$e$ is an $\tilde{e}$-broom for $E_k$ by $t$ in $r$}\}$.
By \remarkref{tcr-properties} and by
\corref{path-traversing-centipede-implies-broom},
$\forall k \in \mathbb{N}: B_k \ne \emptyset$.
Furthermore, by the definition of a broom and since $(E_k)_{k=1}^{\infty}$ is
increasing, $\forall k \in \mathbb{N}: B_k \supseteq B_{k+1}$.

If $|B_1| < \infty$, then $\forall
k \in \mathbb{N}: |B_k| < \infty$ and all of $\{B_k\}_{k=1}^\infty$ are therefore
closed under the
co-finite topology on $\mathbb{N}$ and have the finite intersection property.
From compactness of $\mathbb{N}$ under this topology, we obtain
$B \eqdef \cap_{k=1}^{\infty} B_k \ne \emptyset$. Let $e \in B$. By definition
of $(B_k)_{k=1}^{\infty}$, $e$ is an $\tilde{e}$-broom for
$\cup_{k=1}^{\infty} E_k =I$ by $t$, and the first condition of
\thmref{infinite-broom-or-infinite-brooms} is satisfied.

Otherwise, $B_1$ is of infinite cardinality, and thus $J$ satisfies the second
condition of \thmref{infinite-broom-or-infinite-brooms}.
\end{proof}

\thmref{infinite-broom-or-infinite-brooms} shows that in the absence of a
single broom for all agents in $I$, there are, for each $i \in I$,
infinitely many
brooms that are ``important'' to $i$, in the sense that the existence of any
finite subset thereof is not sufficient to trigger the response of $i$ at $t$.
It seems unrealistic for $i$ to check (directly, or
indirectly by receiving this information from some other agent) that
infinitely many brooms exist.\footnote{
It should be noted that while one may argue that sending infinitely many
messages is equally unrealistic (an argument that suggests that
even $\ER{\tilde{e},I}$
is unsolvable for infinite $I$), we would like to argue that broadcasting a
single message from one agent to infinitely many agents may not be inconceivable.
For example, one may post such a message in
a publicly-visible place such that each agent is guaranteed to notice this
message within a given time period after its posting.
}
We now formalize this intuition.

\begin{defn}[Finite-Influence Protocol]
Let $\gamma$ be a context.
We say that a protocol $P \in \protocols$ is a ``finite-influence'' protocol if
for any run
$r \in \RP$ and for any agent-time pair $(i,t) \in \agents \times \timeset$
s.t.\ $i$ responds in $r$ at $t$, there exists a
finite $t$-retainable§ set $E \in \RND(r,t)$ s.t.\ $i$ still responds at
$t$ in $r \tcap E$.
\end{defn}

Obviously, all protocols in any context in which only finitely many messages
may be sent (or rather, may be delivered early) in any bounded time frame,
are finite-influence protocols. When only finitely many agents exist,
then all protocols in any discrete-time context in which a universal positive
lower bound on all delivery times holds, are also finite-influence protocols.
As noted in the previous section, both of these properties are traditionally
taken as axioms, and this is equally true for finiteness of the set of agents.
While, as noted above, not all finite-memory and finite-processing power models
guarantee bounded-syncausal-path contexts, we now explain why all protocols in
such models are finite-influence ones.

If a protocol $P$ is not a finite-influence protocol,
then there exists a run $r \in \RP$ and an agent-time pair $(i,t)$ s.t.\
by time $t$ either $i$, or some other agent who sends information to $i$,
has to either take infinitely many ND events into
account when performing some state change (i.e.\
no finite subset of these events taking place would have yielded the
same state change), or perform infinitely many state changes.
This implies that at least one agent utilizes infinite processing power in
finite time. If, furthermore, we assume that no two ND events reach an agent
at exactly the same time (it is enough to assume that the logic of $i$ can not
atomically access information regarding more than one ND event observed by it),
then that agent not only utilizes infinite processing power,
but also infinite memory, as the infinite amount of state changes described
above must involve infinitely many unique states. (Alternatively, the
computation of such a single state change that takes into account infinitely
many ND events, requires infinite memory.)

We may conclude that for finite-influence protocols in
bounded-syncausal-path contexts, \thmref{broom} holds even when time is
continuous and when $I$ may be countably infinite.
It took us quite a chain of reductions and conclusions to show this.
One may argue that this might suggest that the concept of
finite influence is somewhat artificial, and ``not from the book''.
To try and refute this
argument, we now present a novel, direct and concise proof of \thmref{broom}
which holds for any finite-influence protocol (even in a continuous-time
model), regardless of the context
and of the cardinality of $I$.\footnote{
This proof extends immediately to proving the second part of
\corref{path-traversing-centipede-implies-broom} under the conditions
of \claimref{finite-influence-implies} below (implying an infinite-agent
version of \thmref{uneven-broom} for finite-influence protocols),
and even extends (using an inductive argument
very similar to the one used in \thmref{path-traversing-centipede}) to prove
the second part of \corref{path-traversing-centipede-implies-centibroom}
under the conditions of \claimref{finite-influence-implies} (implying
an infinite-agent version of \thmref{centibroom} for finite-influence
protocols).
For the sake of conciseness, though, we phrase and
prove it here only for simultaneous response.
}

\begin{thm}[Broom]\label{thm:concise-broom}
Let $\gamma$ be a context, let
$I \subseteq \agents$, let $\tilde{e} \in \externalinputs$ and
let $P \in \SR{\tilde{e},I}$ be a finite-influence protocol.
Each $r \in \TRP$ contains an $\tilde{e}$-broom for $I$ by $t \eqdef t_r$.
\end{thm}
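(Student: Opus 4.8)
The plan is to argue by contradiction, using finite influence to replace the (possibly infinitary) obstruction by a finite, manipulable one, and using simultaneity to force the per-agent witnesses into a single common cause. Throughout, fix $r \in \TRP$ and $t \eqdef t_r$; by definition of $\SR{\tilde{e},I}$, every $i \in I$ responds at $t$ in $r$. First I would record an ``all-or-nothing'' principle: for any subrun $r' \tsubseteq r$, if even one agent of $I$ responds at $t$ in $r'$, then $r'$ is triggered (a response cannot precede $\tilde{e}$; cf.\ \remarkref{response-not-before}), and hence by simultaneity \emph{every} agent of $I$ responds at $t$ in $r'$. Next, fixing any $i_0 \in I$, since $i_0$ responds at $t$ and $P$ is a finite-influence protocol, there is a finite $E \in \RND(r,t)$ with $i_0$ (hence, by the principle above, all of $I$) responding at $t$ in $r \tcap E$. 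I would then shrink $E$ to an inclusion-minimal such set—possible since $E$ is finite—and, using \claimref{no-nd-between-tprime-t}, pick a representative $s \eqdef r \tcap E$ whose only external input is $\tilde{e}$, so that $\tilde{e} \in E$ and $\ND(s,t) = E$ is finite.

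The goal is then a single ND event $e^*$ with $\tilde{e} \syncausal{r} e^*$ and $e^* \boundguarantee (i,t)$ for every $i \in I$. The base case of \thmref{path-traversing-centipede} already supplies, for each individual $i$, some $e$ with $\tilde{e} \syncausal{r} e \boundguarantee (i,t)$; the whole content of the broom is that simultaneity collapses these separate witnesses into one. The cleanest instance is $E = \{\tilde{e}\}$: here no message is delivered early up to $t$, so in $s$ every syncausal edge after $\tilde{e}$ is a locality or a delivery-guarantee edge; running the base-case argument for each $i$ then yields $\tilde{e} \boundguarantee (i,t)$ for all $i$ at once, so $\tilde{e}$ \emph{itself} is a broom. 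Crucially, this conclusion transfers to $r$ for free, since $\boundguarantee$ is run-independent and the ancestry requirement is trivial for $e^* = \tilde{e}$. For general minimal $E$, I would locate the broom among the essential early deliveries of $E$: reading the removal of an early delivery as delaying its message to its guaranteed latest time, minimality says each such event genuinely advances $\tilde{e}$'s guaranteed information toward the deadline $t$, and I would show that the information reaching all of $I$ by $t$ must funnel through one such event, the broom handle.

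The hard part is precisely this funneling step: proving that simultaneity (and not merely eventual response) forces a \emph{single} common bound-guarantee cause reaching every $i \in I$ by $t$—the combinatorial shadow of the fact that common knowledge has a common cause. A second, more technical obstacle is that the broom must be exhibited inside $r$ itself: the bound-guarantee half $e^* \boundguarantee (i,t)$ transfers for free, but the ancestry half $\tilde{e} \syncausal{r} e^*$ does not automatically descend from the reduced run $s$ back to $r$, because deleting ND events can change which deterministic messages agents send. I would address both points by keeping the entire argument inside the lattice of subruns of $r$—using \corref{adjusted-lemma-3}, \claimref{no-nd-between-tprime-t}, and the composition law for $\tcap$ in \remarkref{retainable-properties}—and by anchoring $\tilde{e}$-ancestry through a responding agent exactly as in the base case of \thmref{path-traversing-centipede}: if the information flow did \emph{not} funnel through a common event, one could delay a single early delivery so as to push $\tilde{e}$'s arrival at some agent past $t$ while leaving $i_0$ on time, producing a subrun of $r$ that violates the all-or-nothing principle; the impossibility of any such desynchronizing deletion is what pins down $e^*$.

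Finally, for infinite $I$ the finite certificate $E$ still governs only finitely many events, so the finite extraction applies once a single common event is found; this is also where the finite-influence hypothesis does its essential work, ruling out the pathological ``infinitely many brooms'' alternative isolated in \thmref{infinite-broom-or-infinite-brooms} and thereby guaranteeing a genuine broom for all of $I$ (rather than merely for every finite subset).
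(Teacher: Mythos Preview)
Your setup matches the paper exactly: fix a minimal finite $E \in \RND(r,t)$ with all of $I$ responding at $t$ in $s \eqdef r \tcap E$, using finite influence and the all-or-nothing principle. The divergence---and the gap---is in how you propose to extract the broom event from $s$.

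Your plan is a contradiction argument: if no single event funnels information to all of $I$, then ``delay a single early delivery so as to push $\tilde{e}$'s arrival at some agent past $t$ while leaving $i_0$ on time, producing a subrun of $r$ that violates the all-or-nothing principle.'' But this cannot work as stated: the all-or-nothing principle is a \emph{property of $P$} that holds in every run, so no subrun can violate it. The only contradiction available is with \emph{minimality of $E$}, not with simultaneity, and your sketch does not explain how failure of funneling would let you shrink $E$ further. More fundamentally, you have not identified what the broom event actually is, nor why it must bound-guarantee to every agent.

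The paper's argument supplies exactly these two missing moves. First, it proves an ``all-seen-by-all'' lemma: for every $i \in I$, $\PND{s}(i,t) = E$. This follows directly from minimality---if some $e \in E$ were absent from $\PND{s}(i,t)$, then by \corref{adjusted-lemma-3} agent $i$ would still respond at $t$ in $s \tcap \PND{s}(i,t)$, hence by all-or-nothing so would everyone, contradicting minimality of $E$. Second, the broom handle is chosen explicitly: take $e^* \in E$ with \emph{maximal occurrence time} among those satisfying $\tilde{e} \syncausal{s} e^*$. Then for each $i$, since $e^* \in E = \PND{s}(i,t)$, there is a syncausal path $e^* \syncausal{s} (i,t)$; any delivery $d \ne e^*$ along it would satisfy $\tilde{e} \syncausal{s} d$ with $t_d > t_{e^*}$, so by maximality $d \notin \ND(s)$, meaning every edge after $e^*$ is locality or delivery-guarantee---hence $e^* \boundguarantee (i,t)$. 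The transfer of $\tilde{e} \syncausal{s} e^*$ back to $r$ is then handled by $s \tsubseteq r$. Your concern about this transfer is legitimate, but the paper resolves it directly rather than through your proposed re-anchoring; and the maximality criterion for selecting $e^*$ is the concrete idea your ``funneling'' intuition is reaching for but does not articulate.
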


\begin{claim}\label{claim:all-seen-by-all}
Under the conditions of \thmref{concise-broom},
there exists $r' \tsubseteq r$ such that
$t_{r'} = t$, such that $|\ND(r',t)|<\infty$ and such that
$\PND{r'}(i,t) = \ND(r',t)$ for every $i \in I$.
\end{claim}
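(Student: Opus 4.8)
The plan is to construct $r'$ from $r$ by a finite descent: use the finite-influence hypothesis once to obtain a finite starting point, and then repeatedly pass to the ND-past of an agent that fails to see every ND event, shrinking the retainable set until no such agent remains. First I would fix any $i_0 \in I$ (the case $I=\emptyset$ being vacuous). Since $P \in \SR{\tilde{e},I}$ and $r \in \TRP$, every agent of $I$ responds at $t = t_r$; in particular $i_0$ does, so because $P$ is a finite-influence protocol there is a finite $E_0 \in \RND(r,t)$ with $i_0$ still responding at $t$ in $r_0 \eqdef r \tcap E_0$. As $r_0 \in \RP$ and $\timpl_{r_0}$ is constant (simultaneity of $\SR$), $i_0$ responding at $t$ forces all of $I$ to respond at $t$ in $r_0$, so $t_{r_0}=t$; and $\ND(r_0,t)=E_0$ is finite. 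This already secures conditions~1 and~2, leaving only the ``seen by all'' condition~3. Note that the finite-influence step is genuinely needed here: taking $E_0 \eqdef \PND{r}(i_0,t)$ directly could yield an infinite set.

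Next I would iterate. Suppose I have reached $r_k = r \tcap E_k$ with $E_k \eqdef \ND(r_k,t)$ finite and all of $I$ responding at $t$. If $\PND{r_k}(i,t)=E_k$ for every $i \in I$, then $r'\eqdef r_k$ already satisfies condition~3 and I am done. Otherwise (since $\PND{r_k}(j,t)\subseteq \ND(r_k,t)$ always holds) there is $j \in I$ with $\PND{r_k}(j,t) \subsetneq E_k$; I set $E_{k+1}\eqdef \PND{r_k}(j,t)$ and $r_{k+1}\eqdef r_k \tcap E_{k+1}$. By \lemmaref{lemma-3}, $E_{k+1} \in \RND(r_k,t)$ and the behaviour of $j$ at $t$ — in particular, whether it responds there — is identical in $r_k$ and $r_{k+1}$ (this is exactly the way \lemmaref{lemma-3} is used in the proof of \thmref{path-traversing-centipede}); hence $j$ responds at $t$ in $r_{k+1}$, and by simultaneity so do all of $I$, giving $t_{r_{k+1}}=t$. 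The composition identity in the second bullet of \remarkref{retainable-properties} then yields $E_{k+1}\in \RND(r,t)$ and $r_{k+1}=r\tcap E_{k+1}$, so each $r_k \tsubseteq r$ is retained with respect to the original run.

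The descent terminates because $|E_{k+1}| < |E_k|$ at every non-final step while $|E_0|<\infty$; after at most $|E_0|$ steps no failing agent exists, which is precisely the assertion $\PND{r'}(i,t)=\ND(r',t)$ for every $i \in I$. (As a sanity check, since responses occur only in triggered runs, $\tilde{e}$ survives in every $E_k$, so the sets never collapse uncontrollably.) I expect the main obstacle to be the two bookkeeping facts that keep the induction coherent: first, that replacing $r_k$ by $r_k \tcap \PND{r_k}(j,t)$ preserves $j$'s \emph{response} at $t$ (hence, via $\SR$, the simultaneous response of all of $I$), which relies on reading \lemmaref{lemma-3} as a statement about the action taken at $t$ and not merely the bare state; and second, that the restriction operations compose so that every $r_k$ remains a $t$-subrun of $r$ with $\ND(r_k,t)$ equal to the prescribed finite set, for which I would lean on \remarkref{retainable-properties}. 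Everything else — finiteness, strict decrease, the equivalence of ``$\PND{r_k}(j,t)\ne\ND(r_k,t)$'' with proper inclusion, and the fact that $\PND{r_k}(j,t)$ depends only on data up to $t$ and so is well defined on the $\tsubseteq$-class — is routine.
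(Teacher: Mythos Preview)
Your proof is correct and follows essentially the same approach as the paper: the paper simply takes $E \in \RND(r,t)$ \emph{minimal} with $t_{\smash{r \tcap E}}=t$ (whose existence is exactly your finite-descent argument compressed into one word) and then observes, via \corref{adjusted-lemma-3} and simultaneity, that $\PND{r\tcap E}(i,t)=E$ for every $i\in I$ by minimality. Your iterative construction is just the explicit unrolling of that minimality step, and the bookkeeping concerns you flag (response preserved under passing to $\PND{}$, composition of restrictions) are precisely the facts the paper uses, citing \corref{adjusted-lemma-3} and \remarkref{retainable-properties}.
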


\begin{proof}
Let $E \in \RND(r,t)$ be minimal such that $t_{\smash{r \tcap E}}=t$.
(There always exists such a finite set, due to $P$ being a
finite-influence protocol, since by correctness of $P$, if even one agent
$i \in I$ responds at $t$, then all agents in $I$ do.)
By definition, $E = \ND(r \tcap E,t)$.

Let $i \in I$.
Set $E' \eqdef \PND{r \tcap E}(i,t)$. By \corref{adjusted-lemma-3},
$E' \in \RND(r \tcap E,t)$,
and the state of $i$ at $t$ is the same in $r \tcap E$ and in
$(r \tcap E) \tcap E' = r \tcap E'$.
Therefore, $i$ still responds at $t$ in
$r \tcap E'$. Thus,
by correctness of $P$, we obtain $t_{\smash{r \tcap E'}} = t$.
By minimality of $E$, therefore, since $E' \subseteq E$, we obtain
$E' = E$, and thus any $r' \in r \tcap E$ fulfills the above requirements.
\end{proof}

\begin{proof}[Proof of \thmref{concise-broom}]
Let $r'$ be as in \claimref{all-seen-by-all} and choose
$e \in \ND(r',t)$ with maximal $t_e$ among all those
satisfying $\tilde{e} \syncausal{r'} e$. (There always exists such an
event, by finiteness of $\ND(r',t)$ and as $\tilde{e}$ is always a viable
candidate, since
$\tilde{e} \in \ND(r',t_{r'})=\ND(r',t)$, by correctness of $P$ and as $t_{r'} = t < \infty$.)
Since $r' \tsubseteq r$, any syncausal path in $r'$ ending
no later than at $t$ is also
a syncausal path in $r$, and therefore also $\tilde{e} \syncausal{r} e$.
Let $i \in I$. As $e \in \ND(r',t) = \PND{r'}(i,t)$, we obtain
$e \syncausal{r'} (i,t)$. Note that any delivery $d \ne e$ along any
syncausal path $e \syncausal{r'} (i,t)$ satisfies both
$\tilde{e} \syncausal{r'} d$ and $t_e < t_d$. Hence, by
maximality of $t_e$, we have $d \notin \ND(r',t)$.
Moreover, since $t_d \le t$, we obtain
$d \notin \ND(r')$ as well. Therefore, we obtain
$e \boundguarantee (i,t)$.
\end{proof}

As noted in the previous section, the fact that a protocol is a
finite-influence one need not dictate, in general, that the context is a
bounded-syncausal-path one.
Nonetheless, we now demonstrate a general technique
that can be employed to show that many results regarding the existence of
syncausal structures in bounded-syncausal-path contexts hold for
finite-influence protocols in arbitrary contexts as well. Furthermore,
many such results, which hold only for finite sets of agents in
bounded-syncausal-path contexts, generalize, for finite-influence protocols,
to hold for infinitely many agents as well.

\begin{claim}\label{claim:finite-influence-implies}
\corref{path-traversing-centipede-implies-broom}
(resp.\ \corref{path-traversing-centipede-implies-centibroom})
also holds when $P$ is a
finite-influence protocol, even when dropping the requirements
for a bounded syncausal path from $\tilde{e}$ to $(i,\timpl_r(i))$
and for
finiteness of $J$ (resp.\ of each $J_m$).
Under these conditions, the first part of that corollary guarantees,
for arbitrary $i \in J$ (resp.\ $i \in J_1$) of our choosing,
a finite\footnote{
Once we allow $J$ (resp,\ each $J_m$) to be infinite, though,
the first part of that corollary
requires the additional assumption that $\hat{\delta}$ is bounded
from above on $J$ (resp.\ on each $J_m$), for the guaranteed bound to be
finite. (This assumption was redundant as long as $J$ (resp.\ each $J_m)$ was
finite.)
} bound of the form
$\tilde{b}(\timpl_r(i),\sup(\hat{\delta}|_{J^2}),|E|)$
(resp.\
$\tilde{b}(\timpl_r(i),n,(\sup(\hat{\delta}|_{{J_m}^2}))_{m=1}^n,(\inf(\hat{\delta}|_{J_m \times J_{m+1}}))_{m=1}^{n-1},|E|)$),
where $E$ is a
minimal $\timpl_r(i)$-retainable set that still guarantees $i$'s response at
$\timpl_r(i)$.
\end{claim}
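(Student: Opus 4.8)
The plan is to use finite influence purely as a device for \emph{locally} reducing an arbitrary context to a bounded-syncausal-path one, and then to replay the proofs of \corref{path-traversing-centipede-implies-broom} and \corref{path-traversing-centipede-implies-centibroom} (or, for infinite $J$, the direct argument of \thmref{concise-broom}) inside the reduced run. Fix the agent $i \in J$ (resp.\ $i \in J_1$) and set $t \eqdef \timpl_r(i)$. By the finite-influence property there is a \emph{finite} minimal $t$-retainable set $E \in \RND(r,t)$ for which $i$ still responds at $t$ in $r' \eqdef r \tcap E$; by construction $\ND(r',t) = E$. Since $i$ responds at $t < \infty$ in $r'$, correctness of $P$ (an eventual-, hence triggered, response protocol) gives $r' \in \TRP$ and $\timpl_{r'}(i) = t$; when $\hat{\delta}|_{J^2}$ is antisymmetric the difference $\timpl_{r'}(j) - \timpl_{r'}(i)$ is pinned to $\hat{\delta}(i,j)$ by $\timpl_{r'} \in T(\delta) = T(\hat{\delta})$, so $\timpl_{r'}(j) = \timpl_r(j)$ for every $j \in J$.

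The payoff of passing to $r'$ is that, because $\ND(r',t) = E$ is finite, every syncausal chain $\tilde{e} \syncausal{r'} \cdots \syncausal{r'} (i,t)$ meets only events of $E$, so $L_{r'}(\tilde{e},(i,t)) \le |E| < \infty$: that is, $r'$ is locally a bounded-syncausal-path run even when $\gamma$ is not. I would therefore replay the proof of \corref{path-traversing-centipede-implies-broom} verbatim inside $r'$, with $|E|$ in the role of $L_r(\tilde{e},(i,t))$ — concatenate a Hamiltonian cycle of the $J$-induced subgraph of $G_{\hat{\delta}}$, invoke \thmref{path-traversing-centipede} in $r'$, and apply the pigeonhole principle (all centipede events lie in $E$) to extract a single event $e \in E$ bound-guaranteeing $|J|$ consecutive vertices, i.e.\ an $\tilde{e}$-broom for $J$. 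Transferring the broom back to $r$ is immediate: the witnessing event satisfies $t_e \le t$, so the syncausal path $\tilde{e} \syncausal{r'} e$ ends by $t$ and persists in $r$ (as $r' \tsubseteq r$), while every $e \boundguarantee (\cdot)$ is run-independent. This already settles the case of \emph{finite} $J$; the centibroom statement follows by the same reduction applied component-by-component, mirroring the proof of \corref{path-traversing-centipede-implies-centibroom}.

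The main obstacle is removing the finiteness hypothesis on $J$ (resp.\ on each $J_m$), which the pigeonhole route cannot survive: covering an infinite $J$ by $|J|$ \emph{consecutive} vertices of a finite concatenated path is impossible, and a $|J|$-indexed horizon is meaningless. Here I would discard the path/pigeonhole argument and instead generalise the maximal-$t_e$ argument of \thmref{concise-broom}, as anticipated in its footnote. This requires a $\TCR$-analogue of \claimref{all-seen-by-all}: after shrinking to the finite minimal $E$, one shows that a single event reaches \emph{all} of the response nodes $\{(j,\timpl_r(j))\}_{j \in J}$ in $r'$, and then selects the event $e \in E$ with $\tilde{e} \syncausal{r'} e$ of maximal occurrence time; maximality forces every intermediate delivery along $e \syncausal{r'} (j,\timpl_r(j))$ to fall outside $\ND(r')$, whence $e \boundguarantee (j,\timpl_r(j))$ for every $j$. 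The delicacy — and the reason this is not a verbatim quotation of \claimref{all-seen-by-all} — is that the response nodes now sit at the \emph{different} times $\timpl_r(j)$ rather than at a common $t$, so the identity $\PND{r'}(i,t) = \ND(r',t)$ must be re-established at the response-time-indexed nodes, using antisymmetry to pin these times and \corref{adjusted-lemma-3} to preserve the relevant local states under shrinking. Once this is in place the broom is obtained directly by $(\timpl_r(j))_{j \in J}$, with horizon at most $t + \sup(\hat{\delta}|_{J^2})$; the auxiliary hypothesis that $\hat{\delta}$ is bounded above on $J$ is exactly what renders this supremum, and hence the promised $|J|$-independent bound, finite. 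The centibroom case is then completed by the inductive variant of this argument indicated in the footnote to \thmref{concise-broom}.
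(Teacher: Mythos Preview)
Your reduction to $r' = r \tcap E$ with $|\ND(r',t)| = |E| < \infty$ and replay of the pigeonhole argument inside $r'$ is exactly the paper's proof for finite $J$. For infinite $J$, though, there is a gap. Your proposed route --- generalising the maximal-$t_e$ argument of \thmref{concise-broom} --- is the extension advertised in that theorem's footnote, but as you yourself flag, it needs antisymmetry of $\hat{\delta}|_{J^2}$ to pin $\timpl_{r'}(j)$ and make the all-seen-by-all step go through. That handles the \emph{second} part of \corref{path-traversing-centipede-implies-broom} (broom by $(\timpl_r(j))_{j \in J}$) but not the \emph{first}: a broom for $J$ by some finite horizon, with no antisymmetry assumed. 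The claim covers both parts.

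The paper fills this by a short witness argument that bootstraps the infinite case from the finite one without touching either pigeonhole or all-seen-by-all. Set $\tilde{b}$ to be the finite-case horizon bound with $|E|$ substituted for $|J|$, and suppose no $e \in E$ is an $\tilde{e}$-broom for $J$ by $\tilde{b}$; then each $e \in E$ admits a witness $j_e \in J$ with $e \notboundguarantee (j_e,\tilde{b})$. The set $\{j_e\}_{e \in E}$ has at most $|E|$ elements, so the already-established finite case (with $|J'| \le |E|$, hence horizon at most $\tilde{b}$) produces some $e \in E$ that is a broom for all of $\{j_e\}$ by $\tilde{b}$ --- contradicting the choice of $j_e$. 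This needs no antisymmetry and delivers exactly the $|J|$-free bound the claim promises; the second part and the centibroom analogues then follow by the same trick. Your concise-broom route does buy the sharper horizon $t + \sup(\hat{\delta}|_{J^2})$ in the antisymmetric case, but it is the witness reduction that makes the general first part go through.
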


\begin{proof}
We present a proof of \claimref{finite-influence-implies}
with regard to the first part of
\corref{path-traversing-centipede-implies-broom}.
The proofs with regard to the second part thereof, and to both parts of
\corref{path-traversing-centipede-implies-centibroom}, are analogous.

Let $\TCRspec$ be a TCR-spec,
Let $P \in \TCR{\tilde{e},I,\delta}$ be a finite-influence protocol,
let $r \in \TRP$ and let
$J \subseteq I$ s.t.\ $\sup(\hat{\delta}|_{J^2})<\infty$.
Let $i \in J$ and let $E \in \RND(r,t)$ be minimal s.t.\
$\timpl_{\smash{r \tcap E}}(i) = \timpl_r(i)$. (Since $P$
is a finite-influence protocol, there exists such a minimal $E$,
and $|E|<\infty$.)

We first give a proof for the special case
in which $|J|<\infty$, proving a weaker statement as we allow the bound to
depend on $|J|$ for the time being.

Define $t\eqdef \timpl_r(i)$ and $r' \eqdef r \tcap E$.
As $L_{r'}(\tilde{e},(i,\timpl_{r'}(i))) \le |E| < \infty$,
\corref{path-traversing-centipede-implies-broom} may be applied
to show the existence of an $\tilde{e}$-broom in $r'$ by (no later than)
$\tilde{b}(\timpl_{r'}(i),\sup(\hat{\delta}|_{J^2}),|J|,L_{r'}(\tilde{e},(i,\timpl_{r'}(i))))\le
\tilde{b}(\timpl_r(i),\sup(\hat{\delta}|_{J^2}),|J|,|E|)$.
As the broom event occurs no later than $\timpl_{r'}(i) = t$
(and thus is in $\ND(r',t)=E$), this broom exists in $r$ as well.

We now explain why the requirement for finiteness of $|J|$ may be dropped.
Set $\tilde{b} \eqdef \tilde{b}(\timpl_r(i),\sup(\hat{\delta}|_{J^2}),|E|,|E|)<\infty$
--- the same bound as in the first part, substituting $|J|$ with $|E|$.
If some $e \in E$ constitutes an $\tilde{e}$-broom for $J$ by $\tilde{b}$,
then we are done.
Assume, by way of contradiction, that this is not the case.
Thus, for each $e \in E$, there exists $j_e \in J$ s.t.\
$e \notboundguarantee (j_e,\tilde{b})$.
The set $\{j_e\}_{e \in E}$ is of size no greater than $|E|<\infty$,
so the first part of this proof may be applied to it, yielding that there
exists $e \in E$ that constitutes
an $\tilde{e}$-broom for 
$\{j_e\}_{e \in E}$ by $\tilde{b}$ --- a contradiction.
\end{proof}

As may be expected, \claimref{finite-influence-implies} yields
generalized versions of the results of the corollaries that it
generalizes.

\begin{cor}\label{cor:finite-influence-in-many-corollaries}
In Corollaries \ref{cor:tcr-iff-sr}, \ref{cor:solvability-iff} and
\ref{cor:solvability-depends-on-strongly-connected-components},
the requirement for finiteness of agents may be relaxed to 
$\hat{\delta}$ being bounded from above on each strongly-connected
component of $G_{\delta}$, if $\mathit{SR}_{\gamma}$
and $\mathit{TCR}_{\gamma}$ are restricted to finite-influence protocols,
and if the requirement of $G_{\delta}$ containing no negative cycles is
generalized to $\delta$ being implementable.
\end{cor}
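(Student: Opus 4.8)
The plan is to re-run the proofs of Corollaries \ref{cor:tcr-iff-sr}, \ref{cor:solvability-iff} and \ref{cor:solvability-depends-on-strongly-connected-components} essentially verbatim, isolating the two (and only two) points at which finiteness of $I$ was actually invoked and replacing each with the more general tool already at our disposal. Finiteness enters those proofs solely through (i) the equivalence between ``$G_{\delta}$ contains no negative cycles'' and implementability of $\delta$, furnished by \corref{implementable-iff-no-negative-cycles}, and (ii) the guaranteed existence of an $\tilde{e}$-broom (resp.\ centibroom) for a \emph{finite} agent set, furnished by Corollaries \ref{cor:path-traversing-centipede-implies-broom} and \ref{cor:path-traversing-centipede-implies-centibroom}. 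The two substitutions are: in place of (i), I would take implementability of $\delta$ directly as a hypothesis (which is exactly the generalization prescribed in the statement), appealing to \lemmaref{implementable-iff} wherever the finite criterion would have been used; in place of (ii), I would invoke \claimref{finite-influence-implies}, which already upgrades both broom corollaries to finite-influence protocols, drops the bounded-syncausal-path assumption, and permits an infinite $J$ (resp.\ each $J_m$) precisely under the assumption $\sup(\hat{\delta}|_{J^2})<\infty$ --- that is, boundedness of $\hat{\delta}$ from above on each strongly-connected component, which is the relaxed hypothesis.

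Concretely, for the generalized \corref{tcr-iff-sr} I would argue as follows. The implication \ref{sr-solvable} $\Rightarrow$ \ref{all-solvable} now follows immediately from the second part of \claimref{solvable-iff-implementable}: solvability of $\SR{\tilde{e},I}=\TCR{\tilde{e},I,0}$ together with implementability of $\delta$ yields a solving protocol, and since that protocol is obtained from an $\SR$-solver merely by inserting finite, agent-local delays, it should inherit the finite-influence property. The implication \ref{some-solvable} $\Rightarrow$ \ref{sr-solvable} proceeds as before, except that the broom for the (single, since $G_{\delta}$ is strongly connected) component $I$ is now produced by \claimref{finite-influence-implies} applied to the finite-influence solver $P\in\TCR{\tilde{e},I,\delta}$, with $\sup(\hat{\delta}|_{I^2})<\infty$ supplying the needed bound; the derived logic ``respond at the earliest broom horizon'' then solves $\SR{\tilde{e},I}$. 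Corollaries \ref{cor:solvability-iff} and \ref{cor:solvability-depends-on-strongly-connected-components}, being deduced from the centibroom corollary and from \corref{tcr-iff-sr}, then go through once their finite-agent ingredients are swapped for the corresponding statements of \claimref{finite-influence-implies} and of the just-established infinite-agent \corref{tcr-iff-sr}.

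The main obstacle I anticipate is bookkeeping around the finite-influence property rather than any genuinely new idea: I must verify that every protocol \emph{constructed} in the course of these proofs is finite-influence (not merely that the hypothesized solvers are), and in particular that the ``respond at the earliest broom/centibroom horizon'' logic can be realized by a finite-influence protocol. This is exactly where the precise form of \claimref{finite-influence-implies} is essential, since it localizes the broom inside a minimal finite retainable set $E$ certifying the relevant response, so that detection of the broom depends on only finitely many ND events. A secondary point to confirm is that the hypotheses align in the limiting finite case --- for finite $I$, implementability coincides with the no-negative-cycles condition by \corref{implementable-iff-no-negative-cycles}, and $\hat{\delta}$ is automatically bounded above on each component --- so that the generalized corollaries genuinely specialize back to their original statements.
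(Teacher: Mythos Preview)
Your proposal is correct and matches the paper's approach: the paper gives no explicit proof at all for this corollary, merely prefacing it with ``As may be expected, \claimref{finite-influence-implies} yields generalized versions of the results of the corollaries that it generalizes.'' Your plan --- re-run the original proofs, replacing the no-negative-cycles criterion with direct implementability and replacing the finite-$J$ broom/centibroom corollaries with \claimref{finite-influence-implies} --- is exactly the intended derivation, and your attention to whether the \emph{constructed} protocols remain finite-influence is a genuine due-diligence point that the paper leaves implicit.
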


\section{Unbounded-Message-Delivery Contexts}

We conclude this chapter by revisiting the scenario with
which we opened it --- that \linebreak of the two generals who attempt to coordinate a
simultaneous
attack on a village. By now, they would probably be content with even an
approximately simultaneous attack, so we consider this
more generalized case. We use the insight this scenario has given us,
which led us to define bounded-syncausal-path contexts,
to prove the following impossibility result. This result both
generalizes \cite[Corollary~6.1.4]{book}, which shows that in a
context that exhibits unbounded message delivery in a discrete-time model,
common knowledge is unattainable, and strictly strengthens
\cite[Corollary~11.6.4]{book},
which shows that in a context that exhibits arbitrary message \linebreak loss
in a discrete-time model, $\upvarepsilon$\mbox{-}coordination based on an ND
event is unattainable.

\begin{cor}\label{cor:bounded-syncausal-path-or-zeno}
Let $\gamma$ be a bounded-syncausal-path context satisfying
$\contextbounds(i,j)=\infty$ for every $(i,j) \in \contextneighbours$\footnote{
Such a context is said, in \cite{book}, to exhibit ``unbounded message delivery''.}, let $\implspec$ be an implementation-spec s.t.\ $I \subseteq \agents$,
and let $\tilde{e}$ be any ND event.
If $G_{\delta}$ has any nontrivial (i.e.\ non-singleton) strongly-connected
component, then $\dck(\nolaterthan{0}(\tilde{e})=(\emptyset)_{i \in I}$ in
$\protocolpoints$, for every $P \in \protocols$.
\end{cor}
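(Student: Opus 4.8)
The plan is to prove that a single empty coordinate forces the whole tuple to vanish, and then to produce one empty coordinate inside the nontrivial strongly-connected component by contradiction, using a knowledge-theoretic analogue of \thmref{path-traversing-centipede}. Write $\psi\eqdef\nolaterthan{0}(\tilde{e})$ and $\tuple{\varphi}\eqdef\dck(\psi)$, so that $\tuple{\varphi}=f_{\psi}^{\delta}(\tuple{\varphi})$ by definition of $\dck$. First I would record the \emph{propagation step}: since $K_j(\emptyset)=\emptyset$ and $\nolaterthan{\varepsilon}(\emptyset)=\emptyset$, if $\varphi_{j_0}=\emptyset$ for even one $j_0\in I$, then for every $i\ne j_0$ the defining intersection $\varphi_i=\bigcap_{j\ne i}\nolaterthan{\delta(i,j)}(K_j(\psi\cap\varphi_j))$ contains the (empty) term for $j=j_0$; hence $\varphi_i=\emptyset$ for all $i\ne j_0$, and then $\varphi_{j_0}$ is empty as well (its intersection now ranges over empty terms, using $|I|>1$). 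Thus it suffices to exhibit one $i_0$ with $\varphi_{i_0}=\emptyset$, and I would take $i_0$ inside the nontrivial component.

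The engine of the proof is the following claim, which I would establish for an \emph{arbitrary} $P\in\protocols$ by adapting the induction of \thmref{path-traversing-centipede}: writing $\ensemble_j\eqdef K_j(\varphi_j)$ for the associated ensemble, if $(r,s)\in\ensemble_j$ then for every $\apath\in\dpaths$ with $p_1=j$ the run $r$ contains a $(\apath,\delta)$-traversing $\tilde{e}$-centipede by $s$. The base case requires that $(r,s)\in\ensemble_j$ yield $\tilde{e}\syncausal{r}(j,s)$: by the second part of \claimref{delta-coordination-characterisation} we have $\cup\tuple{\ensemble}\subseteq\sometime(\psi)=\sometime(\tilde{e})$, so $\ensemble_j=K_j(\ensemble_j)\subseteq K_j(\sometime(\tilde{e}))$, and if $\tilde{e}\notsyncausal{r}(j,s)$ then \corref{adjusted-lemma-3} together with \claimref{no-nd-between-tprime-t} furnishes a subrun with the same state of $j$ at $s$ in which $\tilde{e}$ never occurs, contradicting $(r,s)\in K_j(\sometime(\tilde{e}))$. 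Since $\contextbounds\equiv\infty$, every message delivery is an early delivery and hence an ND event, and the bound-guarantee relation degenerates to locality, so the last ND event on such a path is observed by $j$ at a time $\le s$, giving the singleton centipede $e_1\boundguarantee(j,s)$. The inductive step copies \thmref{path-traversing-centipede} almost verbatim, the step ``$\timpl_{r\tcap E}(j)\le\timpl_r(i)+\delta(i,j)$'' being replaced by the fixed-point inclusion $\ensemble_i\subseteq\varphi_i\subseteq\nolaterthan{\delta(i,j)}(K_j(\psi\cap\varphi_j))\subseteq\nolaterthan{\delta(i,j)}(\ensemble_j)$, evaluated in the subrun $r\tcap\PND{r}(i,s)$.

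With the engine in hand I would assemble the contradiction. A nontrivial component of $G_{\delta}$ contains a simple directed cycle of $m\ge2$ distinct vertices, all edge-weights finite; pick any $i_0$ in it and let $q_2$ be its successor on that cycle, so $\delta(i_0,q_2)<\infty$. Assume for contradiction $\varphi_{i_0}\ne\emptyset$, pick $(r,t_1)\in\varphi_{i_0}$, and unwind one step: $\varphi_{i_0}\subseteq\nolaterthan{\delta(i_0,q_2)}(K_{q_2}(\psi\cap\varphi_{q_2}))$ yields a \emph{fixed} time $s_2\le t_1+\delta(i_0,q_2)$ with $(r,s_2)\in K_{q_2}(\psi\cap\varphi_{q_2})\subseteq\ensemble_{q_2}$. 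For each $N$ I would apply the engine to $\ensemble_{q_2}$ and to the path $\apath$ obtained by traversing the cycle $N$ times (a path with $p_1=q_2$, consecutive vertices distinct, of length $n\to\infty$ as $N\to\infty$), obtaining a centipede $(e_l)_{l=1}^{n}$ by $s_2$ with $\tilde{e}\syncausal{r}e_n\syncausal{r}\cdots\syncausal{r}e_1\boundguarantee(q_2,s_2)$, where $e_l$ is observed by $p_l$. Because $\syncausal{r}$ is a partial order and consecutive observers differ, no two $e_l$ can coincide (a coincidence $e_l=e_{l'}$ would induce a cycle in $\syncausal{r}$ collapsing the intervening, distinct-observer nodes), so the $e_l$ are $n$ distinct ND events lying on a single syncausal path from $\tilde{e}$ to the fixed pair $(q_2,s_2)$ (each occurring at time $\le s_2$). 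Hence $L_r(\tilde{e},(q_2,s_2))\ge n$ for all $N$, contradicting \defnref{bounded-syncausal-path}. Therefore $\varphi_{i_0}=\emptyset$, and by the propagation step $\dck(\nolaterthan{0}(\tilde{e}))=(\emptyset)_{i\in I}$.

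The hard part will be the engine claim, specifically keeping the extracted events inside the ND-past of the fixed endpoint when some weights $\delta(p_l,p_{l+1})$ are positive: a priori the operators $\nolaterthan{\delta(\cdot,\cdot)}$ push the witnessing times forward, and it is exactly the passage to the subruns $r\tcap\PND{r}(\cdot)$ (suppressing spurious ND events in the forward window, as in \thmref{path-traversing-centipede}) that pins the whole chain back onto $(q_2,s_2)$. Carrying the locality of $\ensemble_i$ intact through this subrun surgery, and reconciling the end-node times of the suffix centipede (via \remarkref{path-traversing-centipede-properties}) with those demanded by the full path, is the only genuinely delicate bookkeeping; the remainder is the propagation step and the antisymmetry counting argument.
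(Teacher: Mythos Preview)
Your proof is correct and reaches the same contradiction as the paper (arbitrarily many ND events on a syncausal path to a fixed agent-time pair, impossible in a bounded-syncausal-path context), but the route is genuinely different.

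The paper does not work with an arbitrary protocol. Instead, it first argues that one may assume $P$ is a full-information protocol (since passing to one can only enlarge $\dck$), and then \emph{replaces} the response logic of $P$ by the optimal one from \corref{delta-common-knowledge-optimality}. To make this legitimate even when $\TCR{\tilde{e},I,\delta}$ might be unsolvable, the paper introduces a weakened problem $\TCRTAG{\tilde{e},I,\delta}$ (agents need not respond in every triggered run, but when they all do, $\timpl_r\in T(\delta)$), observes that \thmref{path-traversing-centipede} still goes through verbatim for $\TCRTAG$, and shows that the modified $P$ lies in $\TCRTAG$. A nonempty $\dck(\psi)_i$ then produces a run in which \emph{all} agents respond, and the existing \thmref{path-traversing-centipede} applied to cycling paths in a nontrivial component yields the contradiction.

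By contrast, you leave $P$ untouched and instead re-run the induction of \thmref{path-traversing-centipede} directly against the ensemble $(K_j(\dck(\psi)_j))_{j\in I}$, replacing the correctness-of-$P$ step ``$\timpl_{r\tcap E}(j)\le\timpl_r(i)+\delta(i,j)$'' by the fixed-point inclusion $\ensemble_i\subseteq\nolaterthan{\delta(i,j)}(\ensemble_j)$ and transferring through subruns via locality of $\ensemble_i$. Your propagation step (one empty coordinate forces all) also replaces the paper's reasoning that a nonempty coordinate forces all agents to respond. What you gain is directness: no auxiliary problem $\TCRTAG$, no protocol surgery, no appeal to the full-information reduction. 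What the paper gains is economy: it invokes \thmref{path-traversing-centipede} as a black box rather than reproving it. Both arguments crucially use that $\contextbounds\equiv\infty$ makes the bound-guarantee relation local-only, which forces consecutive centipede legs to be ND events with distinct observers and hence pairwise distinct.
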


\begin{proof}
By revisiting the proof of \thmref{path-traversing-centipede}, we may notice
that it does not use all the properties of the timely-coordinated response
problem.
Let us define $\TCRTAG{\tilde{e},I,\delta}$, a strictly weaker\footnote{
In the sense that $\TCR{\tilde{e},I,\delta} \subsetneq
\TCRTAG{\tilde{e},I,\delta}$.
}
variant of the timely-coordinated response problem,\footnote{
While we have only defined the timely-coordinated response problem as based on
external input events, it may be readily verified that all our results regarding
it still hold if we allow it to be based on any ND event.
} as the set of all protocols
$P \in \protocols$ satisfying:
\begin{itemize}
\item In each $r \in \TRP$, either no $i \in I$ responds, or they all do, each
exactly once. In each $r \in \RP \setminus \TRP$, no $i \in I$ responds.
\item In every run $r \in \TRP$ in which all $I$ respond, it holds that
$\timpl_r \in T(\delta)$.
\end{itemize}
It may be readily verified that the result of
\thmref{path-traversing-centipede}, using the
exact same proof, still holds for $\mathit{TCR'}_{\gamma}$,
for every triggered run during which all $I$ respond.
Furthermore,
by \thmref{delta-coordination} and by
\corref{delta-common-knowledge-optimality},
the response logic defined in the latter is optimal in the sense that
a full-information protocol endowed with it solves
$\TCRTAG{\tilde{e},I,\delta}$\footnote{
It should be noted that if $G_{\delta}$ is not strongly connected,
then this is not generally true for the response logic defined in
\corref{path-traversing-centipede-optimality}, nor is it generally true
if we replace $\dck$ with $\dckg$.
}
and moreover, for every full-information
protocol $P \in \TCRTAG{\tilde{e},I,\delta}$, in each run thereof during
which all $I$ respond,
replacing the response logic of $P$ with this optimal response logic would still
yield responses of all $I$ in this run, and no response time would grow.

Assume, by way of contradiction, that there exists $P \in \protocols$ and
$i \in I$
s.t.\ $\dck(\nolaterthan{0}(\tilde{e}))_i \ne \emptyset$ in $\protocolpoints$.
Assume w.l.o.g.\ that $P$ is a full-information protocol.\footnote{
It may be readily verified that adding auxiliary variables to the state of
each agent in order to turn $P$ into a full-information protocol may only
enlarge $\dck(\nolaterthan{0}(\tilde{e})_i$.
} Furthermore, as we have not given any restrictions regarding the response
logic of $P$, assume w.l.o.g.\ that $P$ is endowed with the response logic
defined in \corref{delta-common-knowledge-optimality}.
By the above discussion, $P \in \TCRTAG{\tilde{e},I,\delta}$.
Let $(r,t) \in \dck(\nolaterthan{0}(\tilde{e})_i$.
As in the proof of \thmref{delta-coordination},
$(r,t) \subseteq \sometime(K_j(\dck(\nolaterthan{0}(\tilde{e})_j)))$ for every
$j \in I$, and thus all $I$ respond in $r$ according to $P$.
Let $(j,k) \in \distinctpairs{I}$ be a pair of distinct agents from the same
strongly-connected component of $G_{\delta}$.
As $\contextbounds \equiv \infty$,
the bound-guarantee relation is local-only.

\thmref{path-traversing-centipede}, when applied
to $\hat{\delta}$ and to paths alternating between $j$ and $k$, implies, for
every $n \in \mathbb{N}$, a syncausal path in $r$, from $\tilde{e}$,
alternating,
by $\timpl_r(j)$ (which, by the above discussion, is finite),
$n$ times back and forth between $j$ and $k$ (as the bound-guarantee
relation is local-only),
which is impossible in a bounded-syncausal-path context --- a contradiction.
\end{proof}

\begin{proof}[Alternative proof ending]
By \corref{path-traversing-centipede-implies-broom},\footnote{
The same proof we have given when deducing
\corref{path-traversing-centipede-implies-broom} from
\thmref{path-traversing-centipede} may be used to show that it, too,
holds for $\mathit{TCR'}_{\gamma}$, for every triggered run
during which all $I$ respond.
} a broom for $\{j,k\}$
exists in $r$, contradicting the fact that the bound-guarantee relation is
local-only.
\end{proof}

\begin{remark}
In \corref{bounded-syncausal-path-or-zeno}, utilizing the technique employed in
the proof of \claimref{finite-influence-implies},
the requirement for a bounded-syncausal-path context may be dropped
if $P$ is restricted to be a finite-influence protocol.
\end{remark}

It is only fitting that \corref{bounded-syncausal-path-or-zeno} concludes
the presentation of novel results in this work, as it demonstrates
the added value of our dual approach to solving the timely-coordinated response
problem, as the proof we have given thereto (regardless of the choice of ending)
utilizes elements that are unique to each of the approaches we have
taken.

\chapter{Deriving Previous Results}\label{chapter:previous}

In this chapter, we show how some previously-known results may be derived
from the novel results we have introduced in previous chapters.

\section{General Ordered and Timed Responses}
The problem of ``general ordered'' response was defined and studied by Ben-Zvi
and Moses\cite{bzm2}.
In this coordinated response problem, the relationships between the times of
the responses of finitely-many agents is dictated by a given partial order
relation $\le$ on classes of agents.
The problems of ordered response, simultaneous response and ordered joint
response, which were surveyed in the introduction to
\chapterref{syncausality-approach}, are all special cases of this problem.

It may be readily seen that the general ordered response problem is a special
case of the timely-coordinated response problem, for $\delta$ with the
canonical form
\[ (i,j) \mapsto
\begin{cases}
0 &j \le i \\
\infty &\text{otherwise.}
\end{cases} \]
It should be noted that $G_{\delta}$, sans the weights, is a DAG commonly used
to describe the dual relation $\ge$ in many applications.
Hence, for $\delta$ with the above canonical form,
every path $\apath \in \dpaths$ is a weakly-decreasing tuple.
For the above-mentioned special cases of general ordered response, this
formulation coincides with their formulations as special cases of the
timely-coordinated response problem, which we have given in the introduction
to \chapterref{syncausality-approach}.

Ben-Zvi and Moses show that the syncausal structures
underlying the general ordered response problem (those that are guaranteed to
exist in each triggered run of a solving protocol thereof,
and that may be used to define an optimal
response logic therefor) are centibrooms --- one for each linearly-ordered chain
of classes of agents, as guaranteed by \thmref{centibroom} for such a chain
in the ordered joint response problem.
The second part of \corref{path-traversing-centipede-implies-centibroom}
reduces to this result in the special case of the general ordered response
problem, and reduces to \thmref{centibroom} in the special case of the
joint ordered response problem.

Similarly, in the special case of the simultaneous response problem
(resp.\ the tightly-timed response problem,
with $\delta$ as in \remarkref{tightly-timed-is-timely}),
the second part of \corref{path-traversing-centipede-implies-broom},
together with \remarkref{broom-implies-path-traversing-centipede} and with
\thmref{path-traversing-centipede},
reduce to \thmref{broom} (resp.\ \thmref{uneven-broom}).

Last but not least, we consider the case of ordered response
(resp.\ weakly-timed response with $\delta$ as in
\remarkref{weakly-timed-is-timely}).
In this case, the above-described partial order on $I$ constitutes a
linear ordering thereof. Therefore, all paths in $G_{\delta}$ are
subpaths of the single strongly decreasing Hamiltonian path
$\apathfull \in \dpaths$.
Therefore, for every $i \in I$, all paths $\apath' \in \dpaths$ satisfying
$p'_1=i$ are subpaths
of the suffix of $\apath$ that starts with $i$, which we denote by
$\apath_{\le i}$. Thus, every corresponding $(\apath',\delta)$-traversing
$\tilde{e}$-centipede is a subcentipede of a
$(\apath_{\le i},\delta)$-traversing $\tilde{e}$-centipede by the time of $i$'s
response. This path-traversing centipede is, in turn, simply an
$\tilde{e}$-centipede by that time for $\apath_{\le i}{}^{\mathit{rev}}$.
Thus, \thmref{path-traversing-centipede} and
\corref{path-traversing-centipede-optimality} reduce to \thmref{centipede}
(resp.\ \thmref{uneven-centipede})
in this case.

As previously noted, the proof that we presented to \thmref{concise-broom}
readily generalizes to directly prove, among others, all the results surveyed
in this section.

\section{Common Knowledge and Variants}

For the duration of the section, fix a context $\gamma$, a set of runs
$R \subseteq \runs$, an event $\psi \in \pointsets$ and a set of agents
$I \subseteq \agents$.
As noted above, while all previously-studied variants of common knowledge that
are surveyed in the introduction to \chapterref{fixed-point-approach} are
defined as fixed
points of functions on $\pointsets$, this is not the case
with $\updelta$\mbox{-}common knowledge,
which we define as a fixed point of a function
on ${\pointsets}^I$. Intuitively, as noted in that chapter, this
stems from the asymmetry
of $\updelta$\mbox{-}coordination with regard to the requirements posed on the various
agents.
Given this intuition, one may expect $\delta$\mbox{-}common knowledge
to reduce, for constant $\delta$, to a non-tuple fixed point in some way.
Indeed, if $\delta$ is a constant function, then it is straightforward to
verify that $K_i(\dck(\psi))=K_i(\cap \dck(\psi))$ for every $i \in I$ and
that $\cap \dck(\psi)$ is the greatest fixed point of
$\cap f_{\psi}^{\delta}$.
We now review the previously studied non-tuple variants of common knowledge
and discuss when, and how, the above-described special case of
$\delta$\mbox{-}common knowledge for constant $\delta$ generalizes them.

When $\delta \equiv \infty$, then by definition,
$\delta$\mbox{-}coordination is equivalent to eventual coordination,
$\cap f_{\psi}^{\delta}$ is the function presented in the first part of
\thmref{eventual-coordination}, and thus
$\cap \dck(\psi)=C_I^{\sometime}(\psi)$. In addition, in this case
\thmref{delta-coordination} implies \thmref{eventual-coordination}.

Reducing the results of $\updelta$\mbox{-}common knowledge to $\varepsilon$\mbox{-}common
knowledge, for finite $\varepsilon$, is somewhat more delicate.
Assume, for the remainder of this section, that $\delta \equiv \varepsilon$
for some finite $\varepsilon \ge 0$. (Recall that for $\varepsilon \equiv 0$,
$\varepsilon$\mbox{-}coordination is equivalent to perfect coordination
and \thmref{epsilon-coordination} reduces to \thmref{perfect-coordination}.)

In general, $\varepsilon$\mbox{-}coordination is a stricter condition than
$\delta$\mbox{-}coordination.\footnote{
This stems from two main ``reasons'':
\begin{enumerate}
\item $\delta$\mbox{-}coordination is defined using $\nolaterthan{\delta(i,j)}$
rather than $\atexactly{[-\delta(j,i),\delta(i,j)]}$, which we define to mean
``at some time no earlier than $-\delta(j,i)$ from now and no later than
$\delta(i,j)$ from now''. It may be
readily seen that all the results in this work hold for such a definition as
well, as long as this replacement is performed in the definition of
$f_{\psi}^{\delta}$ as
well. The only difference is that \claimref{delta-common-knowledge-stable},
stating that $\delta$\mbox{-}common knowledge is stable,
requires also stability of $\psi$ and perfect recall in this case,
and is proven by showing that
$\nolaterthan{0}(\dck(\psi)) \le f_{\psi}^{\delta}(\nolaterthan{0}(\dck(\psi))$
and by applying the second part of \lemmaref{delta-common-knowledge}.
\item $\delta$\mbox{-}coordination is based on pairwise constraints. The results
presented in this work may be quite readily generalized to deal with
arbitrary timing constraints of various natures, such as, e.g.\ for some $J
\subseteq I$, ``For every $i \in J$ and for every $(r,t) \in \ensemble_i$,
there exists a time interval $T \subseteq \timeset$ of length at most
$\delta_J$, s.t.\ $t \in T$ and s.t.\
there exist $(t_j)_{j \in J} \in T^J$ satisfying
$(j,t_j) \in T$ for every $j \in J$''. (Whatever the timing constraints are,
the generalized definition of ${f_{\psi}^{\delta}}_i$ simply intersects on
all constraints pertaining to $i$.) Under such a generalization,
$\varepsilon$\mbox{-}coordination is equivalent to $\delta$\mbox{-}coordination,
when setting $\delta_I \equiv \varepsilon$ in the above constraint example,
and when providing no further constraints.
Furthermore, in this case the
generalization of $f_{\psi}^{\delta}$ satisfies that $\cap f_{\psi}^{\delta}$
is the function presented in the first part of \thmref{epsilon-coordination},
and thus the appropriate generalization of \thmref{delta-coordination} reduces
to \thmref{epsilon-coordination}.
\end{enumerate}
\vspace*{-1em} % Avoid excessive spacing after this footnote.
}
For a stable ensemble, though, $\delta$\mbox{-}coordination is equivalent to
$\varepsilon$\mbox{-}coordination.
If we restrict ourselves to protocols exhibiting perfect recall, then
by \corref{delta-common-knowledge-ensemble-stable},
the ensemble defined by
$\delta$\mbox{-}common knowledge is stable.
If, in addition, $\psi$ is stable, then it may be
verified that the ensemble defined by $\varepsilon$\mbox{-}common knowledge is stable
as well.\footnote{
The key observation required for showing this is that
$\nolaterthan{0}(C_I^{\varepsilon}(\psi)) \subseteq 
E_I^{\varepsilon}(\psi \cap \nolaterthan{0}(C_I^{\varepsilon}(\psi)))$,
given stability of $\psi$ and perfect recall.}
In this case, by \lemmaref{g-eq-f}, $\dck(\psi)$
is the greatest fixed point of $g_{\psi}^{\delta}$ and thus,
$\cap \dck(\psi)$ is the greatest fixed point of
$\cap g_{\psi}^{\delta}$. Analogously to the proof of \lemmaref{g-eq-f},
but in a less cumbersome way (as $\delta<\infty$),
it may be shown that in this case $C_I^{\varepsilon}(\psi)$ is the greatest
fixed point of $\cap g_{\psi}^{\delta}$ as well,
and thus $\cap \dck(\psi) = C_I^{\varepsilon}(\psi)$.\footnote{
Another way to derive this equality is by using
\cite[Exercise~11.17(d)]{book}, which
shows that, for every $i \in I$, when $\psi$ is stable and given perfect recall,
$K_i(C_I^{\varepsilon}(\psi))=K_i(\cap_{n \in \mathbb{N}} (\atexactly{\varepsilon}E_I)^n(\psi))$, to which \eqref{eq:delta-common-knowledge-nested} reduces when
$\delta \equiv \varepsilon$.
It should be noted, though, that the proof hinted to by
\cite[Exercise~11.17(d)]{book} strongly relies on a discrete modeling of time,
and breaks down in a continuous-time model, unlike the proof that we sketch
above.
}

In the absence of stability of $\psi$, or in the absence of perfect recall
(at least of the
``relevant events''), things stop working so well.
Indeed, as noted above, in such cases $\delta$\mbox{-}coordination does not
necessarily coincide with $\varepsilon$\mbox{-}coordination,
and consequently, examples may be constructed in which the ensembles defined
by $\varepsilon$\mbox{-}common knowledge and by $\delta$\mbox{-}common knowledge differ.

\chapter{Discussion and Open Questions}\label{chapter:open-questions}

\section{A Qualitative Comparison of Approaches}

Throughout this work, reasoning alternated between two approaches, which are
based on different motivations and thus were previously studied only
separately.
In \chapterref{equivalence},
though, we showed that despite the vast conceptual
difference between these two approaches, they in fact yield equivalent results
for the timely-coordinated response problem.
Nonetheless, this conceptual gap makes each approach convenient for
different purposes. Consequently, we have utilized each approach to
attack a different set of problems in \chapterref{practical}.
Moreover, in
\corref{bounded-syncausal-path-or-zeno} we concurrently harnessed both
approaches to obtain a strengthened version of a previously-known result.

The strength of the syncausal approach, similarly to that of Lamport's
asynchronous causality\cite{lamport-causality} that it generalizes,
is its constructiveness and concreteness.
These properties make it ideal for graphical visualization of runs and for
algorithm design. However, their price is the need to adapt and
specifically tailor the general results
for each model flavour, as we have done in \chapterref{practical}.

Conversely, the strength of the fixed-point approach lies in its generality and
in its high level of reasoning.
These properties make proofs that follow this approach far less cumbersome,
and far more general, due to the fact that, as we have seen,
the concept of knowledge
effectively hides the minute
details of the model in question.
The downside of this is the fairly large
gap, both between a fixed-point definition and a constructive definition,
as we have seen in \chapterref{equivalence}, and moreover --- between
a constructive knowledge-based definition and concrete implementation,
as studied by Ben-Zvi and Moses\cite{bzm1,bzm2,bzm3,bzm4} following Chandy and
Misra\cite{chandy-misra}.

Given these observations, it is not surprising that the results we
obtained using fixed-point analysis in
\chapterref{fixed-point-approach} are far more general
(and can even be further generalized, as noted in \chapterref{previous}),
than the results we obtained using syncausal analysis in
\chapterref{syncausality-approach}. However, it is
the latter that were easier for us to turn into a concrete algorithm in
\exref{algorithm}, and into a concrete condition,
in terms of required guarantees on message delivery times, for solvability of
the timely-coordinated response problem in a given context in
Corollaries \ref{cor:solvability-depends-on-strongly-connected-components}
and \ref{cor:finite-influence-in-many-corollaries}.

\section{On Generalizations}

Much of this work is based on two generalizations of known approaches.
Nontrivial generalizations tend to have a sneaky property: on one hand,
a conceptual leap is required in order to achieve them, while on the other
hand, once they are achieved, this leap, in hindsight, seems almost
obvious.

The work of Ben-Zvi and Moses on syncausal analysis\cite{bzm1,bzm2,bzm3,bzm4}
is implicitly intertwined with an insight, which holds for all of the
problems they define and analyze:
Each of these problems has one,
succinctly-describable\footnote{
One may almost claim that the description should
be linear in the number of agents.
}, syncausal structure underlying it.\footnote{
The one exception to this is the general ordered response problem, which is
treated by Ben-Zvi and Moses\cite{bzm2} as a conjunction of {\em independent}
ordered joint response problems, and thus their solution consists of a
conjunction of the syncausal structures (i.e.\ the centibrooms) underlying each
of these ordered joint response problems.
}
Indeed, when we started looking at simple two-agent cases of what would
eventually become the timely-coordinated response problem, we attempted to
find such a simple structure, or possibly only a few simple structures.

Moreover, the asymmetry inherent in the syncausality and bound
guarantee relations expresses itself in the problems defined by Ben-Zvi and
Moses, in that the timing dependency between the response times of
two agents in these problems may only be single-sided, which allows the
response of one of these agents to not depend on the response of the
other.\footnote{
Actually, their analysis, as we have seen, also allows a precise timing
dependency between two response times (i.e.\ a specified fixed time difference),
which allows them to be treated, in a sense, as a single response in the
solution of the problem. Their analysis does not, however, allow a mutual
(i.e.\ double-sided) non-precise dependency.
}
Indeed, as noted in \chapterref{syncausality-approach}, this is the main
difference, both conceptually and technically, between the response problems
defined and studied by Ben-Zvi and Moses\cite{bzm1,bzm2,bzm3,bzm4}
and the timely-coordinated response problem, which we have defined and analyzed
in this work.
As we have seen, it is the presence of mutual non-precise dependencies,
that changes the ``rules of the game'' from revolving around one, fairly simple,
syncausal structure to revolving around infinitely many, or alternatively
finitely many yet very complex\footnote{
In fact, arbitrarily complex even
for two agents.
}, syncausal structures, which may not, in general, be replaced by
simpler or fewer structures.\footnote{
Such a ``replacement'' is performed e.g.\ in the proof of Ben-Zvi
and Moses\cite{bzm1,bzm2} for the first part of \thmref{broom}.
In this work, we obtain such replacements
for some special cases of the timely-coordinated
response problem in Corollaries
\ref{cor:path-traversing-centipede-implies-broom} and
\ref{cor:path-traversing-centipede-implies-centibroom} and in
\claimref{finite-influence-implies}.
}

As we commented earlier, up until now the fixed-point approach
has only been applied to problems whose description exhibits an inherent
symmetry between the agents, in the sense that it is
invariant under permutations on the set of agents. As noted in
\chapterref{fixed-point-approach}, it is the
absence of this symmetry that ``twisted our arms'' and conceptually necessitated the 
nontrivial jump from searching for a fixed point of a scalar function to searching
for a fixed point of a vectorial function. Moreover, even after the realization
that this is the way to go, this vectorial treatment was the main
technical obstacle in our fixed-point analysis.

\section{Open Questions and Further Directions}

Throughout this work, we assume a context in which the eventual delivery of any
message is guaranteed.
In many models that do not present this behaviour, arbitrarily long syncausal
paths present themselves with zero (or very small) probability, effectively
displaying a behaviour similar to that of the class of bounded-syncausal-path
contexts, which we defined in \chapterref{practical}.
It may be interesting, therefore, to develop such probabilistic models
and to check whether the results given in 
\chapterref{practical} for bounded-syncausal-path contexts may be
applied to such models, if only to yield either probabilistic results
or impossibility results.

\enlargethispage{1em} % Fit footnote on this page
\corref{tcr-iff-sr} implies that in a shared-clock, bounded-syncausal-path
context, solving an ``almost-simultaneous'' response problem is not any
more possible
than solving a simultaneous response problem. Moreover, if both are solvable,
then \eqref{response-time-bound} implies that in the worst-case scenario, the
time of the latest of the
responses in the optimal solutions to both problems is the same.\footnote{
By \corref{solvability-iff}, similarly relaxing the tight
constraints of an ordered joint, or tightly-timed,
response problem also does not make it solvable in any additional contexts,
nor does it improve the worst-case time of the latest of the responses.
}\footnote{
Nonetheless, in all cases the relaxed version may be solved
significantly faster than the original one in many runs, as illustrated in
\figref{path-traversing-centipede-within-a-broom}. (Thus, ACME's engineers
were on the right track in \exref{acme}.) It would be interesting
to give this observation a precise meaning in a probabilistic model,
perhaps in terms of average-case response time.
}
As was noted in \corref{finite-influence-in-many-corollaries}, even without the
assumption of a bounded-syncausal-path context, these
results hold as long as we make some reasonable assumptions regarding
finiteness of memory or of processing power of each agent. Furthermore,
\corref{bounded-syncausal-path-or-zeno} shows that
in the lack of any delivery guarantees, none of these problems are solvable
under such reasonable assumptions.
Nonetheless, it has been shown in \cite[Subsection~11.2.1]{book} that in models
without a shared clock, ``up-to\mbox{-}$\upvarepsilon$'' coordination may be possible
even when perfect coordination is not. It would be interesting to see
whether the machinery presented in this work may be applied, perhaps in some
extended or generalized form, to shed
new light on models in which the clock is not shared.

In \chapterref{fixed-point-approach}, we defined and analyzed
$\updelta$\mbox{-}coordination, a
generalization of several forms of coordination defined and analyzed by
Halpern and Moses\cite{halpern-moses-1990} and by Fagin et
al.\cite[Section~11.6]{book}.
In \chapterref{previous}, we noted that some special cases of
$\upvarepsilon$\mbox{-}coordination
(another form of coordination defined in \cite{halpern-moses-1990,book})
are not generalized by $\updelta$\mbox{-}coordination.
While, as we noted there, our definition of $\updelta$\mbox{-}coordination, along
with all our results regarding $\updelta$\mbox{-}common knowledge,
may be quite readily generalized to deal with additional forms of
coordination constraints, including those of $\upvarepsilon$\mbox{-}coordination,
it remains to be seen whether such generalizations are of any real added value.
In this context, it is worth to recall the difficulty we encountered in
\chapterref{fixed-point-approach}, in giving a succinct characterisation to the
ensemble defined by $\updelta$\mbox{-}common knowledge
(or by $\upvarepsilon$\mbox{-}common knowledge, for that matter) of an event along the
lines of ``the greatest $\delta$/$\varepsilon$\mbox{-}coordinated ensemble satisfying\ldots''.
This difficulty, coupled with slight differences in the properties of
$\updelta$/$\upvarepsilon$\mbox{-}common knowledge, raises the following question:
have we truly given the ``right'', ``from the book'' definition for
$\updelta$\mbox{-}common knowledge? (Similarly, have Halpern and
Moses\cite{halpern-moses-1990}, and Fagin et al.\cite[Section~11.6]{book},
given the ``right'' one for $\upvarepsilon$\mbox{-}common knowledge?) or is a similar,
yet succinctly characterisable, fixed-point definition still waiting to be phrased?

We conclude this work with a comment about fixed points. As we have seen,
fixed-point analysis of coordination is useful in a significantly broader range
of cases than previously thought.
Many systems around us, from subatomic physical
systems to astrophysical ones, and from animal societies to some stock markets,
exist in some form of equilibrium fixed point, possibly reached as a result of a
long-forgotten spontaneous symmetry
breaking. This leads us to conjecture that describing distributed algorithms as
fixed points may potentially be of much further advantage and provide us with
additional insights that are yet to be discovered.

\cleardoublepageemptyheadings

%%%%%%%%%%%%%%
% Bibliography

\message{\bibname}
\phantomsection % Links from to here from References in the Contents
\addcontentsline{toc}{chapter}{\bibname}

% Fit the bibliography on two pages.
\let\oldbibliography\thebibliography
\renewcommand{\thebibliography}[1]{%
  \oldbibliography{#1}%
  \setlength{\itemsep}{1pt}%
}

\bibliographystyle{abbrv}
\bibliography{timely-coordination}

\cleardoublepageemptyheadings

%%%%%%%%%%%%%%%%%%%%%%%
% Appendices title page

\message{\appendixtocname}
\noappendicestocpagenum
\appendix
\appendixpage
\addappheadtotoc

\cleardoublepageemptyheadings

%%%%%%%%%%%%
% Appendices

\chapter{A Continuous-Time Model}\label{app:continuous}

In this appendix, we describe a novel continuous-time model for which the
results in this work hold verbatim.
In order to avoid repetitions, we only describe
the differences between this model and the model presented in
\chapterref{discrete}.

\section{Context Parameters}

In the continuous-time model, as in the discrete-time one, we denote a context
by a tuple $\gamma=\context$.
In this case, though, $\contextgraph$ is a weighted directed graph with
positive, real or infinite, weights.
Additionally, we define the set of times as $\timeset \eqdef \mathbb{R}_{\ge0}$.
Our reasons for this somewhat unorthodox approach to modeling time
(in a continuous fashion) hopefully become apparent throughout
\chapterref{practical}.

\section{Timers}
For the duration of this section, fix an agent $i \in \agents$.
As time is continuous in our model, we wish to define when $i$ is allowed to
act. We say that $i$ is ``enabled'' (to act) at $t \in \timeset$ if either
$t=0$ or $t$ is a supremum of a
set of times, at each of which $i$ observed an event. (Intuitively, this means
that either $i$ observed an event at exactly $t$, or $i$ observed
infinitely many events, whose respective times converge to $t$ in a
monotonically-increasing fashion.)

In order to allow an agent $i \in I$ to make sure it has a chance to act at
a certain time, we introduce a new type of action, and a new type of event.
At any time $t$ at which $i$ is enabled, $i$ may set a timer for a future
time $t' \in \timeset$ s.t.\ $t'>t$. If $i$ sets such a timer, and if $i$ is not
enabled at any time
between $t$ and $t'$, then a timer ring event is observed by $i$ at $t'$,
thus enabling it at $t'$.
It should be noted that it is also possible, though somewhat less intuitive,
to not introduce timers, but rather to enable every agent at every time.

Due to the introduction of timers, the set of all possible states of the
environment becomes $S_e \eqdef 2^{\externalinputs} \times 2^{\messages \times \timeset \times \contextneighbours} \times 2^{\agents}$ (the last element is a subset
of the agents, for which timers ring at the current time),
and the set of possible actions which may be taken by $i$ at any time at which
it is enabled becomes 
$A_i \eqdef S_i \times 2^{\messages \times \{j \in I \mid (i,j) \in \contextneighbours\}} \times
\{t' \in T \mid t' > t\} \times \{\mathrm{false},\mathrm{true}\}$.

\section{Agent States}

In order to define the state of an agent ``just before''
a time $t \in \timeset$, we assume, for each $i \in \agents$, the existence of
a pseudo-limit function $\lim_i:S_i^{(-1,0)} \rightarrow S_i$ satisfying:
\begin{enumerate}
\item
$\lim_i(f)=\lim_i(g)$,
for every $f,g:(-1,0)\rightarrow S_i$ s.t.\
$f|_{(-\varepsilon,0)}=g|_{(-\varepsilon,0)}$ for some $\varepsilon > 0$.
\item
$\lim_i(f)=s$, for a constant function $f \equiv s \in S_i$.
\end{enumerate}

There are a number of other natural properties which one may expect from
$\lim_i$ (e.g.\ invariance to composition with monotone continuous functions
from $(-1,0)$ to itself, which have 0 as their limit as 0), but we do not
require any such properties for the results of this work.

For a full-information protocol, in which the state of each agent
$i \in \agents$ at any time $t \in \timeset$
uniquely determines the full details of every event observed by $i$ up until,
and including, $t$, a natural pseudo-limit function
is (infinite) union of sets of events.
In general, functions such as union, logical or, max, and min, are useful
building blocks for pseudo-limit functions for many intuitive protocols.

We define full-information protocols in this model in a similar way to
\chapterref{discrete}, although in this model an agent only sends out
messages in a full-information protocol when it is enabled.
Two full-information protocols may thus differ not only
in their response logics, but also in their timer-setting logics.
Thus, there does not necessarily always exist an isomorphism between the sets of
runs of two full-information protocols that preserves the set of ND events.
Nonetheless, it is still true that given a protocol $P$, there exists a
full-information protocol $P'$, s.t.\ there is a natural monomorphism from
$\RP$ into $\RPTAG$, which preserves both the set of ND events, and all
responses.

\section{Runs}

We are now ready to redefine the properties that a function
$r:\timeset \rightarrow S_e \times \bigtimes_{i \in I} S_i$
must satisfy in
order to constitute a legal run of a protocol
$P=((\tilde{S}_i,P_i))_{i \in I} \in \protocols$:

\begin{itemize}
\item Agent state consistency with local protocol:
Let $i \in I$ and $t \in \timeset$.
If $t > 0$,
set $s_i = \lim_i(r_i(t+\cdot))$. (This is well defined even when $t<1$, because
$\lim_i$ only depends on the values of its argument in a left neighbourhood of
0.) Intuitively,
$s_i$ is the state of $i$ ``just before'' $t$.\footnote{
Note that if $r_i|_{(t',t)}$ is constant for some
$t' \in \timeset$ s.t.\ $t'<t$, then $s_i$ equals this constant value.
} If $t = 0$, then  $s_i$ may be any of the initial states $\tilde{S}_i$.

If $i$ is not enabled at $t$ (this condition depends only on the environment
states before and at $t$), then $r_i(t)=s_i$ must hold.

If $i$ is enabled at $t$, then $r_i(t)$ must equal the first part of the output
of $P_i$, when evaluated on $s_i$ and on the events observed by $i$ at $t$.
(Once again, the other parts thereof determine the actions of $i$ at $t$.)

\item
Environment state properties:
\begin{enumerate}

\item The requirements regarding external inputs and message deliveries are
unchanged.

\item Timer events: A timer event for $i \in I$ occurs at time $t \in \timeset$
iff there exists $t' \in \timeset$ such that $t' < t$ and such that $i$ set,
at $t'$, a timer to ring at $t$, and $i$ was not enabled during $(t,t')$.
\end{enumerate}
\end{itemize}

\section{Excluding Degeneracies}

While, in the discrete-time model, a given ``partial run''
$r:\{t \in \timeset \mid t \le 1\} \rightarrow
S_e \times \bigtimes_{i \in I} S_i$ of a protocol $P \in \protocols$
may be inductively
``rolled forward'' to create a full (infinite) run
(see, e.g.\ \claimref{no-nd-between-tprime-t}), this may no longer be the
case in a continuous-time model with infinitely many agents.
Intuitively,
consider such a partial run, in which infinitely many messages
$\{m_k\}_{k=1}^\infty$
are sent to some agent $i \in \agents$ before time $1$, but are not yet
delivered by that time.
Assume, furthermore, that for every $k \in \mathbb{N}$, the message $m_k$ is
guaranteed
to be delivered no later than at $1+\frac{1}{k}$. It is not clear how to
``roll the run forward'', even for a fraction of a time unit.
Similarly, if the delivery guarantee for each $m_k$ is at $2+\frac{2}{k}$,
then it is not clear that it is possible to 
roll the run forward while avoiding
early deliveries, as required in \claimref{no-nd-between-tprime-t}.
Indeed, if it is not possible to do so,
then in any run that is indistinguishable
from this partial run up to time $1$, it is {\em deterministic} that some early
delivery \linebreak takes place between times $1$ and $2$, effectively voiding the
non-determinism of some early deliveries, possibly allowing them to be predicted
before they occur.

In order to avoid degeneracies such as those described above,
and thus sufficiently maintain the
non-deterministic nature of the events that we call ``ND events'',
we axiomatically make the following assumptions regarding the richness of the
set of runs of any protocol $P \in \protocols$:\footnote{
As is discussed in \sectionref{adapting-lemma-3}, both of these assumptions
may be shown to hold for the discrete-time model presented in
\chapterref{discrete}.
Furthermore, if $\inf(\contextbounds)>0$,
then an inductive argument,
rolling a run forward $\inf(\contextbounds)$ time units at a time, may
be used to show that these assumptions also hold for protocols implemented 
using certain ``nice'' pseudo-limit functions such as the union pseudo-limit
function described above, and for arbitrary protocols if
$|\agents|<\infty$.}
\begin{itemize}
\item No foresight: 
For every $r \in \RP$ and for every $t,d \in \timeset$,
there exists a run $r' \in \RP$, satisfying:
\begin{enumerate}
\item $r'|_{[0,t]} = r|_{[0,t]}$.
\item No external inputs are triggered in $r'$ after $t$.
\item Any message delivered {\em early} in $r'$ after $t$ is delivered no less
than $d$ time units after it is sent.
\end{enumerate}
\item No extrasensory perception:
For every $r \in \RP$, for every $t \in \timeset$ and
for every $i \in \agents$, it holds that
$\PND{r}(i,t) \in \RND(r,t)$.\footnote{
See \sectionref{adapting-lemma-3} for the definitions of $\mathit{RND}$
and of $\mathit{PND}$.
(We allow ourselves to state this assumption in terms of syncausality,
as we only utilize it in our syncausal analysis.)
}
\end{itemize}

The two above assumptions, regarding non-determinism of future events and
independence of past events, respectively, imply that agents may not predict
the occurrence of certain ND events.
These assumptions suffice for most of our analysis.
For some arguments, though, we require some guarantee that agents
may not predict the {\em absence} of ND events.
The following assumption, regarding non-determinism and independence of
present events, provides such a guarantee and, if $\RP \ne \emptyset$,
complements the above assumptions in the strongest way possible in some sense.
\begin{itemize}
\item 
Eternal vigilance:
For every run $r \in \RP$, for every $t \in \timeset$, for every set
$E \subseteq \externalinputs$
of external inputs that are not triggered in $r$ before $t$ and for every
set of $M$ of potential early deliveries at $t$ in $r$ (i.e.\
messages sent
before $t$, not delivered before $t$, and with a delivery guarantee
greater than $t$),
there exists a run $r' \in \RP$, satisfying:
\begin{enumerate}
\item $r'|_{[0,t)} = r|_{[0,t)}$.
\item $r'_e(t) = (E,M)$.
\end{enumerate}
\end{itemize}
Although it may be readily verified that the above assumption holds in the
discrete-time model presented in \chapterref{discrete},
this assumption is restrictive for a continuous-time model, as
e.g.\ it does not hold for some naturally-occurring models, such as models with
minimum bounds on delivery times. Moreover, this assumption also hinders the
possibility of capturing a discrete-time model using our continuous-time model.
(See the next section for more details.)
For these reasons, we replace this assumption with the following, weaker
assumption, which stems from combining the ``eternal vigilance'' and
``no foresight'' assumptions when $\RP \ne \emptyset$:
\begin{itemize}
\item
Stand-alone external inputs:
For every external input $\tilde{e} \in \externalinputs$, there exists
a run $r \in \TRP$, in which no ND events other than $\tilde{e}$ occur
before or at $t_{\tilde{e}}$.
\end{itemize}

\section{Modeling Discrete Time}
Now that we finished describing this model, it should be noted
that discrete-time models, such as the one presented in \chapterref{discrete},
may be captured by this model. As an example,
an integral-time model may be modeled by setting $\contextbounds(i,j)$ to integral (or
infinite) values for every $(i,j) \in \contextneighbours$, by forcing the
environment to perform ND events only at integral times (i.e.\ removing from
$\RP$ any runs in which any ND events occur at non-integral times),
and by allowing timers to be set for integral times only. (Or, alternatively, by
dropping timers altogether, and enabling every agent at every integral time.)
We consider environment constraints, such as ``all ND events occur at integral
times'', or ``any message may be delivered, at the earliest, $\varepsilon$
after it was sent'', as integral parts of the model (just as the delivery
bounds are). Care should be taken to make sure that such constraints
do not interfere with the assumptions of the previous section.

\cleardoublepageemptyheadings

\end{document}